\newcommand{\typOrd}[0]{\sqsubseteq}
\newcommand{\type}[0]{\mathsf{type}}
\newcommand{\ctype}{\mathsf{ctype}}
\newcommand{\stype}{\mathsf{stype}}
\newcommand{\noteps}{{\!\not\,\varepsilon}}
\newcommand{\notepsBig}{{\,\not\!\varepsilon}}
\newcommand{\teA}{\sigma}
\newcommand{\teB}{\tau}
\newcommand{\teC}{\rho}
\newcommand{\deA}{\widehat\teA}
\newcommand{\deB}{\widehat\teB}
\newcommand{\deC}{\widehat\teC}
\newcommand{\comp}{\mathsf{comp}}
\newcommand{\sL}{n}
\newcommand{\N}{\ensuremath{\mathbb{N}}}
\newcommand{\Dd}{\mathcal{D}}
\newcommand{\Gg}{\mathcal{G}}
\newcommand{\Tt}{\mathcal{T}}
\newcommand{\Rr}{\mathcal{R}}
\newcommand{\Cc}{\mathcal{C}}
\newcommand{\Qq}{\mathcal{Q}}
\newcommand{\Pp}{\mathcal{P}}
\newcommand{\Nn}{\mathcal{N}}
\newcommand{\Ss}{\mathcal{S}}
\newcommand{\Xx}{\mathcal{X}}
\newcommand{\Vv}{\mathcal{V}}
\newcommand{\Yy}{\mathcal{Y}}
\newcommand{\Powerset}[0]{\mathcal{P}}
\newcommand{\push}[1]{\mathsf{push}^{#1}}
\newcommand{\pop}[1]{\mathsf{pop}^{#1}}
\newcommand{\col}[1]{\mathsf{col}^{#1}}
\newcommand{\TOP}[1]{\mathsf{top}^{#1}}
\newcommand{\hist}[2]{\mathsf{hist}({{#2}},{#1})}
\newcommand{\changelev}[0]{\mathsf{chl}}
\newcommand{\lev}[0]{\mathsf{lev}}
\newcommand{\NestingRank}[0]{\mathsf{nr}}
\newcommand{\posNew}[2]{\overset{#1}{\rightarrow}{#2}}   
\newcommand{\subrun}[3]{#1{\restriction}_{#2,#3}}
\newcommand{\Rule}[2]{{#1}\supseteq{#2}}
\newcommand{\lexOrd}{\preceq}
\newcommand{\lexOrdstrict}{\prec}
\newcommand{\lexOrdR}{\succeq}
\newcommand{\pack}{\mathsf{pack}}
\spnewtheorem{mydefinition}[theorem]{Definition}{\bfseries}{\rmfamily}
\spnewtheorem*{theorem*}{Theorem}{\bf}{\itshape}
\def\koniec{{}\hfill$\square$}
{\noindent{\itshape Proof%
   \ifthenelse{\equal{#1}{!*!,!}}{}{%
      \ (#1)}. }}%
{\koniec\endtrivlist}
\begin{document}
\title{Strictness of the Collapsible Pushdown Hierarchy\thanks{The first author is supported by the
  DFG  project ``GELO''.
  The second author is partially supported by the Polish Ministry of
  Science grant nr N N206 567840.
  The collaboration of the authors is supported by
  the ESF project ``Games for Design and Verification''.}}

\author{Alexander Kartzow\inst{1} \and Pawe\l\ Parys\inst{2}}
\institute{Universit\"at Leipzig\\Johannisgasse 26, 04103 Leipzig, Germany\\\email{kartzow@informatik.uni-leipzig.de}
\and University of Warsaw\\ul. Banacha 2, 02-097 Warszawa, Poland\\\email{parys@mimuw.edu.pl}}

\maketitle

\pagestyle{plain}

\begin{abstract}
  We present a pumping lemma for each level of the collapsible
  pushdown graph hierarchy in analogy to the second author's pumping
  lemma for higher-order pushdown graphs (without collapse). 
  Using this lemma, we give the first known examples that separate the
  levels of the collapsible pushdown graph hierarchy and of the
  collapsible pushdown tree hierarchy, i.e., the hierarchy of trees
  generated by higher-order recursion schemes. 
  This confirms the open conjecture that higher orders allow one to
  generate more graphs and more trees.
\end{abstract}

\section{Introduction}
\label{sec:intro}
Already in the 70's, Maslov (\cite{Mas74,Mas76}) generalised the concept
of pushdown systems to higher-order pushdown systems and studied
such devices as acceptors of string languages. In the last decade,
renewed interest in these systems has arisen. They are now studied
as generators of graphs and trees.
Knapik et al.\ \cite{easy-trees} showed that the class of trees generated by deterministic level $n$ pushdown systems coincides 
with the class of trees generated by \emph{safe} level $n$ recursion schemes,\footnote{
	Safety is a  syntactic restriction on the recursion scheme.}
and Caucal \cite{Caucal02} gave another characterisation: trees on level $n+1$ are obtained from trees on level $n$ 
by an MSO-interpretation  followed by an unfolding. 
Carayol and W\"ohrle \cite{cawo03} studied the
$\varepsilon$-contractions of configuration graphs of level $n$
pushdown systems  
and proved that these are exactly the graphs in the $n$-th level of
the Caucal hierarchy.

Driven by the question whether safety implies a semantical restriction
to recursion schemes, Hague et al.~\cite{collapsible} extended the
model of higher-order pushdown systems by
introducing a new stack operation called collapse. 
They showed that the trees generated by the resulting collapsible pushdown
systems coincide exactly with the class of trees generated by all
higher-order recursion schemes and this correspondence is
level-by-level. 
Recently, Parys (\cite{parys2011,parys-panic-new}) proved the safety
conjecture, i.e., he showed 
that higher-order recursion schemes generate
more trees than safe higher-order recursion schemes, which implies
that the class of collapsible pushdown trees is a proper extension of the
class of higher-order pushdown trees. Similarly, due to their different
behaviour with respect to MSO model checking, we know that the class of
collapsible pushdown graphs forms a proper extension of the class of
higher-order pushdown graphs. 

Several questions concerning the relationship of these
classes have been left open so far. Up to now it was not known whether
collapsible pushdown graphs form a strict hierarchy in the sense
that for each $n\in\N$ the class of level $n$ collapsible pushdown
graphs is strictly contained in the class of level $(n+1)$ collapsible
pushdown graphs.
The same question was open for the hierarchy of trees generated by collapsible pushdown systems (i.e.~by recursion schemes).
Extending the pumping arguments of Parys for higher-order pushdown
systems \cite{parys-pumping} to the collapsible pushdown setting, we
answer both questions in the affirmative. 

Our main technical contribution is the following \emph{pumping lemma} for
collapsible pushdown systems. It subsumes the known pumping lemmas  
for level $2$ collapsible pushdown systems \cite{kartzow-pumping} and
for higher-order pushdown systems \cite{parys-pumping}.  
Set $\exp_0(i)=i$ and $\exp_{k+1}(i) = 2^{\exp_k(i)}$.

\begin{theorem}\label{thm:pumping}
  Let $\Ss$ be a collapsible pushdown system of level $n$.
  Let $\Gg$ be the $\varepsilon$-contraction of the configuration
  graph of $\Ss$. Assume that it is finitely branching
  and that there is a path in $\Gg$ of length $m$ from the initial
  configuration to some configuration $c$.   
  For $C_\Ss$ a constant only depending on $\Ss$,
  if  there is a path $p$ in $\Gg$ of length at least 
  $\exp_{n-1}((m+1)\cdot C_\Ss)$ which starts in $c$, 
  then there are infinitely many paths in $\Gg$ which start in $c$ and
  end in configurations having the same control state as the last
  configuration of $p$.  
\end{theorem}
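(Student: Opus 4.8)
The plan is to reason directly about runs of $\Ss$ and to transfer the hypothesis and the conclusion through the $\varepsilon$-contraction. A length-$m$ path in $\Gg$ from the initial configuration to $c$ lifts to a run $r_0$ of $\Ss$ with exactly $m$ non-$\varepsilon$ steps, and the long path $p$ lifts to a run $r$ starting in $c$ with at least $\exp_{n-1}((m+1)\cdot C_\Ss)$ non-$\varepsilon$ steps. Finite branching of $\Gg$ is used to control the $\varepsilon$-runs that sit behind the $\Gg$-edges and to move freely between $\Ss$-runs and $\Gg$-paths; in particular, a structural analysis of how configurations arise in the $\varepsilon$-contraction lets us take the pumping-relevant complexity of $c$ (made precise below) to be $O((m+1)\cdot C_\Ss)$. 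Because distinct path lengths already give distinct paths, it suffices to exhibit a single \emph{pumpable segment} of $r$: an infix that can be replicated arbitrarily many times so as to keep a legal run of $\Ss$ ending in a configuration with the same control state as the last configuration of $p$, and whose replication strictly lengthens the induced $\Gg$-path --- the latter forces the segment to contain a non-$\varepsilon$ step, which exists since $p$ has positive length.

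The device for finding such a segment is a finite-range abstraction of configurations by \emph{types}. To a configuration $d$ one assigns $\type(d)$, which records the control state of $d$ together with enough information about the stack to determine the control states reachable by runs that never pop the topmost stack at any level and, decisively in the collapsible setting, the complete ``return behaviour'' of every collapse link in the stack, i.e.\ which control states and which types one can reach by a collapse along that link. These types are built by recursion on the stack, one level at a time, and are compositional: $\type$ of a stack is determined by the types of its topmost immediate substacks plus bounded local data. Two properties are needed. \emph{Splicing}: if $\type(d)=\type(d')$ and a run segment leads from $d$ to $d'$ without popping below the current top stacks and without collapsing along a link that is already present in $d$ and points outside the part of the stack the segment itself creates, then that segment may be appended again after $d'$, and iterating it $k$ times still yields a legal run whose reached control state does not depend on $k$. \emph{Counting}: the number of possible types of a configuration is at most $\exp_{n-1}$ of its complexity, so for configurations of complexity $O((m+1)\cdot C_\Ss)$ only $\exp_{n-1}((m+1)\cdot C_\Ss)$ types can occur.

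The structural heart is an induction on the level $n$ that locates, inside $r$, a splice with matching endpoint types. One decomposes $r$ by the height of its level-$n$ stack: there is a minimal height $h^\ast$ attained during $r$, a part of the stack below $h^\ast$ that is never touched, and a sequence of \emph{excursions} climbing above $h^\ast$ and returning, separated by stretches in which the run only manipulates the single topmost level-$(n-1)$ stack. Such a stretch is exactly a run of a level-$(n-1)$ collapsible pushdown system --- by minimality of $h^\ast$ the run never follows a level-$n$ link during it, and links of level below $n$ stay inside the current top level-$(n-1)$ stack --- so if it is long, the induction hypothesis (applied after trimming the instance so that its starting configuration is still reachable in $O(m)$ visible steps) already supplies a pumpable segment. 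Otherwise $r$ contains very many short excursions; at the return points between them the configuration has complexity $O((m+1)\cdot C_\Ss)$, so by Counting two return points carry the same $\type$, and the segment between them is a splice of the required form, since by the choice of $h^\ast$ it neither pops below $h^\ast$ nor collapses along a link whose target leaves the duplicated region. Making the arithmetic of ``one long stretch'' versus ``many short excursions'' versus the length $\exp_{n-1}((m+1)\cdot C_\Ss)$ of $r$ come out right is exactly where the threshold is consumed, with the base case $n=1$ being the classical pigeonhole on control-state/top-symbol pairs along a single descent of an ordinary pushdown run.

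Given a pumpable segment $[i,j]$ with $\type(c_i)=\type(c_j)$, one forms, for every $k\in\N$, the run obtained from $r$ by inserting $k$ copies of the segment, and verifies by induction on $k$ --- using Splicing and the compositionality of $\type$ --- that it is a legal run of $\Ss$ ending in a configuration with the control state of the last configuration of $p$; distinct $k$ give final configurations that differ (a monotone quantity such as stack size strictly increases) and, since the segment carries a non-$\varepsilon$ step, induce $\Gg$-paths of strictly increasing length, hence pairwise distinct paths. \textbf{The main obstacle is exactly the link bookkeeping that distinguishes the collapsible case from the higher-order pushdown pumping lemma of \cite{parys-pumping}}: when a stack infix is duplicated, every collapse link inside the copies must retain a meaningful target, and every collapse later performed by the continued run must land where the original run's collapse landed. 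This is what forces the chosen segment to be ``closed'' under link-following --- no collapse may cross its boundary in a way duplication would spoil --- and what forces $\type$ to carry the return behaviour of links; getting these two requirements to coexist with the pigeonhole on types, at every level of the stack simultaneously, is the technically heaviest part of the proof.
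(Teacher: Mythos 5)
There is a genuine gap at the structural heart of your argument. You propose to find the pumpable segment by an induction on the level $n$ of the \emph{system}: decompose $r$ by the minimal height $h^\ast$ of the level-$n$ stack, and treat each stretch that ``only manipulates the single topmost level-$(n-1)$ stack'' as a run of a level-$(n-1)$ collapsible pushdown system to which the induction hypothesis applies. This reduction fails for collapsible systems. A stretch of that kind can perform $\push{1}_{a,n}$ and thereby create level-$n$ links whose targets are the \emph{entire} current $n$-stack, including the part below $h^\ast$; these links are carried inside the topmost $(n-1)$-stack, are followed by later excursions, and must survive duplication of the segment. So the stretch is not a self-contained level-$(n-1)$ run, and no level-$(n-1)$ instance captures its behaviour. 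This is precisely why the paper abandons any level-reduction: it keeps the full level-$n$ system, redefines $0$-stacks to store the linked stack's contents so that a $k$-stack can be summarised by a finite type at all, and replaces your induction on the system level by an induction on the stack level $k$ \emph{inside} the fixed system (Lemma \ref{lem:Pumpability}), tracking the history of the topmost $k$-stack and applying a combinatorial lemma on integer sequences with increments at most $1$ (Corollary \ref{cor:IncreasingNumberSequence}). Relatedly, your application of the induction hypothesis to an interior stretch requires its starting configuration to be reachable in $O(m)$ visible steps, which is false for stretches deep inside a run of length $\exp_{n-1}((m+1)C_\Ss)$; the paper instead derives from finite branching a bound $M_k$ on the sizes of all $k$-stacks of configurations at distance $m$ (Lemma \ref{lem:Bounded-Stack-Growth}) and feeds these bounds, together with a \emph{constant} number $\lvert\Tt_\Ss\rvert$ of types, into the combinatorics; the tower $\exp_{n-1}$ arises from iterating the sequence lemma over the $n$ stack levels, not from types growing with configuration complexity as your ``Counting'' step asserts.

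A second, independent gap is the treatment of the case where the found segment uses only $\varepsilon$-transitions. You infer that the segment ``contains a non-$\varepsilon$ step, which exists since $p$ has positive length''; this does not follow --- the pigeonhole may well return a purely $\varepsilon$-labelled segment, and replicating it would then not lengthen the induced $\Gg$-path at all. The paper resolves this by a dichotomy: either the pumping run lies in $\Pp_{\noteps}$, and pumping yields paths of unboundedly many non-$\varepsilon$ edges, hence infinitely many paths ending in the required state; or it lies in $\Pp_{<,\varepsilon}$ (it is $\varepsilon$-only and strictly increases the stack), in which case appending $\TOP{0}$-non-erasing runs produces infinitely many distinct successors of a single node, contradicting the finite-branching hypothesis (Proposition \ref{prop:inf-branching}). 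Ensuring that the located segment is followed by a $\TOP{0}$-non-erasing continuation --- which is what makes the second horn of the dichotomy work --- is an explicit output of Lemma \ref{lem:Pumpability} and is absent from your construction.
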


\begin{corollary}\label{cor:infinite}
  Let $\Gg$ be the successor tree induced by 
  $\{1^i0^{\exp_n(i)}\mid i\in\N \}$.

  $\Gg$ is the $\varepsilon$-contraction of the configuration graph of
  a pushdown system of level $n+1$  but not 
  the $\varepsilon$-contraction of the configuration
  graph of any collapsible pushdown system of level $n$. Moreover,
  $\Gg$ is generated by a safe level $(n+1)$ recursion scheme but not
  by any level $n$ recursion scheme. 
\end{corollary}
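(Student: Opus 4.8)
The statement has an upper‑bound part (realise $\Gg$ at level $n+1$) and a lower‑bound part (exclude level $n$, even with collapse), each in a graph flavour and a recursion‑scheme flavour. First I would fix the shape of $\Gg$: it is the caterpillar with an infinite $1$‑labelled spine $v_0\xrightarrow{1}v_1\xrightarrow{1}v_2\xrightarrow{1}\cdots$ and a $0$‑labelled path of length $\exp_n(i)$ hanging off each $v_i$; in particular $\Gg$ is a tree in which every vertex has at most two successors, hence finitely branching. The two recursion‑scheme claims I would reduce to the graph claims via the level‑by‑level correspondence of~\cite{collapsible}: a tree is generated by a level‑$k$ recursion scheme iff it is generated by a level‑$k$ collapsible pushdown system, and by a \emph{safe} level‑$k$ scheme iff it is generated by a collapse‑free level‑$k$ pushdown system (\cite{easy-trees}); moreover the $\varepsilon$‑contraction of the configuration graph of a tree‑generating pushdown system is exactly the generated tree. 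So it suffices to show (a) $\Gg$ is the $\varepsilon$‑contraction of the configuration graph of a collapse‑free level‑$(n+1)$ pushdown system, and (b) $\Gg$ is not the $\varepsilon$‑contraction of the configuration graph of any level‑$n$ collapsible pushdown system.

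Part (a) is classical higher‑order counting: a collapse‑free level‑$(n+1)$ pushdown system can maintain an iterated counter whose ``capacity'' is incremented each time the letter $1$ is read, so that after $i$ ones it holds a value of order $\exp_n(i)$, and can then be decremented one unit at a time, emitting a $0$ per decrement and halting at zero. Its $\varepsilon$‑contracted configuration graph is $\Gg$. This construction goes back to Maslov (see also \cite{parys-pumping}); I would only sketch it, as it is routine.

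For (b), suppose towards a contradiction that $\Gg$ is the $\varepsilon$‑contraction of the configuration graph of a level‑$n$ collapsible pushdown system $\Ss$, and let $C_\Ss$ be the constant from Theorem~\ref{thm:pumping}. Since $v_0$ is the unique vertex of $\Gg$ from which all of $\Gg$ is reachable, $v_0$ is (the contraction class of) the initial configuration. Fix $i$ so large that $\exp_n(i)-1\ \ge\ \exp_{n-1}\bigl((i+2)\cdot C_\Ss\bigr)$; this holds for all large enough $i$, because $\exp_n(i)=\exp_{n-1}(2^i)$, eventually $2^i>(i+2)\cdot C_\Ss$, and $\exp_{n-1}$ is increasing, so the left side is an $n$‑fold exponential in $i$ whereas the right side is only an $(n-1)$‑fold exponential of a linear function of $i$. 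Let $c$ be the first vertex of the $0$‑tail of $v_i$. There is a path of length $m:=i+1$ from $v_0$ to $c$ — down the spine to $v_i$, then one step into the tail — and a path $p$ from $c$ of length $\exp_n(i)-1\ge\exp_{n-1}\bigl((m+1)\cdot C_\Ss\bigr)$ running down the remainder of that tail. Theorem~\ref{thm:pumping} then yields infinitely many paths in $\Gg$ starting in $c$. But from $c$ only the remaining vertices of this single finite $0$‑tail are reachable, and $\Gg$ is a tree, so there are only finitely many paths starting in $c$; contradiction.

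The decisive point, and the only step I expect to be nontrivial, is to apply Theorem~\ref{thm:pumping} at a vertex $c$ lying strictly inside a $0$‑tail: then the set of vertices reachable from $c$ is finite, so the pumping conclusion is outright impossible — whereas from a spine vertex infinitely many vertices are reachable and the lemma would yield no contradiction. The iterated‑exponential estimate, the reduction of the scheme statements to graph statements, and the level‑$(n+1)$ counting automaton are all routine.
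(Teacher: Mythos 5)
Your proposal is correct and takes essentially the same route as the paper: the level-$(n+1)$ realisability is delegated to a known counting construction (the paper cites Blumensath's Example~9), the recursion-scheme claims are reduced to the graph claims via the Knapik et al.\ and Hague et al.\ correspondences, and the lower bound applies Theorem~\ref{thm:pumping} at the node reached by $1^{m-1}0$, i.e.\ strictly inside a $0$-tail, where only finitely many paths exist, contradicting the infinitely many pumped paths. Your parameter choice $\exp_n(i)-1\ge\exp_{n-1}((i+2)\cdot C_\Ss)$ coincides with the paper's condition $\exp_n(m-1)>\exp_{n-1}((m+1)\cdot C_\Ss)$ under the shift $m=i+1$.
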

$\Gg$ is not in the $n$-th level because application of the pumping
lemma to the node
$1^{2 \cdot C_\mathcal{S}}0$ yields a contradiction. 
The proof that $\Gg$ is in level $n+1$ follows from \cite{blumensath-pumping}.
Beside this main result, our techniques allow us to decide the
following problems.\footnote{We thank several anonymous referees of
  our LICS submissions for pointing our interest towards these
  problems.}
\begin{lemma}\label{Lemma:Decidabilities}
Given a
collapsible pushdown system, it is decidable 
\begin{enumerate}
\item whether the
  $\varepsilon$-contraction of its configuration graph is finitely
  branching,
\item whether the $\varepsilon$-contraction of its configuration graph
  is finite, and
\item whether the unfolding of the $\varepsilon$-contraction of its
  configuration graph is finite.
\end{enumerate}
\end{lemma}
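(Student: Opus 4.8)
The plan is to derive all three statements from the pumping lemma (Theorem~\ref{thm:pumping}) together with the decidability of the (local) reachability problem for collapsible pushdown systems~\cite{collapsible}. A useful preliminary observation is that ``there is a path of length exactly $k$ in $\Gg$ from the initial configuration'' is decidable: such a path corresponds to a run of $\Ss$ performing exactly $k$ non-$\varepsilon$ transitions, so one augments the control state of $\Ss$ with a counter over $\{0,\dots,k\}$, incremented at each non-$\varepsilon$ transition and saturating at $k$, and asks whether a configuration with counter value $k$ is reachable. I also record two easy reductions among the statements: if $\Gg$ is not finitely branching it is in particular infinite, so~(2) presupposes~(1); and the unfolding of $\Gg$ from the initial configuration is finite if and only if the part of $\Gg$ reachable from it is finite and acyclic.

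I would prove~(2) next, assuming~(1) settled (if $\Gg$ is not finitely branching, output ``infinite''). When $\Gg$ is finitely branching, its reachable part is finite if and only if there is no path of length $\exp_{n-1}(C_\Ss)$ starting in the initial configuration: if such a path exists, Theorem~\ref{thm:pumping} applied with $c$ the initial configuration (so $m=0$) yields infinitely many reachable configurations; and if no such path exists then every path from the initial configuration is shorter than $\exp_{n-1}(C_\Ss)$, and finite branching forces there to be only finitely many such paths, hence finitely many reachable configurations. This is decidable by the preliminary observation. For~(3): once $\Gg$ is known to be finitely branching, Theorem~\ref{thm:pumping} also shows that a finite reachable part cannot contain a reachable cycle, for such a cycle would produce arbitrarily long walks from the initial configuration and hence infinitely many reachable configurations; thus a finite reachable part is automatically acyclic, and the unfolding of $\Gg$ is finite if and only if the reachable part of $\Gg$ is finite, which is decidable by~(1) and~(2).

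The main obstacle is~(1). The out-degree of a configuration $c$ in $\Gg$ is the number of configurations reachable from $c$ by some sequence of $\varepsilon$-transitions followed by a single non-$\varepsilon$ transition; hence $\Gg$ fails to be finitely branching precisely when there are a reachable configuration $c$, a non-$\varepsilon$ rule $\rho$ of $\Ss$, and infinitely many distinct configurations obtained by applying $\rho$ to the $\varepsilon$-closure of $c$. My first attempt would express this as infiniteness of the configuration graph of an auxiliary collapsible pushdown system of level $n$ \emph{without} $\varepsilon$-transitions, which simulates $\Ss$ up to a configuration $c$, then runs only the $\varepsilon$-rules of $\Ss$, and finally performs one $\rho$-step and halts; for an $\varepsilon$-free system, finiteness of the configuration graph is decidable exactly as in~(2), since the finite-branching hypothesis of Theorem~\ref{thm:pumping} is now automatic. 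The real difficulty, and the technical heart of the lemma, is \emph{localisation}: this naive construction counts the successors contributed by all configurations of $\Ss$ at once, whereas infinite branching requires infinitely many of them to arise from a single $c$, which has an unbounded stack and so cannot be remembered in the control state. To overcome this I would use the machinery behind Theorem~\ref{thm:pumping} to prove that $\Gg$ fails to be finitely branching if and only if there is a reachable configuration $c$ carrying a \emph{pumpable} $\varepsilon$-run of bounded, computable length, ending in a non-$\varepsilon$ transition, whose iteration yields infinitely many distinct $\Gg$-successors of $c$; the existence of such a $c$ and its witness is then a reachability query of the form handled above. Isolating the correct notion of a ``pumpable witness for infinite branching'', a variant of but distinct from the one used in the proof of Theorem~\ref{thm:pumping}, is the step I expect to demand the most care.
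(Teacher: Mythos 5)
There is a genuine gap, and it propagates through parts (2) and (3): you repeatedly infer ``infinitely many reachable configurations'' from the conclusion of Theorem~\ref{thm:pumping}, but that theorem only yields infinitely many \emph{paths} ending in configurations with a fixed control state, and infinitely many (equivalently, arbitrarily long) paths are perfectly compatible with a finite graph. A one-state system with a single non-$\varepsilon$ transition leaving the stack unchanged has as $\varepsilon$-contraction a single node with a self-loop: it is finite and finitely branching, yet has paths of every length. Hence your criterion for (2) (``finite iff finitely branching and no path of length $\exp_{n-1}(C_\Ss)$'') wrongly declares such a graph infinite, and your claim in (3) that a finite, finitely branching reachable part must be acyclic is false, so the reduction ``unfolding finite iff reachable part finite'' fails. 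Your test is in fact the paper's test for part~(3), not part~(2): finiteness of the \emph{unfolding} is equivalent to finite branching plus the absence of a path of length $\exp_{n-1}(C_\Ss)$ from the initial configuration (which by Theorem~\ref{thm:pumping} and K\"onig's lemma rules out infinite paths). For part~(2) the paper instead builds a modified system $\Rr$ in which every transition of $\Ss$ is relabelled by $\varepsilon$ and each node of $\Gg$ contributes one fresh non-$\varepsilon$ successor of a new initial configuration, so that $\Gg$ is finite iff the $\varepsilon$-contraction of the graph of $\Rr$ is finitely branching; this reduces (2) to (1) rather than to a path-length test.

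Part (1), which you correctly identify as the heart of the lemma, is left essentially unproved. Your proposed witness (``a pumpable $\varepsilon$-run of bounded, computable length whose existence is a reachability query'') is not substantiated: the actual witness for infinite branching involves positions $x<y$ with $\ctype_\Xx(R(x))=\ctype_\Xx(R(y))$ such that $\subrun{R}{x}{y}\in\Pp_{<,\varepsilon}$ and the remainder is a $\TOP{0}$-non-erasing run ending with a non-$\varepsilon$-transition; there is no a priori length bound on such a run, and configuration types are not visible to a plain reachability oracle. The paper's route is different: it proves an ``if and only if'' strengthening of Proposition~\ref{prop:inf-branching}, then uses closure of wf-described families under composition (Lemma~\ref{lem:compos}) to form $\Vv=\Qq\circ\Pp_{<,\varepsilon}\circ\dots\circ\Pp_{<,\varepsilon}\circ\Nn_0$ with more $\Pp_{<,\varepsilon}$ factors than there are values of $\ctype_\Xx$, shows by a pigeonhole argument that $\Gg$ is infinitely branching iff some run in $\Vv$ from the initial configuration ends with a non-$\varepsilon$-transition, and decides this by computing the (effectively computable) type of the initial stack with respect to the enlarged family. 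Without an argument of this kind, part (1) --- and with it your reductions for (2) and (3) --- does not go through.
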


\subsection{Related Work}
Hayashi \cite{hayashi-pumping} and Gilman \cite{gilman-pumping}
proved a pumping and a 
shrinking lemma for indexed languages. It is shown in
\cite{AehligMO05} that
indexed 
languages are exactly the string languages accepted by level $2$
collapsible pushdown systems. For higher levels, no shrinking
techniques are known so far. 
Since our pumping lemma can be used only for finitely branching systems,
it cannot be used to show that certain string languages do not occur on certain
levels of the (collapsible) higher-order pushdown hierarchy.
Note that we do not know whether the string languages accepted by
nondeterministic level $n$ pushdown systems and by nondeterministic
level $n$ collapsible pushdown systems coincide for $n>2$. 
Thus, it
is an interesting open question whether there is a stronger pumping lemma for runs of higher-order systems that could be used to separate
these classes of string languages.


\section{Collapsible Pushdown Graphs}
\label{sec:preliminaries}
\label{sec:CPS}

Collapsible pushdown systems of level $\sL$ (from now on $\sL\in\N$ is
fixed) are an extension of
pushdown systems where we replace the stack by an  $\sL$-fold nested
stack structure. This higher-order stack is manipulated using a
a push, a pop and a collapse operation for each
stack level $1\leq i\leq \sL$. When a new symbol is pushed onto the
stack, 
we attach a copy of a certain level $k$ substack of the current stack
to this symbol (for some $1\leq k \leq n$) and at some later point the
collapse   
operation may replace the topmost level $k$ stack with the level $k$
stack stored in the topmost symbol of the stack (we also talk about
the linked $k$-stack of the topmost symbol). 
In some weak sense the collapse operation
allows one to jump back to the (level $k$) stack where the current topmost
symbol was created for the first time. 

\begin{definition}
  Given a number $n$ (the level of the system) and stack alphabet
  $\Gamma$, we define the set of stacks as the smallest set satisfying
  the following. 
  \begin{itemize}
  \item If $s_1, s_2,\dots, s_m$ are $(k-1)$-stacks, where $1\leq
    k\leq n$, then the sequence $[s_1, s_2,\dots, s_m]$ is a
    $k$-stack (with $s_m$ its topmost $k-1$-stack).  
    This includes the empty sequence ($m=0$).
  \item If $s^k$ is a $k$-stack, where $1\leq k\leq n$, and
    $\gamma\in\Gamma$, then $(\gamma,k,s^k)$ is a $0$-stack. 
  \end{itemize}
  
  For a $0$-stack $s^0=(\gamma, k, t^k)$ we call $\gamma$ the
  \emph{symbol} of $s^0$ and for some $k$-stack $t^k$ the 
  \emph{topmost symbol} is the symbol of its topmost $0$-stack. 

  For a $k$-stack $s^k$ and a $(k-1)$-stack $s^{k-1}$ we write
  $s^k:s^{k-1}$ to denote the $k$-stack obtained by appending
  $s^{k-1}$ on top of $s^k$. 
  We write $s^2:s^1:s^0$  for $s^2:(s^1:s^0)$.
\end{definition}

Let us remark that in the original definition stacks are defined
differently: they are not nested, a $0$-stack does not store the
linked $k$-stack but the number of pop-operations a collapse is
equivalent to. With respect to stacks constructible from the initial
stack by the stack operations, this is only a syntactical difference
as discussed in 
Appendix \ref{Appendix:Definitions}. Independently, Broadbent et al.\
recently also introduced our definition of stack under the name
\emph{annotated stacks} in \cite{BroCarHagSer12}.

\begin{definition}
  We define the set of stack operations $OP$ as follows.
  We decompose a stack $s$ of level $\sL$ into its topmost stacks
  $s^\sL:s^{\sL-1}:\dots:s^0$. 
  We have $\pop{i}(s):=s^\sL:\dots:s^{i+1}:s^i$ for all $1\leq i\leq
  \sL$. The result is undefined if $s^i$ is empty.  
  For $2\leq i\leq \sL$ we have
  $\push{i}(s):=s^\sL:\dots:s^{i+1}:(s^{i}:\dots:s^0):s^{i-1}:\dots:s^0$.
  The level $1$ push is $\push{1}_{\gamma,k}$ for $\gamma\in\Gamma$, $1\leq k \leq \sL$
  which is defined by
  $\push{1}_{\gamma,k}(s):=s^\sL:\dots:s^2:(s^1:s^0):(\gamma,k,s^{k})$.\footnote{In
    the following, we
    write $\push{1}$ whenever we mean some $\push{1}_{\gamma,k}$
    operation where the values of $\gamma$ and $k$ do not matter for the
    argument.}
  The collapse operation $\col i$ (where $1\leq i\leq \sL$) is defined
  if the topmost $0$-stack is 
  $(\gamma,i,t^{i})$, and $t^i$ is not empty.
  Then it is $\col{i}(s):= s^\sL:\dots :s^{i+1}:t^{i}$. 
  Otherwise the collapse operation is undefined.  
\end{definition}

\begin{definition}
  The \emph{initial $0$-stack} $\bot_0$ is $(\bot, n, [])$
  for a special symbol $\bot\in\Gamma$, i.e., a $0$-stack only
  containing the symbol $\bot$ with link to the empty stack. 
  The \emph{initial $(k+1)$-stack} is $[\bot_k]$.
  Some $\sL$-stack $s$ is a \emph{pushdown store} (or \emph{pds}), if
  there is a finite sequence of stack operations that create $s$
  from $\bot_\sL$. 
\end{definition}

\begin{remark}
  \label{rem:ColisPop}
  If $s$ is a pds and if $\col{j}(s)$ is
  defined, then there is a $k\geq 1$ such that $\col{j}(s)$ is the stack
  obtained from $s$ by applying $\pop{j}$ $k$ times. 
\end{remark}

\begin{definition}
  A \emph{collapsible pushdown system of level $\sL$} (an $\sL$-CPS)
  is a tuple $\Ss=(\Gamma,A,Q,q_I,\bot,\Delta)$ where $\Gamma$ is a
  finite stack
  alphabet containing the special symbol $\bot$, $A$ is a finite input
  alphabet, $Q$ is a finite 
  set of states, $q_I\in Q$ is an initial state, and $\Delta\subseteq
  Q\times\Gamma\times(A\cup\{\varepsilon\})\times Q\times OP$ is a
  transition relation.\\
  A \emph{configuration} is a pair $(q,s)$  with $q\in Q$ and $s$
  a pds. The \emph{initial configuration of $\Ss$} is
  $(q_I,\bot_\sL)$. 
\end{definition}

\begin{definition}
  We define a \emph{run} of a CPS $\Ss$. 
  For $0\leq i \leq m$, let $c_i=(q_i,s_i)$ be a configuration of $\Ss$ and let 
  $\gamma_i$ denote the topmost stack symbol of $s_i$. 
  A run $R$ of length $m$ from $c_0$ to $c_m$ is a sequence
  $c_0\vdash^{a_1}c_1\vdash^{a_2}\dots\vdash^{a_m}c_m$ 
  such that, for $1\leq i\leq m$, there is a transition
  $(q_{i-1},\gamma_{i-1},a_i,q_i,op)$ where $s_i = op(s_{i-1})$.
  We set $R(i):=c_i$ and call $\lvert R\rvert :=m$ the \emph{length of $R$}. 
  The \emph{subrun} $\subrun{R}{i}{j}$ is
  $c_i\vdash^{a_{i+1}}c_{i+1}\vdash^{a_{i+2}}\dots\vdash^{a_j}c_j$. 
  For runs $R,S$ with $R(\lvert R \rvert)=S(0)$, we write  $R\circ S$
  for the \emph{composition} of $R$ and $S$ which is 
  defined as expected. 
\end{definition}

\begin{definition}
  Let $\Ss$ be a collapsible pushdown system.
  The \emph{(collapsible pushdown) graph}\footnote{In fact it is an
    edge-labelled graph; sets $E_a$ need not to be disjoint.} 
  of $\Ss=(\Gamma,A,Q,q_I,\bot,\Delta)$ is 
  $\Gg:=(G, (E_a)_{a\in A\cup\{\varepsilon\}})$ where $G$ consists of
  all configurations 
  reachable from $(q_0, \bot_{\sL})$ and  
  there is an $a$-labelled edge from a configuration $c$ to a configuration $d$
  if there is a
  run $c\vdash^ad$.
  The \emph{$\varepsilon$-contraction} of $\Gg$ is the graph
  $(G', (E_a')_{a\in A})$ where 
  $G':=\{c\in G: \exists d\in G\ d\vdash^a c$ for some $a\in A\}$ and
  two configurations $c,d$ are 
  connected by $E_a'$ if there is a run 
  $c\vdash^\varepsilon c_1\vdash^\varepsilon \dots \vdash^\varepsilon
  c_n \vdash^a d$ for some $n\in\N$.
\end{definition}


\section{Proof Structure}
\label{sec:proofStructure}
The proof of the pumping lemma consists of three parts.
In the first part we introduce a special kind of context free
grammars (called well-formed grammars) for runs of a collapsible
pushdown system $\Ss$.  
In such a grammar, each nonterminal represents a set of runs and  
each terminal is one of the transitions of $\Ss$. 
Let $X$ and $X_1,\dots, X_m$ be sets of runs and $\delta$ some transition. 
A rule $\Rule{X}{\delta X_1X_2\dots X_m}$ describes a run $R$ if 
$R=S\circ T_1 \circ T_2\circ\dots\circ T_n$ such that $S$ performs 
only the transition $\delta$ and $T_i\in X_i$. A grammar describes a
family $\Xx$ of sets of runs if the rules for each $X\in\Xx$ describe
exactly the runs 
in $X$. Well-formed grammars are syntactically restricted in order to
obtain the following result.
If $\Xx$ is a finite family described by a well-formed grammar,
 we can define
\begin{enumerate}
\item a function $\ctype_\Xx$ from configurations of $\Ss$ to a
  finite partial order $(\Tt_\Ss,\typOrd)$ (of 
  \emph{types of configurations}), and 
\item for each $X\in \Xx$ a level $\lev(X)\in\{0,1,\dots, \sL\}$
\end{enumerate}
such that the following
transfer property of runs holds.

\begin{theorem}\label{thm:types}
  Let $\Xx$ be a family of sets of runs
  described by a  well-formed grammar,
  $R\in X\in\Xx$, and $c$ be a configuration with
  $\ctype_\Xx(R(0))\typOrd \ctype_\Xx(c)$.
  \begin{enumerate}
  \item  \label{thm:typesA} There is a run $S\in X$ starting in $c$
    which has the same final state as $R$ and
  \item \label{thm:typesB}
    if $\lev(X)=0$, then 
    $\ctype_\Xx(R(\lvert R \rvert)) \typOrd\ctype_\Xx(S(\lvert S \rvert))$.
  \end{enumerate}
\end{theorem}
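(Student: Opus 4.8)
The plan is to prove parts~\ref{thm:typesA} and~\ref{thm:typesB} simultaneously by induction on the structure of a derivation of $R\in X$ in the well-formed grammar describing $\Xx$. The simultaneous induction is essential: to transfer a run through a composition $R=S\circ T_1\circ\dots\circ T_m$ coming from a rule $\Rule{X}{\delta X_1\dots X_m}$, once $\delta$ and the block $T_1$ have been replayed from $c$ one must re-establish the $\typOrd$-hypothesis at the seam before $T_2$, and for a seam where the preceding block has level $0$ this is exactly the extra conclusion~\ref{thm:typesB}.

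Before the induction I would fix the ingredients promised in Section~\ref{sec:proofStructure}. The level $\lev(X)$ is read off from the well-formed grammar, which stratifies nonterminals by the stack level of the runs they describe; level $0$ is reserved for blocks that preserve no stack strictly below the topmost one, so that for such a block the \emph{whole} configuration type is recoverable at its end. The type $\ctype_\Xx(c)$ must record: (i)~the control state of $c$; (ii)~its topmost stack symbol; (iii)~for each level $1\le i\le\sL$, enough of the linked-stack data attached to the relevant topmost $0$-stack that the result of $\col{i}$ is determined; and (iv)~computed by recursion on stack levels, for every $Y\in\Xx$ a summary of the runs of $Y$ startable at $c$ — the set of their possible final states, together with, when $\lev(Y)=0$, the set of possible final types. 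Finiteness of $(\Tt_\Ss,\typOrd)$ follows since $\Xx$, $Q$, $\Gamma$ and $\sL$ are finite and the recursion has depth at most $\sL$; the order $\typOrd$ is the componentwise order that pins down the state and topmost symbol and lets the reachable-state and reachable-type sets only grow. The property the definition must satisfy — checked by a case analysis over $OP$ — is \textbf{monotonicity}: if a transition of $\Ss$ is enabled at a configuration and one enlarges that configuration's type, the transition stays enabled and the resulting type is enlarged as well.

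With this in place the induction is routine. In the base case the rule is $\Rule{X}{\delta}$ and $R$ is a single step; since $\ctype_\Xx$ fixes the state and topmost symbol, $\delta$ is enabled at $c$, and monotonicity yields a step $S\in X$ from $c$ with the same target state and with $\ctype_\Xx(R(1))\typOrd\ctype_\Xx(S(1))$, i.e.~\ref{thm:typesB} whenever $\lev(X)=0$. In the inductive step write $R=S\circ T_1\circ\dots\circ T_m$ from $\Rule{X}{\delta X_1\dots X_m}$. Replay $\delta$ from $c$ by monotonicity, obtaining a one-step run $S'$ to a configuration $c_0$ with $\ctype_\Xx(T_1(0))\typOrd\ctype_\Xx(c_0)$, because the type already summarizes the behaviour of every $X_j$ at the start. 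Apply the induction hypothesis to $T_1$ from $c_0$ to get $S_1\in X_1$; if $\lev(X_1)=0$ conclusion~\ref{thm:typesB} for $T_1$ re-establishes the hypothesis at the seam before $T_2$, whereas if $\lev(X_1)>0$ the structural constraints of a well-formed grammar force $T_1(\lvert T_1\rvert)$ to agree with an earlier configuration on all data the type inspects, so the hypothesis is available for free. Iterating over $T_2,\dots,T_m$ produces $S=S'\circ S_1\circ\dots\circ S_m\in X$ starting at $c$; its final state equals that of $R$ since the last block was transferred by the induction hypothesis, and when $\lev(X)=0$ the well-formedness bound on $\lev(X_m)$ together with~\ref{thm:typesB} for that last block gives $\ctype_\Xx(R(\lvert R\rvert))\typOrd\ctype_\Xx(S(\lvert S\rvert))$.

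The main obstacle is not the induction but the construction hidden in the second paragraph: designing $\ctype_\Xx$ so that it is finite, monotone under \emph{every} stack operation — the dangerous case being $\col{i}$, which may jump back to a stack created arbitrarily far in the past and so forces the type to carry a bounded but carefully chosen fingerprint of the linked stacks — and simultaneously rich enough that replaying a grammar rule never loses information at a composition seam. Getting the interface between this type and the level stratification of the well-formed grammar exactly right, so that precisely the level-$0$ seams require~\ref{thm:typesB} while the higher-level seams close structurally, is the delicate part of the argument.
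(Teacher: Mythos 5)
Your induction skeleton for part~\ref{thm:typesB} (case analysis on the rule form, using the level-$0$ conclusion at level-$0$ seams and the ``topmost $\lev(Y)$-stack is preserved'' clause of well-formedness at the others) does match the paper's proof of that part. But the proposal has a genuine gap, and it sits exactly where you locate the ``main obstacle'': the construction of $\ctype_\Xx$. You define the type semantically, as ``for every $Y\in\Xx$ a summary of the runs of $Y$ startable at $c$,'' computed ``by recursion on stack levels.'' This recursion is not well-founded: a run startable at $c$ may end in a configuration whose stacks are not substructures of $c$, so the ``set of possible final types'' is not obtained by recursing into $c$. Worse, the paper explicitly reports that this semantic route (used in the non-collapse predecessor work) could not be made to work in the presence of collapse: one must show the summary depends only on bounded data of the configuration, and $\col{i}$'s ability to jump to a stack created arbitrarily far in the past is precisely what defeats the naive argument. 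The paper's actual solution is a \emph{syntactic} fixpoint definition of types over the grammar rules (the functions $T$, $U_z$, $\stype_z$ and the composer $\comp$), with run descriptors parameterized by the types of the stacks \emph{surrounding} a $k$-stack so that the definition becomes well-founded by induction on stack structure and link nesting; the semantic characterisation you take as a design goal is then a separate, substantial lemma (Lemma~\ref{lem:run-equiv-type}), whose right-to-left direction needs its own ``witness'' machinery. None of this is recoverable from your sketch.

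There is also a quantitative gap in the induction itself. At a seam following a block $T_1\in X_1$ with $\lev(X_1)=k>0$, well-formedness only pins down the topmost $i$-stacks for $i\le k$; to re-establish $\ctype_\Xx(T_2(0))\typOrd\ctype_\Xx(c')$ you still need type containment at every level $i>k$ at the end of the replayed $X_1$-block. Your part~\ref{thm:typesA} transfers only the final \emph{state}, and your type records final-type information only for level-$0$ nonterminals, so this containment is not available from your induction hypothesis. The paper avoids this by having every run descriptor carry target types $\Omega^n,\dots,\Omega^{\lev(X)+1}$ for all levels above $\lev(X)$, for \emph{every} $X$, and its proof of part~\ref{thm:typesA} explicitly extracts the corresponding containment (the observation recorded as property~(\ref{prop:thmTypesA-B})) for later use in the proof of part~\ref{thm:typesB}. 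To repair your argument you would need to strengthen part~\ref{thm:typesA} to a statement of exactly that form, which in turn forces the richer run-descriptor structure back into the definition of the type.
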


The idea behind the definition of $\ctype_\Xx$ is that we assign a
type not only to the whole configuration, but also to every $k$-stack
(for every $k$). 
This type summarises possible behaviours
of the $k$-stack in dependence on the type of the $n$-stack
below this $k$-stack. 
This makes types compositive: the type of a stack $s^{k+1}:s^k$ is
determined by the type of $s^{k+1}$ and of $s^k$.
The above theorem generalises results of \cite{parys-pumping} in two ways: 
first, it works for collapsible systems; second, it works for
arbitrary well-formed grammars instead of a fixed family of sets of
runs. 
The corresponding part of the proof from \cite{parys-pumping} is
not transferable to collapsible systems at all.
For collapsible systems we even need a new
definition of types (see Appendix \ref{sec:types}). 
We stress that the new definition of types relies on  the different form of
representing links  in stacks:
our $k$-stack already contains all linked stacks, so  we can
summarise it using a type from a finite set.  
On the other hand the original $k$-stack has arbitrarily many numbers
pointing to stacks ``outside'', and 
we could not define a type from a finite set because the behaviour of a
$k$-stack would depend on this unbounded context ``outside''. 

In the second part of the proof (cf.\ Section \ref{sec:Runs}), we
introduce a well-formed grammar 
for a certain family $\Xx$. As a main feature, $\Xx$  contains the
set of so-called \emph{pumping runs} $\Pp$.
In the grammar describing $\Xx$, the level of $\Pp$ is $0$ whence the
strong version of Theorem \ref{thm:types} applies. If a pumping run
$R$ starts and ends in configurations of the same type, this theorem
then allows to pump this run, i.e., basically we can append a copy of
this run to its end and iterating this process we obtain infinitely
many pumping runs.

The last part of the proof uses Theorem \ref{thm:types} for the above family $\Xx$ to deduce the pumping lemma.
This part follows closely the analogous proof for the non-collapsible
pushdown systems in \cite{parys-pumping} (see Appendices
\ref{app:sketchPumping}--\ref{app:PumpingLemma}): we prove
that a long run contains a pumping run such that the application of
Theorem \ref{thm:types} yields a configuration $c$ on this path such
that either the graph is infinitely branching at $c$ or the pumped
runs yield longer and longer paths in the $\varepsilon$-contraction of
the pushdown graph.


\section{Run Grammars}
\label{sec:grammars}
Let $\Xx$ be a finite family whose elements are sets of runs of $\Ss$.
We want to describe this family using a kind of context free grammar.
In this grammar the members of $\Xx$  appear as nonterminals and
the transitions of $\Ss$ play the role of terminals.

We assume that there is a partition $\Xx=\bigcup_{i=0}^\sL \Xx_i$ into
pairwise distinct families of sets of runs. 
For each set $X\in\Xx$, we define its level to be
$\lev(X):=i$ if $X\in \Xx_i$. 
We only consider \emph{well-formed grammars} that satisfy the restriction
that all rules of the grammar have to be \emph{well-formed}. 

\begin{definition}\label{def:wf-rule}
  A \emph{well-formed rule over $\Xx$} (\emph{wf-rule} for short)  is
  of the form  
  \begin{enumerate}
  \item	$\Rule{X}{}$  where $X\in\Xx$, or
  \item	$\Rule{X}{\delta}$ where $\delta\in\Delta$, $X\in\Xx$
    and if the operation in $\delta$ is $\mathsf{pop}^k$ or
    $\col k$ then $k\leq \lev(X)$, or 
  \item	$\Rule{X}{\delta Y}$ where $\delta\in\Delta$,
    $X,Y\in\Xx$, $\lev(Y)\leq \lev(X)$
    and if the operation in $\delta$ is $\mathsf{pop}^k$ or
    $\col k$ then $k\leq \lev(Y)$, or 
  \item	$\Rule{X}{\delta YZ}$ where $\delta\in\Delta$,
    $X,Y,Z\in\Xx$, $\lev(Z)\leq \lev(X)$,
    if the operation in $\delta$ is $\mathsf{pop}^k$ or
    $\col k$ then $k\leq \lev(Y)$, and 
    whenever $R$ is a composition of a one-step run performing
    transition $\delta$ with a run from $Y$, then the topmost
    $\lev(Y)$-stacks of $R(0)$ and $R(\lvert R \rvert)$ coincide.
  \end{enumerate}
\end{definition}
\begin{definition}\label{def:X-described}
  We say that a run $R$ is \emph{described} by a wf-rule
  $\Rule{X}{\delta X_1 \dots X_m}$, $m\in\{0,1,2\}$
  if there is a decomposition
  $R=R_0 \circ R_1 \circ \dots \circ R_m$ such that 
  $R_0$ has length $1$ and performs $\delta$ and $R_i\in X_i$ for each
  $1\leq i \leq m$; 
  a run $R$ is described by $\Rule{X}{}$ if $\lvert R\rvert=0$.
  We say that a family $\Xx$ is \emph{described} by a well-formed
  grammar $\Rr_\Xx$ if for each $X\in\Xx$, a run $R$ is in $X$ if and
  only if it is described by some rule $\Rule{X}{\delta X_1 \dots
    X_m}\in \Rr_\Xx$.
\end{definition}

\begin{example} \label{exa:Rules}
  Let $\Qq$ be the set of all runs.
  Setting 
  $\lev(\Qq)=\sL$, the one-element family $\{\Qq\}$  is described by
  the wf-rules
  $\Rule{\Qq}{\delta \Qq}$ for each transition $\delta$, and
  $\Rule{\Qq}{}$.

  Indeed, for every run $R$ either $\lvert R\rvert=0$  or 
  $R$ consists of a first transition
  followed by some run.
  Note that we cannot choose $\lev(\Qq)$ different from $\sL$ whenever
  $\Ss$ contains a transition $\delta_0$ performing $\col{\sL}$ or
  $\pop{\sL}$. If we set $\lev(\Qq)<\sL$, then 
  $\Rule{\Qq}{\delta_0 \Qq}$ would not be a wf-rule. 
\end{example}

Next we prove that the class of families described by well-formed
grammars is closed under addition of unions and compositions. 
This is crucial for the decidability results mentioned in 
Lemma \ref{Lemma:Decidabilities}.
If $X$ and $Y$ are sets of runs,
we set $X\circ Y:=\{R\circ S: R\in X, S\in Y\}$.

\begin{lemma}\label{lem:compos}
  Let $\Xx$ be a family described by a well-formed grammar. 
  For $X,Y\in \Xx$ the family
  $\Xx\cup\{ X\cup Y\}$ is described by a well-formed grammar. 
  Moreover, there is a family $\Yy\supseteq \Xx\cup\{X\circ Y\}$ that is
  described by a well-formed grammar. In these grammars, we have 
  $\lev(X\cup Y) = \lev(X\circ Y) = \max(\lev(X),\lev(Y))$.
\end{lemma}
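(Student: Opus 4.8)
The plan is to treat the two claims separately but by the same overall strategy: given the well-formed grammar $\Rr_\Xx$ describing $\Xx$, add new nonterminals and new rules, and check that (i) every added rule is a wf-rule in the sense of Definition \ref{def:wf-rule}, and (ii) the extended grammar describes the larger family. For the \emph{union}, I would simply set $\lev(X\cup Y):=\max(\lev(X),\lev(Y))=:\ell$ and take as the rules for the new nonterminal $X\cup Y$ the union of the rule sets of $X$ and $Y$, i.e. every rule $\Rule{X}{\delta X_1\dots X_m}\in\Rr_\Xx$ and every rule $\Rule{Y}{\delta Y_1\dots Y_m}\in\Rr_\Xx$, rewritten with $X\cup Y$ on the left-hand side. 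By Definition \ref{def:X-described}, a run is described by the new nonterminal iff it is described by some old rule for $X$ or some old rule for $Y$, i.e. iff it lies in $X\cup Y$; so the description property is immediate. The only thing to verify is well-formedness of the rewritten rules: the side conditions of Definition \ref{def:wf-rule} that constrain the \emph{right-hand} side (the $k\le\lev(Y)$ conditions on $\pop{k}$/$\col{k}$, and the topmost-$\lev(Y)$-stack-coincidence condition in case (4)) are unaffected by the left-hand side; the conditions constraining the left-hand side, namely $\lev(X_m)\le\lev(X)$ in cases (3) and (4), only become easier since $\ell\ge\lev(X)$ and $\ell\ge\lev(Y)$. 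Hence every rewritten rule is still well-formed.

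For the \emph{composition} I would proceed recursively. The natural attempt is: to describe $X\circ Y$, introduce for every $X'\in\Xx$ a fresh nonterminal $X'\mathord{\circ}Y$ with $\lev(X'\mathord{\circ}Y):=\max(\lev(X'),\lev(Y))$, and for each rule $\Rule{X'}{\delta X_1\dots X_m}\in\Rr_\Xx$ add a rule that ``threads $Y$ onto the end'': if $m=0$ then $\Rule{X'\mathord{\circ}Y}{}$ is replaced by the rule set of $Y$ (with $X'\mathord{\circ}Y$ on the left), and if $m\ge 1$ we add $\Rule{X'\mathord{\circ}Y}{\delta X_1\dots X_{m-1}(X_m\mathord{\circ}Y)}$. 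One checks by induction on derivations that $X'\mathord{\circ}Y$ is described exactly by $X'\circ Y$. The family $\Yy$ is then $\Xx$ together with all the $X'\mathord{\circ}Y$ (for $X'$ ranging over $\Xx$, with $Y$ fixed), which contains $X\circ Y$ as required and is finite because $\Xx$ is finite.

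The main obstacle is well-formedness of the composition rules, specifically the case-(4) condition and the arity bound. Threading $Y$ onto $\Rule{X'}{\delta X_1 X_2}$ produces $\Rule{X'\mathord{\circ}Y}{\delta X_1 (X_2\mathord{\circ}Y)}$, a rule of arity $2$, so I must check: $\lev(X_2\mathord{\circ}Y)\le\lev(X'\mathord{\circ}Y)$, which holds since $\lev(X_2)\le\lev(X')$ by well-formedness of the original rule and $\lev(Y)\le\lev(Y)$ trivially, so $\max(\lev(X_2),\lev(Y))\le\max(\lev(X'),\lev(Y))$; the $\pop{k}$/$\col{k}$ condition $k\le\lev(X_1)$, which is inherited unchanged from the original rule; and the topmost-$\lev(X_1)$-stack coincidence, which also only talks about $\delta$ and $X_1$ and is therefore inherited. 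Threading $Y$ onto a rule $\Rule{X'}{\delta X_1}$ of arity $1$ gives $\Rule{X'\mathord{\circ}Y}{\delta X_1 (\varepsilon\hbox{-chain for }Y)}$; to keep arity $\le 2$ I instead first substitute the \emph{rules} of $Y$ when the derivation from $X'$ terminates, i.e. I only create $X_m\mathord{\circ}Y$ nonterminals at the last position of a right-hand side, and at arity-$0$ rules $\Rule{X'}{}$ I graft in the entire rule set of $Y$ — this is why $\Yy$ must include all the auxiliary nonterminals. One must double-check that grafting $Y$'s rules in at the point of an arity-$0$ rule for $X'$ preserves well-formedness: $Y$'s rules are already well-formed over $\Xx\subseteq\Yy$, and the only global invariant to maintain is that the levels still satisfy the bounds, which follows from $\lev(Y)\le\lev(X'\mathord{\circ}Y)$. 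Finally, the equality $\lev(X\circ Y)=\max(\lev(X),\lev(Y))$ is then exactly the definition we chose, and one verifies it is forced (cannot be lowered) by the same kind of argument as in Example \ref{exa:Rules}: the last component $X_m\mathord{\circ}Y$ ultimately expands into runs from $Y$, so its level must be at least $\lev(Y)$, and it inherits a component of level $\lev(X)$ from the $X$-side.
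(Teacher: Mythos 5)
Your treatment of unions is exactly the paper's argument and is fine, and for rules with at least one nonterminal on the right-hand side your composition scheme (replace the last nonterminal $X_m$ by $X_m\circ Y$, and graft $Y$'s rules in at the empty rules $\Rule{X'}{}$) coincides with what the paper does. However, there is a genuine gap: your case split ``$m=0$ grafts in $Y$, $m\geq 1$ threads onto $X_m$'' never handles the rules of the form $\Rule{X'}{\delta}$, i.e.\ a single terminal and \emph{no} nonterminal. A derivation from $X'$ can terminate through such a rule, and then there is neither a last nonterminal to compose with $Y$ nor an empty rule at which to graft $Y$'s rules, so the runs of $X'$ described by $\Rule{X'}{\delta}$ would never get a $Y$-suffix and your grammar would not describe $X'\circ Y$.

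The obvious repair --- adding $\Rule{X'\circ Y}{\delta\, Y}$ --- does not work either, and this is the one place where a new idea is required. Well-formedness of $\Rule{X'\circ Y}{\delta\, Y}$ demands $k\leq\lev(Y)$ when $\delta$ performs $\pop{k}$ or $\col{k}$, but the original rule $\Rule{X'}{\delta}$ only guarantees $k\leq\lev(X')$, and $\lev(Y)$ may be strictly smaller than $k$. The paper's solution is to introduce auxiliary nonterminals $Y^{i}$ for $0\leq i\leq n$, each containing exactly the runs of $Y$ but with $\lev(Y^{i}):=\max(\lev(Y),i)$ (raising the level of a left-hand side preserves well-formedness of $Y$'s rules), and to add $\Rule{X'\circ Y}{\delta\, Y^{\lev(X')}}$; then $k\leq\lev(X')\leq\lev(Y^{\lev(X')})\leq\lev(X'\circ Y)$ and all side conditions hold. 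These level-boosted copies of $Y$ must also be added to the family $\Yy$, which is why the lemma only asserts $\Yy\supseteq\Xx\cup\{X\circ Y\}$ rather than equality; your $\Yy$ omits them.
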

\begin{proof}
  For each rule $\Rule{Z}{\delta Z_1 \dots Z_m}$ with $Z\in\{X,Y\}$
  adding the rule 
  $\Rule{(X\cup Y)}{\delta Z_1 \dots Z_m}$ settles the case of unions.
  
  For the composition, we add a set $Z\circ Y$ for each  $Z\in\Xx$, and a new set $Y^{i}$ for $0\leq i \leq n$. 
  $Y^{i}$ contains exactly the same runs as $Y$, 
  but we set $\lev(Y^i):= \max(\lev(Y), i)$. Wf-rules
  describing $Y^i$ are clearly obtained from the rules for $Y$ by
  replacing 
  the left-hand side by $Y^i$. Note that increasing the level of the
  left-hand size turns well-formed rules into well-formed rules.
  Rules for each of the $Z\circ Y$ are easily obtained from rules
  for  $Z$ as follows.
  \begin{itemize}
  \item If there is a rule $\Rule{Z}{}$, for each rule having $Y$ on
    the left side we add the same rule with $Z\circ Y$ on the left
    side, 
  \item for each rule $\Rule{Z}{\delta}$ we add a rule 
    $\Rule{(Z\circ Y)}{\delta Y^{\lev(Z)}}$, 
  \item for each rule $\Rule{Z}{\delta X_1}$ we add a rule
    $\Rule{(Z\circ Y)}{\delta (X_1\circ Y)}$, 
  \item for each rule $\Rule{Z}{\delta X_1X_2}$ we add a rule
    $\Rule{(Z\circ Y)}{\delta X_1(X_2\circ Y)}$. 
  \end{itemize}
  It is straightforward to check that this is a well-formed grammar
  describing the family $\Yy:=\Xx\cup\{ Z\circ Y:Z\in\Xx\}\cup\{
  Y^{i}: 0\leq i \leq n\}$.\qed
\end{proof}


\section{A Family of Runs}
\label{sec:Runs}
We now define a family $\Xx$ described by
a well-formed grammar.
We first name the sets of runs that we define in the following. 
Some of our 
classes have subscripts from $\varepsilon$, $\notepsBig$,
$=$, and $<$. 
Subscript $\varepsilon$ marks a set if all runs in the set only
perform $\varepsilon$-transitions, while $\notepsBig$ marks a set if
each run in the set performs at least one non-$\varepsilon$-transitions.
Subscript $<$ marks sets (of pumping runs) where each run starts in a
smaller stack than it ends, 
while $=$ marks sets where no run starts in a smaller stack than it
ends (it follows that each such pumping run ends in the same stack as
it starts). 
$\Xx$ consists of the following sets (which we describe in detail on
the following pages).
\begin{itemize}
\item $\Qq$ of all runs,
\item $\Nn_k$ of $\TOP{k}$-non-erasing runs,
\item $\Pp$ of pumping runs which is the disjoint union of 
  the sets $\Pp_{x,y}$ for $x\in\{<,=\}$,  $y\in\{\varepsilon,
  \notepsBig\}$. Additionally, we set
  $\Pp_\noteps=\Pp_{<,\noteps}\cup\Pp_{=,\noteps}$ and
  $\Pp_\varepsilon=\Pp_{<,\varepsilon}\cup\Pp_{=,\varepsilon}$. 
\item $\Rr_{k,j}$ of $k$-returns of change level $j\geq k$ which is the
  disjoint union of the sets $\Rr_{k,j,y}$ for
  $y\in\{\varepsilon, \notepsBig\}$, and
\item $\Cc_{k,j}$ of $k$-colreturns of change level $j\geq k$
  which is the 
  disjoint union of the sets $\Cc_{k,j,y}$ for
  $y\in\{\varepsilon, \notepsBig\}$.
\end{itemize}

In order to easily distinguish between $\varepsilon$-runs and
$\notepsBig$-runs in the rules, we partition the transition relation
$\Delta=\Delta_\varepsilon \cup \Delta_\noteps$ such that
$\Delta_\varepsilon$ contains exactly the $\varepsilon$-labelled
transitions.
Before we can give rules for the family we need to define the levels
of its sets. 
We set $\lev(\Qq) = n, \lev(\Rr_{k,j,y}) = k, \lev(\Cc_{k,j,y}) = k,
\lev(\Nn_k) = n$ and $\lev(\Pp_{x,y}) = 0$. 

Now we give rules for these sets and we describe the
main properties of runs in each of the sets. 
In  Appendix \ref{app:Characterisation} we prove that these
descriptions are correct.

Recall that we have described $\Qq$ by well-formed rules in 
Example \ref{exa:Rules}. The sets of returns and colreturns are
auxiliary sets. Returns occur in the wf-rules for $\Nn_k$ and
$\Pp_{x,y}$ while colreturns are necessary to give wf-rules for
returns. 

$\Nn_k$ contains all runs $R$ where the topmost $k$-stack of
$R(0)$ is never removed during the run. 
First, we give an idea how the set $\Nn_0$ plays an important role in
our pumping lemma. Recall that we want to apply Theorem
\ref{thm:types} to pumping runs in order to obtain arbitrarily many runs
starting in a given configuration. Our final goal is to construct
infinitely many different paths in the $\varepsilon$-contraction of
the graph of a given collapsible pushdown system that all end in a
specific state $q$. But in general, the pumping runs we construct end in a
different state. Thus, the type of the stack reached by each of the
pumping runs should determine that we can reach a configuration with
state $q$ from this position. This could be done using the set $\Qq$
but it is not enough: if the pumping runs induce
$\varepsilon$-labelled paths then we could append runs from $\Qq$ that
all lead to the same configuration. In this case, we construct longer
and longer runs but all these runs encode the same edge in the
$\varepsilon$-contraction. This is prohibited by the use of runs from
$\Nn_0$: we can prove that the longer pumping runs we construct end in
larger stacks. Appending a run from $\Nn_0$ to such a run ensures that
the resulting run also ends in a large stack. From this observation we
will obtain infinitely many runs that end in different configurations
with state $q$. Thus, they induce infinitely many paths in the
$\varepsilon$-contraction. 
The rules for $\Nn_k$ are
\begin{itemize}
\item $\Rule{\Nn_k}{}$,
\item $\Rule{\Nn_k}{\delta \Nn_k}$ for each 
  $\delta\in\Delta$
  performing an operation of level at most $k$,
\item $\Rule{\Nn_k}{\delta^j \Nn_{j-1}}$ for each
  $\delta^j\in\Delta$ performing a
  $\push{j}$ and $j\geq k+1$,
\item $\Rule{\Nn_k}{\delta^j \Rr_{j,j} \Nn_k}$ 
  for each
  $\delta^j\in\Delta$  performing a
  $\push{j}$.
\end{itemize}
Our analysis of returns 
reveals that $\delta^j$ followed by 
a run from $\Rr_{j,j}$ starts and ends in the same stack. Thus,
the last rule satisfies the requirement that the topmost
$j$-stacks of these stacks coincide. Moreover, such a run
never changes the topmost $j$-stack of the initial configuration. 
Using this fact it is straightforward to see that every run described
by these rules does not remove the topmost $k$-stack. The other
direction, i.e., the proof that every run preserving the existence
of the topmost $k$-stack is described by one of these rules, can be
found in the appendix.

Some run $R$ is a pumping run, i.e., $R\in\Pp$,  if its final stack is
created completely 
on top of its initial stack in the following sense: 
the topmost $1$-stack of $R(\lvert R\rvert)$ is obtained as a
(possibly modified) 
copy of the topmost $1$-stack of $R(0)$, 
and in this copy the topmost $0$-stack of $R(0)$ was never removed.
Another view on this definition is as follows: for each $k$, the run
$R$ may look into a copy of the topmost $k$-stack of $R(0)$ only if
this copy is not directly involved in the creation of the topmost
$k$-stack of $R(\lvert R \rvert)$. 
In Appendix \ref{app:History} we define a history function that makes
the notion of being involved in the creation of some stack
precise: for each $i<\lvert R \rvert$ and for each $k$-stack $s^k$ of
$R(\lvert R \rvert)$ we can identify a $k$-stack $t^k$ in $R(i)$ which
is the maximal $k$-stack involved in the creation of this stack. 

In the rest of this section  $y, y_0, y_1, y_2$ are
variables in
$\{\varepsilon, \notepsBig\}$ where we assume that either all are
$\varepsilon$ or $y=\notepsBig$ and one of the $y_i$ occurring in the
rule is $\notepsBig$. 
$j$ is a variable ranging over $\{1, 2, \dots, n\}$.
The rules for $\Pp$ are
\begin{itemize}
\item $\Rule{\Pp_{=,\varepsilon}}{}$,
\item $\Rule{\Pp_{<,y}}{\delta_{y_0} \Pp_{x,y_1}}$ for each
  $\delta_{y_0}\in\Delta_{y_0}$ performing $\push{j}$ and
  $x\in\{=,<\}$,
\item $\Rule{\Pp_{x,y}}{\delta^j_{y_0} \Rr_{j,j,y_1} \Pp_{x,y}}$ for
  each $\delta^j_{y_0}\in\Delta_{y_0}$ performing $\push{j}$ and
  $x\in\{=,<\}$, 
\item 
  $\Rule{\Pp_{<,y}}{\delta^j_{y_0} \Rr_{j,j',y_1} \Pp_{x,y_2}}$ for
  each $\delta^j_{y_0}\in\Delta_{y_0}$ performing $\push{j}$, $j'>j$
  and $x\in\{=,<\}$. 
\end{itemize}
Proving the correctness of this set of rules with respect to our
intended meaning of the sets $\Pp_{x,y}$ requires a very detailed
study of returns which is done in Appendix
\ref{app:Characterisation}. 
In order to see that the rules of the last two kinds are
well-formed  we need the property that for every run $R$ which first
performs a $\push j$ operation followed by a
$j$-return, 
the topmost $j$-stack of $R(0)$ and $R(|R|)$ is the same.

\begin{example}
  A run of length $1$ performing a $\push 1$ operation is a pumping run.
  Also a run of length $2$ performing a $\push 1$ operation followed
  by a $\pop 1$ operation is a pumping run. 
  However a run of length $2$ performing first a $\pop 1$ operation
  and then a $\push 1$ operation is not a pumping run. 
  This shows that in the definition of a pumping run we do not only care
  about the initial and final configuration, but about the way the
  final configuration is created by the run: 
  a pumping run $R$ may never remove the topmost $0$-stack of $R(0)$. 

  Next consider a run $R$ of length $3$ performing the sequence of
  operations 
  $\push2,\ \pop1,\ \pop 2.$
  It is also a pumping run.
  Notice that this run ``looks'' into a copy of the topmost $1$-stack
  of  $R(0)$, i.e., it removes its topmost $0$-stack whence it depends on 
  symbols of $R(0)$ other than the topmost one. 
  One can see that in any 2-CPS, whenever a pumping run $R$
  looks into a copy of the topmost $1$-stack of $R(0)$, then this copy
  is completely removed from the stack at some later point in the
  run. 
  However, this is not true for higher levels. A counter example is
  a run performing 
  $\push2,\ \pop1,\ \push3,\ \pop 2.$
\end{example}

Next we define returns.
A run $R$ is a $k$-return (where $1\leq k\leq n$) if
\begin{itemize}
\item the topmost $(k-1)$-stack of $R(|R|)$ is obtained as a copy of
  the second topmost $(k-1)$-stack of $R(0)$ (in particular we require
  that there are at least two $(k-1)$-stacks in the topmost
  $k$-stack of $R(0)$), and  
\item while tracing this copy of the second topmost $(k-1)$-stack of
  $R(0)$ which finally becomes the topmost $(k-1)$-stack of $R(|R|)$, 
  it is not the topmost $(k-1)$-stack of $R(i)$ for any $i<|R|$.
\end{itemize}
Additionally, for a $k$-return $R$ its change level is the maximal $j$
such that the topmost $j$-stack of the initial and of the final stack
of $R$ differ in size (i.e.~in the number of $(j-1)$-stacks they
contain).\footnote{ 
  One can see that it is the same as saying that the topmost $j$-stack
  of the initial and of
  the final stack of $R$ differ. However a definition using size is
  more convenient.} 
One can see that the topmost $k$-stack of $R(0)$ is always greater by
one than the topmost $k$-stack of $R(|R|)$, so we have $j\geq k$. 
Recall that $\Rr_{k,j}$ is the set of $k$-returns of change level $j$.

Let us just give some intuition about returns before we state their
exact characterisation using wf-rules. 
The easiest sets of returns are those where $k=j$. 
A run $R\in\Rr_{k,k}$ starts in some stack $s$, ends in the
stack $\pop{k}(s)$, and never visits $\pop{k}(s)$ (or any smaller
stack) before the final configuration. 
Notice also that there is a minor restriction on the use of collapse
operations: $R$ is not allowed to use
links of level $k$ stored in $s$ in order to reach $\pop{k}(s)$.
Indeed, if such a link was used, then the topmost $(k-1)$-stack of
$R(\lvert R\rvert)$ would not be a copy of the second topmost
$(k-1)$-stack of $R(0)$, but a copy of the $(k-1)$-stack stored in the
link used. Note that this distinction is due to our special
representation of links,  yet it is useful for the understanding
of the definitions.

In the case that $j>k$ things are more complicated but similar. 
This time $R\in\Rr_{k,j}$ makes a number of copies of the
(possibly modified) topmost $(j-1)$-stack of the initial stack $s$
whence the topmost 
$j$-stack of the final stack $s'$ is of bigger size than the topmost
$j$-stack of $s$. 
But again the topmost $k$-stack of $s'$ is the same as the topmost
$k$-stack of $\pop{k}(s)$, 
and is in fact created as a modified copy of the topmost $k$-stack of
$s$. 
Furthermore, while tracing the history of this copy along the
configurations of the run, the 
size of this copy is always greater than its size in $R(|R|)$.
Notice however that we may also create some other copies of the
topmost $k$-stack of $s$, in which we can remove arbitrarily many
$(k-1)$-stacks. 
Finally, there is again a minor restriction on the use of collapse
links stored in the initial stack $s$. This restriction
implies that the stack obtained via application of the stack
operations of the return to $s$ is independent of the linked stacks,
i.e., if we replace one of the links of the stack $s$ 
such that the 
stack operations of $R$ are still applicable to the resulting stack
$s'$, then this sequence of stack operations applied to $s'$ results in the same stack (as when applied to $s$). 
In Example \ref{exa:colreturnsinReturns} we discuss the conditions
under which we may use a link stored in the initial stack of some
return.

\begin{example}
Consider a run $R$ of length $6$ (of a collapsible pushdown system of
level $2$) which 
performs the sequence of operations
\begin{equation*}
  \push2, \pop1, \pop2, \pop1, \push1, \pop1.
\end{equation*}
Below we use the notation that symbols taken in square brackets are in
one $1$-stack (we omit the collapse links). 
Assume we start from a stack $[aa][aa]$.
The stacks of the following configurations of $R$ are:
\begin{align*}
  &R(0)=[aa][aa][aa],\
  R(1)=[aa][aa][a],\
  R(2)=[aa][aa],\\
  &R(3)=[aa][a],\
  R(4)=[aa][aa],\
  R(5)=[aa][a].  
\end{align*}
We have $\subrun{R}{0}{2}\in\Rr_{1,2}$,
$\subrun{R}{0}{4},\subrun{R}{1}{2},\subrun{R}{3}{4},\subrun{R}{5}{6}\in\Rr_{1,1}$ 
and 
$\subrun{R}{1}{3},\subrun{R}{2}{3}\in\Rr_{2,2}$.
These are the only subruns of $R$ being returns, in particular $R$ is
not a $1$-return because it visits its final stack before its final
configuration. 
\end{example}

\begin{example}
The run $R$ of length $5$ performing the sequence of
operations 
\begin{equation*}
  \push2, \pop1, \push3, \pop2, \pop1
\end{equation*}
is a $1$-return of change level $3$.
Notice that  the final stack contains a copy of the topmost $1$-stack
of $R(0)$ with its topmost $0$-stack removed. 
\end{example}

The rules for returns are as follows:
\begin{itemize}
\item $\Rule{\Rr_{k,k,y}}{\delta_{y_0}}$ for each
  $\delta_{y_0}\in\Delta_{y_0}$ performing $\pop{k}$,
\item $\Rule{\Rr_{k,j,y}}{\delta_{y_0} \Rr_{k,j,y_1} }$ for each
  $\delta_{y_0}\in\Delta_{y_0}$ performing an operation of level 
  $<k$,
\item $\Rule{\Rr_{k,j,y}}{\delta^{j_0}_{y_0} \Rr_{k,j_1,y_1}}$ for each
  $\delta^{j_0}_{y_0}\in\Delta_{y_0}$ performing a $\push{j_0}$ such
  that $j_0>k$ and $\max\{j_0, j_1\} = j$,
\item $\Rule{\Rr_{k,j,y}}{\delta^{j_0}_{y_0} 
    \Rr_{j_0,j_0,y_1} \Rr_{k,j,y_2}}$ 
  for each
  $\delta^{j_0}_{y_0}\in\Delta_{y_0}$ performing a push of level
  $j_0$,
\item $\Rule{\Rr_{k,j,y}}{\delta^{j_0}_{y_0} 
    \Rr_{j_0,j_1,y_1} \Rr_{k,j_2,y_2}}$ 
  for each
  $\delta^{j_0}_{y_0}\in\Delta_{y_0}$ performing a $\push{j_0}$ such
  that $j_1>j_0$ and $\max\{j_1,j_2\}=j$, and
\item $\Rule{\Rr_{k,j,y}}{\delta_{y_0} \Cc_{k,j,y_1}}$ for each
  $\delta_{y_0}\in\Delta_{y_0}$ performing a $\push{1}_{a,k}$.
\end{itemize}

A $k$-colreturn is a run $R$ that performs in
the last step a $\col{k}$ on a copy of the topmost symbol of its
initial stack.
The change level of $k$-colreturns is (again) defined 
as the maximal $j$ such that the topmost $j$-stack of the initial
and of the final stack of the colreturn $R$ differ in size. 

Note that $k$-colreturns appear in the rules for returns after a push
of level $1$.  
The simplest example of a return described by the 
last rule is a run $R$ starting in a stack $s$ and performing
$\push{1}_{a,k}$ and then $\col{k}$. Note that such a sequence has the
same effect as applying 
$\pop{k}$ to $s$. Note that $\subrun{R}{1}{2}$ in this example is a
run from the stack $s':=\push{1}_{a,k}(s)$ to $\pop{k}(s')$ (for $k\geq
2$). Nevertheless we exclude it from the definition of a $k$-return of
change level $k$ because this effect is not transferable to arbitrary
other stacks: of course, we can apply the transition of
$\subrun{R}{1}{2}$ to the stack $\push{k}(s')$ and obtain a run $R'$
from $\push{k}(s')$ to $\pop{k}(s')$. But apparently this is not a run
from some stack $s''$ to a stack $\pop{k}(s'')$, so it is not a $k$-return. 
For this reason our definition of returns 
disallows the application of certain stored collapse links. 
The colreturns take care of such situations where
we use the links stored in the stack. Notice that $k$-colreturns occur in the
rules defining the other sets of runs only at those points where we
performed a push of level $1$ whence we
can be sure that the effect of the collapse operation coincides with
the application of exactly one $\pop{k}$ operation to the initial
stack. 

\begin{example} \label{exa:colreturnsinReturns}
Consider a run $R$ of length $4$ performing  
$\push2, \col1, \pop2, \pop1$.
It is a $1$-return of change level $1$.
Notice that it performs a collapse operation using a (copy of a) link
already 
stored in $R(0)$. 
But $\subrun{R}{1}{3}$ is a $2$-return (of change level $2$) which
covers this collapse operation, i.e., whenever the whole
sequence is  applicable to some stack $s$ it ends
in the stack $\pop1(s)$. As a general rule, we allow the use of a
$\col{k}$ from a (copy) of a link stored in the initial stack of some
return $R$ if it occurs within some subrun $R'$ that is a $k'$-return or
$k'$-colreturn of higher level(i.e., $k'>k$). In such cases the
resulting stack does not depend on the stack stored in the link (as
long as the whole sequence of operations of the return is
applicable). 
Hence, the following sequence of operations also induces a $1$-return
of change level $1$:
$\push3, \col2, \pop3, \pop1$.
\end{example}

Finally, let us state the rules for $k$-colreturns. 
\begin{itemize}
\item $\Rule{\Cc_{k,k,y}}{\delta_{y_0}}$ for each
  $\delta_{y_0}\in\Delta_{y_0}$ performing a $\col{k}$,
\item $\Rule{\Cc_{k,j,y}}{\delta^{j_0}_{y_0} \Cc_{k,j_1,y_1}}$ for
  each
  $\delta^{j_0}_{y_0}\in\Delta_{y_0}$ performing a $\push{j_0}$ such that $j_0\geq 2$ and $\max\{j_0,j_1\}=j$,
\item $\Rule{\Cc_{k,j,y}}{\delta^{j_0}_{y_0} 
    \Rr_{j_0,j_0,y_1} \Cc_{k,j,y_2}}$ 
  for each
  $\delta^{j_0}_{y_0}\in\Delta_{y_0}$ performing a $\push{j_0}$, and
\item $\Rule{\Cc_{k,j,y}}{\delta^{j_0}_{y_0} 
    \Rr_{j_0,j_1,y_1} \Cc_{k,j_2,y_2}}$ 
  for each
  $\delta^{j_0}_{y_0}\in\Delta_{y_0}$ performing a $\push{j_0}$ such that $j_1>j_0$ and $\max\{j_1,j_2\}=j$. 
\end{itemize}
This completes the presentation of the well-formed grammar describing
$\Xx$.


\bibliographystyle{abbrv}
\bibliography{bib}
\newpage
\appendix

\section{Deviations from Standard Definitions}
\label{Appendix:Definitions}
  Our definition of a collapsible pushdown system of level $n$
  deviates from the original one from \cite{collapsible} in several
  respects. 
  \begin{enumerate}
  \item Instead of storing links which control the collapse operation,
    we store the content of the stack where the link points to. 
    Note that this is only a syntactical difference.
  \item Instead of having one collapse operation whose level is
    controlled by the topmost element of the stack, we have one
    collapse operation for each stack level. 
    Note that we can simulate a collapse transition in the original
    sense by using $n$ collapse transitions (one for each level).
  \item Finally, for reasons of uniformity, the
    definition of the $\push{1}_{a,1}$ operation differs from the
    original one. 
    If we apply this operation to a stack $s$ we do not create a link
    to the topmost $1$-stack $s^1$ of $s$ but to $\pop{1}(s^1)$. 
    Thus, the effect of using a collapse of level $1$ in the original
    definition is always  equal  to the effect of applying $\pop{1}$ while
    in our definition it is always equal to the effect of applying
    $\pop{1}$ \emph{twice}. Nevertheless, we can simulate every system
    of the original definition by simulating collapse of level one
    with a $\pop{1}$ operation directly.\footnote{
      Notice however that $\col 1$ can be performed only if the
      topmost $0$-stack stores a $1$-stack. 
      The same should be true for the new transition performing $\pop 1$.
      To ensure this, it is enough to extend the stack alphabet so
      that the stack symbol stores also the level of the stack stored
      in the $0$-stack.} 
  \end{enumerate}
  Due to these observations, it should be clear that every ``original''
  collapsible  pushdown system of level $n$ and size $s$ 
  is simulated by a collapsible pushdown system of level $n$ and size
  at most $n\cdot s$. Moreover, with respect to
  $\varepsilon$-contractions of the configuration graphs both
  definitions  are equal  (from our systems to the original ones, we
  need one $\varepsilon$-transition in order to simulate each
  $\col{1}$ operation by two $\pop{1}$ operations; all other
  transitions are translated one-to-one).

\section{Types of Stacks and Configurations---Definitions} 
\label{sec:types}
Before we start defining types, let us introduce one more restriction
for a set of wf-rules. 

\begin{definition}
  A set $\Rr_\Xx$ of well-formed rules over $\Xx$ is called
  \emph{well-formed} if for each rule 
  $(\Rule{X}{\delta YZ})\in\Rr_\Xx$ there is also the rule 
  $(\Rule{X_{\delta Y}}{\delta Y})\in \Rr_\Xx$ for a
  new nonterminal $X_{\delta Y}$ such that $\lev(X_{\delta Y})=\lev(Y)$, 
  and for each rule $(\Rule{X}{\delta X})\in \Rr_\Xx$ or 
  $(\Rule{X}{\delta XY})\in \Rr_\Xx$
  there is also the
  rule
  $(\Rule{X_{\delta}}{\delta})\in\Rr_\Xx$ for a new nonterminal $X_\delta$.   
\end{definition}
\begin{remark}\label{rem:Grammars-well-formed}
  Note that each set of well-formed rules over some family $\Xx$ can
  be turned into a well-formed set of well-formed rules by adding the
  necessary symbols $X_\delta$ and $X_{\delta Y}$ to $\Xx$, the corresponding
  rules to the set of rules, and by setting $\lev(X_\delta)=n$ and
  $\lev(X_{\delta Y})=\lev(Y)$. 
\end{remark}

For the rest of this section we fix some finite family $\Xx$ described
by a well-formed set $\Rr_\Xx$ of wf-rules. The aim of this section is to assign
to any $k$-stack $s^k$ a \emph{type} $\type_\Xx(s^k)$ that determines the
possible runs from any of the sets $X\in \Xx$ starting in a stack with
topmost $k$-stack $s^k$. The type of $s^k$ is a set of
\emph{run descriptors} which come from a set $\Tt^k$ that are
defined inductively from $k=n$ to $k=0$. A typical
element of $\Tt^n$ has the form 
\begin{align*}
  \teA=(p, \deA) 
  \text{ with }
  \deA =(X,\Omega^n,\Omega^{n-1},\dots,\Omega^{\lev(X)+1},q),
\end{align*}
and a typical element of $\Tt^k$ (for $0\leq k < n$) has the form 
\begin{align*}
  \teA=(\Sigma^n, \Sigma^{n-1}, \dots, \Sigma^{k+1}, p, \deA) 
  \text{ with }
  \deA =(X,\Omega^n,\Omega^{n-1},\dots,\Omega^{\lev(X)+1},q),
\end{align*}
where $X\in \Xx$ is one of the
sets of runs we are interested in, $\Sigma^i$ and $\Omega^i$ are
types of $i$-stacks, and $p,q\in Q$ are states 
of the CPS $\Ss$. 
Let us explain the intended meaning of such a
tuple.
We want to have $\teA\in \type_\Xx(s^k)$ if and only if for all stacks
$t^n, t^{n-1}, \dots, t^{k+1}$  where $\Sigma^i\subseteq \type_\Xx(t^i)$ 
there is a run $R\in X$ such that 
$R(0)=(p, t^n:t^{n-1}:\dots:t^{k+1}:s^k)$ and 
$R(\lvert R\rvert) = (q,u^n:u^{n-1}:\dots: u^0)$ such that
$\Omega^j\subseteq\type_\Xx(u^j)$ for all $\lev(X)+1\leq j \leq n$. 
In other words, if we put $\teA$ into $\type_\Xx(s^k)$ we claim the
following. If for each $k+1 \leq i \leq n$ we take an $i$-stack $t^i$ 
that satisfies the claims of  $\Sigma^i$, then there is a run
in X that starts in state $p$ 
and the stack obtained by putting $s^k$ on top of the $\sL$-stack 
$t^{\sL}:\dots:t^{k+1}$, ends in state $q$, and the final stack
decomposes into a sequence of stacks such that the $j$-th element satisfies
all claims of $\Omega^j$. 

Recall that a $0$-stack contains the whole stack to which it links. 
Thus the type of a $0$-stack depends not only on its symbol, but also
on the whole stack it contains in the link. In order to deal with this
fact we define the type of a stack by induction on its level and by
induction on the nesting depth of its links.

We first introduce the set $\Tt^k$ of possible run descriptors of
level $k$ (the possible types of $k$-stacks are the elements of
$\Powerset(\Tt^k)$).\footnote{%
We write $\Powerset(X)$ for the power set of $X$.}

\begin{definition}
  Let $0\leq k\leq n$. Assume we have already defined sets $\Tt^i$ for
  $k+1\leq i\leq n$. 
  We take 
  $$
    \Tt^k =
    \{\mathsf{ne}\}\cup\big(\Powerset(\Tt^n)\times\Powerset(\Tt^{n-1})\times\dots
    \times\Powerset(\Tt^{k+1})\times
    Q\times \Dd^k\big),
 $$ 
 $$
 \Dd^k = \bigcup\{\Dd_X\colon X\in\Xx, \lev(X)\geq k\},\quad\mbox{where}
 $$ 
 $$
 \Dd_{X} =
 \{X\}\times\Powerset(\Tt^n)\times\Powerset(\Tt^{n-1})\times\dots
 \times\Powerset(\Tt^{\lev(X)+1})\times Q.
 $$ 
\end{definition}

Note that beside the run descriptors of the typical form, we also have
$\mathsf{ne}\in \Tt^k$: $\mathsf{ne}$ will appear in the type of some
stack if and only if this stack is non-empty.
In order to easily talk about the intended meaning of types and run
descriptors  we introduce the following definition.

\begin{definition}\label{def:agrees}
  Let $\deA=(X,\Omega^n,\Omega^{n-1},\dots,\Omega^{\lev(X)+1},q)\in\Dd_X$ for some
  $X\in\Xx$. 
  We say that a run $R$ \emph{agrees with $\deA$} if $R\in X$,
  and $R$ ends in a configuration $(q,t^n:t^{n-1}:\dots:t^0)$ such that
  $\Omega^i\subseteq \type(t^i)$ for  $\lev(X)+1\leq i\leq n$.
\end{definition}

Now we can reformulate the intended meaning of a run descriptors and
types. 

\begin{lemma}\label{lem:run-equiv-type}
  Let $\deA\in\Dd_X$ for some
  $X\in\Xx$, and let $0\leq l\leq \lev(X)$. 
  Let $c=(p,s^n:s^{n-1}:\dots:s^l)$ be a configuration.
  Then there is a run from $c$ which agrees with $\deA$ if and only if
  there is  a tuple $(\Psi^n,\Psi^{n-1},\dots,\Psi^{l+1},p,\deA)\in
  \type(s^l)$
  such that $\Psi^i\subseteq\type(s^i)$ for $l+1\leq i\leq n$.
\end{lemma}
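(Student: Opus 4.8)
The plan is to prove Lemma~\ref{lem:run-equiv-type} by induction on the nesting depth of the links in the stack $s^n:s^{n-1}:\dots:s^l$ (the same double induction---on level $k$ and on nesting depth---that underlies the very definition of $\type$). The statement connects a ``run-theoretic'' object (the existence of a run from $c$ agreeing with $\deA$) with a ``type-theoretic'' object (membership of an appropriate tuple in $\type(s^l)$); so the proof really amounts to checking that the inductive clauses defining $\type_\Xx$ on a $k$-stack $s^k$ were set up so that this equivalence is preserved. The cleanest route is: (i) first establish the claim for $l=\lev(X)$ directly from the definition of $\type$ on a $\lev(X)$-stack, and (ii) then ``lift'' it to smaller $l$ by peeling off one stack at a time, i.e.\ showing $\type(s^{l}:s^{l-1})$ is correctly computed from $\type(s^{l})$ and $\type(s^{l-1})$ (compositivity of types).

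First I would unfold the definition of $\type_\Xx(s^k)$ for a $k$-stack $s^k=[u_1,\dots,u_m]$: its type is a set of run descriptors, and for a nonempty stack the intended content of $\type(s^k)$ is obtained by combining $\type(u_m)$ (the topmost $(k-1)$-stack) with $\type([u_1,\dots,u_{m-1}])$ according to the wf-rules for the sets $X\in\Xx$ with $\lev(X)\ge k$. The key observation is that every run $R\in X$ (for $\lev(X)\ge k$) starting on top of a stack with topmost $k$-stack $s^k$ either never removes $s^k$, or it factors through a $k$-return as given by the wf-grammar; in either case, by the well-formedness restrictions (items~3 and~4 of Definition~\ref{def:wf-rule}), the run decomposes into pieces whose behaviour is governed by $\type(u_m)$ and the type of the stack below. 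So the $\Leftarrow$ direction is: given a tuple in $\type(s^l)$, reconstruct a derivation in the grammar $\Rr_\Xx$ and, following that derivation, assemble an actual run $R\in X$ from $c$ agreeing with $\deA$; here one uses the inductive hypothesis (on nesting depth) to realise the pieces that dive into linked stacks, which have strictly smaller nesting depth. The $\Rightarrow$ direction is the reverse: given $R\in X$ from $c$, take its derivation tree in $\Rr_\Xx$, read off from each leaf/subrun the types of the stacks it touches, and check that the bookkeeping exactly produces a qualifying tuple in $\type(s^l)$ with $\Psi^i\subseteq\type(s^i)$.

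The main obstacle I expect is the interaction of collapse links with the induction on nesting depth. A run in $X$ may perform a $\col k$ that replaces the topmost $k$-stack by the $k$-stack stored in the topmost $0$-stack; since our stacks physically contain the linked content, this linked $k$-stack has strictly smaller nesting depth, so the inductive hypothesis applies to it---but one must be careful that the colreturn machinery ($\Cc_{k,j}$) is exactly what records, inside $\type(u_m)$ (really inside $\type$ of the $0$-stack at the top of $u_m$), the possible effects of such a collapse, and that these are consistent regardless of what the other links in $s^l$ point to (this is precisely the ``minor restriction on the use of collapse links'' emphasised in Section~\ref{sec:Runs}). Making this precise requires tracking, along the history function of the run (Appendix~\ref{app:History}), which $k$-stack of each intermediate configuration is the ``maximal one involved in the creation'' of the relevant final stack, and verifying that the wf-rule decomposition aligns with that history. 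Once that alignment is in hand, both directions are a routine---though lengthy---structural induction matching grammar rules against the clauses in the definition of $\Tt^k$ and $\type_\Xx$.
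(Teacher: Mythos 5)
There is a genuine gap: your induction measure does not decrease where it needs to. For the $\Rightarrow$ direction, the paper inducts on the \emph{length of the run} (Lemma~\ref{lem:run2type}): a run described by $\Rule{X}{\delta X_1\dots X_m}$ splits into a first transition and strictly shorter subruns, and the shorter subruns start from stacks of possibly \emph{larger} nesting depth (a $\push{1}_{b,k}$ creates a fresh $0$-stack $(b,k,s^k)$ whose stored $k$-stack contains the previous topmost $0$-stack, so nesting depth strictly increases). An induction on nesting depth therefore has nothing to recurse on here. For the $\Leftarrow$ direction the situation is worse: the clause $\Psi^0\subseteq\stype_{z-1}(a,K,\Sigma^K)$ in Definition~\ref{def:types} refers to the type of the \emph{same} $0$-stack at an earlier fixpoint stage, and the $\push{1}$ case of $U_z$ refers to $\stype_{z-1}$ of a deeper $0$-stack; the paper handles this by inducting on the fixpoint stage $z$ (Lemma~\ref{lem:type2witness}), together with the strengthened ``has a witness'' invariant (Definition~\ref{def:witness}), which guarantees not just that a run exists but that its final stacks again admit witnesses so that further runs can be appended. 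Your proposal contains neither the fixpoint-stage induction nor the witness strengthening, and without them the reconstruction of a run from a tuple in $\type(s^l)$ does not close.

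A second, independent problem is that you lean on the $k$-return/$k$-colreturn decomposition and the history function. Lemma~\ref{lem:run-equiv-type} is stated and proved for an \emph{arbitrary} family $\Xx$ described by a well-formed grammar; returns, colreturns and the history function belong only to the particular family of Section~\ref{sec:Runs} and are established separately in Appendices~\ref{app:History} and~\ref{app:Characterisation}. The paper's proof of this lemma uses only the abstract data $T(r)$, $U_z$, $\comp$ and $\stype_z$ attached to the grammar, never the semantic characterisations of specific run classes. Your outline would at best prove the lemma for that one family, and even there the claim that ``every run either never removes $s^k$ or factors through a $k$-return'' is not something the type machinery may assume. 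Finally, your description of how $\type(s^k)$ is built (``combining $\type(u_m)$ with the rest according to the wf-rules'') conflates the composer $\comp$, which is pure bookkeeping on run descriptors, with $\stype$, which is where the grammar rules actually enter and only at the level of $0$-stacks.
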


We postpone the definition of types of stacks to the end of this
section. The proof of Lemma \ref{lem:run-equiv-type} can be found in
Appendix \ref{app:types}. 
Assuming that we already knew what the type of a stack is, it is easy
to define $\ctype_\Xx$, the function mapping configurations to their
types and to prove the first part of Theorem \ref{thm:types}. 

\begin{definition}
  Let $c=(q,s)$ be a configuration with $s=s^n:\dots:s^0$. 
  Set
  \begin{align*}
    \ctype_\Xx(c):=(\type_\Xx(s^\sL),\dots,\type_\Xx(s^0),q).  
  \end{align*}
  We define a partial order on the types of configurations as follows:
  $(\Phi^\sL,\dots,\Phi^0,p)\typOrd(\Psi^\sL,\dots,\Psi^0,q)$
  if and only if  $p=q$ and $\Phi^i\subseteq\Psi^i$ for $0\leq
  i\leq\sL$. 
\end{definition}

\begin{theorem*}[Part 1 of Theorem \ref{thm:types}]
  Let $\Xx$ be a family of sets of runs
  described by a  well-formed grammar,
  $R\in X\in\Xx$, and $c$ be a configuration with
  $\ctype_\Xx(R(0))\typOrd \ctype_\Xx(c)$.
  \begin{enumerate}
  \item  There is a run $S\in X$ starting in $c$
    which has the same final state as $R$.
  \end{enumerate}
\end{theorem*}

\begin{proof}[of part \ref{thm:typesA} of Theorem \ref{thm:types}]
  Let $R$ start in state $p$ and end in state $q$. 
  Let the pds of $R(0)$ be $s^n:s^{n-1}:\dots:s^0$,
  let the pds of $R(\lvert R\rvert)$ be $t^n:t^{n-1}:\dots:t^0$,
  and let the pds of $c$ be $u^n:u^{n-1}:\dots:u^0$.
  The assumptions say that $R$ agrees with 
  $$\deA=(X,\type(t^n),\type(t^{n-1}),\dots,
  \type(t^{\lev(X)+1}),q).$$
  Due to Lemma \ref{lem:run-equiv-type}, there are
  $\Psi^i\subseteq\type(s^i)$ for each $1\leq i\leq n$ such that
  run descriptor 
  $\teA=(\Psi^n,\Psi^{n-1},\dots,\Psi^1,p,\deA)\in\type(s^0)$. 
  Since the types of $s^i$ are included in the types of $u^i$, we also have
  $\teA\in \type(u^0)$, and $\Psi^i\subseteq\type(u^i)$ for $1\leq i\leq n$. 
  Using again Lemma \ref{lem:run-equiv-type} 
  there is a run $S$ from $c$ which agrees with $\deA$.
  By definition this implies that $S\in X$ and $S$ ends in state $q$. 

  We conclude with a further observation that will be useful when we
  prove part $2$ of the theorem. 
  Decomposing the final stack of $S$ as
  $v^n:v^{n-1}:\dots:v^0$, we obtain 
  \begin{align} \label{prop:thmTypesA-B}
    \type(t^i)\subseteq\type(v^i)\text{ for each }
    \lev(X)+1\leq i\leq \sL    
  \end{align}
  because by construction the final stack of $S$ realises the types
  described by $\deA$.\qed
\end{proof}

The proof of part \ref{thm:typesB} requires a more detailed knowledge
about the types. Thus, we postpone it to the end of this section.

Next we prepare the definition of types.
We first define composers. The intention is that a composer gives us
the type of a $k$-stack $s^k$ from the types of its decomposition as
$s^k=t^k:t^{k-1}:\dots:t^l$. 

\begin{definition}\label{def:composer}
  Let $0\leq l\leq k\leq n$, and let $\Psi^i\subseteq\Tt^i$ for each 
  $l\leq i\leq k$.  
  Their \emph{composer}
  $\comp(\Psi^k,\Psi^{k-1},\dots,\Psi^l)$ is 
  the subset of $\Tt^k$ containing all tuples 
  $(\Sigma^n,\Sigma^{n-1},\dots,\Sigma^{k+1},q,\deA)$ such that 
  there is a 
  $(\Sigma^n,\Sigma^{n-1},\dots,\Sigma^{l+1},q,\deA)\in\Psi^l$
  such that $\Sigma^i\subseteq\Psi^i$ for $l+1\leq i\leq k$ and $\deA\in\Dd^k$; 
  additionally the composer contains $\mathsf{ne}$ if $l<k$ or
  if $\mathsf{ne}\in\Psi^k$. 
\end{definition}
\begin{remark}
  Note that $\comp(\Psi^k)=\Psi^k$. 
  Furthermore, note that the definitions concerning $\mathsf{ne}$ are
  compatible with our intended meaning: a $k$-stack $s^k$ is nonempty
  if it is composed as
  $s^k=t^k:t^{k-1}:\dots:t^l$ for $l<k$. 
  All $t^i$ may be empty stacks, but the resulting $k$-stack contains a list of
  $(k-1)$-stacks whose topmost element is the possibly empty $(k-1)$-stack 
  $t^{k-1}:\dots:t^l$. Even if all elements of a list are empty
  stacks, it is not empty itself as long as it contains at least one
  element. 
\end{remark}

Note that the following properties of $\comp$ follow directly from the
definition.
\begin{lemma}\label{lem:comp-mon}
  Let $1\leq k\leq n$, and for each $0\leq i\leq k$ let
  $\Psi^i\subseteq\Phi^i \in\Powerset(\Tt^i)$. 
  Then $\comp(\Psi^k,\Psi^{k-1},\dots,\Psi^0)\subseteq
  \comp(\Phi^k,\Phi^{k-1},\dots,\Phi^0)$. 
\end{lemma}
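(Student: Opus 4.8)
The plan is to unfold Definition~\ref{def:composer} with $l=0$ and observe that each clause defining $\comp$ is monotone in every argument $\Psi^i$. So I would fix $k$ with $1\le k\le n$ and the two sequences $\Psi^0\subseteq\Phi^0,\dots,\Psi^k\subseteq\Phi^k$, take an arbitrary $\teA\in\comp(\Psi^k,\dots,\Psi^0)$, and show $\teA\in\comp(\Phi^k,\dots,\Phi^0)$ by distinguishing the two possible shapes of $\teA$ allowed by the definition.

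First, for the case $\teA=\mathsf{ne}$: the definition puts $\mathsf{ne}$ into $\comp(\Psi^k,\dots,\Psi^0)$ precisely when $0<k$ or $\mathsf{ne}\in\Psi^k$. Since here $k\ge 1$, the condition $0<k$ holds, hence the very same condition witnesses $\mathsf{ne}\in\comp(\Phi^k,\dots,\Phi^0)$ (and in any case $\mathsf{ne}\in\Psi^k\subseteq\Phi^k$ would do as well). Second, for the case $\teA=(\Sigma^n,\dots,\Sigma^{k+1},q,\deA)$ with $\deA\in\Dd^k$: by definition there is a tuple $(\Sigma^n,\dots,\Sigma^1,q,\deA)\in\Psi^0$ with $\Sigma^i\subseteq\Psi^i$ for $1\le i\le k$. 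Then $\Psi^0\subseteq\Phi^0$ gives that this tuple lies in $\Phi^0$, and $\Psi^i\subseteq\Phi^i$ gives $\Sigma^i\subseteq\Phi^i$ for $1\le i\le k$; since $\deA\in\Dd^k$ is untouched, this is exactly the witness required for $\teA\in\comp(\Phi^k,\dots,\Phi^0)$.

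I expect no real obstacle: the claim is an immediate consequence of the definition, since $\comp$ is built from the membership relation $\in$ and the inclusion relation $\subseteq$, both of which respect set inclusion in the relevant positions. The only point requiring a moment's care is the separate, slightly asymmetric treatment of the marker $\mathsf{ne}$ (whose presence depends on $l<k$ and on $\mathsf{ne}\in\Psi^k$ rather than on a witness tuple), which is why I split into the two cases above.
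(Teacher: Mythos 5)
Your proof is correct and is exactly the argument the paper has in mind: the paper states this lemma without proof, remarking only that it ``follows directly from the definition,'' and your unfolding of Definition~\ref{def:composer} (with the separate check for $\mathsf{ne}$, which indeed lies in both composers automatically since $l=0<k$) is the intended verification.
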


\begin{lemma}\label{lem:comp-decomposition}
  Let $0\leq l< j < k \leq n$, and  let
  $\Psi^i \in\Powerset(\Tt^i)$ for all $l\leq i\leq k$. 
  Then $\comp(\Psi^k,\Psi^{k-1},\dots,\Psi^l)=
  \comp(\Psi^k,\Psi^{k-1},\dots,\Psi^{j+1},
    \comp(\Psi^{j},\Psi^{j-1},\dots,\Psi^l))$. 
\end{lemma}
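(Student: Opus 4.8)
The final statement is Lemma~\ref{lem:comp-decomposition}, an associativity-type property of the composer operation. The plan is to prove the two set inclusions separately by unwinding Definition~\ref{def:composer}, keeping careful track of how the distinguished element $\mathsf{ne}$ is handled, since that is the only part of the definition that is not a pure "existential threading" of a tuple through the levels.

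\smallskip

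\noindent\textbf{Setup and the $\mathsf{ne}$ bookkeeping.} First I would fix notation: write $\Phi := \comp(\Psi^k,\dots,\Psi^l)$ for the left-hand side, $\Theta := \comp(\Psi^j,\dots,\Psi^l)$ for the inner composer on the right, and $\Phi' := \comp(\Psi^k,\dots,\Psi^{j+1},\Theta)$ for the right-hand side; note $\Theta\subseteq\Tt^j$ so $\Phi'$ is well-formed. For the $\mathsf{ne}$ element: since $l < j$, by definition $\mathsf{ne}\in\Theta$; and since $j < k$, the outer composer on the right (built from $\Psi^k,\dots,\Psi^{j+1},\Theta$, a list of length $k-j \geq 1 \geq 1$, i.e. having more than one entry) contains $\mathsf{ne}$. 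On the left, since $l<k$, $\Phi$ also contains $\mathsf{ne}$. So $\mathsf{ne}$ lies in both sides and can be set aside; it remains to match the "typical" tuples $(\Sigma^n,\dots,\Sigma^{k+1},q,\deA)$ with $\deA\in\Dd^k$.

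\smallskip

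\noindent\textbf{The two inclusions.} For $\Phi\subseteq\Phi'$: take a typical tuple $\tau=(\Sigma^n,\dots,\Sigma^{k+1},q,\deA)\in\Phi$. By Definition~\ref{def:composer} there is $(\Sigma^n,\dots,\Sigma^{l+1},q,\deA)\in\Psi^l$ with $\Sigma^i\subseteq\Psi^i$ for $l+1\leq i\leq k$ and $\deA\in\Dd^k$. Since $\Dd^k\subseteq\Dd^j$ (because $\lev(X)\geq k$ implies $\lev(X)\geq j$, so $\Dd_X$ for such $X$ is among those unioned at level $j$), the element $(\Sigma^n,\dots,\Sigma^{j+1},q,\deA)$ together with the data $\Sigma^i\subseteq\Psi^i$ for $l+1\leq i\leq j$ and the membership $(\Sigma^n,\dots,\Sigma^{l+1},q,\deA)\in\Psi^l$ witnesses $(\Sigma^n,\dots,\Sigma^{j+1},q,\deA)\in\Theta$ (here I use $\deA\in\Dd^j$). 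Then, using $\Sigma^i\subseteq\Psi^i$ for $j+1\leq i\leq k$, the tuple $(\Sigma^j,\dots,\Sigma^{j+1},\dots)$ — more precisely the membership just established in $\Theta$ playing the role of the "$\Psi^l$-element" at the new bottom level $j$ — witnesses $\tau\in\comp(\Psi^k,\dots,\Psi^{j+1},\Theta)=\Phi'$. The reverse inclusion $\Phi'\subseteq\Phi$ is the same argument run backwards: a typical $\tau\in\Phi'$ comes with a witness $(\Sigma^n,\dots,\Sigma^{j+1},q,\deA)\in\Theta$ and inclusions $\Sigma^i\subseteq\Psi^i$ for $j+1\leq i\leq k$; unfolding the definition of $\Theta$ gives $(\Sigma^n,\dots,\Sigma^{l+1},q,\deA)\in\Psi^l$ with $\Sigma^i\subseteq\Psi^i$ for $l+1\leq i\leq j$ and $\deA\in\Dd^j$; but in fact $\deA\in\Dd^k$ is forced because a run descriptor appearing in $\comp(\Psi^k,\dots,\Psi^{j+1},\Theta)\subseteq\Tt^k$ must have its $\deA$-component in $\Dd^k$ by the very definition of $\Tt^k$. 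Combining all the inclusions $\Sigma^i\subseteq\Psi^i$ over $l+1\leq i\leq k$ and the membership in $\Psi^l$ gives $\tau\in\Phi$.

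\smallskip

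\noindent\textbf{Main obstacle.} The argument is essentially bookkeeping, so the only real care needed is in the handling of the $\Dd^k$ versus $\Dd^j$ distinction: in the forward direction one must note $\Dd^k\subseteq\Dd^j$, while in the backward direction one must note that membership of $\tau$ in a subset of $\Tt^k$ already constrains its run-descriptor component to $\Dd^k$, so no information is lost. The other point to be careful about is not to conflate the "$\Sigma^i\subseteq\Psi^i$ for $l+1\le i\le k$" condition across the two nested applications — splitting the range at $j$ and recombining it is where an off-by-one slip could hide. Everything else follows directly from Definition~\ref{def:composer} and is, as the paper says, immediate.
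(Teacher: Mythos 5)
Your proof is correct: the paper itself gives no argument for this lemma (it states that it "follows directly from the definition"), and your careful unwinding of Definition~\ref{def:composer} — including the observation that $\Dd^k\subseteq\Dd^j$ for the forward inclusion, that membership in a subset of $\Tt^k$ forces the run descriptor into $\Dd^k$ for the backward inclusion, and the $\mathsf{ne}$ bookkeeping via $l<j<k$ — is exactly the intended verification. Nothing is missing.
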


In the set $\Rr_\Xx$ of rules we distinguish a subset $\Rr_\Xx^{>0}$
of those rules which are not of the form $\Rule{X}{}$ (i.e., they
describe runs of positive lengths). 
In the next step towards the definition of types, for each rule
$r=(\Rule{X}{\delta X_1\dots X_m})\in\Rr_\Xx^{>0}$ where the
operation in $\delta$ is $op$, we define two numbers $L(op)$ and
$M(r)$ as follows.
\begin{align*}
  L(op)&=
  \begin{cases}
    k & \text{if } op=\pop{k} \text{ or } op=\col{k},\\
    0 & \text{otherwise,}
  \end{cases}\\
  M(r)&=
  \begin{cases}
    \lev(X) & \text{if } m=0,\\
    \lev(X_1) &\text{otherwise}.
  \end{cases}
\end{align*}
Note that the inequality $0\leq L(op)\leq M(r)\leq n$ 
follows from the definition of wf-rules.
Then, to each rule $r=(\Rule{X}{\delta X_1\dots X_m})\in\Rr^{>0}_\Xx$
we assign a set 
\begin{align*}
  T(r)\subseteq&
  \big(\Powerset(\Tt^n)\times\Powerset(\Tt^{n-1})\times
  \dots\times\Powerset(\Tt^0)\big)\times\\   
  &\times\big(\Powerset(\Tt^n)\times\Powerset(\Tt^{n-1})\times
  \dots\times\Powerset(\Tt^{M(r)})\big)\times \Dd_X. 
\end{align*}
The intuitive meaning of this set is as follows.
Consider a run $R$ described by a wf-rule $r$.
The first part of a tuple describes the types of the stack of $R(0)$
(decomposed as $s^n:\dots:s^0$), 
the second part describes the types of the stack of $R(1)$ (decomposed
as $t^n:\dots:t^{M(r)}$)
and the last part is an element with which this run agrees.
In fact, in $T(r)$ we do not care whether $\delta$ can connect two
stacks of the described types for $R(0)$ and $R(1)$. 
We define it in such a way that under the assumption that $\delta$ may
connect two stacks of the corresponding types, the third part of
the tuple in $T(r)$ in fact permits a run that agrees with this
description and starts in the stack whose types are described by the
first part of the tuple. 
The question whether $\delta$ can transform a stack of a certain type
into a stack of another type is later dealt with when defining another
function $U$.
                  
\begin{definition}\label{def:set-T}
  Let $r=(\Rule{X}{\delta X_1 \dots X_m})\in\Rr^{>0}_\Xx$ and
  $\delta=(q_0, a, l, q_1, op)$.
  We distinguish the following cases.
  \begin{enumerate}
  \item	\label{def:set-T1}
    Assume that $r=(\Rule{X}{\delta})$.
    The set $T(r)$ contains all tuples 
    \begin{align*}
      ((\Psi^n,\Psi^{n-1},\dots,\Psi^0),
      (\Phi^n,\Phi^{n-1},\dots,\Phi^{\lev(X)}), \deA)\\
      \text{for }
      \deA=(X,\Omega^n,\Omega^{n-1},\dots,\Omega^{\lev(X)+1},q_1)
    \end{align*} 
    such that $\Omega^i\subseteq \Phi^i$ for $\lev(X)+1\leq i\leq n$
    (recall that $q_1$ is the state reached after application of $\delta$).
  \item	
    \label{def:set-T2}
    Assume that $r=(\Rule{X}{\delta X_1})$.
    The set $T(r)$ contains all tuples
    \begin{align*}
      ((\Psi^n,\Psi^{n-1},\dots,\Psi^0),
      (\Phi^n,\Phi^{n-1},\dots,\Phi^{\lev(X_1)}), \deA) \\
      \text{for }
      \deA=(X,\Omega^n,\Omega^{n-1},\dots,\Omega^{\lev(X)+1},q') 
    \end{align*}
    such that in $\Phi^{\lev(X_1)}$ we have a tuple
    \begin{align*}
      (\Phi^n&,\Phi^{n-1},\dots,\Phi^{\lev(X_1)+1},q_1, \deB)\\
      &\text{where }
      \deB=(X_1,\Omega^n,\Omega^{n-1},\dots,\Omega^{\lev(X_1)+1},q').
    \end{align*}
  \item	
    \label{def:set-T3}
    Assume that $r=(\Rule{X}{\delta X_1X_2})$.
    The set $T(r)$ contains all tuples
    \begin{align*}
      ((\Psi^n,\Psi^{n-1},\dots,\Psi^0),
      (\Phi^n,\Phi^{n-1},\dots,\Phi^{\lev(X_1)}), \deA)\\
      \text{for }
      \deA=(X,\Omega^n,\Omega^{n-1},\dots,\Omega^{\lev(X)+1},q')
    \end{align*} 
    such that in $\Phi^{\lev(X_1)}$ we have a tuple
    \begin{align*}
      (\Phi^n&,\Phi^{n-1},\dots,\Phi^{\lev(X_1)+1},q_1,\deB)\\
      &\text{for }
      \deB=(X_1,\Sigma^n,\Sigma^{n-1},\dots,\Sigma^{\lev(X_1)+1},q_2) 
    \end{align*}
    and in $\Psi^0$ we have a tuple
    \begin{align*}
      (\Sigma^n,\Sigma^{n-1},\dots,\Sigma^{\lev(X_1)+1},
      \Psi^{\lev(X_1)},\dots,\Psi^1,q_2,\deC)& \\
      \text{where }
      \deC=(X_2,\Omega^n,\Omega^{n-1},\dots,\Omega^{\lev(X_2)+1},q')&.
    \end{align*}
  \end{enumerate}
\end{definition}
\begin{remark}
  Recall that for each  rule of the form $\Rule{X}{\delta X_1X_2}$
  the definition of well-formed rules
  requires that any run of the form $\delta X_1$ starts and ends with the
  same topmost level $k$ stack for $k=\lev(X_1)$. Thus, for each run
  performing $\delta X_1$ such that the $X_1$-part agrees with $\deB$
  the type of the stack is completely determined: if the final stack
  decomposes as $s^n:s^{n-1}:\dots:s^1:s^0$, the type of $s^k$ for
  $k>\lev(X_1)$ is determined by $\deB$ and the type of $s^k$ for
  $k\leq \lev(X_1)$ is determined by the type of the initial stack,
  i.e., it is $\Psi^k$. 
\end{remark}

Using the function $T$ we define a function $U$ 
and a function $\stype$ which assigns types to $0$-stacks.
In fact, $U$ and $\stype$ are defined as simultaneous
fixpoints of sequences $(U_z)_{z\in\N}$ and $(\stype_z)_{z\in\N}$. 
For each $z\in\mathbb{N}$, each operation $op$, each number $1\leq
K\leq n$ and each $\Sigma^K\subseteq\Tt^K$ we define the set 
\begin{align*}
  U_z(op,K,\Sigma^K)&\subseteq
  \left(\Powerset(\Tt^n)\times\Powerset(\Tt^{n-1})\times\dots
  \times\Powerset(\Tt^0)\right)\times\\  
  &\times\left(\Powerset(\Tt^n)\times\Powerset(\Tt^{n-1})\times\dots
  \times\Powerset(\Tt^{L(op)})\right).  
\end{align*}
The intention is that $U_z(op,K,\Sigma^K)$ 
contains a tuple
\begin{align*}
  ((\Psi^n,\Psi^{n-1},\dots,\Psi^0),(\Pi^n,\Pi^{n-1},\dots,\Pi^{L(op)}))  
\end{align*}
 if
for a stack $s=s^n:s^{n-1}:\dots:s^0$ such that $\Psi^i$ is contained
in the type of $s^i$ and
$s^0$ carries a link of
level $K$ to a stack of type $\Sigma^K$, we can decompose
$op(s)=t^n:t^{n-1}:\dots:t^{L(op)}$  and $\Pi^j\subseteq \type_\Xx(t^j)$ for all
$L(op)\leq j \leq n$. 
When we enter the fixpoint $U(op,K,\Sigma^K)$ we are able to
replace the ``if'' by an ``if and only if''. 
In the definition of $U_z$ we use $\stype_{z-1}$ in order to
approximate the type of the topmost $0$-stack if $op=\push{1}$ (from
below).
In this case, the ``and only if'' part requires to
consider the complete type of the $0$-stack pushed onto the stack. 
The fixpoint $\stype$ of the
functions $\stype_{z}$ yields this complete type information. 
For the definition of $U_0$ and $\stype_0$, we assume that
$\stype_{-1}$ maps any 
input to $\emptyset$. 

\begin{definition}\label{def:set-U}
  Let $op$ be a stack operation, let $1\leq K\leq n$, let
  $\Sigma^K\subseteq\Tt^K$, and let $z\in\mathbb{N}$. 
  Assume that $\stype_{z-1}$ is already defined.
  We have four cases according to the operation used in $\delta$.
  \begin{enumerate}
  \item	\label{def:set-U1}
    Assume that $op=\pop{k}$.
    Then the set $U_z(op,K,\Sigma^K)$ contains all tuples
    $$((\Psi^n,\Psi^{n-1},\dots,\Psi^0),(\Psi^n,\Psi^{n-1},\dots,\Psi^k))$$
    where $\mathsf{ne}\in\Psi^k$.
  \item	\label{def:set-U2}
    Assume that $op=\col{k}$.
    If $k\neq K$ or $\mathsf{ne}\not\in\Sigma^K$, the set
    $U_z(op,K,\Sigma^K)$ is empty. 
    If $k=K$ and $\mathsf{ne}\in\Sigma^K$, the set $U_z(op,K,\Sigma^K)$
    contains all tuples 
    \begin{align*}
      ((\Psi^n,\Psi^{n-1},\dots,\Psi^0),
      (\Psi^n,\Psi^{n-1},\dots,\Psi^{k+1},\Sigma'^k))      
    \end{align*}
    such that $\Sigma'^k\subseteq\Sigma^K$.
  \item	\label{def:set-U3}
    Assume that $op=\push1_{b,k}$.
    The set $U_z(op,K,\Sigma^K)$ contains all tuples
    \begin{align*}
      ((\Psi^n,\Psi^{n-1},\dots,\Psi^0),(\Pi^n,\Pi^{n-1},\dots,\Pi^0))      
    \end{align*}
    which satisfy 
    $\Pi^i=\Psi^i$ for $2\leq i\leq k$, 
    $\Pi^1=\comp(\Psi^1,\Psi^0)$ and
    $\Pi^0\subseteq\stype_{z-1}(b,k,\Psi^k)$. 
  \item	\label{def:set-U4}
    Assume that $op=\push{k}$ with
    $k\geq 2$.
    The set $U_z(op,K,\Sigma^K)$ contains all tuples
    $$((\Psi^n,\Psi^{n-1},\dots,\Psi^0),(\Pi^n,\Pi^{n-1},\dots,\Pi^0))$$
    such that 
    $\Pi^i=\Psi^i$ for $0\leq i\leq n$ with $i\neq k$
    and $\Pi^k=\comp(\Psi^k,\Psi^{k-1},\dots,\Psi^0)$. 
  \end{enumerate}
\end{definition}

\begin{definition}\label{def:types}
  Let $z\in\mathbb{N}$, let $a\in\Gamma$, let $1\leq K\leq n$, and let
  $\Sigma^K\subseteq\Tt^K$.  
  Assume that $\stype_{z-1}$ and
  $U_z$ are already defined. 
  We define $\stype_z(a,K,\Sigma^K)$ as the set containing 
  \begin{enumerate}
  \item \label{def:types1}
    all tuples
    $$(\Psi^n,\dots,\Psi^1,q_0,
    (X,\Omega^n,\Omega^{n-1},\dots,\Omega^{\lev(X)+1},q_0))$$  
    such that $\Rr_\Xx$ contains the rule $\Rule{X}{}$, and
    $\Omega^i\subseteq\Psi^i$ for $\lev(X)+1\leq i\leq n$ (and
    $q_0$ is an arbitrary state), and 
  \item \label{def:types2} all tuples
    $(\Psi^n,\Psi^{n-1},\dots,\Psi^1,q_0,\deA)$
    such that for $0\leq m \leq 2$ and some rule 
    $r=(\Rule{X}{\delta X_1 \dots X_m})\in\Rr^{>0}_\Xx$ with
    $\delta=(q_0,a,\cdot,\cdot,op)$ 
    we have
    \begin{align*}
      &((\Psi^n,\dots,\Psi^0),(\Pi^n,\Pi^{n-1},\dots,\Pi^{L(op)}))\in U_z(op,K,\Sigma^K),\\ 
      &((\Psi^n,\dots,\Psi^0),
      (\Pi^n,\Pi^{n-1},\dots,\Pi^{M(r)+1},\Phi^{M(r)}),\deA)\in T(r),\\ 
      &\Psi^0\subseteq\stype_{z-1}(a,K,\Sigma^K), \quad\mbox{and}\\
      &\Phi^{M(r)}=\comp(\Pi^{M(r)},\Pi^{M(r-1)},\dots,\Pi^{L(op)}).
    \end{align*}
  \end{enumerate}
\end{definition}

Notice that the sequence $\stype_z$ is monotone with respect to both $z$ and $\Sigma^K$: for 
$\Sigma^K\subseteq \Sigma'^K$ and
each
$z\in\mathbb{N}$  we have
$\stype_z(a,K,\Sigma^K)\subseteq \stype_{z+1}(a,K,\Sigma'^K)$.
Independent of $z\in \N$, the domain and range of $\stype_z$ are fixed
finite sets whence there is some $z\in\N$ such that
$\stype_z=\stype_{z+1}$. This fixpoint is denoted as $\stype_\Xx$
(formally, $\stype_z$ also depends on $\Xx$). 

\begin{definition}\label{def:types-k}
  We define $\type_\Xx(s^k)$ for each  $k$-stack $s^k$ (for $0\leq k\leq n$)
  by induction on the structure of $s^k$. 
  If $s^k$ is empty, 
  \begin{align*}
    \type_\Xx(s^k) := \emptyset.
  \end{align*}
  Otherwise, 
  assume that $k=0$ and $s^k=(a,K,t^K)$ 
  where $a\in\Gamma$,
  $1\leq K\leq n$, and $t^K$ is a $K$-stack such that $\type_\Xx(t^K)$ is
  already defined.
  In this case we set 
  \begin{align*}
    \type_\Xx(s^k)=\stype_\Xx\left(a,K,\type_\Xx(t^K)\right).    
  \end{align*}
  Finally, assume that $k\geq 1$ and
  $s^k=t^k:t^{k-1}$ such that $\type_\Xx(t^k)$ and $\type_\Xx(t^{k-1})$ are
  defined. In this case set 
  \begin{align*}
    \type_\Xx(s^k)=\comp\left(\type_\Xx(t^k),\type_\Xx(t^{k-1})\right).    
  \end{align*}
\end{definition}

With the help of Lemma \ref{lem:run-equiv-type} and the properties of
the composer, the proof of part \ref{thm:typesB} of Theorem \ref{thm:types} is
done as follows. 

\begin{proof}[of part \ref{thm:typesB} of Theorem \ref{thm:types}]
  The proof is by induction on the length of the run.
  Let $X\in\Xx$ be a set of level $0$.
  Let $R\in X$ be a run starting in a pds $s^\sL:s^{\sL-1}:\dots:s^0$
  and ending in a pds $t^\sL:t^{\sL-1}:\dots:t^0$. 
  Furthermore, let $u:=u^\sL:u^{\sL-1}:\dots:u^0$ be a pds.
  Assume that $\type_\Xx(s^i)\subseteq \type_\Xx(u^i)$ for $0\leq i\leq \sL$. 
  We prove that there is a run $S\in X$ such that $S$
  starts in $u$ and ends in a stack $v^\sL:v^{\sL-1}:\dots:v^0$ with
  $\type_\Xx(t^i)\subseteq \type_\Xx(v^i)$ for each $0\leq i\leq \sL$
  and such that $R$ and $S$ have the same initial and final states. 
  We continue by case distinction on the wf-rule $r$ describing $R$.
  \begin{itemize}
  \item   If $r=(\Rule{X}{})$, the run $S$ of length $0$ with
    $S(0)=(q,u)$ for $q$ the initial state of $R$ satisfies the claim.
  \item 
    Assume that $r=(\Rule{X}{\delta})$.
    Because $\Xx$ is described by a well-formed set of rules, the run
    is also described by the rule of the set $X_{\delta}$. 
    Using 
    part \ref{thm:typesA} of Theorem \ref{thm:types}, 
    there is a run $S$ starting in $u$ which performs $\delta$, and
    such that 
    $R$ and $S$ have the same initial and final states. 
    Since $\lev(X)=0$, the operation in $\delta$ is a $\push{k}$
    of some level $k$. 
    Notice that for $1\leq i\leq k-1$ and for $k+1\leq i\leq n$ we have
    $t^i=s^i$ and $v^i=u^i$, so 
    $\type_\Xx(t^i)\subseteq\type_\Xx(v^i)$.
    The same holds for $i=0$ if $k\geq 2$.
    If the operation is $\push 1_{a,j}$, we have $t^0=(a,j,s^j)$ and
    $v^0=(a,j,u^j)$. Since $\stype_\Xx$ is monotone, 
    $\type_\Xx(t^0)=\stype_\Xx(a,j,\type_\Xx(s^j))\subseteq
    \stype_\Xx(a,j,\type_\Xx(u^j))=\type_\Xx(v^0)$.
    We also have $t^k=s^k:s^{k-1}:\dots:s^0$ and
    $v^k=u^k:u^{k-1}:\dots:u^0$.
    Due to Lemmas \ref{lem:comp-decomposition} and \ref{lem:comp-mon},
    we obtain 
    \begin{align*}
      \type_\Xx(t^k)
      &=\comp(\type_\Xx(s^k),\type_\Xx(s^{k-1}),\dots,\type_\Xx(s^0))\subseteq\\
      &\subseteq\comp(\type_\Xx(u^k),\type_\Xx(u^{k-1}),\dots,\type_\Xx(u^0))=
      \type_\Xx(v^k).  
    \end{align*}
  \item 
    Assume that $r=(\Rule{X}{\delta Y})$.
    By definition of a wf-rule, $\lev(Y)=0$.
    Decompose $R=R_1\circ R_2$ where $R_1$ has length $1$.
    As in the above case, we obtain a run $S_1$ from $u$ of length $1$,
    performing transition $\delta$, 
    such that the types of $R_1(1)$ and $S_1(1)$ are appropriately
    contained (for all levels), and 
    that $R_1$ and $S_1$ have the same initial and final states.
    Then we apply the induction assumption for $R_2\in Y$ and obtain a run
    $S_2\in Y$ from $S_1(1)$, 
    such that the types at the end of $R_2$ and $S_2$ are contained as
    required (for all levels), and  
    that the final states are the same. Thus,  $S:= S_1\circ S_2$
    satisfies the claim.
  \item 
    Finally, assume that $r=(\Rule{X}{\delta YZ})$.
    By definition of a wf-rule, $\lev(Z)=0$.
    Decompose $R=R_1\circ R_2$ where $R_1$ performs the transition
    $\delta$ followed by a run from $Y$, and $R_2$ is in $Z$. 
    Since $\Xx$ is described by a well-formed set of rules, 
    $R_1$ is in the set described by $X_{\delta Y}$.
    Using part \ref{thm:typesA} of Theorem
    \ref{thm:types} for $R_1$ and $X_{\delta Y}$, 
    there is a run $S_1$ from $u$ which performs $\delta$ followed by
    a run from $Y$  such 
    that $R_1$ and $S_1$ have the same initial and final states. 
    Decompose the pds of $R_1(\lvert R_1\rvert)$ and $S_1(\lvert S_1\rvert)$ as
    $s'^n:s'^{n-1}:\dots:s'^0$ and $u'^n:u'^{n-1}:\dots:u'^0$. 
    Recall from the proof of part \ref{thm:typesA} of Theorem
    \ref{thm:types} (see page \pageref{prop:thmTypesA-B}) that
    $\type_\Xx(s'^i)\subseteq\type_\Xx(u'^i)$ for 
    $\lev(Y)+1\leq i\leq n$ (notice that $\lev(X_{\delta
      Y})=\lev(Y)$). 
    But by definition of a wf-rule we know that the topmost
    $\lev(Y)$-stack of $R_1(0)$ and of $R_1(\lvert R_1\rvert)$ are the same, so
    $s'^i=s^i$ for $0\leq i\leq\lev(Y)$; 
    for the same reason $u'^i=u^i$ for $0\leq i\leq\lev(Y)$.
    Thus, $\type_\Xx(s'^i)\subseteq\type_\Xx(u'^i)$ for $0\leq i\leq n$.
    Then we apply the induction assumption to $R_2\in Z$ and obtain a run
    $S_2\in Z$ from $S_1(\lvert S_1\rvert)$ 
    such that the types at the end of $R_2$ and $S_2$ are contained as
    required (for all levels) and such  
    that the final states are the same. Thus, $S:=S_1\circ S_2$
    satisfies the claim. \qed
  \end{itemize}
\end{proof}

\subsubsection*{Types in previous papers.}
A similar concept of defining types were already present in \cite{parys2011} and \cite{parys-pumping}.
In both these papers types were used only for systems without collapse.
The types in \cite{parys2011} are defined completely semantically:
the definition is similar to our Lemma \ref{lem:run-equiv-type}.
Then it is necessary to prove that the type of $s^l$ does not depend on the choice of $s^n, s^{n-1},\dots,s^{l+1}$ present in the assumptions of the lemma.
We were unable to give a proof of the analogous fact for systems with collapse.

The types in \cite{parys-pumping} are much more similar to our types: they are also defined syntactically, i.e.~basing on possible transitions of the system.
But these types were defined only for one class of runs, namely for $k$-returns for each $k$.
The generalisation to an arbitrary family described by wf-rules required mainly the invention of a proper definition of these rules.
The generalisation to systems with collapse required mainly the invention of a proper definition of stacks (i.e.~that a $0$-stack should keep the copy of the linked stack, instead of just a link).


\section{Types of Stacks---Proofs}
\label{app:types}
This appendix is devoted to the proof of Lemma \ref{lem:run-equiv-type}.
We assume that the family $\Xx$ is fixed, and we write $\type$ for $\type_\Xx$.
We will first prove the left-to-right implication of this lemma, as
restated below (in a slightly stronger version).

\begin{lemma}\label{lem:run2type}
  Let $\deA\in\Dd_X$ for some
  $X\in\Xx$, and let $0\leq l\leq \lev(X)$. 
  Let $R$ be a run which agrees with $\deA$, where
  $R(0)=(q_0,s^n:s^{n-1}:\dots:s^l)$. 
  Then 
  $$(\type(s^n),\type(s^{n-1}),\dots,\type(s^{l+1}),q_0,\deA)\in
  \type(s^l).$$ 
\end{lemma}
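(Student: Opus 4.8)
The plan is to prove Lemma~\ref{lem:run2type} by induction on the length of the run $R$, following the recursive structure of the well-formed grammar. Since $R$ agrees with $\deA=(X,\Omega^n,\dots,\Omega^{\lev(X)+1},q)\in\Dd_X$, in particular $R\in X$, so $R$ is described by some wf-rule $r=(\Rule{X}{\delta X_1\dots X_m})\in\Rr_\Xx$. If $r=(\Rule{X}{})$, then $\lvert R\rvert=0$ and the claim follows directly from case~\ref{def:types1} of Definition~\ref{def:types}: the tuple $(\type(s^n),\dots,\type(s^{l+1}),q_0,\deA)$ lies in $\stype_\Xx$ applied to the topmost $0$-stack, hence in $\type(s^l)$ by Definitions~\ref{def:types-k} and the decomposition property of $\comp$ (Lemma~\ref{lem:comp-decomposition}), which lets us pass from the type of the whole topmost $l$-stack down to the type of its topmost $0$-stack. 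The substance is the case $r\in\Rr^{>0}_\Xx$: here we decompose $R=R_0\circ R_1\circ\dots\circ R_m$ with $R_0$ of length $1$ performing $\delta=(q_0,a,\cdot,q_1,op)$, and $R_i\in X_i$.

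In that case I would first analyse the single step $R_0$: it transforms the stack $s^n:\dots:s^l$ (with topmost $0$-stack carrying symbol $a$ and link of level $K$ to some $K$-stack $t^K$) into $R_0(1)$, whose stack I decompose as $u^n:\dots:u^{L(op)}$. By inspecting each operation $op$ case-by-case and comparing with Definition~\ref{def:set-U}, I would check that $\bigl((\type(s^n),\dots,\type(s^0)),(\type(u^n),\dots,\type(u^{L(op)}))\bigr)\in U_z(op,K,\type(t^K))$ for $z$ large enough (using the fixpoint $\stype_\Xx$ for the $\push1$ case, where the new topmost $0$-stack's type is exactly $\stype_\Xx(b,k,\type(u^k))$). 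Then I would apply the induction hypothesis to each $R_i\in X_i$ (note $\lvert R_i\rvert<\lvert R\rvert$ and $R_i$ agrees with the appropriate run descriptor $\deB$, resp. $\deC$, read off from the grammar): for $m=1$ this gives that the run descriptor for $X_1$ sits in the type of the topmost $\lev(X_1)$-stack of $R_0(1)$, i.e. in $\Phi^{\lev(X_1)}=\comp(\type(u^{\lev(X_1)}),\dots,\type(u^{L(op)}))$; for $m=2$ there is the additional bookkeeping that the $\lev(X_1)$-stack returns unchanged after $R_1$ (this is exactly the extra clause in the definition of a wf-rule of type~(4)), so the type of $R_2(0)$ restricted to levels $>\lev(X_1)$ is determined by $\deB$ while levels $\le\lev(X_1)$ retain the types $\Psi^i$ from $R_0(1)$. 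Feeding these facts into Definition~\ref{def:set-T} produces a tuple in $T(r)$, and then case~\ref{def:types2} of Definition~\ref{def:types} yields $(\type(s^n),\dots,\type(s^{l+1}),q_0,\deA)\in\stype_\Xx(a,K,\type(t^K))=\type(s^0)$; one last application of Lemma~\ref{lem:comp-decomposition} lifts this from $\type(s^0)$ to $\type(s^l)$.

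I expect the main obstacle to be the $m=2$ case and, within it, the careful matching of the intermediate types $\Sigma^i$ (the types at levels $>\lev(X_1)$ after $R_1$) between the $T(r)$-tuple and the run descriptors supplied by the induction hypothesis. One must verify that the well-formedness condition ``the topmost $\lev(X_1)$-stack of $R_0(1)$ and of $R_1(\lvert R_1\rvert)$ coincide'' really does pin down the type of the whole stack entering $R_2$, so that the nested tuple in $\Psi^0$ demanded by case~\ref{def:set-T3} is available. A secondary subtlety is bookkeeping the level indices: $R_0$ may remove the topmost $L(op)$ stacks (when $op$ is $\pop{}$ or $\col{}$), so the decomposition of $R_0(1)$ starts at level $L(op)\le\lev(X_1)=M(r)$, and one has to keep $\comp$ applied over exactly the levels $L(op)$ through $M(r)$ to rebuild the type of the topmost $M(r)$-stack — this is the content of the last displayed equation in Definition~\ref{def:types}. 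The remaining cases ($m=0$, $m=1$) are simpler specialisations of the same argument, and the $z$-to-fixpoint passage is routine monotonicity.
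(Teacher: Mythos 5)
Your proposal follows essentially the same route as the paper's proof: induction on $\lvert R\rvert$, with the first transition handled by a case analysis matching Definition~\ref{def:set-U} (the paper's Lemma~\ref{lem:set-U-lr}), the tails $R_1,\dots,R_m$ handled by the induction hypothesis feeding into Definition~\ref{def:set-T} (the paper's Lemma~\ref{lem:set-T-lr}), assembly via case~\ref{def:types2} of Definition~\ref{def:types}, and the passage from $\type(s^0)$ to $\type(s^l)$ via the composer (which the paper organises as an inner induction on $l$ rather than one appeal to Lemma~\ref{lem:comp-decomposition}, but this is the same argument). You also correctly isolate the two genuine subtleties — the use of the wf-rule condition that the topmost $\lev(X_1)$-stack is restored before $R_2$, and the $\comp$ over levels $L(op)$ through $M(r)$ — so the plan is sound.
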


The proof is by induction on the length of $R$.
It is divided into several lemmas; the division follows the steps
in the definition of types, i.e., we prove certain properties of the
functions $T$ 
and $U$ which finally allow to prove the lemma. 
We start with two observations. The first follows
immediately from 
the definitions and the second is a corollary of 
Lemma \ref{lem:comp-decomposition}. 

\begin{proposition} \label{prop:typeNE}
  Let $1\leq k\leq n$, and let $s^k$ be a $k$-stack.
  Then $\mathsf{ne}\in \type(s^k)$ if and only if $s^k$ is not empty.
\end{proposition}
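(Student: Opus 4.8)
The plan is a short case analysis on $s^k$; despite the phrasing ``by induction on the structure of $s^k$'' in Definition~\ref{def:types-k}, no genuine recursion is needed for this statement.

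\textbf{Case 1: $s^k$ is empty.} By Definition~\ref{def:types-k}, $\type_\Xx(s^k)=\emptyset$, so in particular $\mathsf{ne}\notin\type_\Xx(s^k)$, which is exactly what the claim requires. \textbf{Case 2: $s^k$ is non-empty.} Since $k\geq 1$, the stack $s^k$ is a non-empty sequence of $(k-1)$-stacks, hence can be written canonically as $s^k=t^k:t^{k-1}$, where $t^{k-1}$ is its topmost $(k-1)$-stack and $t^k$ the remainder. By Definition~\ref{def:types-k} we then have $\type_\Xx(s^k)=\comp(\type_\Xx(t^k),\type_\Xx(t^{k-1}))$. Now apply Definition~\ref{def:composer} with $l=k-1$, so that $l<k$: the last clause of that definition puts $\mathsf{ne}$ into $\comp(\Psi^k,\Psi^{k-1})$ for arbitrary arguments $\Psi^k,\Psi^{k-1}$. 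Hence $\mathsf{ne}\in\type_\Xx(s^k)$. Together these two cases give the equivalence.

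There is no real obstacle here; this lemma is bookkeeping, recording that the ``additionally the composer contains $\mathsf{ne}$ if $l<k$ or if $\mathsf{ne}\in\Psi^k$'' clause of Definition~\ref{def:composer} correctly captures non-emptiness. The only points to keep an eye on are that the decomposition $s^k=t^k:t^{k-1}$ of a non-empty higher-order stack always exists (immediate from the definition of stacks), and that the hypothesis $k\geq 1$ means we never enter the $0$-stack branch of Definition~\ref{def:types-k}; in particular the concrete values $\type_\Xx(t^k)$ and $\type_\Xx(t^{k-1})$ play no role in whether $\mathsf{ne}$ lies in the composer, which is why no induction hypothesis is actually invoked.
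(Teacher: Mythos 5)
Your proof is correct and matches the paper's (implicit) argument: the paper dismisses this proposition as following ``immediately from the definitions,'' and your two-case unpacking — $\type_\Xx(\text{empty})=\emptyset$ versus the $l<k$ clause of the composer forcing $\mathsf{ne}\in\comp(\type_\Xx(t^k),\type_\Xx(t^{k-1}))$ for any non-empty $s^k=t^k:t^{k-1}$ — is exactly that, and is also the content of the remark following Definition~\ref{def:composer}. Your observation that no genuine induction is needed is likewise accurate.
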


\begin{proposition}\label{prop:composer-lr}
  Let $0\leq l\leq k\leq n$, and let $s=s^k:s^{k-1}:\dots:s^l$ be a $k$-stack.
  The type of $s$, $\type(s)$, is
  $\comp(\type(s^k),\type(s^{k-1}),\dots,\type(s^l))$.
\end{proposition}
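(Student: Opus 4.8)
The plan is a short induction on the number of colons, i.e.\ on $k-l$, bootstrapping from the two-argument case that is literally the defining clause of $\type$ and using Lemma~\ref{lem:comp-decomposition} to flatten the resulting nested composers.

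First I would dispatch the two base cases. For $k-l=0$ there is nothing to do: $\comp(\type(s^k))=\type(s^k)$ by the Remark following Definition~\ref{def:composer}. For $k-l=1$ the stack is $s^{l+1}:s^l$ and the equality $\type(s^{l+1}:s^l)=\comp(\type(s^{l+1}),\type(s^l))$ is exactly the last clause of Definition~\ref{def:types-k}. For the inductive step I would take $k-l\geq 2$ and read the chain right-associatively, so that $s=s^k:(s^{k-1}:\dots:s^l)$ with $s^{k-1}:\dots:s^l$ a genuine $(k-1)$-stack; the $k\geq 1$ clause of Definition~\ref{def:types-k} then gives
\[
\type(s)=\comp\bigl(\type(s^k),\type(s^{k-1}:\dots:s^l)\bigr).
\]
Applying the induction hypothesis to the shorter chain $s^{k-1}:\dots:s^l$ rewrites the second argument as $\comp(\type(s^{k-1}),\dots,\type(s^l))$, and then Lemma~\ref{lem:comp-decomposition} with $j=k-1$ — admissible because $l<k-1<k$ once $k-l\geq 2$ — collapses the nested composer to $\comp(\type(s^k),\type(s^{k-1}),\dots,\type(s^l))$, which is the claim.

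I do not expect a real obstacle here; the content is carried entirely by Lemma~\ref{lem:comp-decomposition}, and what remains is bookkeeping. The two things to be careful about are: keeping the associativity convention straight so that the recursive clause of Definition~\ref{def:types-k} is always applied with its first argument being the actual $k$-stack $s^k$ (rather than a lower-level stack); and checking that the $\mathsf{ne}$ component agrees on both sides — when $l<k$ the composer contains $\mathsf{ne}$ by definition, while $s^k:\dots:s^l$ is a non-empty $k$-stack, so $\mathsf{ne}\in\type(s)$ by Proposition~\ref{prop:typeNE}, and when $l=k$ both sides reduce to $\type(s^k)$.
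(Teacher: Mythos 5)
Your proof is correct and follows exactly the route the paper intends: the paper gives no explicit argument, stating only that the proposition ``is a corollary of Lemma~\ref{lem:comp-decomposition}'', and your induction on $k-l$ — bootstrapping from the recursive clause of Definition~\ref{def:types-k} and flattening with Lemma~\ref{lem:comp-decomposition} at $j=k-1$ — is the standard way to make that precise. The side remarks on associativity and on $\mathsf{ne}$ are fine (the $\mathsf{ne}$ bookkeeping is in fact already absorbed into Lemma~\ref{lem:comp-decomposition}, so no separate check is needed).
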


The next lemma proves our intuition about $T$.
This lemma uses the ``big'' induction assumption, i.e., 
Lemma \ref{lem:run2type} for shorter runs. 

\begin{lemma}\label{lem:set-T-lr}
  Let $R$ be a run which agrees with some $\deA\in\Dd_X$, and
  let $r=(\Rule{X}{\delta X_1\dots X_m})\in\Rr_\Xx^{>0}$ be a
  rule which describes $R$. 
  Assume that the statement of Lemma \ref{lem:run2type} is true
  for all runs strictly shorter than $R$. 
  Decompose the stack of $R(0)$ as $s^n:s^{n-1}:\dots:s^0$, and
  the stack of $R(1)$ as $t^n:t^{n-1}:\dots:t^{M(r)}$. 
  Then $T(r)$ contains the tuple 
  \begin{align*}
    \eta=\left(\big(\type(s^n),\dots,\type(s^0)\big),
    \big(\type(t^n),\type(t^{n-1}),\dots,\type(t^{M(r)})\big),\deA\right).
  \end{align*} 
\end{lemma}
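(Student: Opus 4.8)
The plan is to prove Lemma~\ref{lem:set-T-lr} by a case distinction on the shape of the wf-rule $r=(\Rule{X}{\delta X_1\dots X_m})$, exactly mirroring the three cases in Definition~\ref{def:set-T}. In every case I decompose $R=R_0\circ R_1\circ\dots\circ R_m$ with $|R_0|=1$, $R_0$ performing $\delta=(q_0,a,l,q_1,op)$, and $R_i\in X_i$; since $\delta$ is a single step, the stack of $R(1)$ equals $R_1(0)$ (or $R(|R|)$ if $m=0$). The governing observation is that every subrun $R_i$ is strictly shorter than $R$, so the ``big'' induction hypothesis (Lemma~\ref{lem:run2type} for shorter runs) applies to each of them; combining it with Proposition~\ref{prop:composer-lr} to read off the types of composite stacks from the types of their pieces will give exactly the membership conditions listed in Definition~\ref{def:set-T}.

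First I would handle case~\ref{def:set-T1} ($m=0$, $r=(\Rule{X}{\delta})$). Here $R=R_0$ has length $1$, so the stack of $R(1)=R(|R|)$ is $t^n:\dots:t^{M(r)}=t^n:\dots:t^{\lev(X)}$, and since $R$ agrees with $\deA=(X,\Omega^n,\dots,\Omega^{\lev(X)+1},q_1)$ we get $\Omega^i\subseteq\type(t^i)$ for $\lev(X)+1\le i\le n$ directly from Definition~\ref{def:agrees}; taking $\Phi^i:=\type(t^i)$ puts $\eta$ into $T(r)$. Next, case~\ref{def:set-T2} ($r=(\Rule{X}{\delta X_1})$): write $R=R_0\circ R_1$ with $R_1\in X_1$ starting in $R(1)$, whose stack decomposes as $t^n:\dots:t^{\lev(X_1)}$ (here $M(r)=\lev(X_1)$). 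Since $R$ agrees with $\deA=(X,\Omega^n,\dots,\Omega^{\lev(X)+1},q')$ and $X=X$, the final state of $R_1$ is $q'$ and the final stack of $R_1$ realises $\Omega^i$ for $i>\lev(X)\ge\lev(X_1)$, so $R_1$ agrees with $\deB=(X_1,\Omega^n,\dots,\Omega^{\lev(X_1)+1},q')$. Apply the induction hypothesis to $R_1$ (with $l$ replaced by $\lev(X_1)$, which is legitimate since $\lev(X_1)\le\lev(X_1)$): this yields $(\type(t^n),\dots,\type(t^{\lev(X_1)+1}),q_1,\deB)\in\type(t^{\lev(X_1)})$, which is precisely the tuple required to live in $\Phi^{\lev(X_1)}:=\type(t^{\lev(X_1)})$ in Definition~\ref{def:set-T}\ref{def:set-T2}. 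Setting $\Phi^i:=\type(t^i)$ throughout completes this case.

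The main work, and the step I expect to be the obstacle, is case~\ref{def:set-T3} ($r=(\Rule{X}{\delta X_1X_2})$). Decompose $R=R_0\circ R_1\circ R_2$ with $R_1\in X_1$, $R_2\in X_2$, and write the stack of $R_1(|R_1|)=R_2(0)$ as $s'^n:\dots:s'^0$. Apply the induction hypothesis to $R_1$ exactly as in case~\ref{def:set-T2}: since $R_1$ agrees with some $\deB=(X_1,\Sigma^n,\dots,\Sigma^{\lev(X_1)+1},q_2)$ — where $\Sigma^i:=\type(s'^i)$ for $i>\lev(X_1)$ and $q_2$ is the state of $R_2(0)$ — we get $(\type(s'^n),\dots,\type(s'^{\lev(X_1)+1}),q_1,\deB)\in\type(s'^{\lev(X_1)})$; because $r$ is a wf-rule of type~4, the topmost $\lev(X_1)$-stack is unchanged by $R_0\circ R_1$, so $s'^i=$ (the $i$-th component of the stack of $R(1)$) $=t^i$ for $i\le\lev(X_1)$, giving $\type(t^{\lev(X_1)})=\type(s'^{\lev(X_1)})$, i.e.\ the first required tuple sits in $\Phi^{\lev(X_1)}$. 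For the second requirement, apply the induction hypothesis to $R_2$ with $l=0$: since $R_2$ agrees with $\deC=(X_2,\Omega^n,\dots,\Omega^{\lev(X_2)+1},q')$, we obtain $(\type(s'^n),\dots,\type(s'^1),q_2,\deC)\in\type(s'^0)$. The subtle point is to rewrite the ``upper'' components of this tuple: for $i>\lev(X_1)$ these are $\type(s'^i)=\Sigma^i$ (matching $\deB$), while for $1\le i\le\lev(X_1)$ they are $\type(s'^i)=\type(t^i)=\Psi^i$ (the initial stack's type at level $i$, unchanged since $R_0$ performs a push operation of level $\le M(r)=\lev(X_1)$ acting on levels $\ge$ its own level only — here I must carefully invoke that $op$'s level is $\le\lev(X_1)$ and for the level-$1$ push use Proposition~\ref{prop:composer-lr} to relate $\type(t^1)=\comp(\type(s^1),\type(s^0))$ appropriately). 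After this bookkeeping the tuple becomes $(\Sigma^n,\dots,\Sigma^{\lev(X_1)+1},\Psi^{\lev(X_1)},\dots,\Psi^1,q_2,\deC)\in\Psi^0=\type(s^0)$, matching Definition~\ref{def:set-T}\ref{def:set-T3} verbatim, and setting $\Phi^i:=\type(t^i)$ for $i\ge\lev(X_1)$ finishes the proof. The delicate bookkeeping of which stack components are preserved by $R_0$ (using the constraint $L(op)\le M(r)$ from the definition of wf-rules, and the special treatment of $\push{1}$ via the composer) is where I expect to spend the most care.
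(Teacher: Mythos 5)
Your overall strategy coincides with the paper's: a case distinction on $m\in\{0,1,2\}$ mirroring Definition~\ref{def:set-T}, application of the inductive hypothesis (Lemma~\ref{lem:run2type} for shorter runs) to the subruns, and use of the extra condition on rules $\Rule{X}{\delta X_1X_2}$ to match descriptors. Cases~\ref{def:set-T1} and~\ref{def:set-T2} are essentially fine, up to one omission in case~\ref{def:set-T2}: you justify the components $\Omega^i$ of $\deB$ only for $i>\lev(X)$, so you still have to choose the entries for $\lev(X_1)+1\leq i\leq\lev(X)$; the paper simply sets them to $\emptyset$, which makes ``$\subrun{R}{1}{\lvert R\rvert}$ agrees with $\deB$'' immediate.

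Case~\ref{def:set-T3}, however, contains a genuine error, precisely at the spot you flagged as delicate. First, Lemma~\ref{lem:run2type} applied to $R_1$ places the tuple $(\dots,q_1,\deB)$ into the type of the bottom component of the \emph{initial} stack of $R_1$, i.e.\ into $\type(t^{\lev(X_1)})$ with upper components $\type(t^i)$ --- which is already exactly what Point~\ref{def:set-T3} of Definition~\ref{def:set-T} demands for $\Phi^{\lev(X_1)}$. You instead state the conclusion with the components $s'^i$ of $R_2(0)$'s stack and then try to repair it via ``$s'^i=t^i$ for $i\leq\lev(X_1)$''. That identity is false: the condition in Definition~\ref{def:wf-rule} says the topmost $\lev(X_1)$-stack of $R(0)$ equals that of $R_2(0)$ (the endpoint of $\delta X_1$), not that the topmost $\lev(X_1)$-stack of $R(1)$ does; if $\delta$ performs $\push{1}$, the topmost $1$-stack of $R(1)$ is $s^1:s^0$, which differs from $s'^1=s^1$. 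Second, for the tuple coming from $R_2$ you claim $\type(s'^i)=\type(t^i)=\Psi^i$ for $1\leq i\leq\lev(X_1)$; the middle term is again wrong ($\type(t^1)=\comp(\type(s^1),\type(s^0))$ is in general not $\type(s^1)=\Psi^1$), and neither the constraint $L(op)\leq M(r)$ (which restricts only pop/collapse levels, not push levels) nor Proposition~\ref{prop:composer-lr} can bridge this. The correct route bypasses $R(1)$ entirely: the wf-rule condition gives $s'^i=s^i$ for $0\leq i\leq\lev(X_1)$ directly, hence $\type(s'^i)=\Psi^i$ and the $R_2$-tuple lies in $\Psi^0=\type(s^0)$ as required.
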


\begin{proof}
  Let $\deA=(X,\Omega^n,\Omega^{n-1},\dots,\Omega^{\lev(X)+1},q')$ and
  $q_1$ be the state of $R(1)$, which is also the state reached after
  application of  $\delta$. 
  We distinguish three cases depending on the form of $r$, i.e., on
  the value of $m\in\{0,1,2\}$.
  \begin{enumerate}
  \item   Assume that $r$ is $\Rule{X}{\delta}$.
    Then $\lvert R \rvert=1$ whence the state of $R(\lvert R\rvert)$ is $q_1$
    and the stack of $R(\lvert R\rvert)$ is
    $t^n:t^{n-1}:\dots:t^{\lev(X)}$ (recall 
    that $\lev(X)=M(r)$). 
    Since $R$ agrees with $\deA$, $q'=q_1$ and 
    $\Omega^i\subseteq \type(t^i)$ for $\lev(X)+1\leq i\leq n$. 
    Due to Point \ref{def:set-T1} of Definition \ref{def:set-T},
    $\eta\in T(r)$.
  \item 
    Assume that $r$ is $\Rule{X}{\delta X_1}$.
    Let
    $\deB=(X_1,\Omega^n,\Omega^{n-1},\dots,\Omega^{\lev(X_1)+1},q')$ 
    where $\Omega^i=\emptyset$ for $\lev(X_1)+1\leq i\leq \lev(X)$ (and
    the other $\Omega^i$ are specified by $\deA$). 
    We know that $\subrun{R}{1}{\lvert R \rvert}$ is in $X_1$ whence it
    agrees with $\deB$.  
    Application of Lemma \ref{lem:run2type} to the shorter run 
    $\subrun{R}{1}{\lvert R \rvert}$, to
    $\deB$ and to $\lev(X_1)$  
    yields
    \begin{align*}
      \big(\type(t^n),\type(t^{n-1}),\dots,\type(
      t^{\lev(X_1)+1}),q_1, 
        \deB\big)\in 
     \type\left(t^{\lev(X_1)}\right).
    \end{align*}  
    Recall that $M(r)=\lev(X_1)$.
    Due to Point \ref{def:set-T2} of Definition \ref{def:set-T}, 
    $\eta\in T(r)$. 
  \item 
    Assume that $r$ is  $\Rule{X}{\delta X_1X_2}$.
    Fix some $1\leq i\leq \lvert R \rvert$ such that
    $\subrun{R}{1}{i}\in X_1$ and 
    $\subrun{R}{i}{\lvert R \rvert}\in X_2$. 
    Let $u^n:u^{n-1}:\dots:u^0$ be the stack of $R(i)$, and $q_2$ its state.
    Let
    \begin{align*}
      &\deB = (X_1,\type(u^n),\type(u^{n-1}),\dots,\type(u^{\lev(X_1)+1}),q_2),\\
      &\deC = (X_2,\Omega^n,\Omega^{n-1},\dots,\Omega^{\lev(X_2)+1},q'),
    \end{align*}
    where $\Omega^i=\emptyset$ for $\lev(X_2)+1\leq i\leq \lev(X)$.
    The subrun $\subrun{R}{1}{i}$ agrees with $\deB$, and the subrun
    $\subrun{R}{i}{\lvert R  \rvert}$ agrees with $\deC$. 
    Application of Lemma \ref{lem:run2type} 
    to the shorter run $\subrun{R}{1}{i}$, to
    $\deB$ and to $\lev(X_1)$  yields
    \begin{align*}
      (\type(t^n),\type(t^{n-1}),\dots,\type(t^{\lev(X_1)+1}),q_1,\deB)
      \in\type(t^{\lev(X_1)}).
    \end{align*}
    Analogously, application of the lemma to
    $\subrun{R}{i}{\lvert R \rvert}$, to $\deC$
    and to $0$ yields 
    \begin{align*}
      (\type(u^n),\type(u^{n-1}),\dots,\type(u^1),q_2,\deC)\in
      \type(u^0).
    \end{align*}
    The definition of a wf-rule implies that the topmost
    $\lev(X_1)$-stacks of $R(0)$ and $R(i)$ coincide whence
    $\type(s^i)=\type(u^i)$ for $\lev(X_1)\geq i \geq 0$.  
    Thus, Point \ref{def:set-T3} of Definition \ref{def:set-T}
    (with 
    $\Psi^i=\type(s^i)$ , $\Phi^i =  \type(t^i)$  and
    $\Sigma^i = \type(u^i)$), 
    implies that $\eta\in T(r)$.\qed
  \end{enumerate}
\end{proof}

Having related the sets $T(r)$ with the subruns starting after the
first transition of some run described by $r$, 
we now relate the function $U_z$ with the first operation of such a
run. Recall that $\stype_\Xx$ is the fixpoint of the sequence 
$(\stype_{z})_{z\in\N}$ which is reached at some $z'\in\N$, i.e., 
$\stype_{z'}=\stype_{z'-1}=\stype_\Xx$. For the next lemma we fix this
value $z'$.

\begin{lemma}\label{lem:set-U-lr}
  Let $s=s^n:s^{n-1}:\dots:s^0$ and $t=t^n:t^{n-1}:\dots:t^{L(op)}$ be
  pds such that $s^0$ contains a link of level $K$ to a stack $u^K$. 
  Assume that $t=op(s)$ for some operation $op$.
  For all $z > z'$,
  $U_z(op,K,\type(u^K))$ contains 
  \begin{align*}
    \eta=\Big(\big(\type(s^n)&,\type(s^{n-1}),\dots,\type(s^0)\big),\\
    &\big(\type(t^n),\type(t^{n-1}),\dots,\type(t^{L(op)})\big)\Big).
  \end{align*} 
\end{lemma}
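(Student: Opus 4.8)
The plan is to prove Lemma~\ref{lem:set-U-lr} by a case analysis on the operation $op$, mirroring exactly the four cases in Definition~\ref{def:set-U}. In each case I would take the tuple $\eta$ built from the actual types of the decomposition of $s$ and of $t=op(s)$, and verify that it satisfies the membership conditions imposed by the corresponding clause of the definition. The key auxiliary facts are Proposition~\ref{prop:typeNE} (\textsf{ne} is in the type of a stack iff the stack is nonempty), Proposition~\ref{prop:composer-lr} (the type of a composed stack is the composer of the types of its pieces), and the fact that $\stype_\Xx = \stype_{z'} = \stype_{z'-1}$, so that for $z > z'$ we may freely replace $\stype_{z-1}$ by $\stype_\Xx$ inside the definition of $U_z$.

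First I would handle $op = \pop{k}$. Here $t = \pop{k}(s)$ requires $s^k$ to be nonempty (otherwise $op(s)$ is undefined, contradicting the hypothesis $t = op(s)$), so by Proposition~\ref{prop:typeNE} we have $\mathsf{ne} \in \type(s^k)$. Moreover $t^i = s^i$ for all $i \geq k$, so $\eta$ has exactly the shape $\big((\type(s^n),\dots,\type(s^0)),(\type(s^n),\dots,\type(s^k))\big)$ with $\mathsf{ne}\in\type(s^k)$, which is precisely the membership condition of Point~\ref{def:set-U1}. Next, $op = \col{k}$: by Remark~\ref{rem:ColisPop} (and the fact that $\col{k}$ is defined on $s$ only when the topmost $0$-stack carries a level-$k$ link, hence $k = K$, and that link points to the nonempty stack $u^K$, hence $\mathsf{ne}\in\type(u^K) = \Sigma^K$) we are in the nontrivial branch of Point~\ref{def:set-U2}. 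We have $t^i = s^i$ for $i > k$ and $t^k = u^k$ for some $k$-stack which is a substack of $u^K$; so $\type(t^k) \subseteq \type(u^K)$ — this inclusion I would justify from the structure of stacks, possibly needing a small sub-lemma that the type of a substack of $u^K$ obtained by pops is contained in $\type(u^K)$ (which follows from Proposition~\ref{prop:composer-lr} together with monotonicity of $\comp$, Lemma~\ref{lem:comp-mon}, since $u^K = (\text{rest}) : u^k$ gives $\type(u^K) = \comp(\dots,\type(u^k))$ hence $\type(u^k)\subseteq\type(u^K)$, and similarly iterating). Then $\Sigma'^k := \type(t^k)$ witnesses the condition.

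For $op = \push{1}_{b,k}$: here $t^i = s^i$ for $2 \leq i \leq k$ and for $k < i \leq n$, the new $1$-stack is $t^1 = s^1 : s^0$ so by Proposition~\ref{prop:composer-lr} $\type(t^1) = \comp(\type(s^1),\type(s^0)) = \comp(\Psi^1,\Psi^0)$, and the new $0$-stack is $t^0 = (b,k,s^k)$ so by Definition~\ref{def:types-k} $\type(t^0) = \stype_\Xx(b,k,\type(s^k)) = \stype_{z-1}(b,k,\Psi^k)$ (using $z > z'$). These are exactly the equalities required in Point~\ref{def:set-U3} (with the inclusion $\Pi^0 \subseteq \stype_{z-1}(\dots)$ holding as equality). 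Finally $op = \push{k}$ with $k \geq 2$: then $t^i = s^i$ for $i \neq k$, and $t^k = s^k : s^{k-1} : \dots : s^0$, so again by Proposition~\ref{prop:composer-lr} $\type(t^k) = \comp(\type(s^k),\dots,\type(s^0))$, matching Point~\ref{def:set-U4}.

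I expect the only real subtlety to be the $\col{k}$ case: one must pin down precisely what the $k$-stack $t^k = \col{k}(s)^{(k)}$ is in terms of $u^K$ and establish the type inclusion $\type(t^k) \subseteq \type(u^K)$. By Remark~\ref{rem:ColisPop}, $\col{k}(s)$ equals $\pop{k}$ applied to $s$ some number $r \geq 1$ of times, and because $s$ is a pds whose topmost $0$-stack links to $u^K$, the resulting $k$-stack $t^k$ is exactly $u^K$ itself (the linked stack is, by construction of pds, the $k$-stack that was the topmost $k$-stack at the moment the symbol was created, which is what $\col{k}$ restores); hence in fact $\type(t^k) = \type(u^K) = \Sigma^K$, and taking $\Sigma'^k = \Sigma^K$ discharges the requirement. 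Everything else is a routine transcription of Definitions~\ref{def:set-U} and~\ref{def:types-k} and the two cited Propositions, so no further machinery is needed.
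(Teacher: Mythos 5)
Your proposal is correct and follows essentially the same route as the paper: a case split on $op$ matching the four clauses of Definition~\ref{def:set-U}, using Proposition~\ref{prop:typeNE} for $\pop{k}$, Proposition~\ref{prop:composer-lr} for the two push cases, and the fixpoint identity $\stype_{z-1}=\stype_{z'}=\stype_\Xx$ for $\push{1}_{b,k}$. The only place you over-engineer is the collapse case: no sub-lemma about substacks is needed, since in this paper's formalism $\col{K}(s)$ is \emph{defined} as $s^n:\dots:s^{K+1}:u^K$, so $t^K=u^K$ on the nose and $\Sigma'^K=\Sigma^K$ works directly, exactly as your final paragraph concludes.
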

\begin{proof}
  The proof is by case distinction on $op$. Fix $z > z'$
  \begin{enumerate}
  \item   Assume that $op=\pop{k}$.
    Then we have $t=s^n:s^{n-1}:\dots:s^k$, so $t^i=s^i$ for $k\leq
    i\leq n$ (recall that $L(op)=k$). 
    In particular $s^k$ is not empty, so $\mathsf{ne}\in \type(s^k)$.
    Due to Point \ref{def:set-U1} of Definition \ref{def:set-U}, we
    conclude that $\eta\in U_z(op,K,\type(u^K))$. 
  \item   Assume that $op=\col{k}$.
    In this case, $k=K$ and $u^K$ is not empty (equivalently:
    $\mathsf{ne}\in\type(u^K)$) because otherwise $\col{k}$ would not
    be applicable.  
    We have $t=s^n:s^{n-1}:\dots:s^{k+1}:u^k$, so $t^i=s^i$ for $k+1\leq
    i\leq n$, and $t^k=u^k$ (recall that $L(op)=k$). 
    Due to Point \ref{def:set-U2} of Definition \ref{def:set-U}, we
    conclude
    that $\eta\in
    U_z(op,K,\type(u^K))$.
  \item Assume that $op=\push1_{b,k}$.
    Then  
    \begin{align*}
      t=s^n:s^{n-1}:\dots:s^2:(s^1:s^0):(b,k,s^k)      
    \end{align*}
    whence
    $t^i=s^i$ for $2\leq i\leq n$, $t^1=s^1:s^0$ and
    $t^0=(b,k,s^k)$ (recall that $L(op)=0$). 
    Due to Proposition \ref{prop:composer-lr},
    $\type(t^1)=\comp(\type(s^1),\type(s^0))$. 
    Additionally, 
    \begin{align*}
      \type(t^0) = \stype(b,k,\type(s^k)) = \stype_{z'}(b,k,\type(s^k)).
    \end{align*}
    Thus, using Point \ref{def:set-U3} of Definition \ref{def:set-U}
    we conclude that
    $\eta\in U_z(op,K,\type(u^K))$.
  \item 
    Finally, assume that $op=\push{k}$ for $k\geq 2$.
    Then we have
    \begin{align*}
    &t=s^n:s^{n-1}:\dots:s^{k+1}:t^k:s^{k-1}:\dots:s^0 \text{ where}\\
    &t^k=(s^k:s^{k-1}:\dots:s^0).
    \end{align*}
    Thus, $t^i=s^i$ for $0\leq i\leq n$ with $i\neq k$ (recall that
    $L(op)=0$).  
    Proposition \ref{prop:composer-lr} implies that
    \begin{align*}
      \type(t^k)=\comp(\type(s^k),\type(s^{k-1}),\dots,\type(s^0)).       
    \end{align*}
    Thus, using  Point \ref{def:set-U4} of Definition \ref{def:set-U},
    we conclude that
    $\eta\in U_z(op,K,\type(u^K))$. \qed
  \end{enumerate}
\end{proof}

We are now prepared to prove Lemma \ref{lem:run2type}.

\begin{proof}[of Lemma \ref{lem:run2type}] \label{proof:run2typeA}
  Let  $R$ be a run with $R(0)=s^n:s^{n-1}:\dots:s^l$ that agrees with 
  \begin{align*}
    \teA=(\type(s^n),\type(s^{n-1}),\dots,\type(s^{l+1}),q_0,\deA).    
  \end{align*}
  We make an external induction on the length of $R$ and an
  internal induction on $l$. 
  \begin{itemize}
  \item 
    Assume that $l=0$ and $\lvert R \rvert=0$.
    Let $s^0=(a,K,u^K)$.
    Since $R$ agrees with $\deA$, we have 
    $\deA=(X,\Omega^n,\Omega^{n-1},\dots,\Omega^{\lev(X)+1},q_0)$
    where $\Omega^i\subseteq\type(s^i)$ for $\lev(X)+1\leq i\leq n$
    (and $q_0$ is the state of $R(0)$). 
    Due to Point \ref{def:types1} of Definition \ref{def:types},
    $\teA\in\stype_z(a,K,\type(u^K))$ for every $z\in\N$ whence 
    $\teA\in\type(s^0)$.
  \item 
    Assume that $l=0$ and $\lvert R\rvert > 0$.
    Then there is a rule 
    $r=(\Rule{X}{\delta X_1\dots X_m})\in\Rr_\Xx^{>0}$ describing $R$.  
    Let $s^0=(a,K,u^K)$. We have $\delta=(q_0,a,\cdot,\cdot,op)$.
    Let $t^n:t^{n-1}:\dots:t^{L(op)}$ be the stack of $R(1)$.
    Lemma \ref{lem:set-U-lr} 
    implies that for all $z>z'$
    the set $U_z(op,K,\type(u^K))$ contains  
    \begin{align*}
      \left((\type(s^n),\type(s^{n-1}),\dots,\type(s^0)),
      (\type(t^n),\type(t^{n-1}),\dots,\type(t^{L(op)}))\right).
    \end{align*} 
    Setting $v^{M(r)}=t^{M(r)}:t^{M(r)-1}:\dots:t^{L(op)}$,
    the stack of $R(1)$ is
    $t^n:t^{n-1}:\dots:t^{M(r)+1}:v^{M(r)}$. 
    Our induction assumption on shorter runs and Lemma
    \ref{lem:set-T-lr} implies that $T(r)$ contains
    \begin{align*}
      &\Big(\big(\type(s^n),\type(s^{n-1}),\dots,\type(s^0)\big),
      \big(\type(t^n),\\  
      &\hspace{0.9cm}\type(t^{n-1}),\dots,\type(t^{M(r)+1}),
      \type(v^{M(r)})\big),\deA\Big).  
    \end{align*}
    Additionally, Proposition \ref{prop:composer-lr} implies that
    \begin{align*}
      \type(v^{M(r)})=\comp\left(\type(t^{M(r)}),\type(t^{M(r)-1}),\dots,
      \type(t^{L(op)})\right). 
    \end{align*}
    Due to Point \ref{def:types2} of Definition \ref{def:types},
    $\teA\in\stype_z(a,K,\type(u^K))$ for $z>z'$ whence  
    $\teA\in\type(s^0)$.
  \item 
    Assume that $l\geq 1$.
    Decompose $s^l=t^l:t^{l-1}$.
    The (inner) induction assumption implies that
    $\type(t^{l-1})$ contains the tuple 
    $$(\type(s^n),\type(s^{n-1}),\dots,\type(s^{l+1}),\type(t^l),q_0,\deA).$$
    From Definition \ref{def:composer}, it follows that 
    \mbox{$\teA\in \comp(\type(t^l),\type(t^{l-1}))=\type(s^l)$.} \qed
  \end{itemize}
\end{proof}

The rest of this appendix deals with the right-to-left implication of
Lemma \ref{lem:run-equiv-type}. 
In the proof we use the notion of \emph{having a witness}.
The intuition is that a stack and a run descriptor  have a witness,
if this right-to-left implication holds for them. 
Our goal is to prove that every such pair has a witness, which means that
the implication is always true. 

\begin{definition}\label{def:witness}
  Let $0\leq k\leq n$, let $s^k$ be a $k$-stack, and let
  $\Phi^k\subseteq \Tt^k$. 
  We define when $(s^k,\Phi^k)$ \emph{has a witness} by induction on
  $k$, starting with $k=n$.  
  We say that $(s^k,\Phi^k)$ has a witness
  if $(s^k,\teA)$ has a witness for every $\teA\in\Phi^k$, as
  defined below. 
  \begin{itemize}
  \item $(s^k,\mathsf{ne})$ has a witness if
    $\mathsf{ne}\in\type(s^k)$ (equivalently: if $k\geq 1$ and $s^k$
    is nonempty). 
  \item For
    \begin{align*}
      &\teA=(\Phi^n,\Phi^{n-1},\dots,\Phi^{k+1},p,\deA) \text{ and}\\
      &\deA=(X,\Omega^n,\Omega^{n-1},\dots,\Omega^{\lev(X)+1},q),
    \end{align*}
    $(s^k, \teA)$ has a    witness if 
    \begin{itemize}
    \item $\teA\in\type(s^k)$ and
    \item 
      for each configuration $c =(p, t^n : \dots : t^{k+1}:s^k)$ 
      such that  $(t^i,\Phi^i)$ has a witness for each $k+1\leq i\leq n$
      there is a run $R_c$ from $c$ to some stack
      $u^n:u^{n-1}:\dots: u^0$ such that $R_c$ 
      agrees with $\deA$ and   
      $(u^i,\Omega^i)$ has a witness for each $\lev(X)+1\leq i\leq n$. 
    \end{itemize}
  \end{itemize}
\end{definition}

We first prove that composers preserve witnesses.

\begin{proposition}\label{prop:low-high-uv}
  Let $0\leq l\leq k\leq n$.
  For each $l\leq i\leq k$ let $s^i$ be an $i$-stack, and
  let $\Phi^i\subseteq\Tt^i$ be such that $(s^i,\Phi^i)$ has a
  witness. 
  Then $(s^k:s^{k-1}:\dots:s^l,\comp(\Phi^k,\Phi^{k-1},\dots,\Phi^l))$ has a witness.
\end{proposition}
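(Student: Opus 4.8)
I would prove Proposition~\ref{prop:low-high-uv} by induction on $k-l$. The base case $k=l$ is immediate: $\comp(\Phi^l)=\Phi^l$ by Remark after Definition~\ref{def:composer}, and $(s^l,\Phi^l)$ has a witness by hypothesis. For the inductive step it suffices, by Lemma~\ref{lem:comp-decomposition} applied with a split after level $l$ (i.e.\ writing $\comp(\Phi^k,\dots,\Phi^l)=\comp(\Phi^k,\dots,\Phi^{l+2},\comp(\Phi^{l+1},\Phi^l))$) together with Proposition~\ref{prop:composer-lr} identifying the stack $s^k:\dots:s^l$ as $s^k:\dots:s^{l+2}:(s^{l+1}:s^l)$, to handle the case of composing exactly two adjacent stacks. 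So the real content is:

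\emph{Claim.} If $(s^{k},\Phi^{k})$ and $(s^{k-1},\Phi^{k-1})$ have witnesses, then $(s^k:s^{k-1},\comp(\Phi^k,\Phi^{k-1}))$ has a witness.

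To prove the claim I unfold Definition~\ref{def:witness}. Take an arbitrary $\teA\in\comp(\Phi^k,\Phi^{k-1})$; we must show $(s^k:s^{k-1},\teA)$ has a witness. If $\teA=\mathsf{ne}$, then by Definition~\ref{def:composer} this can only come from $l<k$ (which holds here, $l=k-1$), so $\mathsf{ne}\in\type(s^k:s^{k-1})$ by Proposition~\ref{prop:typeNE} since $s^k:s^{k-1}$ is nonempty, and we are done. Otherwise $\teA=(\Sigma^n,\dots,\Sigma^{k+1},q,\deA)$, and by Definition~\ref{def:composer} there is a tuple $\teA'=(\Sigma^n,\dots,\Sigma^{k+1},\Sigma^k,q,\deA)\in\Phi^{k-1}$ with $\Sigma^k\subseteq\Phi^k$ and $\deA\in\Dd^k\subseteq\Dd^{k-1}$. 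First, $\teA\in\type(s^k:s^{k-1})=\comp(\type(s^k),\type(s^{k-1}))$: indeed $\teA'\in\Phi^{k-1}\subseteq\type(s^{k-1})$ (using the ``first component'' of the witness for $(s^{k-1},\teA')$), and $\Sigma^k\subseteq\Phi^k\subseteq\type(s^k)$; now Definition~\ref{def:composer} gives $\teA\in\comp(\type(s^k),\type(s^{k-1}))$.

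\emph{The main point} is producing the run. Fix a configuration $c=(p,t^n:\dots:t^{k+1}:(s^k:s^{k-1}))$ where $(t^i,\Sigma^i)$ has a witness for $k+1\le i\le n$; here $p=q$ since the state component of $\teA$ forces it, matching the $p$ in the shape of a level-$k$ descriptor---wait, more carefully, the descriptor in $\Tt^k$ has shape $(\Sigma^n,\dots,\Sigma^{k+1},p,\deA)$, and $\comp$ copies the state, so the relevant state is $q$, and I write $c=(q,\dots)$. We also have a witness for $(s^k,\Sigma^k)$ since $\Sigma^k\subseteq\Phi^k$ and $(s^k,\Phi^k)$ has one. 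Now view $c$ as $(q,t^n:\dots:t^{k+1}:s^k:s^{k-1})$: this is exactly a configuration of the form required by the witness condition for $(s^{k-1},\teA')$, with the stacks above $s^{k-1}$ being $t^n,\dots,t^{k+1},s^k$ and the required witnesses being $(t^i,\Sigma^i)$ for $i>k$ and $(s^k,\Sigma^k)$ for level $k$---all of which we have. Hence the witness for $(s^{k-1},\teA')$ supplies a run $R_c$ from $c$ that agrees with $\deA$ and ends in a stack $u^n:\dots:u^0$ with $(u^i,\Omega^i)$ having a witness for each $\lev(X)+1\le i\le n$, where $\deA=(X,\Omega^n,\dots,\Omega^{\lev(X)+1},q')$. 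This is precisely what Definition~\ref{def:witness} demands for $(s^k:s^{k-1},\teA)$, completing the claim and hence the proposition.

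\emph{Where the difficulty lies.} The only subtlety is bookkeeping: one must check that the configuration $c$ built over $s^k:s^{k-1}$ genuinely decomposes as a configuration over $s^{k-1}$ with $s^k$ sitting immediately above it and $t^{k+1},\dots,t^n$ above that, so that the witness clause for $\teA'\in\Phi^{k-1}$ applies verbatim---this uses associativity of the stack concatenation $t^n:\dots:t^{k+1}:(s^k:s^{k-1}) = t^n:\dots:t^{k+1}:s^k:s^{k-1}$---and that the witnesses $(s^k,\Sigma^k)$ and $(t^i,\Sigma^i)$ obtained from $\Sigma^k\subseteq\Phi^k$ and the hypotheses line up with exactly the levels the definition asks for. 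Everything else is definitional unwinding plus a single invocation of the decomposition lemma for $\comp$ to reduce from an arbitrary-length composer to the two-stack case.
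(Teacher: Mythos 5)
Your proof is correct. The core of your two-stack claim is exactly the paper's argument, but the paper runs it directly for the full sequence and thereby avoids your induction: since Definition \ref{def:composer} is stated for composers of arbitrary length, one takes $\teA^k\in\comp(\Phi^k,\dots,\Phi^l)$, extracts the tuple $\teA^l=(\Sigma^n,\dots,\Sigma^{l+1},p,\deA)\in\Phi^l$ with $\Sigma^i\subseteq\Phi^i$ for $l+1\leq i\leq k$, and feeds the witness for $(s^l,\teA^l)$ the configuration $(p,t^n:\dots:t^{k+1}:s^k:\dots:s^l)$, using the witnesses for $(s^i,\Sigma^i)$ at levels $l+1\leq i\leq k$ precisely as you do. So your induction on $k-l$ plus Lemma \ref{lem:comp-decomposition} is sound but buys nothing; it only adds the associativity bookkeeping you flag at the end. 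Two small nitpicks: the identification $s^k:\dots:s^l=s^k:\dots:s^{l+2}:(s^{l+1}:s^l)$ is just the paper's right-associative convention for stack concatenation, not a consequence of Proposition \ref{prop:composer-lr} (which is a statement about types); and $\mathsf{ne}$ can also enter the composer because $\mathsf{ne}\in\Phi^k$, not only because $l<k$ --- harmless here, since either way $s^k:s^{k-1}$ is nonempty and Proposition \ref{prop:typeNE} closes the case.
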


\begin{proof}	
  We have to show that for each $\teA^k\in
  \comp(\Phi^k,\Phi^{k-1},\dots,\Phi^l)$,  
  $(s^k:s^{k-1}:\dots:s^l,\teA^k)$ has a witness.
  By Proposition \ref{prop:composer-lr} $\teA^k\in\type(s^k:s^{k-1}:\dots:s^l)$.
  If $\teA^k=\mathsf{ne}$ we are already done.
  Otherwise, 
  $\teA^k=(\Sigma^n, \dots, \Sigma^{k+1}, p, \deA)$ for some
  $\deA=(X, \Omega^n, \dots, \Omega^{\lev(X)+1},q)$. 
  By definition of the composer, $\Phi^l$ contains a tuple
    $\teA^l=(\Sigma^n,\Sigma^{n-1} \dots, \Sigma^{l+1}, p, \deA)$ 
    such that $\Sigma^i\subseteq\Phi^i$ for $l+1\leq i\leq k$. 
    Let $c=(p, t^n:\dots:t^{k+1}:s^k:\dots:s^l)$ be a configuration
    for 
    stacks $t^i$ such that $(t^i, \Sigma^i)$ has a witness for each 
    $k+1\leq i \leq n$.  
    By assumption of the lemma, also $(s^i, \Sigma^i)$ has a witness
    for $l+1\leq i \leq k$ (since $\Sigma^i\subseteq\Phi^i$), and
    $(s^l,\teA^l)$ has a witness (since $\teA^l\in\Phi^l$). 
    Application of Definition \ref{def:witness} to $(s^l,\teA^l)$
    shows that there 
    is a run $R$ from $c$ to some configuration $(q, u^n:\dots:u^0)$ 
    which agrees with $\deA$ and such that $(u^i, \Omega^i)$ has a
    witness for $lev(X)+1\leq i\leq n$. 
    Thus, $R$ also shows that $(s^k:\dots:s^l, \teA^k)$ has a
    witness. \qed
\end{proof}

Our next goal is to show that each
$0$-stack has a witness. 

\begin{lemma}\label{lem:type2witness}
  Let $z\in\N$, $1\leq K\leq n$,  $u^K$  a $K$-stack, and 
  $\Sigma^K\subseteq\Tt^K$ such that $(u^K,\Sigma^K)$ has a witness. 
  Let $a\in\Gamma$, and let $\teA\in \stype_z(a,K,\Sigma^K)$.
  Then $((a,K,u^K),\teA)$ has a witness.
\end{lemma}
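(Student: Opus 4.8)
The plan is to prove Lemma~\ref{lem:type2witness} by induction on $z$, following exactly the case analysis in Definition~\ref{def:types}. Since $\stype_z(a,K,\Sigma^K)$ is built from $\stype_{z-1}(a,K,\Sigma^K)$ together with the functions $U_z$ and $T(r)$, and because $U_z$ itself refers back to $\stype_{z-1}$ in the $\push 1$ case, the natural external induction is on $z$ (with $\stype_{-1}$ mapping everything to $\emptyset$, which makes the base case vacuous). Having fixed $z$ and $\teA\in\stype_z(a,K,\Sigma^K)$, I would distinguish the two ways $\teA$ can enter this set according to Definition~\ref{def:types}.

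First, if $\teA=(\Psi^n,\dots,\Psi^1,q_0,(X,\Omega^n,\dots,\Omega^{\lev(X)+1},q_0))$ arises from Point~\ref{def:types1} via a rule $\Rule{X}{}\in\Rr_\Xx$, then I must exhibit, for every configuration $c=(q_0,t^n:\dots:t^{k+1}:(a,K,u^K))$ with $(t^i,\Psi^i)$ having a witness, a run $R_c$ from $c$ agreeing with $\deA$ such that the appropriate target stacks have witnesses. Here the length-$0$ run at $c$ does the job: it belongs to $X$ (rule $\Rule{X}{}$), its final state is $q_0$, and its final stack is $c$'s stack, so $(t^i,\Omega^i)$ has a witness whenever $\Omega^i\subseteq\Psi^i$, which is exactly the side condition of Point~\ref{def:types1}. (That $\teA\in\type((a,K,u^K))$ is immediate from Definition~\ref{def:types-k} once we know $\teA\in\stype_\Xx(a,K,\Sigma^K)$; strictly I should carry ``$\teA\in\type$'' as part of the statement being proved, as in Definition~\ref{def:witness}, but monotonicity of $\stype_z$ together with reaching the fixpoint handles this.)

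The substantial case is Point~\ref{def:types2}: $\teA=(\Psi^n,\dots,\Psi^1,q_0,\deA)$ comes from a rule $r=(\Rule{X}{\delta X_1\dots X_m})\in\Rr^{>0}_\Xx$ with $\delta=(q_0,a,\cdot,\cdot,op)$, together with a tuple in $U_z(op,K,\Sigma^K)$ witnessing the first operation, a tuple in $T(r)$ witnessing the remaining subruns, the hypothesis $\Psi^0\subseteq\stype_{z-1}(a,K,\Sigma^K)$, and the composer equation for $\Phi^{M(r)}$. Given a configuration $c=(q_0,t^n:\dots:t^{k+1}:(a,K,u^K))$ with witnesses for each $(t^i,\Psi^i)$, I would: (i) apply the external induction hypothesis to $\Psi^0\subseteq\stype_{z-1}(a,K,\Sigma^K)$ to get witnesses for the run descriptors of the $0$-stack $(a,K,u^K)$ itself; (ii) use the $U_z$-tuple to argue that performing $op$ from $c$ lands in a configuration whose decomposition $t'^n:\dots:t'^{L(op)}$ has the types recorded by $(\Pi^n,\dots,\Pi^{L(op)})$ and — crucially — that those still have witnesses; (iii) re-compose the bottom part via Proposition~\ref{prop:low-high-uv} to pass from $(\Pi^{M(r)},\dots,\Pi^{L(op)})$ to $\Phi^{M(r)}$; (iv) feed the $T(r)$-tuple to obtain, by an internal sub-analysis on $m\in\{0,1,2\}$ mirroring Definition~\ref{def:set-T}, the required subruns from the $X_1$-part and $X_2$-part (using that $T(r)$'s membership encodes exactly nested run descriptors whose witnesses are supplied by an induction on the structure of types, i.e.\ on lower levels $\Tt^{j}$ for $j<n$); and (v) glue these runs by composition to get $R_c\in X$ agreeing with $\deA$ with all target witnesses in place.

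The main obstacle I anticipate is step~(iv): the set $T(r)$ is defined purely combinatorially, and turning an element of $T(r)$ into actual runs requires invoking the right-to-left direction of Lemma~\ref{lem:run-equiv-type} at strictly lower stack levels (for the $X_1$-subrun we need runs witnessing descriptors living in $\type(t'^{\lev(X_1)})$, and for $X_2$ we need descriptors in $\type$ of a $0$-stack built from $\Psi^{\le\lev(X_1)}$ and the $\Sigma$-data threaded through $\deB$). This is where the well-formedness condition on rules $\Rule{X}{\delta YZ}$ — that the topmost $\lev(Y)$-stack is unchanged by the $\delta Y$-part — is indispensable, because it guarantees the bottom stacks $\Psi^0,\dots,\Psi^{\lev(X_1)}$ reappear unchanged so their witnesses can be reused for the $X_2$-subrun. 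Consequently, Lemma~\ref{lem:type2witness} cannot be proved in isolation: it must be part of a simultaneous induction with the statement ``every $(s^k,\Phi^k)$ has a witness'' for all $k$ (ordered by level, then by nesting depth of links, then by $z$), and the honest formulation of the argument is a single nested induction in which Lemma~\ref{lem:type2witness} is the $k=0$ instance and Proposition~\ref{prop:low-high-uv} handles the inductive step in $k$. I would set up that joint induction explicitly before entering the case analysis above.
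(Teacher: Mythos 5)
Your overall architecture matches the paper's: induction on $z$, a case split following the two points of Definition~\ref{def:types}, the length-$0$ run in the first case, and in the second case the sequence ``apply $op$ using the $U_z$-tuple, recompose via Proposition~\ref{prop:low-high-uv}, extract the subruns from the $T(r)$-tuple, glue''. The base case and steps (i)--(iii) are essentially the paper's Lemma~\ref{lem:AuxiliaryU} (whose $\push{1}$ case is exactly where the induction hypothesis at $z-1$ is consumed) together with Proposition~\ref{prop:low-high-uv}.

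The gap is in step (iv) and your closing paragraph. You propose to turn membership in $T(r)$ into actual runs by invoking the right-to-left direction of Lemma~\ref{lem:run-equiv-type} ``at strictly lower stack levels'', inside a simultaneous induction ordered by level, nesting depth, and $z$, with Lemma~\ref{lem:type2witness} as the $k=0$ instance of ``every $(s^k,\Phi^k)$ has a witness''. This does not work as stated: the stacks for which you would need run-existence --- the topmost $\lev(X_1)$-stack $t'^{\lev(X_1)}$ of the configuration reached after $op$, and the $0$-stack feeding the $X_2$-part --- are assembled from the \emph{arbitrary} surrounding stacks $s^n,\dots,s^1$ that are universally quantified in Definition~\ref{def:witness}; they are neither of lower level in any decreasing sense nor of smaller nesting depth, so your proposed measure is not well-founded and the appeal to Lemma~\ref{lem:run-equiv-type} is circular (that lemma's right-to-left direction is itself only obtained downstream, via Corollary~\ref{cor:witness-k}). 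The point you are missing is that Definition~\ref{def:witness} already packages run-existence into the predicate ``has a witness'': once steps (ii)--(iii) give you that $(t'^{M(r)},\Phi^{M(r)})$ has a witness and the induction hypothesis at $z-1$ applied to $\Psi^0\subseteq\stype_{z-1}(a,K,\Sigma^K)$ gives you that $(s^0,\Psi^0)$ has a witness, the runs agreeing with $\deB$ and $\deC$ are read off \emph{directly} from those witness facts (this is the paper's Lemma~\ref{lem:AuxiliaryT}), with the well-formedness condition on rules $\Rule{X}{\delta YZ}$ guaranteeing that the low stacks carrying $\Psi^0,\dots,\Psi^{\lev(X_1)}$ reappear unchanged for the $X_2$-part. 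No induction other than the one on $z$ is needed, and ``every $(s^k,\Phi^k)$ with $\Phi^k\subseteq\type(s^k)$ has a witness'' is a consequence proved afterwards by structural induction on the stack, not a co-inductive partner of the lemma.
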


\begin{corollary}\label{cor:type2witness}
  Let $1\leq K\leq n$, $u^K$ be a $K$-stack,
  $\Sigma^K\subseteq\Tt^K$ be such that $(u^K,\Sigma^K)$ has a
  witness, and $a\in\Gamma$. 
  $\left((a,K,\Sigma^K), \type((a,K,\Sigma^K))\right)$ has a witness.
\end{corollary}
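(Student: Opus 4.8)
The plan is to obtain Corollary \ref{cor:type2witness} as an essentially immediate consequence of Lemma \ref{lem:type2witness}, by instantiating the auxiliary type parameter at $\Sigma^K:=\type(u^K)$ and passing to the fixpoint of the sequence $(\stype_z)_{z\in\N}$. No new induction is needed: all the genuine work — the case analysis on the first operation of a run described by a wf-rule, the interplay of $T$, $U_z$ and $\stype_z$, and the propagation of witnesses across stack levels — has already been carried out inside Lemma \ref{lem:type2witness}.

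Concretely, I would proceed as follows. First, recall from Definition \ref{def:types-k} that the type of the $0$-stack is $\type((a,K,u^K)) = \stype_\Xx\bigl(a,K,\type(u^K)\bigr)$, and that $\stype_\Xx$ is the value of the monotone sequence $(\stype_z)_{z\in\N}$ at some index $z'$ where it stabilises, so fix such a $z'$ and note $\type((a,K,u^K)) = \stype_{z'}(a,K,\type(u^K))$. Second, unfold Definition \ref{def:witness}: to show that $\bigl((a,K,u^K),\type((a,K,u^K))\bigr)$ has a witness it suffices to exhibit, for every run descriptor $\teA\in\type((a,K,u^K)) = \stype_{z'}(a,K,\type(u^K))$, a witness for the pair $\bigl((a,K,u^K),\teA\bigr)$. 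Third, apply Lemma \ref{lem:type2witness} with $z:=z'$, the $K$-stack $u^K$, $\Sigma^K:=\type(u^K)$, the symbol $a$, and each such $\teA$; its conclusion is exactly that $\bigl((a,K,u^K),\teA\bigr)$ has a witness. This uses the hypothesis of the lemma that $(u^K,\type(u^K))$ has a witness, which is the hypothesis supplied for $u^K$ (with $\Sigma^K=\type(u^K)$); by monotonicity of $\stype$ and of the composer one could equally run the argument from a witness of $(u^K,\Sigma^K)$ for a smaller $\Sigma^K\subseteq\type(u^K)$, obtaining a witness of $\bigl((a,K,u^K),\stype_\Xx(a,K,\Sigma^K)\bigr)$.

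I do not anticipate a real obstacle. The only points that need a little care are (i) choosing the fixpoint index $z'$ large enough that $\stype_{z'}$ already equals $\stype_\Xx$, so that the set of run descriptors we are required to witness is precisely the true type of the $0$-stack rather than a proper approximant, and (ii) matching the instantiation $\Sigma^K=\type(u^K)$ against the witness hypothesis available for $u^K$. Once these are in place, Definition \ref{def:witness} repackages the family of per-descriptor witnesses produced by Lemma \ref{lem:type2witness} into the single witness asserted by the corollary.
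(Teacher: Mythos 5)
Your proposal is correct and matches the paper exactly: the paper gives no separate proof of Corollary \ref{cor:type2witness}, treating it as the immediate instantiation of Lemma \ref{lem:type2witness} at a stabilisation index $z'$ of $(\stype_z)_{z\in\N}$ with $\Sigma^K=\type(u^K)$, which is precisely what you do (and is how the corollary is invoked in the proof of Corollary \ref{cor:witness-k}). Your closing remark about running the argument from a smaller $\Sigma^K$ is also sound, since Lemma \ref{lem:type2witness} applies to any $\Sigma^K$ for which $(u^K,\Sigma^K)$ has a witness.
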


The proof of the lemma is by induction on the fixpoint stage $z$.
As an auxiliary step we show how the set $T(r)$ can be
used to prove that there is an appropriate run described by $r$.

\begin{lemma}\label{lem:AuxiliaryT}
  Let $r=(\Rule{X}{\delta X_1\dots X_m})$ be a rule from
  $\Rr_\Xx^{>0}$, and let
  $\deA=(X,\Omega^n,\Omega^{n-1},\dots,\Omega^{\lev(X)+1},q')\in\Dd_X$. 
  Let $R_1$ be a run of length $1$ from stack $s^n:s^{n-1}:\dots:s^0$
  to stack $t^n:t^{n-1}:\dots:t^{M(r)}$ performing the transition
  $\delta$. 
  For $0\leq i\leq n$, let $\Psi^i$ be such that $(s^i,\Psi^i)$ has a
  witness, and  
  for $M(r)\leq i\leq n$, let $\Phi^i$ be such that $(t^i,\Phi^i)$ has
  a witness. 
  Assume that  
  $$((\Psi^n,\Psi^{n-1},\dots,\Psi^0),
  (\Phi^n,\Phi^{n-1},\dots,\Phi^{M(r)}),\deA)\in T(r).$$
  Then there exists a run $R$ from $R_1(0)$ which agrees with $\deA$, and 
  ends in a stack $v^n:v^{n-1}:\dots:v^0$ such that $(v^i,\Omega^i)$
  has a witness for $\lev(X)+1\leq i\leq n$. 
\end{lemma}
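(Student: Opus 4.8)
The plan is to do a case distinction on the number $m$ of nonterminals on the right-hand side of $r$, mirroring the three cases in Definition \ref{def:set-T}. In each case the membership in $T(r)$ unpacks into existence of certain run descriptors in certain types $\Phi^{\lev(X_i)}$ or $\Psi^0$; using the induction hypothesis of Lemma \ref{lem:run-equiv-type} in its already-proved right-to-left direction (which the ambient induction on $z$ provides, since the runs involved are strictly shorter / use strictly lower fixpoint stages), I turn each such descriptor into an actual run, and then glue those runs together with the one-step run $R_1$ to form the desired $R$. Throughout I must check that the intermediate stacks have witnesses for the types demanded by the descriptors, and that the final stack has witnesses for $\Omega^{\lev(X)+1},\dots,\Omega^n$, which is exactly what ``$R$ agrees with $\deA$ plus the witness conclusion'' requires.

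\textbf{Case $m=0$ ($r=\Rule{X}{\delta}$).} Here $M(r)=\lev(X)$ and $R:=R_1$ already works: by Definition \ref{def:set-T}.\ref{def:set-T1} the state reached after $\delta$ is $q'$ and $\Omega^i\subseteq\Phi^i$ for $\lev(X)+1\le i\le n$, so since $(t^i,\Phi^i)$ has a witness, so does $(t^i,\Omega^i)$ (a witness for a larger set restricts to a witness for a subset, directly from Definition \ref{def:witness}). Hence $R$ agrees with $\deA$ and the final stacks carry the required witnesses.

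\textbf{Case $m=1$ ($r=\Rule{X}{\delta X_1}$).} By Definition \ref{def:set-T}.\ref{def:set-T2}, $\Phi^{\lev(X_1)}$ contains a descriptor $\deB=(X_1,\Omega^n,\dots,\Omega^{\lev(X_1)+1},q')$ equipped with $\Phi^n,\dots,\Phi^{\lev(X_1)+1}$ and intermediate state $q_1$ (the state after $\delta$, i.e.\ the state of $R_1(1)$). Since $(t^{\lev(X_1)},\Phi^{\lev(X_1)})$ has a witness, Definition \ref{def:witness} applied to $\deB$ — with the stacks $t^n,\dots,t^{\lev(X_1)+1}$, which have witnesses for $\Phi^n,\dots,\Phi^{\lev(X_1)+1}$ by hypothesis — produces a run $R_2\in X_1$ from $R_1(1)$ ending in some $u^n:\dots:u^0$ with $(u^i,\Omega^i)$ having a witness for $\lev(X_1)+1\le i\le n$, and ending in state $q'$. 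Set $R:=R_1\circ R_2$. Then $R$ is described by $r$, hence $R\in X$, it ends in state $q'$, and for $\lev(X)+1\le i\le n$ the pair $(u^i,\Omega^i)$ has a witness as needed; so $R$ agrees with $\deA$ and the conclusion holds.

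\textbf{Case $m=2$ ($r=\Rule{X}{\delta X_1X_2}$).} By Definition \ref{def:set-T}.\ref{def:set-T3} there are $\deB=(X_1,\Sigma^n,\dots,\Sigma^{\lev(X_1)+1},q_2)$ sitting in $\Phi^{\lev(X_1)}$ (with $\Phi^n,\dots,\Phi^{\lev(X_1)+1}$ and intermediate state $q_1$), and $\deC=(X_2,\Omega^n,\dots,\Omega^{\lev(X_2)+1},q')$ sitting in $\Psi^0$ (with context $\Sigma^n,\dots,\Sigma^{\lev(X_1)+1},\Psi^{\lev(X_1)},\dots,\Psi^1$ and intermediate state $q_2$). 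First, as in Case $m=1$, applying Definition \ref{def:witness} to $\deB$ using the witnesses of $(t^i,\Phi^i)$ for $\lev(X_1)+1\le i\le n$ yields a run $R_2\in X_1$ from $R_1(1)$ ending in a configuration $(q_2,\,w^n:\dots:w^0)$ with $(w^i,\Sigma^i)$ having a witness for $\lev(X_1)+1\le i\le n$. Here I use the well-formedness of $r$: since the topmost $\lev(X_1)$-stacks of $R_1(0)$ and of $(R_1\circ R_2)(\lvert R_1\circ R_2\rvert)$ coincide, we have $w^i=s^i$ for $0\le i\le\lev(X_1)$, so $(w^i,\Psi^i)$ has a witness for those $i$. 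Now the stack $w^n:\dots:w^0$ together with its component witnesses for $\Sigma^n,\dots,\Sigma^{\lev(X_1)+1}$ and for $\Psi^{\lev(X_1)},\dots,\Psi^1$ is exactly the context Definition \ref{def:witness} requires to fire $\deC\in\Psi^0=\type((\text{symbol of }s^0))$ — and $(w^0,\Psi^0)=(s^0,\Psi^0)$ has a witness by hypothesis. This gives a run $R_3\in X_2$ from $(R_1\circ R_2)(\lvert\cdot\rvert)$ ending in some $v^n:\dots:v^0$ in state $q'$ with $(v^i,\Omega^i)$ having a witness for $\lev(X_2)+1\le i\le n$, hence a fortiori for $\lev(X)+1\le i\le n$. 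Put $R:=R_1\circ R_2\circ R_3$; it is described by $r$, so $R\in X$, ends in $q'$, and its final stack carries the required witnesses, i.e.\ $R$ agrees with $\deA$ and satisfies the conclusion.

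\textbf{Main obstacle.} The delicate point is the $m=2$ case, specifically lining up the \emph{context} expected by $\deC$ in $\Psi^0$ with the actual stack reached after $R_1\circ R_2$. This is where the well-formedness clause ``the topmost $\lev(X_1)$-stacks of $R(0)$ and $R(|R|)$ coincide'' for rules $\Rule{X}{\delta YZ}$ is essential: it forces the lower part of that stack ($w^i$ for $i\le\lev(X_1)$) to be literally $s^i$, so the witnesses for $\Psi^{\lev(X_1)},\dots,\Psi^1$ and $\Psi^0$ are available without any further argument, while the upper part ($w^i$ for $i>\lev(X_1)$) carries witnesses for $\Sigma^i$ coming out of the application of Definition \ref{def:witness} to $\deB$. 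Getting exactly these bookkeeping facts to match Point \ref{def:set-T3} of Definition \ref{def:set-T} is the heart of the proof; everything else is composition of runs and the trivial monotonicity of ``having a witness'' in the type argument.
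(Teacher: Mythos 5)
Your proof is correct and follows essentially the same route as the paper's: the same three-way case split on $m$, unpacking membership in $T(r)$ via Definition \ref{def:set-T}, firing the run descriptors through Definition \ref{def:witness}, and in the $m=2$ case invoking the well-formedness clause to identify the lower part of the intermediate stack with $s^{\lev(X_1)},\dots,s^0$ so that $\teC\in\Psi^0$ can be applied. The only cosmetic difference is your framing of the mechanism as the right-to-left direction of Lemma \ref{lem:run-equiv-type}; the paper works directly with the ``has a witness'' predicate (which is what you actually use in the case analysis anyway), so the substance coincides.
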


\begin{proof}
  Let $q_1$ be the state of $R_1(1)$, which is also the 
  state reached after application of $\delta$. 
  We distinguish three cases depending on the shape of $r$, i.e., on
  the value of $m\in\{0,1,2\}$.
  \begin{enumerate}
  \item   Assume that $r$ is $\Rule{X}{\delta}$.
  Recall that $M(r)=\lev(X)$.
  Set $R = R_1$.
  By definition of the set $T(r)$, we have $q_1=q'$, and
  $\Omega^i\subseteq\Phi^i$ for $\lev(X)+1\leq i\leq n$. 
  Notice that $(t^i,\Omega^i)$ has a witness (in particular
  $\Omega^i\subseteq\type(t^i)$) for $\lev(X)+1\leq i\leq 
  n$, because $(t^i,\Phi^i)$ has a witness, and
  $\Omega^i\subseteq\Phi^i$. 
  Observe that $R_1$ is in $X$ whence it agrees with $\deA$.
\item 
  Assume that $r$ is $\Rule{X}{\delta X_1}$.
  Recall that $M(r)=\lev(X_1)$.
  By definition of $T(r)$, we have 
  \begin{align*}
  &\teB=(\Phi^n,\Phi^{n-1},\dots,\Phi^{\lev(X_1)+1},q_1,\deB)\in 
  \Phi^{\lev(X_1)} \text{ where}\\
  &\deB=(X_1,\Omega^n,\Omega^{n-1},\dots,\Omega^{\lev(X_1)+1},q').
  \end{align*}
  Since $(t^{\lev(X_1)},\teB)$ has a witness and $(t^i,\Phi^i)$ has a
  witness for $\lev(X_1)+1\leq i\leq n$, 
  there is a run $R_2$ agreeing with $\deB$ from $R_1(1)$ to a stack
  $v^n:v^{n-1}:\dots:v^0$ such  
  that $(v^i,\Omega^i)$ has a witness for $\lev(X_1)+1\leq i\leq n$.
  As $R$ we take $R_1\circ R_2$; this run is in $X$.
  By definition of a wf-rule we know that $\lev(X_1)\leq\lev(X)$, so
  $R$ agrees with $\deA$, and $(v^i,\Omega^i)$ has a witness for
  $\lev(X)+1\leq i\leq n$. 
\item 
  Assume that $r$ is $\Rule{X}{\delta X_1X_2}$.
  Recall that $M(r)=\lev(X_1)$.
  We have 
  \begin{align*}
    \teB=(\Phi^n,\Phi^{n-1},\dots,\Phi^{\lev(X_1)+1},q_1,\deB) \in \Phi^{\lev(X_1)}
  \end{align*}
  for some
  $\deB=(X_1,\Sigma^n,\Sigma^{n-1},\dots,\Sigma^{\lev(X_1)+1},q_2)$
  and 
  \begin{align*}
    \teC=(\Sigma^n,\dots,\Sigma^{\lev(X_1)+1},
    \Psi^{\lev(X_1)},\dots,\Psi^1,q_2,\deC)\in\Psi^0
  \end{align*}
  where
  $\deC=(X_2,\Omega^n,\Omega^{n-1},\dots,\Omega^{\lev(X_2)+1},q')$.
  Since $(t^{\lev(X_1)},\teB)$ has a witness, and $(t^i,\Phi^i)$ has a
  witness for $\lev(X_1)+1\leq i\leq n$, 
  there is a run $R_2$ agreeing with $\deB$ from $R_1(1)$ to a stack
  $u^n:u^{n-1}:\dots:u^0$ such that 
  $(u^i,\Sigma^i)$ has a witness for $\lev(X_1)+1\leq i\leq n$.
  By definition of a wf-rule we know that the topmost $\lev(X_1)$-stack of
  $R_2(\lvert R_2\rvert)$ is the same as of $R_1(0)$ whence
  $u^i=s^i$ and $(u^i,\Psi^i)$ has a witness for $0\leq i\leq \lev(X_1)$. 
  In particular $(u^0,\teC)$ has a witness.
  Hence, there is a run $R_3$ agreeing with $\deC$ from
  $R_2(\lvert R_2\rvert)$ to a stack $v^n:v^{n-1}:\dots:v^0$ such that 
  $(v^i,\Omega^i)$ has a witness for $\lev(X_2)+1\leq i\leq n$.
  As $R$ we take $R_1\circ R_2\circ R_3$; this run is in $X$.
  By definition of a wf-rule we know that $\lev(X_2)\leq\lev(X)$, so
  $R$ agrees with $\deA$, and $(v^i,\Omega^i)$ has a witness for
  $\lev(X)+1\leq i\leq n$. \qed
  \end{enumerate}
\end{proof}

The next lemma shows how the set $U_z(op, K, \Sigma^K)$ 
can be used to prove that an appropriate run performing operation $op$
exists.

\begin{lemma} \label{lem:AuxiliaryU}  
  Fix a number $z\geq 1$ and assume that
  Lemma  \ref{lem:type2witness} holds for $z-1$.
  Let $s=s^n:s^{n-1}:\dots:s^0$ be a pds, where $s^0$ contains a
  link of level $K$ to a stack $u^K$, 
  and let $op$ be an operation. 
  For $0\leq i\leq n$, let $\Psi^i$ be such that $(s^i,\Psi^i)$
  has a witness; 
  let also $\Sigma^K$ be such that $(u^K,\Sigma^K)$ has a witness.
  Assume that
  \begin{align*}
    \big((\Psi^n,\Psi^{n-1},\dots,\Psi^0),
    (\Pi^n,\Pi^{n-1},\dots, \Pi^{L(op)})\big) \in U_z(r,K,\Sigma^K).
  \end{align*}
  Then $op$ can be applied to $s$, and
  $op(s)=t^n:t^{n-1}:\dots:t^{L(op)}$ is such that  
  $(t^i,\Pi^i)$ has a witness for $L(op)\leq i\leq n$.
\end{lemma}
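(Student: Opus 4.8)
The plan is to prove the lemma by a case distinction on the operation $op$, mirroring the four cases of Definition \ref{def:set-U} and dualising the companion Lemma \ref{lem:set-U-lr}. In each case I will first show that $op$ is applicable to $s$, then write the decomposition $op(s)=t^n:t^{n-1}:\dots:t^{L(op)}$ explicitly from the definition of the stack operations, and finally check that every pair $(t^i,\Pi^i)$ for $L(op)\leq i\leq n$ has a witness, reading off the prescribed shape of $\Pi^n,\dots,\Pi^{L(op)}$ from the relevant point of Definition \ref{def:set-U}. The two workhorses will be Proposition \ref{prop:typeNE} (non-emptiness is detected by $\mathsf{ne}$) for applicability, and Proposition \ref{prop:low-high-uv} (composers preserve witnesses) for the freshly assembled topmost stacks.

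For $op=\pop{k}$ membership in $U_z$ forces $\mathsf{ne}\in\Psi^k$; since $(s^k,\Psi^k)$ has a witness this gives $\mathsf{ne}\in\type(s^k)$, so $s^k$ is nonempty and $\pop{k}$ applies. Here $t^i=s^i$ and $\Pi^i=\Psi^i$ for $k\leq i\leq n$, so the conclusion is immediate from the hypotheses on $s$. For $op=\col{k}$ non-emptiness of $U_z(op,K,\Sigma^K)$ forces $k=K$ and $\mathsf{ne}\in\Sigma^K$, hence $u^K$ is nonempty and $\col{k}$ applies with $op(s)=s^n:\dots:s^{k+1}:u^K$; thus $t^i=s^i$ for $i>k$ and $t^k=u^K$. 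By Definition \ref{def:set-U} we have $\Pi^i=\Psi^i$ for $i>k$ and $\Pi^k=\Sigma'^k\subseteq\Sigma^K$, and since ``having a witness'' is inherited by subsets of the type-component (immediate from Definition \ref{def:witness}), $(u^K,\Sigma'^k)$ has a witness; the remaining pairs coincide with pairs of $s$.

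For the two push operations applicability is automatic. If $op=\push{k}$ with $k\geq 2$, then $t^i=s^i$ for $i\neq k$ and $t^k=s^k:s^{k-1}:\dots:s^0$, while Definition \ref{def:set-U} gives $\Pi^i=\Psi^i$ for $i\neq k$ and $\Pi^k=\comp(\Psi^k,\Psi^{k-1},\dots,\Psi^0)$; Proposition \ref{prop:low-high-uv} then shows $(t^k,\Pi^k)$ has a witness, and the other pairs are pairs of $s$. If $op=\push1_{b,k}$, then $t^i=s^i$ for $i\geq 2$, $t^1=s^1:s^0$ and $t^0=(b,k,s^k)$, while $\Pi^i=\Psi^i$ (for $i\geq 2$), $\Pi^1=\comp(\Psi^1,\Psi^0)$ and $\Pi^0\subseteq\stype_{z-1}(b,k,\Psi^k)$. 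Proposition \ref{prop:low-high-uv} handles $t^1$. For $t^0$ I invoke the induction hypothesis, i.e.\ Lemma \ref{lem:type2witness} at stage $z-1$: since $(s^k,\Psi^k)$ has a witness, every run descriptor in $\stype_{z-1}(b,k,\Psi^k)$ — hence every element of $\Pi^0$ — yields a witness for $(b,k,s^k)$, so $(t^0,\Pi^0)$ has a witness.

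The step I expect to require the most care is this last one, the $\push1$ case: one has to check precisely that the hypotheses of Lemma \ref{lem:type2witness} are met — namely that the link stack $s^k$ paired with $\Psi^k$ has a witness — and that it may legitimately be applied at stage $z-1$, which is exactly the reason Lemma \ref{lem:type2witness} is set up as an induction over the fixpoint stage and why the present lemma carries the assumption ``Lemma \ref{lem:type2witness} holds for $z-1$''. Everything else is bookkeeping: matching the explicit decomposition of $op(s)$ against the corresponding clause of Definition \ref{def:set-U}, together with the two propositions already cited.
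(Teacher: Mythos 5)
Your proposal is correct and follows essentially the same route as the paper's proof: the same case distinction on $op$, applicability via $\mathsf{ne}$ and Proposition \ref{prop:typeNE}, witness preservation for the reassembled topmost stacks via Proposition \ref{prop:low-high-uv}, and the stage-$(z-1)$ instance of Lemma \ref{lem:type2witness} for the fresh $0$-stack in the $\push{1}_{b,k}$ case. No gaps.
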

\begin{proof}
  We proceed by case distinction on the operation $op$ performed by
  $\delta$.  
  \begin{itemize}
  \item   Assume that $op=\pop{k}$.
    Then $L(op)=k$, $\Pi^i=\Psi^i$ for $k\leq i\leq n$ and
    $\mathsf{ne}\in\Psi^k$. 
    Thus $s^k$ is not empty, so $\pop k$ can be applied to $s$, which
    results in the stack $s^n:s^{n-1}:\dots:s^k$. 
    We have $(t^i,\Pi^i)=(s^i,\Psi^i)$ for $k\leq i\leq n$ whence
    $(t^i,\Pi^i)$ has a witness.
  \item 
    Assume that $op=\col k$.
    Then $L(op)=k$, $k=K$, $\mathsf{ne}\in\Sigma^K$, 
    $\Pi^i=\Psi^i$ for $k+1\leq i\leq n$ and $\Pi^k\subseteq\Sigma^K$. 
    Thus, $u^k$ is not empty whence $\col k$ can be applied to $s$. This
    results in the stack $s^n:s^{n-1}:\dots:s^{k+1}:u^K$. 
    We have $(t^i,\Pi^i)=(s^i,\Psi^i)$ for $k+1\leq i\leq n$.
    Moreover, 
    $t^k=u^K$ and $\Pi^k\subseteq\Sigma^K$ whence $(t^i,\Pi^i)$ has a
    witness for $k\leq i\leq n$.
  \item 
    Assume that $op=\push1_{b,k}$.
    Then $L(op)=0$, $\Pi^i=\Psi^i$ for $2\leq i\leq n$, 
    $\Pi^i=\comp(\Psi^1,\Psi^0)$, and
    $\Pi^0\subseteq\stype_{z-1}(b,k,\Psi^k)$. 
    Additionally, 
    \begin{align*}
      \push1_{b,k}(s)=s^n:s^{n-1}:\dots:s^2:(s^1:s^0):(b,k,s^k).      
    \end{align*}
    For $2\leq i\leq n$, we have $(t^i,\Pi^i)=(s^i,\Psi^i)$ whence
    $(t^i,\Pi^i)$ has a witness. 
    Due to Proposition \ref{prop:low-high-uv},
    $(t^1,\Pi^1)=(s^1:s^0,\comp(\Psi^1,\Psi^0))$ has a witness. 
    Since we assumed that Lemma \ref{lem:type2witness} holds for
    $z-1$, we conclude that
    $(t^0,\stype_{z-1}(b,k,\Psi^k))$ has a witness whence
    $(t^0,\Pi^0)$ has a witness.
  \item 
    Assume that $op=\push{k}$ (for $k\geq 2$).
    Then $L(op)=0$, $\Pi^i=\Psi^i$ for $0\leq i\leq n$ with $k\neq i$,
    and 
    $\Pi^k=\comp(\Psi^k,\Psi^{k-1},\dots,\Psi^0)$. 
    Additionally, 
    \begin{align*}
    \push
    k(s)=s^n:s^{n-1}:\dots:s^{k+1}:t^k:s^{k-1}:\dots:s^0
    \end{align*}
    where $t^k=s^k:s^{k-1}:\dots:s^0$.
    For $0\leq i\leq n$ with $i\neq k$ we have
    $(t^i,\Pi^i)=(s^i,\Psi^i)$ whence $(t^i,\Pi^i)$ has a witness. 
    Due to Proposition \ref{prop:low-high-uv},
    \begin{align*}
      (t^k,\Pi^k)=(s^k:s^{k-1}:\dots:s^0,
      \comp(\Psi^k,\Psi^{k-1},\dots,\Psi^0))        
    \end{align*} 
    has a witness. \qed
  \end{itemize}
\end{proof}

\begin{proof}[Lemma \ref{lem:type2witness}]
  The proof is by induction on $z$. 
  Recall that we defined $\stype_{-1}(a, K, \Sigma^k)=\emptyset$.
  Let $z\geq 0$, $\teA\in \stype_z(a,K,\Sigma^K)$ and 
  $s^0=(a,K,u^K)$. 
  Assume that we have already proved the lemma for  $z-1$.
  By definition, $\teA\in\type(s^0)$.
  Let 
  \begin{align*}
    &\teA=(\Psi^n,\Psi^{n-1},\dots,\Psi^1,q_0,\deA),\quad\mbox{and}\\ 
    &\deA=(X,\Omega^n,\Omega^{n-1},\dots,\Omega^{\lev(X)+1},q').
  \end{align*}
  Let $c =(q_0, s^n : \dots : s^1:s^0)$ be a configuration 
  such that  $(s^i,\Psi^i)$ has a witness for each $1\leq i\leq n$.
  We have to construct a run $R$ from $c$ to a stack
  $w^n:w^{n-1}:\dots:w^0$ such that $R$  
  agrees with $\deA$ and $(w^i,\Omega^i)$ has a witness for each
  $\lev(X)+1\leq i\leq n$.  
  We distinguish  two cases.
  \begin{itemize}
  \item 
    Assume that $\teA$ is in $\stype_z(a,K,\Sigma^K)$ thanks to
    the first point of Definition \ref{def:types}. 
    Then $\Omega^i\subseteq \Psi^i$ for $\lev(X)+1\leq i\leq n$, and
    $q'=q_0$. 
    It follows that $(s^i,\Omega^i)$ has a witness (whence in particular
    $\Omega^i\subseteq\type(s^i)$) for $\lev(X)+1\leq i\leq n$ and
    the run $R$ of length $0$ from $c$ agrees with $\deA$.
  \item 
    Assume that $\teA$ is in $\stype_z(a,K,\Sigma^K)$ thanks to the
    second point of Definition \ref{def:types}. 
    Then for some rule $r=(\Rule{X}{\delta X_1 \dots
      X_m})\in\Rr^{>0}_\Xx$, where $\delta=(q_0,a,\cdot,\cdot,op)$,  
    we have
    \begin{align*}
      &\big((\Psi^n,\Psi^{n-1},\dots,\Psi^0),\\ 
      &\ (\Pi^n,\Pi^{n-1},\dots,\Pi^{L(op)})\big)\in
      U_z(op,K,\Sigma^K),\\ 
      &\big((\Psi^n,\Psi^{n-1},\dots,\Psi^0),\\
      &\ (\Pi^n,\Pi^{n-1},\dots,\Pi^{M(r)+1},\Phi^{M(r)}),\deA\big)\in
      T(r),\\ 
      &\Psi^0\subseteq\stype_{z-1}(a,K,\Sigma^K)\text{, and}\\
      &\Phi^{M(r)}=\comp(\Pi^{M(r)},\Pi^{M(r-1)},\dots,\Pi^{L(op)}).
    \end{align*}
    By induction assumption, $(s^0,\Psi^0)$ has a witness.
    Notice that the state and the topmost symbol of $c$ are as
    required by $\delta$. 
    Lemma \ref{lem:AuxiliaryU} implies that $\delta$ can be applied to
    $c$. Let $d$ be the resulting configuration
    and $t^n:t^{n-1}:\dots:t^{L(op)}$ its stack.
    Furthermore, this lemma implies that $(t^i,\Pi^i)$ has a witness
    for $L(op)\leq i\leq n$. 
    Setting $v^{M(r)}=t^{M(r)}:t^{M(r)-1}:\dots:t^{L(op)}$
    the stack of $d$ is
    $t^n:t^{n-1}:\dots:t^{M(r)+1}:v^{M(r)}$. 
    Due to Proposition \ref{prop:low-high-uv}, 
    $(v^{M(r)},\Phi^{M(r)})$ has a witness. 
    Thus, Lemma \ref{lem:AuxiliaryT} can be applied (where as $R_1$ we
    take the run from $c$ to $d$). 
    We obtain a run $R$ from $c$ which agrees with $\deA$, and
    ends in a stack $w^n:w^{n-1}:\dots:w^0$ such that $(w^i,\Omega^i)$
    has a witness for $\lev(X)+1\leq i\leq n$ as required. \qed
  \end{itemize}
\end{proof}

\begin{corollary}\label{cor:witness-k}
  Let $0\leq k\leq n$, let $s^k$ be a $k$-stack, and let
  $\Phi^k\subseteq \type(s^k)$. 
  Then $(s^k,\Phi^k)$ has a witness.
\end{corollary}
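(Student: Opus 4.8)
The plan is to reduce to the case $\Phi^k=\type_\Xx(s^k)$ and then run a structural induction on $s^k$ that mirrors Definition~\ref{def:types-k}. The reduction is immediate from Definition~\ref{def:witness}: $(s^k,\Phi^k)$ has a witness exactly when $(s^k,\teA)$ has a witness for every $\teA\in\Phi^k$, so once we know that $(s^k,\type_\Xx(s^k))$ has a witness, the claim for an arbitrary $\Phi^k\subseteq\type_\Xx(s^k)$ follows by this monotonicity in the second argument.

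For the induction I would follow the three cases of Definition~\ref{def:types-k}. If $s^k$ is empty then $\type_\Xx(s^k)=\emptyset$ and the empty pair vacuously has a witness. If $k=0$, write $s^0=(a,K,t^K)$; the linked stack $t^K$ has strictly smaller link-nesting depth, so the induction hypothesis gives that $(t^K,\type_\Xx(t^K))$ has a witness, and Corollary~\ref{cor:type2witness} — equivalently, Lemma~\ref{lem:type2witness} applied at the fixpoint stage $z$ with $\Sigma^K=\type_\Xx(t^K)$, so that $\stype_z=\stype_\Xx$ and hence $\stype_\Xx(a,K,\type_\Xx(t^K))=\type_\Xx(s^0)$ — yields that $(s^0,\type_\Xx(s^0))$ has a witness. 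If $k\geq 1$, write $s^k=t^k:t^{k-1}$; by the induction hypothesis $(t^k,\type_\Xx(t^k))$ and $(t^{k-1},\type_\Xx(t^{k-1}))$ both have witnesses, so Proposition~\ref{prop:low-high-uv} applied with $l=k-1$ shows that $(t^k:t^{k-1},\comp(\type_\Xx(t^k),\type_\Xx(t^{k-1})))$ has a witness, and this pair is $(s^k,\type_\Xx(s^k))$ by Definition~\ref{def:types-k}.

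The only thing needing a word of justification is that this induction is well-founded, which it is: the recursion in Definition~\ref{def:types-k} either strips a $0$-stack down to the stack it links to (decreasing the nesting depth of links) or replaces $s^k$ by its strictly smaller components $t^k$ and $t^{k-1}$ (decreasing the structural size at the current link-depth), so induction along the lexicographic product of these two measures is legitimate. I do not expect a genuine obstacle here: the corollary is essentially a bookkeeping consequence of material already in place, with all the real work sitting inside Lemma~\ref{lem:type2witness} and, through it, Lemmas~\ref{lem:AuxiliaryT} and~\ref{lem:AuxiliaryU}. The one mild subtlety is notational — in Corollary~\ref{cor:type2witness} the symbol $\Sigma^K$ in the ``$0$-stack'' $(a,K,\Sigma^K)$ stands for the type of the linked $K$-stack, so $\type((a,K,\Sigma^K))$ there denotes $\stype_\Xx(a,K,\Sigma^K)$; keeping that reading in mind makes the instantiation $\Sigma^K=\type_\Xx(t^K)$ unambiguous.
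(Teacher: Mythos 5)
Your proposal is correct and follows essentially the same route as the paper: reduce to $\Phi^k=\type(s^k)$, then induct on the structure of the stack, handling the empty case trivially, the $0$-stack case via Corollary~\ref{cor:type2witness} applied to the linked stack, and the case $k\geq 1$ via the decomposition $s^k=t^k:t^{k-1}$ together with Proposition~\ref{prop:low-high-uv}. Your added remarks on the well-foundedness of the induction and on the notation in Corollary~\ref{cor:type2witness} are accurate clarifications rather than deviations.
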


\begin{proof}
  It is enough to prove this corollary for $\Phi^k=\type(s^k)$.
  We just make an induction on the structure of the stack.
  Assume that $k=0$ and let $s^0=(a,K,t^K)$. From the induction
  assumption we know that $(t^K,\type(t^K))$ has a witness. 
  Using Corollary \ref{cor:type2witness} we obtain that $(s^0,\type(s^0))$ has
  a witness. 
  If $k>0$ and $s^k$ is empty, $\type(s^k)=\emptyset$, whence the
  claim is trivial. 
  Let now $k>0$ and let $s^k$ be nonempty.
  Decompose $s^k=t^k:t^{k-1}$. 
  By definition, $\type(s^k)=\comp(\type(t^k),\type(t^{k-1}))$.
  By induction assumption $(t^k,\type(t^k))$ and
  $(t^{k-1},\type(t^{k-1}))$ have witnesses. 
  Using Proposition \ref{prop:low-high-uv} we conclude that
  $(t^k,\type(t^k))$ has also a witness. \qed
\end{proof}

With this corollary, we can prove the right-to-left implication of
Lemma \ref{lem:run-equiv-type}. 
\begin{proof}[of Lemma \ref{lem:run-equiv-type}]
Assume that there is a 
$\teA=(\Psi^n,\Psi^{n-1},\dots,\Psi^{l+1},p,\deA)\in \type(s^l)$
such that $\Psi^i\subseteq\type(s^i)$ for $l+1\leq i\leq n$.  
Application of the corollary shows that $(s^i,\Psi^i)$ has a witness
for $l+1\leq i\leq n$, 
and $(s^l,\teA)$ has a witness. 
Thus, there is a run from
$(p,s^n:s^{n-1}:\dots:s^l)$ which agrees with $\deA$  as
required.
The other direction has already been proved (see Proof of
Lemma \ref{lem:run2type} on page \pageref{proof:run2typeA}).
\qed  
\end{proof}


\section{Runs, Positions and the History Function}
\label{app:History}
In this section we give a technical analysis of runs and introduce the
history function which is useful to describe certain sets of
runs. Appendix \ref{app:Characterisation} relies on the results
developed here.

\subsection{Positions and Histories of Stacks}
\label{sec:PosAndHist}

In this section we first introduce \emph{positions} of $i$-stacks in a
$k$-stack for $i\leq k$. These positions allow to access each substack
contained in a stack. 
Afterwards we introduce the \emph{history function}. 
Given a run $R$ and a position $x$ in the final stack of the run, this
function determines the origin of this position in the first stack of
$R$, i.e., it returns a position $y$ such that the stack at position
$x$ in the last stack of $R$ was created from the stack at position
$y$ in the first stack of $R$. 
For a $k$-stack $s$ let us denote by $\lvert s \rvert$ its \emph{size}, i.e. the number of $(k-1)$-stacks $s$ consists of.

\begin{definition}
  For each stack $s$ of level $k$ (where $1\leq k\leq \sL$)
  we define the set of positions in $s$ as follows. 

  If $k=1$, a \emph{simple position} in $s$ 
  is a number $x^1\in\N$ such that $x^1\leq \lvert s \rvert$. 
  
  If $k\geq 2$, a \emph{simple position} in $s$ is either a tuple $(0,\dots,0)$ of length $k$, or
  a tuple $(x^k,\dots,x^1)$ where $1\leq x^k\leq \lvert s \rvert$
  and $(x^{k-1},\dots,x^1)$ is a simple position in the $x^k$-th $(k-1)$-stack of $s$ (counted
  bottom up). 

  We say that a simple position $x$ \emph{points to a $k$-stack} if
  $k\in\N$ is  maximal such 
  that $x$ ends in a sequence of $0$'s of length $k$.

  A \emph{position} in $s$  is either a
  simple position in $s$ or a sequence
  $x:=x_0\posNew{k}{y}$ such that $x_0$ is a simple position pointing
  to a $0$-stack $(a,k,t^k)$ in $s$ and $y$ is a position in $t^k$,
  but $y \neq (0,\dots,0)$.\footnote{%
    We forbid nonsimple positions  ending in the simple position
    $(0,0, \dots, 0)$ because of the following interpretation.
    In a $0$-stack $s^0=(a,k,t^k)$ we actually do not consider $t^k$ to be
    a $k$-stack but only the \emph{content} of a $k$-stack. 
    In this interpretation the application of $\col{k}$ when $s^0$ is
    the topmost $0$-stack does not replace the topmost $k$-stack by
    $t^k$ but the \emph{content} of the topmost $k$-stack by the
    \emph{content} of $t^k$. This difference is only
    of syntactical nature but it is useful to exclude such positions
    when defining the history function. 
  }  
  A position $x$ \emph{points to a $k$-stack} if its rightmost simple
  position points to a $k$-stack. 

  For $x,y$ positions in $s$ we say  that $y$ \emph{points into the stack at}
  $x$ (abbreviated $y$ \emph{points into} $x$) if 
  \begin{enumerate}
  \item either $x$ points to a level $0$ stack and
    $y=x \posNew{k}{z}$ or
  \item $x$ points to a level $k\geq 1$ stack, $x$ and $y$ agree on all
    entries where $x$ is nonzero, and $y\neq x$, i.e., $y$ extends the
    position $x$ where $x$ starts to be constantly $0$. 
  \end{enumerate}
  
  Let $s$ be some $n$-stack where $s^i$ denotes the topmost $i$-stack
  of $s$. 
  The 
  \emph{position of the topmost $k$-stack of $s$} is
  $\TOP{k}(s):= (\lvert s^\sL\rvert,\dots,\lvert s^{k+1}\rvert, 0, \dots, 0)$. 

  Finally, we define the nesting rank of a position. This rank counts the number
  of simple positions involved in the position.
  Let $\NestingRank(x):=0$ if $x$ is simple, and
  $\NestingRank(x\posNew{k}{z}):=1+\NestingRank(x)+\NestingRank(z)$. 
\end{definition}
\begin{remark}
  We use the notation
  $z \posNew{k}{z'}$ where $z$ is a non-simple position of a 
  $0$-stack that links to a $k$-stack and $z'$ points to some position
  inside this linked stack.
\end{remark}

We next introduce the \emph{history function}. This function is
useful for giving semantical characterisations of the sets in the
family $\Xx$
defined by a grammar in Section \ref{sec:Runs}.
Our intuition of the history function is the following.
$\hist{R}{x}$ is the (unique) position of a $k$-stack in $s$ from
which $R$ created the $k$-stack at $x$ in $t$ in the sense that the
stack at $x$ in $t$ is a (possibly modified) copy of the stack at
$\hist{R}{x}$ in $s$ not only in terms of content but also in the
way it was produced by $R$. Here, a $\push{1}$ is understood as
copying the topmost $0$-stack and then completely replacing its
content.

\begin{definition}
  Let $R$ be a run from stack $s$ to stack $t$ and
  let $x$ be a position in $t$.  
  If $\lvert R\rvert=0$, then $\hist{R}{x}:=x$.
  If $\lvert R\rvert=1$, we make a case distinction on the operation
  performed by 
  $R$, and on the form of $x$.
  \begin{itemize}
  \item If $R$ performs a $\push{1}_{a,k}$ operation and
    $  x = \TOP{0}(t) = (x^\sL, \dots, x^1)$,      
    then 
    \begin{align*}
      \hist{R}{x}:=(x^\sL,
      \dots, x^2,x^1-1).   
    \end{align*}
  \item If $R$ performs a $\push{1}_{a,k}$ operation and $x$ is of the
    form $\TOP{0}(t)\posNew{k}{(y^{k}_1,\dots,y^1_1)}$,
    then we set 
    \begin{align*}
      \hist{R}{x}:=
      (\lvert u^\sL\rvert, \dots , \lvert u^{k+1} \rvert,
      y^{k}_1,\dots, y^1_1)       
    \end{align*} 
    for $u^i$  the 
    topmost $i$-stack of $t$.
  \item If $R$ performs a $\push{1}_{a,k}$ operation and $x$ is of the
    form $\TOP{0}(t)\posNew{k}{(y^{k}_1,\dots,y^1_1)}\posNew{k'}{z}$,
    then 
    \begin{align*}
      \hist{R}{x}:=
      (\lvert u^\sL\rvert, \dots , \lvert u^{k+1} \rvert,
        y^{k}_1,\dots, y^1_1)\posNew{k'}{z}       
    \end{align*}
    where $u^i$ is the 
    topmost $i$-stack of $t$.
  \item If $R$ performs a $\push{i}$ operation for $2\leq i \leq \sL$ and
    $x$ is of the form $(x^\sL, \dots, x^1)$ 
    such that $(x^\sL, \dots, x^1)$
    is $\TOP{i-1}(t)$ or points into $\TOP{i-1}(t)$, 
    then 
    \begin{align*}
      \hist{R}{x}:=
      (x^\sL, \dots, x^i, x^{i-1}-1, x^{i-2}, \dots,  x^1)        
    \end{align*}
  \item If $R$ performs a $\push{i}$ operation for $2\leq i \leq \sL$ and
    $x$ is of the form 
    $(x^\sL, \dots, x^1)\posNew{k}{z}$
    such that $x$ points into $\TOP{i-1}(t)$, 
    then 
    \begin{align*}
      \hist{R}{x}:=      
       (x^\sL, \dots, x^i, x^{i-1}-1, x^{i-2}, \dots,
        x^1)\posNew{k}{z}.
    \end{align*}
  \item If $R$ performs a $\col{k}$ operation and
    $x=(x_0^\sL,\dots,x_0^1)$
    points into\footnote{
      Recall that $\TOP{k}(t)$ does \emph{not} point into
      $\TOP{k}(t)$.}
    $\TOP{k}(t)$,
    then
    \begin{align*}
      \hist{R}{x} := \TOP{0}(s) \posNew{k}{(x_0^k, \dots, x_0^1)}.      
    \end{align*}
  \item If $R$ performs a $\col{k}$ operation and
    $x=(x_0^\sL, \dots, x_0^1)\posNew{k'}{y}$ 
    points into $\TOP{k}(t)$,
    then
    \begin{align*}
      \hist{R}{x} :=\TOP{0}(s)\posNew{k}{(x_0^k, \dots,
          x_0^1)}\posNew{k'}{y}.
    \end{align*}
    \item In all other cases, we set $\hist{R}{x}:=x$.
  \end{itemize}
  If $\lvert R\rvert\geq 2$, we decompose $R= S\circ T$ where
  $\lvert S\rvert = 1$, and we set
  $\hist{R}{x}:=\hist{S}{\hist{T}{x}}$. 
\end{definition}

Due to the inductive definition of the history function it is
compatible with decomposition of runs in the following sense.
\begin{proposition}\label{prop:histTransitive}
  Let $R,S,T$ be runs such that $R=S\circ T$. If $x,y$ are positions
  such that
  $\hist{T}{x}=y$, then $\hist{R}{x}=z$ if and only if $\hist{S}{y}=z$. 
\end{proposition}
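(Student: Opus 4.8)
\textbf{Proof plan for Proposition~\ref{prop:histTransitive}.}
The plan is to prove the statement by induction on $\lvert S\rvert$, using the inductive definition of $\hist{\cdot}{\cdot}$ for compositions as the key tool. Since $\hist{T}{x}$ is a well-defined position (the history function is a total function on positions of the final stack), it suffices to show that $\hist{R}{x}=\hist{S}{\hist{T}{x}}$ as an equality of positions; the ``if and only if'' formulation then follows immediately because $\hist{S}{\cdot}$ is a function. So I would reduce the claim to the single identity $\hist{S\circ T}{x}=\hist{S}{\hist{T}{x}}$.

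First I would handle the base case $\lvert S\rvert=0$. Then $S\circ T=T$, and $\hist{S}{y}=y$ for every position $y$ by the first clause of the definition, so $\hist{S}{\hist{T}{x}}=\hist{T}{x}=\hist{S\circ T}{x}$, as required. Next, the case $\lvert S\rvert=1$ is exactly the defining clause ``if $\lvert R\rvert\geq 2$, decompose $R=S'\circ T'$ with $\lvert S'\rvert=1$ and set $\hist{R}{x}:=\hist{S'}{\hist{T'}{x}}$'' applied with the decomposition $R=S\circ T$ — here one must note that if $\lvert T\rvert=0$ the identity is again trivial, and if $\lvert T\rvert\geq1$ then $\lvert S\circ T\rvert\geq2$ and the quoted clause applies verbatim with $S'=S$, $T'=T$ (the decomposition into a first step of length $1$ and the rest is exactly $S$ followed by $T$). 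For the inductive step, suppose $\lvert S\rvert\geq 2$ and write $S=S_1\circ S_2$ with $\lvert S_1\rvert=1$. Then $R=S\circ T=S_1\circ(S_2\circ T)$, and since $S_1$ has length $1$ and $S_2\circ T$ accounts for the rest, the defining clause gives $\hist{R}{x}=\hist{S_1}{\hist{S_2\circ T}{x}}$. By the induction hypothesis applied to $S_2$ (which is shorter than $S$) and $T$, we get $\hist{S_2\circ T}{x}=\hist{S_2}{\hist{T}{x}}$. Substituting and applying the length-$1$ case (or the defining clause again, now to $S=S_1\circ S_2$) yields $\hist{R}{x}=\hist{S_1}{\hist{S_2}{\hist{T}{x}}}=\hist{S_1\circ S_2}{\hist{T}{x}}=\hist{S}{\hist{T}{x}}$.

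The only subtlety — and the place I would be most careful — is that the recursive clause of the definition fixes a specific decomposition of a run of length $\geq 2$ into its \emph{first} transition followed by the remainder, whereas the composition $S\circ T$ need not split that way at the $S$/$T$ boundary; one must check that the definition is insensitive to which length-$1$ prefix is peeled off, which is precisely what the induction on $\lvert S\rvert$ establishes (peeling off the first step of $S$ rather than the first step of all of $S\circ T$, and then reassembling). No case analysis on the individual stack operations is needed: all the per-operation clauses are absorbed into the length-$1$ case, which is used as a black box. I expect the whole argument to be short, essentially a three-line induction.
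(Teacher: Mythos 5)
Your proof is correct and matches the paper's reasoning: the paper offers no explicit proof, merely asserting that the proposition is "due to the inductive definition of the history function," and your induction on $\lvert S\rvert$ is exactly the routine argument being alluded to. You also correctly identify and resolve the one real subtlety (the defining clause only peels off the \emph{first} transition, so the split at the $S$/$T$ boundary must be reassembled by induction), and the reduction of the ``if and only if'' to the functional identity $\hist{S\circ T}{x}=\hist{S}{\hist{T}{x}}$ is sound since the history function is total on positions of the final stack.
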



\subsection{Basic Properties of Runs}
\label{app:BasicProp}
In this section we collect useful properties of runs of collapsible pushdown
systems and of the history function. 

A careful look at the definition of the history function shows that a
$k$-stack can be changed only if it is 
the topmost one, and only by an operation of level at most $k$. 
\begin{proposition}\label{prop:NonTopmostStacksDoNotChange}
  Let $R$ be a run of length $1$, and $x$ a position
  of some $k$-stack in $R(1)$. Let $t^k$ be the stack at $x$ in
  $R(1)$ and $s^k$ the stack at $\hist{R}{x}$ in $R(0)$.
  Then exactly one of the following holds.
  \begin{itemize}
  \item	$x=\TOP{k}(R(1))$ and the operation in $R$ is of level below
    $k$. In this case we have $\hist{R}{x}=\TOP{k}(R(0))$ and 
    $\lvert t^k\rvert = \lvert s^k \rvert$.
  \item	$x=\TOP{k}(R(1))$ and the operation in $R$ is of level 
    $k$. In this case we have $\hist{R}{x}=\TOP{k}(R(0))$ and 
    \begin{itemize}
    \item  $\lvert t^k\rvert =\lvert s^k \rvert-1$ if the operation
      is $\pop{k}$, 
    \item  $\lvert t^k\rvert < \lvert s^k \rvert$ if the operation
      is $\col{k}$, and 
    \item  $\lvert t^k\rvert = \lvert s^k \rvert+1$ if the operation
      is $\push{k}$.
    \end{itemize}
  \item	$s^k=t^k$ and if $y$ points into $x$, then $\hist{R}{y}$ points
    to the same position in $\hist{R}{x}$ as $y$ in $x$. 
  \end{itemize}
\end{proposition}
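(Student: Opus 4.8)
The statement is essentially a case analysis on the operation performed by the length-1 run $R$, cross-referenced with whether the position $x$ is the topmost $k$-stack of $R(1)$ or strictly below it (i.e.\ whether $x$ ``points into'' the topmost part or is the topmost $k$-position itself). So I would first unfold the definition of $\hist{R}{\cdot}$ for runs of length $1$ and organise the argument around the operation in $R$: $\push{1}_{a,m}$, $\push{i}$ for $2\le i\le n$, $\pop{i}$, and $\col{m}$. For each, I read off from the definition how $\hist{R}{x}$ is computed depending on the shape of $x$, and I track what happens to the size of the stack located at $x$.

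First I would dispatch the case where the operation in $R$ has level $<k$. Looking at the history definition, none of the rewriting clauses apply to a position $x$ pointing to a $k$-stack (they only touch $\TOP{0}$, $\TOP{i-1}$, or $\TOP{k}$ for the relevant operation levels, all of which are at most the operation's level $<k$), so $\hist{R}{x}=x$ and in particular, if $x=\TOP{k}(R(1))$ then $\hist{R}{x}=\TOP{k}(R(0))$. Since an operation of level $<k$ cannot alter the sequence of $(k-1)$-stacks inside any $k$-stack, we get $s^k=t^k$ and hence both the first bullet (when $x=\TOP{k}$) and the third bullet (when $x$ is strictly below, where $s^k=t^k$ and the ``points into'' clause of the definition says $\hist{R}{y}$ points to the same relative position) hold. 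The one subtlety is when the operation has level exactly $k$: then $x=\TOP{k}(R(1))$ forces $\hist{R}{x}=\TOP{k}(R(0))$, and I compute the size change directly — $\push{k}$ adds one $(k-1)$-stack (a copy of the top), $\pop{k}$ removes one, $\col{k}$ removes $\ge 1$ by Remark~\ref{rem:ColisPop}; if instead $x$ points strictly into $\TOP{k}(R(1))$, it points into a copy/remnant of a $(k-1)$-stack that already existed in $R(0)$, and one checks $s^k=t^k$ for that lower stack and the relative-position preservation. Finally, for an operation of level $>k$: such an operation only modifies the topmost $(>k)$-stack; the $k$-stack at $x$ either lies untouched, or is a fresh copy of an already-present $k$-stack (when a $\push{i}$, $i>k$, duplicates a block containing it), and in either case $s^k=t^k$ and the relative positions of sub-stacks of $s^k$ and $t^k$ match, giving the third bullet. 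Here $x\ne\TOP{k}(R(1))$ automatically because the topmost $k$-stack has changed its ambient position, so only the third alternative can occur.

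The bookkeeping that I expect to be the most delicate — and the step I would spend the most care on — is verifying the ``points into'' conclusion of the third bullet: for the $\push{i}$ ($i>k$ or $i=k$) and $\col{m}$ ($m>k$) cases, the history function rewrites the ``outer'' coordinates of $x$ (decrementing some $x^{i-1}$, or replacing a prefix by $\TOP{0}(s)\posNew{m}{\cdots}$), and I need to confirm that this rewriting acts uniformly on $x$ and on every $y$ pointing into $x$, so that the map $y\mapsto\hist{R}{y}$ restricted to positions inside $x$ is exactly ``translate the outer part, keep the inner part''. This follows clause-by-clause from the definition of $\hist{R}{\cdot}$ — each clause that modifies a position of the relevant shape modifies only a fixed prefix, leaving the suffix (which is where the ``inside $x$'' information lives once $x$ points to a $k$-stack, $k\ge1$) verbatim — but it requires patiently matching the shape-cases of the definition (simple position vs.\ $\posNew{}{}$-extended) against the two sub-cases of ``$y$ points into $x$'' in the definition of that relation. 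Once that translation-invariance is in hand, the three bullets are mutually exclusive and exhaustive by the trichotomy ``operation level $<k$ / $=k$ / $>k$'' combined with ``$x=\TOP{k}(R(1))$ or not'', and the proof is complete.
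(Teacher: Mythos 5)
Your proof is correct and coincides with what the paper intends: the proposition is stated there without a written proof, justified only by the preceding remark that it follows from ``a careful look at the definition of the history function,'' which is exactly the operation-by-operation, shape-of-$x$ case analysis you carry out. The only slip is your claim that for operations of level $>k$ one has $x\neq\TOP{k}(R(1))$ automatically --- that is false (after, say, a $\pop{j}$ with $j>k$, the position $\TOP{k}(R(1))$ is a perfectly good choice of $x$) --- but it is inessential, since the first two alternatives explicitly require an operation of level at most $k$, so only the third alternative can apply there in any case, and you correctly verify that it does.
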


We have an analogous property for longer runs which follows by
straightforward induction on the length of the run.  

\begin{corollary} \label{cor:NonTopmostStacksDoNotChange}
  Let $R$ be a run of length $m$ and $x_0$ a position of some
  $k$-stack in $R(m)$ such that 
  \begin{enumerate}
  \item\label{ntsdnc-vara}	$\hist{\subrun{R}{i}{m}}{x_0} \neq \TOP{k}(R(i))$ for
    all $0\leq i<m$, or  
  \item\label{ntsdnc-varb}	$\hist{\subrun{R}{i}{m}}{x_0} \neq \TOP{k}(R(i))$ for
    all $0<i<m$, and $m\geq 2$.
  \end{enumerate}
  Then
  the $k$-stack at $\hist{R}{x_0}$ is equal to the
  $k$-stack at $x_0$ in $R(m)$. Moreover,
  if $x$ points into $x_0$, then 
  $\hist{R}{x}$ points to the same position in
  $\hist{R}{x_0}$ as $x$ in $x_0$. 
\end{corollary}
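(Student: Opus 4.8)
The statement to prove is Corollary~\ref{cor:NonTopmostStacksDoNotChange}, which extends Proposition~\ref{prop:NonTopmostStacksDoNotChange} from runs of length~$1$ to runs of arbitrary length~$m$, under the hypothesis that the traced position $x_0$ is never the topmost $k$-stack at any intermediate configuration (variant~\ref{ntsdnc-vara} includes the endpoints $0$ and $m-1$ in the quantifier, variant~\ref{ntsdnc-varb} excludes $i=0$ but then demands $m\ge 2$).

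\medskip\noindent\textbf{Plan.} The proof is by induction on the length $m$ of the run, using Proposition~\ref{prop:histTransitive} to split off one step and Proposition~\ref{prop:NonTopmostStacksDoNotChange} as the base case. For the base case $m=1$ (which only arises under variant~\ref{ntsdnc-vara}, since variant~\ref{ntsdnc-varb} forces $m\ge 2$): the hypothesis gives $\hist{\subrun{R}{0}{1}}{x_0}=\hist{R}{x_0}\ne\TOP{k}(R(0))$, so by Proposition~\ref{prop:NonTopmostStacksDoNotChange} we must be in its third bullet, which is exactly the claimed conclusion. For the inductive step with $m\ge 2$, decompose $R=R'\circ S$ where $S=\subrun{R}{m-1}{m}$ has length $1$ and $R'=\subrun{R}{0}{m-1}$ has length $m-1$. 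Put $x_1:=\hist{S}{x_0}$; by Proposition~\ref{prop:histTransitive}, $\hist{R}{x_0}=\hist{R'}{x_1}$, and similarly for any $x$ pointing into $x_0$ we have $\hist{R}{x}=\hist{R'}{\hist{S}{x}}$.

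\medskip\noindent First I would apply Proposition~\ref{prop:NonTopmostStacksDoNotChange} to the one-step run $S$ at position $x_0$. The hypothesis (in either variant) guarantees $x_0\ne\TOP{k}(R(m-1))$, so the first two bullets of the proposition are excluded and we land in the third: the $k$-stack at $x_1=\hist{S}{x_0}$ in $R(m-1)$ equals the $k$-stack at $x_0$ in $R(m)$, and moreover for every $y$ pointing into $x_0$, $\hist{S}{y}$ points to the same relative position inside $x_1$ as $y$ does inside $x_0$. In particular $x_1$ is a position of a $k$-stack in $R(m-1)$, and any $x$ pointing into $x_0$ yields $\hist{S}{x}$ pointing into $x_1$ at the corresponding relative position. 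Now I would check that $x_1$ satisfies the corollary's hypothesis for the shorter run $R'$: for $0\le i< m-1$ (variant~\ref{ntsdnc-vara}) resp.\ $0<i<m-1$ (variant~\ref{ntsdnc-varb}) we have $\hist{\subrun{R'}{i}{m-1}}{x_1}=\hist{\subrun{R}{i}{m-1}}{x_1}$, and since $\hist{\subrun{R}{i}{m}}{x_0}=\hist{\subrun{R}{i}{m-1}}{\hist{S}{x_0}}=\hist{\subrun{R}{i}{m-1}}{x_1}$ by Proposition~\ref{prop:histTransitive}, the assumed inequality $\hist{\subrun{R}{i}{m}}{x_0}\ne\TOP{k}(R(i))$ transfers verbatim to $x_1$ and $R'$. (When $m=2$ and we are in variant~\ref{ntsdnc-varb}, the quantifier over $i$ with $0<i<1$ is empty, so $R'$ has length $1$ and the hypothesis is vacuous; Proposition~\ref{prop:NonTopmostStacksDoNotChange} applies directly to $R'$ at $x_1$ — but we still need $x_1\ne\TOP{k}(R(0))$: this is exactly the hypothesis $\hist{\subrun{R}{0}{2}}{x_0}\ne\TOP{k}(R(0))$, which equals $\hist{\subrun{R}{0}{1}}{x_1}=\hist{R'}{x_1}$, so again the third bullet of the proposition applies to $R'$.) Applying the induction hypothesis (or the base case) to $R'$ and $x_1$ gives that the $k$-stack at $\hist{R'}{x_1}$ in $R(0)$ equals the $k$-stack at $x_1$ in $R(m-1)$, and for $x'$ pointing into $x_1$, $\hist{R'}{x'}$ points to the same relative position in $\hist{R'}{x_1}$ as $x'$ in $x_1$.

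\medskip\noindent Finally I would compose the two conclusions. Combining the step for $S$ (stack at $x_0$ in $R(m)$ = stack at $x_1$ in $R(m-1)$) with the step for $R'$ (stack at $x_1$ in $R(m-1)$ = stack at $\hist{R'}{x_1}=\hist{R}{x_0}$ in $R(0)$) yields the first claim. For the second claim, take $x$ pointing into $x_0$; then $\hist{S}{x}$ points into $x_1$ at the same relative position as $x$ in $x_0$ (from the step for $S$), and then $\hist{R'}{\hist{S}{x}}=\hist{R}{x}$ points into $\hist{R'}{x_1}=\hist{R}{x_0}$ at the same relative position as $\hist{S}{x}$ in $x_1$ (from the step for $R'$), and composing the two ``same relative position'' facts gives that $\hist{R}{x}$ points to the same position in $\hist{R}{x_0}$ as $x$ in $x_0$. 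The main subtlety — and the one place deserving care — is the bookkeeping of the two hypothesis variants across the decomposition, namely verifying that when variant~\ref{ntsdnc-varb} degenerates at $m=2$ the remaining length-$1$ run $R'$ is still handled by the base-case proposition, for which the needed inequality $x_1\ne\TOP{k}(R(0))$ is precisely the $i=0$ instance of the original hypothesis; everything else is a routine transfer of indices via Proposition~\ref{prop:histTransitive}.
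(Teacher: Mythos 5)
The paper itself gives no written proof of Corollary \ref{cor:NonTopmostStacksDoNotChange} beyond the remark that it ``follows by straightforward induction on the length of the run'', and your induction --- peel off the last step, apply Proposition \ref{prop:NonTopmostStacksDoNotChange} to it, transfer the hypothesis to the prefix via Proposition \ref{prop:histTransitive}, invoke the induction hypothesis, compose the two ``same stack / same relative position'' conclusions --- is exactly that argument, so structurally you are on the intended route. One small imprecision first: what excludes the first two bullets of Proposition \ref{prop:NonTopmostStacksDoNotChange} for the last step $S=\subrun{R}{m-1}{m}$ is not ``$x_0\neq\TOP{k}(R(m-1))$'' (which compares a position of $R(m)$ with one of $R(m-1)$), but the $i=m-1$ instance of the hypothesis, $\hist{S}{x_0}\neq\TOP{k}(R(m-1))$, which contradicts the conclusion $\hist{S}{x_0}=\TOP{k}(R(m-1))$ common to those two bullets. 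This instance is available in both variants whenever $m\geq 1$ (variant \ref{ntsdnc-vara}) resp.\ $m\geq 2$ (variant \ref{ntsdnc-varb}), so the main step is fine.

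The genuine error is in your $m=2$, variant \ref{ntsdnc-varb} subcase: you justify the third bullet for $R'=\subrun{R}{0}{1}$ at $x_1$ by citing $\hist{\subrun{R}{0}{2}}{x_0}\neq\TOP{k}(R(0))$ as ``exactly the hypothesis''. It is not: variant \ref{ntsdnc-varb} quantifies only over $0<i<m$, so the $i=0$ instance is precisely what that variant withholds, and $\hist{R}{x_0}=\TOP{k}(R(0))$ is perfectly possible under it (e.g.\ when $R$ starts with a $\push{j}$, $j>k$, and $x_0$ sits in the untouched lower copy). The subcase still goes through, but for a different reason: the first two bullets of Proposition \ref{prop:NonTopmostStacksDoNotChange} require the position in question to be the topmost $k$-stack of the \emph{final} configuration of the one-step run, i.e.\ they require $x_1=\TOP{k}(R(1))$; and $x_1\neq\TOP{k}(R(1))$ is exactly the $i=1$ instance, which variant \ref{ntsdnc-varb} does supply. (Equivalently, peeling off the \emph{first} step instead of the last reduces variant \ref{ntsdnc-varb} immediately to variant \ref{ntsdnc-vara} for the suffix $\subrun{R}{1}{m}$ and avoids this case split altogether.) With that substitution, and noting the trivial $m=0$ case of variant \ref{ntsdnc-vara} where $\hist{R}{x_0}=x_0$, your proof is complete.
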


Similarly, the relationship of  
$k$-stacks that are next to each other is preserved unless the lower
one becomes the topmost stack.

\begin{proposition}\label{prop:Neihbour-position-lemma}
  Let $R$ be a run, and $x,y$ positions such that $x$ points to a
  $k$-stack that is directly below the $k$-stack to which $y$ points
  (in the same $(k+1)$-stack),  
  i.e., $x$ and $y$ differ only on the last non-zero coordinate by
  $1$ (these positions are not required to be simple). 
  Assume that
  \begin{enumerate}
  \item\label{npl-vara}	$\hist{\subrun{R}{m}{\lvert R
        \rvert}}{x}\not=\TOP k(R(m))$ for all $0\leq m\leq \lvert R\rvert$, or
  \item\label{npl-varb}	$\hist{\subrun{R}{m}{\lvert R
        \rvert}}{y}\not=\TOP k(R(m))$ for all $0\leq m<\lvert R\rvert$.
  \end{enumerate}
  Then $\hist{R}{x}$ points to a $k$-stack that is directly below the $k$-stack to which $\hist{R}{y}$ points.
\end{proposition}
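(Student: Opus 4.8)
The plan is to prove Proposition \ref{prop:Neihbour-position-lemma} by induction on $\lvert R\rvert$, using Proposition \ref{prop:Neihbour-position-lemma}'s analogue for one-step runs (which is in turn a consequence of the case analysis behind Proposition \ref{prop:NonTopmostStacksDoNotChange}) as the base case, together with the transitivity of the history function (Proposition \ref{prop:histTransitive}). The main point is that the two hypotheses \eqref{npl-vara} and \eqref{npl-varb} are designed to survive decomposition of the run, so that the inductive step goes through cleanly; the work is to check that the one-step case really does cover all stack operations.

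First I would dispose of the trivial case $\lvert R\rvert=0$, where $\hist{R}{x}=x$ and $\hist{R}{y}=y$. For $\lvert R\rvert=1$, I would argue directly from the definition of $\hist{\cdot}{\cdot}$ and from Proposition \ref{prop:NonTopmostStacksDoNotChange}. The key observation is that if neither $x$ nor $y$ is the topmost $k$-stack of $R(1)$ (which is what the hypotheses force at the single intermediate/endpoint time step, depending on which variant holds), then the operation of $R$ either does not touch the common $(k+1)$-stack containing both of them at all — in which case $\hist{R}{x}$ and $\hist{R}{y}$ are obtained from $x$ and $y$ by the same coordinate shift (or are equal to $x,y$), preserving adjacency — or the operation is a $\push{i}$ or $\col{k'}$ that relabels the relevant initial segment of coordinates uniformly for both positions. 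In every sub-case of the definition of the history function the map $x\mapsto\hist{R}{x}$, restricted to positions pointing into a fixed non-topmost $(k+1)$-stack, is an injection that acts by a fixed prefix-rewriting, so it sends ``directly below'' to ``directly below''. A slightly delicate point is the $\col{k'}$ case with $k'>k$: here the history of a position inside $\TOP{k'}(t)$ acquires a new $\posNew{k'}{\cdot}$ segment, but this segment is appended identically to $\hist{R}{x}$ and $\hist{R}{y}$, and the part differing by $1$ on the last nonzero coordinate is carried along unchanged into the tail $z$, so adjacency is preserved. The case $k'=k$ or $k'<k$ with collapse, and the $\push{i}$ cases, are handled the same way.

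For the inductive step, decompose $R=S\circ T$ with $\lvert S\rvert=1$. Set $x'=\hist{T}{x}$ and $y'=\hist{T}{y}$. By the inductive hypothesis applied to $T$ — whose intermediate-time hypotheses are exactly the restriction of \eqref{npl-vara}/\eqref{npl-varb} to $m$ in the appropriate range for $T$ — $x'$ and $y'$ point to $k$-stacks that are directly below one another in a common $(k+1)$-stack. Now I want to apply the one-step case to $S$ with the pair $(x',y')$. Under hypothesis \eqref{npl-vara} we have $\hist{\subrun{R}{m}{\lvert R\rvert}}{x}\neq\TOP{k}(R(m))$ for all $m$; taking $m=1$ and using $\hist{\subrun{R}{1}{\lvert R\rvert}}{x}=\hist{T}{x}=x'$ gives $x'\neq\TOP{k}(S(1))$, which (together with $m=0$) is the one-step version of \eqref{npl-vara} for $S$. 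Under hypothesis \eqref{npl-varb}, applied to $y$, the analogous computation at $m=0$ gives $y'\neq\TOP{k}(S(0))$, and since $S$ has length $1$ this is exactly the one-step version of \eqref{npl-varb} for $S$. So the one-step case yields that $\hist{S}{x'}$ and $\hist{S}{y'}$ point to $k$-stacks that are directly below one another, and by Proposition \ref{prop:histTransitive} these are $\hist{R}{x}$ and $\hist{R}{y}$, completing the induction.

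The main obstacle I anticipate is purely bookkeeping rather than conceptual: matching up exactly which of the two hypothesis variants is preserved under the split $R=S\circ T$ and verifying that the range of time indices lines up (so that the inductive hypothesis is genuinely applicable to $T$, and the one-step statement to $S$, with no off-by-one error in whether the endpoint time is included). A secondary nuisance is making the one-step case airtight across all operations — in particular being careful that when $x$ and $y$ are nonsimple positions, the ``last nonzero coordinate'' on which they differ lies in a segment that the history function transports verbatim, which requires noting that a nonsimple position's simple-position components are only rewritten on prefixes and only when pointing into the topmost stack of the relevant level, a situation excluded by the hypotheses.
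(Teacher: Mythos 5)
Your overall strategy coincides with the paper's: the paper also proves the one-step case by inspecting the operations (noting that the history function treats two neighbouring $k$-stacks identically for every operation except one), and then concludes by induction because the claim and the hypotheses are compatible with composition. Your inductive step is carried out correctly and in more detail than the paper's: the bookkeeping of which instances of \eqref{npl-vara}/\eqref{npl-varb} transfer to $T=\subrun{R}{1}{\lvert R\rvert}$ and which transfer to the one-step prefix $S$ (via Proposition \ref{prop:histTransitive}) is exactly right, and there is no off-by-one problem there.

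The base case, however, is organised around a premise that the hypotheses do not deliver. You claim the hypotheses force ``neither $x$ nor $y$ is the topmost $k$-stack of $R(1)$'': variant \eqref{npl-varb} imposes \emph{no} condition at $m=\lvert R\rvert$, so $y=\TOP{k}(R(1))$ is entirely possible under it, and variant \eqref{npl-vara} constrains only $x$, which is automatically not $\TOP{k}(R(1))$ anyway because $y$ sits directly above it in the same $(k+1)$-stack. Consequently your subsequent restriction to ``positions pointing into a fixed non-topmost $(k+1)$-stack'' excludes from consideration precisely the configuration where the proposition can actually fail. The paper's proof isolates that configuration explicitly: the unique exception is a $\push{k+1}$ with $y=\TOP{k}(R(1))$ the freshly created copy, in which case $\hist{R}{y}=\TOP{k}(R(0))=\hist{R}{x}$ and adjacency degenerates to equality. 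What rules this out is not any constraint at $m=1$ but the $m=0$ instances of the hypotheses, namely $\hist{R}{x}\neq\TOP{k}(R(0))$ under \eqref{npl-vara} and $\hist{R}{y}\neq\TOP{k}(R(0))$ under \eqref{npl-varb}. Once you replace your premise by this observation (and check, as you partly do for $\col{k'}$ with $k'>k$, that every other operation either fixes both positions or rewrites their common prefix uniformly — including when they lie in the topmost $(k+1)$-stack), the base case closes and the rest of your argument stands.
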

\begin{proof}
  First assume that $\lvert R\rvert =1$.
  For almost every operation in $R$, the history function behaves in
  the same way for two neighbouring $k$-stacks. 
  The only exception is $\push{k+1}$  if $\hist{R}{x}=\hist{R}{y}=\TOP
  k(R(0))$. But this case is forbidden by our assumptions. 
  For $\lvert R \rvert\geq 2$, note that the claim is compatible with
  compositions of runs, so we conclude by induction on the length of
  the run. \qed
\end{proof}

Let $x$ and $y$ be positions such that $y$ points into
$x$. Intuitively, the history function should preserve this
containment because if a stack $s$ is a copy of some other stack $t$
then every stack of lower level in this stack was created from some
stack of lower level inside of $t$. The next lemma provides a formal
statement of this kind. 

\begin{proposition} \label{prop:HistPreservesPointerContainment}
  Let $R$ be some run and $x$ a position of a $k$-stack. 
  Let $y$ point into $x$ such that 
  for $l:=\NestingRank(y) - \NestingRank(x)$ the last $l$ links in $y$
  are of level at most $k$.\footnote{In other words, 
    if $x$ decomposes as $x=\hat x\posNew{k'}{x'}$ for a simple position
    $x'$, then all decompositions of $y$ as 
    $y=\hat x \posNew{k'}{y'} \posNew{k''}{y''}$ satisfy $k''\leq k$.}
  Then $\hist{R}{y}$ points into $\hist{R}{x}$ and 
  for $l':=\NestingRank(\hist{R}{y}) - \NestingRank(\hist{R}{x})$ the
  last $l'$ links in $\hist{R}{y}$ 
  are of level at most $k$.
\end{proposition}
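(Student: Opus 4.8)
The plan is to reduce the claim to the case $\lvert R\rvert = 1$ and then do a case analysis on the operation performed by $R$, exactly mirroring the case split in the definition of $\hist{R}{\cdot}$. For the reduction: if $\lvert R\rvert = 0$ the history function is the identity and the statement is trivial. If $\lvert R\rvert \geq 2$, decompose $R = S\circ T$ with $\lvert S\rvert = 1$; by Proposition~\ref{prop:histTransitive} we have $\hist{R}{x} = \hist{S}{\hist{T}{x}}$ and $\hist{R}{y} = \hist{S}{\hist{T}{y}}$, so applying the length-$1$ case to $T$ (which gives that $\hist{T}{y}$ points into $\hist{T}{x}$ with the last $l$-many links of level $\leq k$) and then again to $S$ yields the claim by induction on $\lvert R\rvert$. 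The point of formulating the statement with the explicit bound ``the last $l$ links are of level at most $k$'' is precisely to make this induction go through: the invariant is exactly what is needed to feed the hypothesis of the length-$1$ case back in.

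So the core of the argument is the case $\lvert R\rvert = 1$. Here I would distinguish the operations. For operations that do not touch the topmost $k$-stack in a way visible to $x$ (i.e.\ the ``all other cases'' branch, or a push/pop/collapse of a level $\neq$ the levels involved in $x$), the history function acts as the identity or as a uniform index shift on the relevant coordinates, and the containment of $y$ in $x$, together with the level bound on the trailing links, is preserved coordinate-wise; this is essentially Proposition~\ref{prop:NonTopmostStacksDoNotChange} and Proposition~\ref{prop:Neihbour-position-lemma} repackaged. The interesting cases are $\push{1}_{a,k'}$ and $\col{k'}$, where $\hist{R}{\cdot}$ genuinely rewrites positions by splicing in a new $\posNew{k'}{\cdot}$ component. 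For $\col{k'}$: if $x$ points into $\TOP{k'}(t)$, then $\hist{R}{x} = \TOP{0}(s)\posNew{k'}{(x_0^{k'},\dots,x_0^1)}$ (possibly with a further $\posNew{k''}{y}$), and any $y$ pointing into $x$ has its extra suffix carried over verbatim onto this rewritten position; the newly created link is of level $k'$, and since $x$ pointed \emph{into} $\TOP{k'}(t)$ (not \emph{to} it) the levels of the trailing links of $y$ relative to $\hist{R}{x}$ are controlled by the original level bounds plus the fact that everything spliced after $\posNew{k'}{}$ sits inside a $k'$-stack. The $\push{1}$ case is analogous, keeping in mind the special role of $\TOP{0}$ and the fact that $\push{1}$ is interpreted as copy-then-overwrite, which is why nonsimple positions ending in $(0,\dots,0)$ are excluded — this exclusion guarantees the splice is well-defined.

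The main obstacle I expect is bookkeeping in the collapse and level-$1$-push cases: one must carefully track how $\NestingRank$ changes (it can jump by more than one when a $\posNew{k'}{}$ is inserted) and verify that the quantity $l' = \NestingRank(\hist{R}{y}) - \NestingRank(\hist{R}{x})$ counts exactly the trailing links that end up of level $\leq k$. The clean way to handle this is to observe that the \emph{difference} $y$ minus $x$ — the suffix by which $y$ extends $x$ — is transported by $\hist{R}{\cdot}$ unchanged in its link structure (only the simple-position prefix gets rewritten), so $l' = l$ in all cases and the trailing links of $\hist{R}{y}$ past $\hist{R}{x}$ are literally the same links as the trailing links of $y$ past $x$, whose levels are $\leq k$ by hypothesis. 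Making this ``the suffix is transported unchanged'' statement precise for each operation, and checking it is consistent with the $k'$-vs-$k$ level comparison when $k' > k$ versus $k' \leq k$, is the one spot that needs genuine care rather than routine verification.
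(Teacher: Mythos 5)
Your proposal is correct and follows essentially the same route as the paper: the paper also reduces to the length-one case by decomposing $R$ and invoking Proposition~\ref{prop:histTransitive} together with induction on $\lvert R\rvert$, and then disposes of the length-one case by a case distinction on the operation (which it describes only as ``tedious but straightforward''). Your additional observation that the suffix by which $y$ extends $x$ is transported unchanged by the history function (so that the trailing links and their levels are preserved verbatim) is exactly the right way to organise that case distinction, and the one phrasing slip --- you speak of applying the length-one case to $T$ when $T$ is the long factor and $S$ the length-one one --- is clearly just a typo given that you immediately invoke induction on $\lvert R\rvert$.
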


\begin{proof}
  For $\lvert R \rvert = 0$ there is nothing to prove. 
  For $\lvert R \rvert = 1$ the claim follows directly from a 
  tedious but straightforward case distinction on the operation
  performed by $R$. 
  The general case then follows by induction: if $\lvert R \rvert \geq
  2$ we can decompose $R=R_1\circ R_2$ such that by
  induction hypothesis $y':=\hist{R_2}{y}$ points into
  $x':=\hist{R_2}{x}$ and 
  we can apply the lemma again to $R_1, x'$ and $y'$. \qed
\end{proof}

\begin{corollary} \label{cor:positionContainment}
  Let $j>k$. For every run $R$, 
  $\hist{R}{\TOP{k}(R(\lvert R \rvert))}$ points into  
   $\hist{R}{\TOP{j}(R(\lvert R \rvert))}$.
  Additionally, if $\hist{R}{\TOP{k}(R(\lvert R \rvert))}=\TOP k(R(0))$ then $\hist{R}{\TOP{j}(R(\lvert R \rvert))}=\TOP j(R(0))$.
\end{corollary}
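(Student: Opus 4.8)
The plan is to obtain both assertions as consequences of Proposition~\ref{prop:HistPreservesPointerContainment}, instantiated with $x := \TOP{j}(R(\lvert R\rvert))$ and $y := \TOP{k}(R(\lvert R\rvert))$. First I would check the hypotheses of that proposition. Writing $t := R(\lvert R\rvert)$ and letting $t^i$ denote its topmost $i$-stack, we have $\TOP{j}(t) = (\lvert t^n\rvert,\dots,\lvert t^{j+1}\rvert,0,\dots,0)$ and $\TOP{k}(t) = (\lvert t^n\rvert,\dots,\lvert t^{k+1}\rvert,0,\dots,0)$; since $j > k$ these simple positions agree on every coordinate on which $\TOP{j}(t)$ is nonzero and differ on coordinate $j$ (note $\lvert t^j\rvert \geq 1$, since the topmost $j$-stack contains the topmost $k$-stack). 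Hence $\TOP{k}(t)$ points into $\TOP{j}(t)$ via case~(2) of the definition of ``points into'', and $\TOP{j}(t)$ points to a $j$-stack with $j \geq 1$. Moreover $y$ is a simple position, so $l := \NestingRank(y) - \NestingRank(x) = 0$ and the side condition on the last $l$ links of $y$ is vacuous. The proposition then yields that $\hist{R}{y}$ points into $\hist{R}{x}$ — which is exactly the first assertion — together with the extra information that, for $l' := \NestingRank(\hist{R}{y}) - \NestingRank(\hist{R}{x})$, the last $l'$ links of $\hist{R}{y}$ have level at most $j$.

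For the second assertion, assume $\hist{R}{\TOP{k}(t)} = \TOP{k}(R(0))$. Since $\TOP{k}(R(0))$ is a simple position, $\NestingRank(\hist{R}{y}) = 0$; as $\hist{R}{y}$ points into $\hist{R}{x}$ it has no fewer constituent simple positions than $\hist{R}{x}$, so $l' \geq 0$, and hence $l' = 0 = \NestingRank(\hist{R}{x})$. Thus $\hist{R}{\TOP{j}(t)}$ is a simple position; it points to a $j$-stack, because the history function maps any position pointing to a $j$-stack to a position pointing to a $j$-stack (this is part of the length-one statement of Proposition~\ref{prop:NonTopmostStacksDoNotChange} and propagates along the composition $R = S\circ T$). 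By the recursive definition of simple positions, such a position has the shape $(a^n,\dots,a^{j+1},0,\dots,0)$ with $a^i \neq 0$ for all $i \geq j+1$ (in the degenerate case $j = n$ it is $(0,\dots,0) = \TOP{n}(R(0))$ and there is nothing to prove). Finally, $\TOP{k}(R(0)) = (\lvert s^n\rvert,\dots,\lvert s^{k+1}\rvert,0,\dots,0)$, with $s^i$ the topmost $i$-stack of $R(0)$, points into $\hist{R}{x}$ via case~(2), hence agrees with it on every coordinate on which $\hist{R}{x}$ is nonzero, i.e.\ on coordinates $n,\dots,j+1$. Therefore $a^i = \lvert s^i\rvert$ for $j+1 \leq i \leq n$, so $\hist{R}{x} = (\lvert s^n\rvert,\dots,\lvert s^{j+1}\rvert,0,\dots,0) = \TOP{j}(R(0))$, as required.

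Once Proposition~\ref{prop:HistPreservesPointerContainment} is available, the argument is just an unfolding of the definitions of $\TOP{k}$, of ``points into'', and of simple positions. The only point requiring a little care is the step in the second paragraph ensuring that $\hist{R}{\TOP{j}(t)}$ is a genuine simple position pointing to a $j$-stack before its coordinates are read off; this is exactly where the ``no trailing links'' part of the conclusion of Proposition~\ref{prop:HistPreservesPointerContainment} (forcing $l' = 0$) is combined with level-preservation of the history function. Everything else is routine.
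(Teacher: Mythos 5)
Your proof is correct and follows the route the paper intends: the corollary is stated immediately after Proposition~\ref{prop:HistPreservesPointerContainment} with no separate proof, and your argument is exactly the instantiation $x=\TOP{j}(R(\lvert R\rvert))$, $y=\TOP{k}(R(\lvert R\rvert))$ of that proposition (with $l=0$ making the side condition vacuous). Your unfolding of the ``additionally'' clause --- using $l'\geq 0$ to force $\hist{R}{\TOP{j}(R(\lvert R\rvert))}$ to be simple, and the level-preservation of the history function to identify it with $\TOP{j}(R(0))$ --- supplies details the paper leaves implicit and is sound.
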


According to our intuition that the history function tells us the
original copy from which a stack was created, history can only
decrease a position (with respect to the lexicographic ordering $\lexOrd$). 
On the other hand, when a position is always present in a stack, the
history should point to the same position. 
The next two lemmas prove this intuition.

\begin{lemma} \label{lem:PositionsDecrease}
  Let $R$ be a run and $x$ a position in the final stack of $R$ such
  that $x_0$ is the simple prefix of $x$, i.e., $x_0$ is a simple
  position such that there is a position $x'$ with $x=x_0\posNew{k}{x'}$. 
  If $y:=\hist{R}{x}$ is a simple position,
   $y \lexOrd x_0$.
\end{lemma}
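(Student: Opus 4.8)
The plan is to induct on $|R|$. For $|R|=0$ the history function is the identity, so $y=x=x_0\posNew{k}{x'}$ cannot be simple unless $x$ itself is simple, in which case $x_0=x$ and $y\lexOrd x_0$ holds trivially with equality. For the inductive step, decompose $R=S\circ T$ with $|S|=1$, and set $z:=\hist{T}{x}$. By Proposition~\ref{prop:histTransitive} we have $y=\hist{S}{z}$. The idea is to first show that $z$ has a simple prefix $z_0$ with $z_0\lexOrd x_0$ — intuitively because tracing $x$ one step back through the short run $T$ can only move the simple part of the position down, never up — and then to handle the single step $S$ separately.

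First I would handle the one-step case $|R|=1$ directly by going through the cases of the definition of $\hist{\cdot}{\cdot}$ and checking, in each case where the output is a simple position, that the output lies $\lexOrd x_0$. In the ``all other cases'' branch $\hist{R}{x}=x$, so $y$ simple forces $x=y$ simple, $x_0=x=y$, and we are done. For a $\push{1}$ operation with $x=\TOP{0}(t)=(x^n,\dots,x^1)$ simple, the history is $(x^n,\dots,x^2,x^1-1)$, which is lexicographically strictly below $x=x_0$, as required; if instead $x$ has a nontrivial link part $\TOP{0}(t)\posNew{k}{\cdots}$, then its simple prefix is $x_0=\TOP{0}(t)$ and the history is again a position whose simple prefix is $(\lvert u^n\rvert,\dots)=\TOP{0}(s)=(x^n,\dots,x^2,x^1-1)\lexOrd x_0$, but this history is not simple, so the hypothesis of the lemma is vacuous here. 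For a $\push{i}$ with $i\ge 2$, the simple part changes only by decreasing the $(i{-}1)$-st coordinate by one, again giving something $\lexOrd x_0$. For a $\col{k}$, if $x$ points into $\TOP{k}(t)$ the output is $\TOP{0}(s)\posNew{k}{\cdots}$, which is not simple, so again the lemma's hypothesis is vacuous; and if $x$ does not point into $\TOP{k}(t)$ we are in the ``all other cases'' branch. A $\pop{i}$ or a $\push{1}$ that does not produce the topmost $0$-stack likewise falls into the default branch. So in every case where $y$ is simple we get $y\lexOrd x_0$.

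For $|R|\ge 2$, write $R=S\circ T$ with $|T|=|R|-1$ and $|S|=1$, and put $z=\hist{T}{x}$, so $y=\hist{S}{z}$ by Proposition~\ref{prop:histTransitive}. The delicate point — and the main obstacle — is that $z$ need not be simple, so I cannot apply the inductive hypothesis to $T$ with $x$ directly; instead I want a statement of the form ``the simple prefix of $\hist{R}{x}$ is $\lexOrd$ the simple prefix of $x$,'' which is what should really be proved by the induction. The cleanest route is therefore to strengthen the statement being inducted on: \emph{for every position $x$ in the final stack of $R$ with simple prefix $x_0$, the simple prefix $z_0$ of $\hist{R}{x}$ satisfies $z_0\lexOrd x_0$}; the lemma as stated is the special case where $\hist{R}{x}$ happens to be simple. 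With this stronger claim, the one-step analysis above already establishes the base case (in every branch of the definition, inspecting the simple prefix of the output shows it is $\lexOrd x_0$ — the crucial observations being that a $\push{1}$ or $\push{i}$ decreases exactly one coordinate of the simple prefix by one and leaves the higher coordinates unchanged, while $\col{k}$ and the default branch either keep the simple prefix or replace it by $\TOP{0}(s)$, whose relation to $x_0$ when $x$ points into $\TOP{k}(t)$ is again $\lexOrd$). The inductive step is then immediate: applying the strengthened claim to $T$ gives that the simple prefix $z_0$ of $z=\hist{T}{x}$ satisfies $z_0\lexOrd x_0$, and applying it to $S$ and $z$ gives that the simple prefix of $y=\hist{S}{z}$ is $\lexOrd z_0\lexOrd x_0$. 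Transitivity of $\lexOrd$ finishes the argument, and specialising to the case where $y$ is itself simple (so its simple prefix is $y$) yields exactly the statement of Lemma~\ref{lem:PositionsDecrease}.
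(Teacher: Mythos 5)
Your overall architecture --- induction on $\lvert R\rvert$ reduced to a case analysis of the one-step history function, with a strengthened statement chosen so that it composes under concatenation of runs --- is exactly the shape of the paper's argument, which obtains the lemma as a one-line corollary of the auxiliary Lemma~\ref{lem:pack-is-good}. The problem is that the particular strengthening you chose is false, and it fails on precisely the operation that makes this lemma delicate: collapse. Take $R$ of length $1$ performing $\col{k}$ and let $x=(x^n,\dots,x^1)$ be a simple position pointing into $\TOP{k}(R(1))$. By definition $\hist{R}{x}=\TOP{0}(R(0))\posNew{k}{(x^k,\dots,x^1)}$, so the simple prefix of the history is $\TOP{0}(R(0))$. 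But $\col{k}$ strictly shrinks the topmost $k$-stack (Remark~\ref{rem:ColisPop}), so the level~$k$ coordinate of $x$ is strictly smaller than that of $\TOP{0}(R(0))$ while all higher coordinates agree; hence $x_0=x\lexOrdstrict\TOP{0}(R(0))$, i.e.\ the simple prefix of the history is strictly \emph{greater} than $x_0$. Your parenthetical claim that $\TOP{0}(s)$ relates to $x_0$ by $\lexOrd$ when $x$ points into $\TOP{k}(t)$ is exactly backwards, so the strengthened invariant ``the simple prefix of $\hist{R}{x}$ is $\lexOrd x_0$'' is already refuted at $\lvert R\rvert=1$, and the induction cannot be closed. (A secondary slip: for $\push{1}_{a,k}$ with $x=\TOP{0}(t)\posNew{k}{(y^k,\dots,y^1)}$ and no further link, the history \emph{is} the simple tuple $(\lvert u^n\rvert,\dots,\lvert u^{k+1}\rvert,y^k,\dots,y^1)$, not a non-simple position and not $\TOP{0}(s)$; that case is not vacuous, and verifying $\lexOrd x_0$ there needs the fact that $(y^k,\dots,y^1)$ is a position of the linked stack and hence coordinatewise bounded by the corresponding topmost stack sizes.)

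The underlying issue is that a collapse trades position information for link nesting: the simple prefix of the history jumps up to $\TOP{0}(s)$, and only an earlier $\push{1}_{a,k}$ in the run can cash that extra link level back into coordinates. So the quantity that is actually monotone along the run is not the simple prefix but the whole chain of ``packed'' positions. This is what the paper's $\pack_i$ functions record: Lemma~\ref{lem:pack-is-good} proves
$y\lexOrd\pack_m(x)\lexOrdstrict\pack_{m-1}(x)\lexOrdstrict\dots\lexOrdstrict\pack_0(x)=x_0$,
where $\pack_i$ resolves $i$ levels of link nesting; a $\col{k}$ shifts the chain by one index ($\pack_{i+1}$ of the history equals $\pack_i$ of the position) and a $\push{1}_{a,k}$ shifts it back, so the chain as a whole is preserved under composition even though its $0$-th entry alone can increase. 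Replacing your strengthened statement by that one is what is needed to make your induction go through, and it is exactly the route the paper takes.
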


\begin{lemma}
  \label{lem:TopPresentImpliesHistoryIsId}
  Let $R$ be a run and let $x:=\TOP{k}(R(0))$. 
  If $x$ is present in all configurations of $R$ then 
  \begin{enumerate}
  \item $\hist{R}{x}=x$, and
  \item  for all $i\leq \lvert R \rvert$  
    $\hist{\subrun{R}{i}{\lvert R \rvert}}{x}$ is 
    simple if and only if
    $\hist{\subrun{R}{i}{\lvert R \rvert}}{x}=x$.
  \end{enumerate}
\end{lemma}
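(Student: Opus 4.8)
I would prove both parts by induction on $\lvert R\rvert$, using Proposition \ref{prop:NonTopmostStacksDoNotChange} as the one-step engine. The key observation is that the hypothesis ``$x=\TOP{k}(R(0))$ is present in all configurations of $R$'' is exactly the statement that for every $i<\lvert R\rvert$ the position $\hist{\subrun{R}{i}{\lvert R\rvert}}{\cdot}$ applied to the relevant copy never collides with a removal of the topmost $k$-stack; more precisely, being ``present in all configurations'' means that in passing from $R(i)$ to $R(i+1)$ no operation of level $\geq k$ is performed that would shrink the topmost $k$-stack below the height needed to contain $x$. So first I would unwind the definition of ``$x$ present in all configurations of $R$'' into the working form: for each $i$, either the operation at step $i{+}1$ has level $<k$, or it is a $\push{k}$, or if it has level $k$ (a $\pop{k}$ or $\col{k}$) then it is applied at a moment when the topmost $k$-stack still strictly dominates $x$ — but since $x=\TOP{k}(R(0))$ is the \emph{bottom} level-$k$ position, any $\pop{k}$ or $\col{k}$ would remove it; hence in fact no $\pop{k}$ or $\col{k}$ is ever performed on the topmost $k$-stack, and no $\push{j}$ with $j>k$ is performed on it either (that would also destroy $\TOP k(R(0))$ as a \emph{topmost} position — but wait, a $\push{j}$ for $j>k$ keeps $\TOP k(R(0))$ present as a non-topmost stack, so I must be careful here).

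Let me re-scope: the honest reading is that $x$ stays \emph{present} (as some $k$-stack, not necessarily topmost) throughout $R$. Then at each step the operation either acts on a strictly higher part of the stack, leaving $x$ untouched and in place (the third bullet of Proposition \ref{prop:NonTopmostStacksDoNotChange}), or it acts on the topmost $k$-stack which at that moment either equals $x$ or lies strictly above $x$ in the ambient stack. For part (1), I run the induction: write $R = S\circ T$ with $\lvert S\rvert=1$, so $\hist{R}{x}=\hist{S}{\hist{T}{x}}$. By the induction hypothesis applied to $T$ (whose configurations are a suffix of those of $R$, so $x$ is still present in all of them), $\hist{T}{x}=x$. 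Now I apply Proposition \ref{prop:NonTopmostStacksDoNotChange} to the single step $S$ and the position $x$ in $S(1)=T(0)$. If $x\neq\TOP{k}(S(1))$, the third case gives $\hist{S}{x}=x$ and we are done. If $x=\TOP{k}(S(1))$, then the first two cases apply: the operation has level $\le k$ and $\hist{S}{x}=\TOP{k}(S(0))=x$ in the first case; in the second case, a level-$k$ operation is performed, but $\pop{k}/\col{k}$ would then make $x=\TOP k(S(1))$ not present after — contradiction — and $\push{k}$ is fine: it gives $\hist{S}{\TOP k(S(1))}=\TOP k(S(0))$, but then $\TOP k(S(1))$ has size one larger than $\TOP k(S(0))$, and $x=\TOP k(S(0))$ would then be non-topmost in $S(1)$, contradicting $x=\TOP k(S(1))$. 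Hence only the benign cases survive and $\hist{S}{x}=x$, giving $\hist{R}{x}=x$.

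For part (2): fix $i\le\lvert R\rvert$ and set $R' := \subrun{R}{i}{\lvert R\rvert}$; then $x$ is present in all configurations of $R'$ (they form a suffix of those of $R$), so part (1) gives $\hist{R'}{x}=x$, which is simple. That is the ``only if'' direction trivially (since $x$ is simple, $\hist{R'}{x}$ simple is automatic from part (1)); and the ``if'' direction is also immediate: $\hist{R'}{x}=x$ is exactly the conclusion, so simplicity forces equality. Actually the cleanest phrasing is: part (1) applied to each suffix $R'$ shows $\hist{R'}{x}=x$ outright, so $\hist{R'}{x}$ is always equal to $x$ (and in particular simple); the biconditional in (2) is then a triviality, both sides always holding. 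The main obstacle is the bookkeeping in the $x=\TOP k$ subcase of the one-step argument — making fully precise why the presence hypothesis excludes $\pop k$, $\col k$, and $\push j$ for $j>k$ acting at that moment — and one should cite Proposition \ref{prop:NonTopmostStacksDoNotChange} carefully, since it is precisely the enumeration of one-step behaviours that does this work. A minor subtlety worth a sentence: ``present in all configurations'' should be read so that a $\push j$ ($j>k$) on top of $x$ is permitted only when $x$ is \emph{not} the current topmost $k$-stack, which is automatically the situation handled by the third bullet of the Proposition, so no separate case is needed.
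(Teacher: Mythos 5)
There are genuine gaps, and they sit exactly where the real difficulty of this lemma lies. First, your induction for part (1) applies the induction hypothesis to the tail $T=\subrun{R}{1}{\lvert R\rvert}$ with the same position $x$; but the lemma's hypothesis requires $x=\TOP{k}(T(0))$, and this fails as soon as the first operation is a $\push{j}$ with $j>k$ (the new topmost $k$-stack lies in the freshly pushed copy, so $\TOP{k}(R(1))\neq x$ even though $x$ is still present). Second, you read ``$\hist{S}{x}=x$'' off the third bullet of Proposition \ref{prop:NonTopmostStacksDoNotChange}, but that bullet only asserts $s^k=t^k$ (equality of stack \emph{contents}) and preservation of relative positions; it says nothing about the position itself being fixed. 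After a $\col{j}$ with $j>k$ the history of a position inside the collapsed region becomes \emph{non-simple} (of the form $\TOP{0}(\cdot)\posNew{j}{\cdot}$), and after a $\push{j}$ a position inside the new copy has its $(j-1)$-coordinate shifted. Third, part (2) is trivialised incorrectly: part (1) cannot be invoked for an arbitrary suffix $\subrun{R}{i}{\lvert R\rvert}$, again because $x$ need not equal $\TOP{k}(R(i))$. The intermediate histories can a priori be non-simple — that is precisely why the statement is a biconditional, and why its application in the proof of Lemma \ref{lem:non-topk-erasing} is paired with a separate argument that the history at the relevant moment is simple.

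More fundamentally, this lemma is not provable by a purely local step-by-step analysis of the history function. The paper's proof goes through the $\pack$ machinery: Lemma \ref{lem:x-present-pack-preserved} shows that presence of $x=\pack_{\NestingRank(x)}(x)$ in all configurations forces $\pack_{\NestingRank(y_i)}(y_i)=x$ for every intermediate history $y_i$, which yields part (2) immediately (a simple position equals its own pack); and then Corollary \ref{cor:GoodStacksHaveSmallPositions} — whose proof uses that every stack in a run is a pds, i.e., reachable from $\bot_\sL$, via Lemma \ref{lem:pack-is-good} — excludes the remaining possibility that $y_0=\hist{R}{x}$ is a non-simple position hidden inside a link whose pack happens to be $x$. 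That reachability ingredient is what guarantees that linked $k$-stacks are always ``smaller'' than the position of the symbol carrying the link; without it your one-step case analysis cannot rule out the collapse cases, and with only local reasoning the conclusion of part (1) is simply not forced.
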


In the rest of this section we prove these two Lemmas.
For the proofs we use functions
$\mathsf{pack}_i$.\label{def:pack} 
Let $x=x_0 \posNew{k_1}{x_1}\dots\posNew{k_m}{x_m}$ with
$m=\NestingRank(x)$. 
For $0\leq i\leq m$ we define a simple position $\mathsf{pack}_i(x)$
as follows.
\begin{itemize}
\item	$\mathsf{pack}_0(x):=x_0$ and 
\item	for $i\geq 1$, $\mathsf{pack}_i(x)$ is obtained from
  $\mathsf{pack}_{i-1}(x)$ by replacing its last $k_i$ coordinates by
  $x_i$. 
\end{itemize}
Note that $\pack_1$ is closely related to the $\col{k}$ and
the $\push{1}_{a,k}$ operations:
if $R$ is a run of length $1$ performing $\col{k}$ and
$x$ points into the topmost $k$-stack of $R(1)$, then 
$\pack_{i+1}(\hist{R}{x})=\pack_{i}(x)$
for all $0\leq i\leq \NestingRank(x)$. On the other hand,
if $R$ is of length $1$ performing $\push{1}_{a,k}$ and
$x=\TOP{0}(R(1)) \posNew{k}{x_1}$, then
$\pack_{i}(x)=\pack_{i-1}(\hist{R}{x})$ for all $1\leq i \leq \NestingRank(x)$.

In the following, for a simple position $z=(z^n, z^{n-1}, \dots, z^1)$
we call $z^k$ the 
\emph{level $k$ coordinate} of $z$.
Furthermore, we write $x\lexOrdstrict_k y$ for simple positions $x$,
$y$ if  $x\lexOrdstrict y$ and the first coordinate on which they
differ is the level $k$ coordinate. 

\begin{lemma}\label{lem:pack-is-good}
  Let $R$ be a run and $x=x_0 \posNew{k_1}{x_1}\dots\posNew{k_m}{x_m}$ a position in the final stack of $R$. 
  Assume that $y:=\hist{R}{x}$ is a simple position.
  Then $$y\lexOrd \mathsf{pack}_m(x)\lexOrdstrict_{k_m}\mathsf{pack}_{m-1}(x)\lexOrdstrict_{k_{m-1}}\dots\lexOrdstrict_{k_1}\mathsf{pack}_0(x).$$
\end{lemma}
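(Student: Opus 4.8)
The statement chains together the simple position $\hist{R}{x}$ with the sequence of packed positions $\pack_m(x),\dots,\pack_0(x)$. The natural strategy is induction on $m=\NestingRank(x)$, peeling off the outermost link of $x$.

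\medskip

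For $m=0$ the position $x$ is already simple, so $x=\pack_0(x)$ and the claim reduces to Lemma~\ref{lem:PositionsDecrease}: $\hist{R}{x}\lexOrd x_0=\pack_0(x)$, and there is nothing else to prove (the chain of strict inequalities is empty).

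\medskip

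For the inductive step, write $x=\hat x\posNew{k_m}{x_m}$ where $\hat x=x_0\posNew{k_1}{x_1}\dots\posNew{k_{m-1}}{x_{m-1}}$ has nesting rank $m-1$. The plan is to first establish the single step $\pack_m(x)\lexOrdstrict_{k_m}\pack_{m-1}(x)$, and then to obtain the remaining chain $\pack_{m-1}(x)\lexOrdstrict_{k_{m-1}}\dots\lexOrdstrict_{k_1}\pack_0(x)$ together with the bound $\hist{R}{x}\lexOrd\pack_m(x)$ by combining the definition of $\pack$ with the inductive hypothesis applied to a suitable shorter position (or shorter run). The inequality $\pack_m(x)\lexOrdstrict_{k_m}\pack_{m-1}(x)$ is essentially definitional: by the definition of positions, $x_m$ is a simple position in the $k_m$-stack that $\hat x$ points into, and $x_m\neq(0,\dots,0)$; since $\pack_m(x)$ is obtained from $\pack_{m-1}(x)$ by replacing its last $k_m$ coordinates (which in $\pack_{m-1}(x)$ are all zero, because $\hat x$ points to a $k_m$-stack or beyond) with the coordinates of $x_m$, and $x_m$ is a genuine interior position, the first coordinate on which they differ is a level $\le k_m$ coordinate and $\pack_m(x)$ is strictly smaller there. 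One must check carefully that the differing coordinate is exactly the level $k_m$ coordinate, using that a position pointing into a $0$-stack's linked $k$-stack extends it below a block of $k_m$ zeros; the footnote after Definition of positions (excluding the all-zero tail) is what guarantees $x_m$ is nonzero so the inequality is strict.

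\medskip

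The remaining task is to show $\hist{R}{x}\lexOrd\pack_m(x)$ and that $\pack_{m-1}(x)$ onwards forms the promised descending chain. For the chain on $\pack_{m-1}(x),\dots,\pack_0(x)$, note these depend only on $\hat x$, and $\pack_i(x)=\pack_i(\hat x)$ for $i\le m-1$; but the inductive hypothesis is stated for $\hist{R}{\hat x}$, not directly for the chain, so I would either (a) prove the pure ordering statement $\pack_m(x)\lexOrdstrict_{k_m}\dots\lexOrdstrict_{k_1}\pack_0(x)$ separately by a short induction that makes no reference to $R$ at all (this is the cleanest route, since the chain among packs is a combinatorial fact about nested positions), and then (b) handle the single inequality $\hist{R}{x}\lexOrd\pack_m(x)$ as the only place where $R$ enters. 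For (b), I expect to use Proposition~\ref{prop:HistPreservesPointerContainment} (history preserves ``points into'' together with the level bound on the trailing links) to reduce the general case: $\pack_m(x)$ is the simple position obtained by ``flattening'' $x$ down to its deepest linked stack, so $x$ points into the stack at the simple position $\pack_m(x)$ viewed inside the appropriate substack, and history applied to both sides — using that $\hist{R}{x}$ is assumed simple — forces $\hist{R}{x}$ to sit at or below $\pack_m(x)$. The remark just before this lemma, relating $\pack_1$ to $\col{k}$ and $\push{1}_{a,k}$, is precisely the length-one computation that makes the induction on $|R|$ go through: decompose $R=S\circ T$ with $|S|=1$, track how $\pack_m$ of $\hist{T}{x}$ relates to $\pack_m(x)$ (it either stays put, or drops a level via a collapse, or absorbs a link via a level-one push), and then apply the inductive hypothesis for $T$ and the length-one analysis for $S$.

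\medskip

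\textbf{Main obstacle.} The delicate point is interfacing the ``history is simple'' hypothesis with the inductive decomposition: when $R=S\circ T$, the intermediate position $\hist{T}{x}$ need \emph{not} be simple, so the inductive hypothesis for $T$ (which assumes the history \emph{is} simple) does not apply verbatim. I would get around this by strengthening the statement being proved by induction on $|R|$ to allow $\hist{R}{x}$ to be non-simple, asserting that $\pack_{\NestingRank(\hist{R}{x})}(\hist{R}{x})\lexOrd\pack_m(x)$ and that the full pack-chain of $x$ still descends as stated — the case $\hist{R}{x}$ simple being the instance we actually want. The length-one case of this strengthened statement is exactly the content of the remark preceding the lemma (the $\pack_{i+1}(\hist{R}{x})=\pack_i(x)$ identity for collapse, and $\pack_i(x)=\pack_{i-1}(\hist{R}{x})$ for level-one push), so once the statement is phrased to survive composition, the proof is a routine induction.
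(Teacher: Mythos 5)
Your plan has a genuine gap, and it sits exactly where you locate the ``main obstacle''. The central problem is that the descending chain $\pack_m(x)\lexOrdstrict_{k_m}\dots\lexOrdstrict_{k_1}\pack_0(x)$ is \emph{not} a combinatorial fact about nested positions provable with no reference to $R$, and the single step $\pack_m(x)\lexOrdstrict_{k_m}\pack_{m-1}(x)$ is not definitional. The last $k_m$ coordinates of $\pack_{m-1}(x)$ are not all zero ($\hat x$ points to a $0$-stack, so its level-$1$ coordinate is nonzero), and, more importantly, the strict drop in the level-$k_m$ coordinate encodes the fact that the linked $k_m$-stack is strictly smaller than the $k_m$-stack surrounding the $0$-stack carrying the link. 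That is false for arbitrary stacks as defined in Section 2; it holds only for stacks built by runs, because $\push{1}_{a,k}$ always stores a proper ``prefix'' of the current topmost $k$-stack. So any proof of the chain must trace the run that created the links --- which is precisely what your route (a) tries to avoid. Your fallback strengthening has the same problem in disguise: since $x$ lives in the final stack of $R$, its pack-chain does not change when you peel off the first transition, so the entire burden of ``the chain of $x$ descends'' lands on the base case $\lvert R\rvert=0$ with $x$ non-simple, which is exactly the unproven claim (in the paper it is essentially Corollary~\ref{cor:GoodStacksHaveSmallPositions}, itself \emph{derived from} this lemma). For the same reason your base case $m=0$ cannot invoke Lemma~\ref{lem:PositionsDecrease}: in the paper that lemma is a corollary of the present one, so the appeal is circular.

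The missing idea is to decompose the run at the other end. Write $R=\subrun{R}{0}{\lvert R\rvert-1}\circ S$ with $\lvert S\rvert=1$ and set $z:=\hist{S}{x}$. Although $z$ need not be simple, the induction hypothesis (on $\lvert R\rvert$) applies to $\subrun{R}{0}{\lvert R\rvert-1}$ and $z$ \emph{as stated}, because its hypothesis concerns $\hist{\subrun{R}{0}{\lvert R\rvert-1}}{z}=\hist{R}{x}=y$, which is simple by assumption. This yields the full descending chain for $z$, and a case analysis on the single operation of $S$ --- using exactly the identities you quote, $\pack_{i+1}(\hist{S}{x})=\pack_i(x)$ for $\col{k}$ and $\pack_i(x)=\pack_{i-1}(\hist{S}{x})$ for $\push{1}_{a,k}$ --- converts it into the chain for $x$. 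The only genuinely new link in the chain arises in the $\push{1}_{a,k}$ case, namely $\pack_1(x)\lexOrdstrict_{k}\pack_0(x)$, and its strictness comes from comparing the size of the stored $k$-stack with that of the topmost $k$-stack after the push. With this decomposition no strengthening is needed and the circularity disappears.
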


\begin{proof}
  The proof is by induction of the length of $R$.
  If $\lvert R \rvert = 0$ the claim is trivial (as $x=y$ and
  $\NestingRank(x)=0$).
  For $\lvert R \rvert \geq 1$, let $S=\subrun{R}{\lvert R \rvert
    -1}{\lvert R \rvert}$, and let $z=\hist{S}{x}$. 
  The induction assumption, applied for $\subrun{R}{0}{\lvert R \rvert-1}$ and for
  $z=z_0 \posNew{k'_1}{z_1}\dots\posNew{k'_{m'}}{z_{m'}}$, 
  gives us that 
  $$y\lexOrd \pack_{m'}(z) \lexOrdstrict_{k_{m'}} \pack_{m'-1}(z)
  \lexOrdstrict_{k_{m'-1}} \dots \lexOrdstrict_1 \pack_0(z).$$
  We analyse the cases of the definition of the history function, for run $S$.
  \begin{itemize}
  \item	If $S$ performs a $\push 1_{a,k}$ operation and
    $x=\TOP 0(S(1))$, then $m =m' = 0$ and
    $x=\pack_0(x) \lexOrdR z = \pack_0(z) \lexOrdR y$. 
  \item	If $S$ performs a $\push 1_{a,k}$ operation, and $x$ points
    into $\TOP 0(S(1))$ 
    we already remarked that
    $\mathsf{pack}_i(x)=\mathsf{pack}_{i-1}(z)$ and $m=m'+1$. 
    Thus, we immediately conclude that
    $$y\lexOrd \pack_{m}(x) \lexOrdstrict_{k_m} \pack_{m-1}(x) \dots
    \lexOrdstrict_{k_2} \pack_1(x).$$ 
    Note that for $s^k$ the topmost $k$-stack of $S(0)$, $x_1$ points
    to a position in $\pop{k}(s^k)$ while $x_0$ points into
    $\TOP{k}(s)$. Thus, the level $k$ coordinate of $\pack_0(x)$ is
    $\lvert s^k\rvert$  while the corresponding coordinate in
    $\pack_1(x)$ has 
    value at most $\lvert s^k\rvert -1$. Thus,
    $\pack_1(x)\lexOrdstrict_{k} \pack_0(x)$. 
  \item	If $S$ performs a $\push i$ operation and $x$ is
    $\TOP{i-1}(S(1))$ or points into $\TOP{i-1}(S(1))$,
    then every coordinate of $z_0$ is either the same or smaller than
    the same coordinate of $x_0$ (and the rest of $x$ and $z$ is the
    same). 
    Thus, we conclude immediately from the properties of  $z$ that $x$
    also satisfies the claim. 
  \item	If $S$ performs a $\col k$ operation, and $x$ points
    into the topmost $k$-stack of $S(1)$. 
    Then, 
    $\mathsf{pack}_i(x)=\mathsf{pack}_{i+1}(z)\lexOrdR y$ for all
    $i\leq m$, and $m=m'-1$. Thus, the claim follows trivially. 
  \item	If none of the previous cases applies, then $z=x$ and there is
    nothing to show.\qed
  \end{itemize}
\end{proof}

From the previous lemma we can easily deduce Lemma
\ref{lem:PositionsDecrease}.

\begin{proof}[Lemma \ref{lem:PositionsDecrease}] 
  Due to Lemma \ref{lem:pack-is-good}, 
  $y\lexOrd \mathsf{pack}_{\NestingRank(x)}(x)\lexOrd
  \mathsf{pack}_0(x) = x_0$.\qed
\end{proof}

We also obtain the following corollary of Lemma \ref{lem:pack-is-good}.

\begin{corollary} \label{cor:GoodStacksHaveSmallPositions}
  Let $s$ be some pds and $0\leq
  k\leq n$.
  If $x$ is a position in $s$ such that
  $\mathsf{pack}_{\NestingRank(x)}(x)$ points to a $k$-stack and 
  $\mathsf{pack}_{\NestingRank(x)}(x)\lexOrdR\TOP k(s)$,
  then $x=\TOP k(s)$.
\end{corollary}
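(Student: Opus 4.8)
The plan is to show that the hypotheses force $\NestingRank(x)=0$; the statement is then immediate, because $\TOP{k}(s)$ is the $\lexOrd$-greatest simple position of $s$ pointing to a $k$-stack (for every coordinate, validity of a position bounds it by the size of the relevant topmost stack, and $\TOP{k}(s)$ attains all these bounds), so that a simple position $x$ of $s$ pointing to a $k$-stack with $x\lexOrdR\TOP{k}(s)$ must equal $\TOP{k}(s)$. Observe first that ``$\pack_{\NestingRank(x)}(x)$ points to a $k$-stack'' is the same condition as ``$x$ points to a $k$-stack'': the last $k_{\NestingRank(x)}$ coordinates of $\pack_{\NestingRank(x)}(x)$ form exactly the rightmost simple position of $x$, which is not $(0,\dots,0)$ as soon as $\NestingRank(x)\geq 1$.

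Since $s$ is a pds, I fix a run $R$ from $\bot_\sL$ to $s$. Every position of $\bot_\sL$ is simple, hence $\hist{R}{x}$ is simple and Lemma~\ref{lem:pack-is-good} applies to $R$ and $x$; writing $m:=\NestingRank(x)$ and $x=x_0\posNew{k_1}{x_1}\cdots\posNew{k_m}{x_m}$ it yields in particular $\pack_m(x)\lexOrdstrict_{k_m}\pack_{m-1}(x)$ whenever $m\geq 1$. The other ingredient is a structural property of stacks of a pds: the $k$-stack stored in the link of any $0$-stack of $s$ is obtained from the $k$-stack of $s$ directly containing that $0$-stack by removing some of its topmost $(k-1)$-stacks (the level-by-level form of Remark~\ref{rem:ColisPop}); consequently every position of the stored stack is also a position of the containing stack, with the same coordinates, and therefore, by a straightforward induction on $i$, each $\pack_i(x)$ for $0\leq i\leq m$ is a position of $s$.

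Now assume $m\geq 1$ and argue for a contradiction. By the previous paragraph $\pack_m(x)$ is a position of $s$ pointing to a $k$-stack, hence $\pack_m(x)\lexOrd\TOP{k}(s)$; together with the hypothesis $\pack_m(x)\lexOrdR\TOP{k}(s)$ this forces $\pack_m(x)=\TOP{k}(s)$. Note that $k_m>k$ (the rightmost simple position of $x$ is not $(0,\dots,0)$). Since $\pack_m$ only modifies coordinates of level at most $k_m$, the position $\pack_{m-1}(x)$ agrees with $\pack_m(x)=\TOP{k}(s)$ on every level above $k_m$, where the coordinates of $\TOP{k}(s)$ are the topmost-stack sizes $\lvert s^i\rvert$; hence $\pack_{m-1}(x)$ — which is a position of $s$ pointing to a $0$-stack — points into the topmost $k_m$-stack of $s$, so its level-$k_m$ coordinate lies between $1$ and $\lvert s^{k_m}\rvert$. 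On the other hand $\pack_m(x)\lexOrdstrict_{k_m}\pack_{m-1}(x)$ says that $\pack_{m-1}(x)$ is strictly $\lexOrd$-larger than $\TOP{k}(s)$ with first difference at level $k_m$, so its level-$k_m$ coordinate exceeds $\TOP{k}(s)^{k_m}=\lvert s^{k_m}\rvert$ — a contradiction. Therefore $m=0$, i.e.\ $x=\pack_0(x)$ is a simple position of $s$ pointing to a $k$-stack with $x\lexOrdR\TOP{k}(s)$, and we conclude $x=\TOP{k}(s)$.

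I expect the main obstacle to be stating and using the structural property of pds cleanly: one must verify that the link of a $0$-stack, set (by a $\push{1}$) to a sub-stack of its container, remains related to its container in this way under all subsequent stack operations, so that it can never become ``longer than'' its container. This is precisely where pds-ness enters — the corollary already fails for non-pds stacks of level $1$. Everything else is a routine combination of Lemma~\ref{lem:pack-is-good} with the elementary $\lexOrd$-maximality of $\TOP{k}(s)$.
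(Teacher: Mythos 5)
Your overall route --- fix a generating run from $\bot_\sL$ to $s$, invoke Lemma~\ref{lem:pack-is-good}, and derive a contradiction from one coordinate of the chain $\pack_m(x)\lexOrdstrict_{k_m}\dots\lexOrdstrict_{k_1}\pack_0(x)$ --- is the same as the paper's, and your endgame is sound \emph{provided} every $\pack_i(x)$ is a valid position of $s$. That proviso is the gap. You derive it from a ``structural property of pds'' (the $k$-stack stored in a link is obtained from the $k$-stack containing the link-holder by removing topmost $(k-1)$-stacks), which you rightly flag as the main obstacle but do not prove; and it is not ``the level-by-level form of Remark~\ref{rem:ColisPop}'', since that remark speaks only about collapsing from the \emph{topmost} $0$-stack and only about the effect on sizes, not about arbitrary $0$-stacks buried anywhere in $s$ nor about containment of positions \emph{with matching contents} (which your induction on $i$ needs in order to pass from $\pack_i(x)$ to $\pack_{i+1}(x)$). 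The property is true, but establishing it requires its own induction over the run creating $s$ with a case analysis over all stack operations --- in particular one must check that $\push{k}$, which duplicates the topmost $(k-1)$-stack of the container, and the various $\col{j}$, which rewrite stacks wholesale, preserve the invariant for every surviving link-holder, including link-holders frozen inside other links. As it stands this is a missing auxiliary lemma of roughly the same weight as the corollary itself. You use it twice: once to get $\pack_m(x)\lexOrd\TOP k(s)$ (hence equality), and once, crucially, to bound the level-$k_m$ coordinate of $\pack_{m-1}(x)$ by $\lvert s^{k_m}\rvert$.

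The paper's proof is engineered to avoid exactly this. It never asserts that the intermediate $\pack_i(x)$ are positions of $s$; the only validity it uses is that of $x_0=\pack_0(x)$, the simple prefix of $x$, which is a position of $s$ for free, and it obtains $\pack_m(x)=\TOP k(s)$ from $\TOP k(s)\lexOrd\pack_m(x)\lexOrd x_0$ plus the fact that $\pack_m(x)$ ends in $k$ zeroes. Then, instead of comparing $\pack_m(x)$ with $\pack_{m-1}(x)$ at level $k_m$, it sets $j:=\max_i k_i$ and runs the comparison across the whole chain down to $x_0$: every step $\lexOrdstrict_{k_i}$ with $k_i<j$ preserves the level-$j$ coordinate and every step with $k_i=j$ strictly decreases it, so the level-$j$ coordinate of $\pack_m(x)=\TOP k(s)$, which equals $\lvert s^j\rvert$, is strictly smaller than that of $x_0$; but since all $k_i\leq j$ the coordinates above level $j$ survive the whole chain, so $x_0$ points into the topmost $j$-stack and its level-$j$ coordinate is at most $\lvert s^j\rvert$ --- a contradiction using only the validity of $x_0$. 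If you replace your level-$k_m$ comparison by this level-$j$ comparison, your proof closes without the structural invariant; otherwise you must supply that invariant as a separate, fully proved lemma.
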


\begin{proof}
  Decompose $x = x_0 \posNew{k_1}{x_1}\dots\posNew{k_m}{x_m}$ and
  $z:=\mathsf{pack}_m(x)$. 
  Consider any $n$-CPS such that there is a run $R$ from the initial
  configuration $(q_0, \bot_\sL)$ to the pds $s$ (recall that 
  such a run exists by
  definition of a pds).  
  Since $\bot_\sL$ only contains simple positions, 
  $y:=\hist{R}{x}$ is simple.
  Application of Lemma \ref{lem:pack-is-good} gives us $\TOP
  k(s)\lexOrd z\lexOrd x_0$, 
  so $x_0$ points into the topmost $k$-stack of $s$.
  This implies that coordinates of levels greater than $k$ of $\TOP
  k(s)$, $z$, and $x_0$ agree whence $z=\TOP k(s)$. 
  
  If $m=0$, $x=z=\TOP{k}(s)$ and we are done.
  Heading towards a contradiction assume that $m\geq 1$.
  Let $j$ be the maximum of all $k_i$.
  The last $k$ coordinates of $z=\TOP k(s)$ are $0$.
  Since the last $k_m$ coordinates of $z$ are $x_m\neq(0,\dots,0)$ (by definition of a position), 
  we see that $j\geq k_m > k$.
  Thus $z$ points to or into $\TOP{j-1}(s)$, whence its level $j$ coordinate is
  the size of the topmost $j$-stack of $s$. 
  Since each application of $\pack$ preserves the  
  coordinates of level above $j$, $x$ also points into $\TOP{j}(s)$ whence
  its level $j$ coordinate is bounded by the size of the topmost
  $j$-stack of $s$. 
  Now Lemma \ref{lem:pack-is-good} implies that 
  the level $j$
  coordinate in $\mathsf{pack}_{\NestingRank(x)}(x)$ is smaller  than
  that in $x$ which is a contradiction. \qed
\end{proof}

Finally, we prepare the proof of
Lemma \ref{lem:TopPresentImpliesHistoryIsId} with the following lemma.

\begin{lemma}\label{lem:x-present-pack-preserved}
  Let $R$ be a run, and let $x$ be a position of $R(\lvert R\rvert)$, and let $y:=\hist{R}{x}$.
  If  $\mathsf{pack}_{\NestingRank(x)}(x)$ is present in all configurations of $R$,
  then $\mathsf{pack}_{\NestingRank(y)}(y)=\mathsf{pack}_{\NestingRank(x)}(x)$.
\end{lemma}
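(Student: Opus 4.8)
The plan is to prove the statement by induction on $\lvert R\rvert$, where the essential case is $\lvert R\rvert=1$. For the induction step, decompose $R=S\circ T$ with $\lvert S\rvert=1$ and set $z:=\hist{T}{x}$, so that $y=\hist{R}{x}=\hist{S}{z}$. Every configuration of $T$ is a configuration of $R$, so the induction hypothesis applied to the shorter run $T$ gives $\pack_{\NestingRank(z)}(z)=\pack_{\NestingRank(x)}(x)$. This common simple position is therefore present in all configurations of $R$, in particular in $S(0)$ and $S(1)$, so the case $\lvert R\rvert=1$ applied to $S$ and the position $z$ yields $\pack_{\NestingRank(y)}(y)=\pack_{\NestingRank(z)}(z)=\pack_{\NestingRank(x)}(x)$, as required.

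For $\lvert R\rvert=1$ I would argue by a case distinction on the operation performed by $R$ and on the shape of $x$, following the definition of the history function. In every case where $\hist{R}{x}=x$ there is nothing to prove. If $R$ performs $\col{k}$ and $x$ points into the topmost $k$-stack of $R(1)$, then $\NestingRank(\hist{R}{x})=\NestingRank(x)+1$, and the identity $\pack_{i+1}(\hist{R}{x})=\pack_{i}(x)$ noted before Lemma \ref{lem:pack-is-good}, evaluated at $i=\NestingRank(x)$, is exactly the claim. Dually, if $R$ performs $\push{1}_{a,k}$ and $x$ points into $\TOP{0}(R(1))$ — i.e.\ $x=\TOP{0}(R(1))\posNew{k}{x_1}$, possibly followed by further links, so $\NestingRank(x)\ge 1$ and $\NestingRank(\hist{R}{x})=\NestingRank(x)-1$ — then the identity $\pack_{i}(x)=\pack_{i-1}(\hist{R}{x})$ evaluated at $i=\NestingRank(x)$ gives the claim.

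The only remaining situation is that $R$ performs $\push{i}$ for some $1\le i\le n$ and the simple prefix $x_0$ of $x$ is $\TOP{i-1}(R(1))$ or points into it; for $i=1$ this is the case $x=\TOP{0}(R(1))$, and for $i\ge 2$ these are the two $\push{i}$ clauses of the definition. In all of these the history leaves every link of $x$ unchanged and only replaces the level-$i$ coordinate of $x_0$ by a value smaller by one, so $\NestingRank(\hist{R}{x})=\NestingRank(x)$. If no link occurring in $x$ has level $\ge i$, then no application of $\pack$ ever overwrites that coordinate, hence it also occurs in $\pack_{\NestingRank(x)}(x)$; but because $x_0$ points to or into $\TOP{i-1}(R(1))$, its level-$i$ coordinate equals the size of the topmost $i$-stack of $R(1)$, which is strictly larger than that of $R(0)$ since $\push{i}$ is a push, so $\pack_{\NestingRank(x)}(x)$ is not a legal position in $R(0)$ and the hypothesis of the lemma fails — this case is vacuous. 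If instead some link of $x$ has level $\ge i$, let $j$ be the least index whose link level is $\ge i$; all earlier applications of $\pack$ overwrite only coordinates of level $<i$, with values that depend on the links of $x$ only and hence are the same for $x$ and for $y=\hist{R}{x}$, so $\pack_{j-1}(x)$ and $\pack_{j-1}(y)$ differ only in the level-$i$ coordinate, which step $j$ overwrites with one and the same value; from step $j$ on the two $\pack$-computations coincide, so $\pack_{\NestingRank(x)}(x)=\pack_{\NestingRank(y)}(y)$.

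The main obstacle is exactly this last family of cases. For $\col{k}$ and $\push{1}$ one has the ready-made $\pack$/history identities, but for $\push{i}$ with $i\ge 2$ one has to track a single lowered coordinate of $x_0$ through all the $\pack$-steps and show that it is either absorbed by a later $\pack$ (when some link of $x$ has level $\ge i$) or survives into $\pack_{\NestingRank(x)}(x)$ as a coordinate too large to occur in the stack $R(0)$ before the push; the latter alternative is the only point in the proof where the assumption that $\pack_{\NestingRank(x)}(x)$ is present in \emph{all} configurations of $R$ is actually used.
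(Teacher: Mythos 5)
Your proposal is correct and follows essentially the same route as the paper: reduce to runs of length $1$ by induction, then do a case distinction on the history function, using the $\pack$/history identities for $\col{k}$ and $\push{1}$, and for $\push{i}$ either showing the lowered level-$i$ coordinate is absorbed by the first link of level $\geq i$ or that $\pack_{\NestingRank(x)}(x)$ could not be present in $R(0)$, which is exactly where the paper also invokes the presence hypothesis. The only (cosmetic) difference is that you fold the vacuous case $x=\TOP{0}(R(1))$ after $\push{1}_{a,k}$ into the general $\push{i}$ analysis, whereas the paper treats it separately; your write-up is otherwise a more detailed version of the cases the paper dismisses as immediate.
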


\begin{proof}[Lemma \ref{lem:x-present-pack-preserved}]
  If we prove the lemma for runs of length $1$, the whole claim
  follows by a simple induction on the length of a run. 
  Let $R$ be a run of length $1$. The proof is by case distinction on
  the definition of the history function.
  \begin{itemize}
  \item	Assume that $R$ performs a $\push 1_{a,k}$ operation, and $x=\TOP 0(R(1))$. 
    Then $x=pack_{\NestingRank(x)}$ is not present in $R(0)$ whence
    there is nothing to show.
  \item	Assume that $R$ performs a $\push j$ operation, and $x$ is
    $\TOP {j-1}(R(1))$ or points into $\TOP{j-1}(R(1))$. 
    Let $x=x_0 \posNew{k_1}{x_1}\dots\posNew{k_m}{x_m}$.
    If $k_i<j$ for all $i$, also $\mathsf{pack}_{\NestingRank(x)}(x)$ is $\TOP {j-1}(R(1))$ or points into $\TOP{j-1}(R(1))$
    (as then $x$ and $\mathsf{pack}_{\NestingRank(x)}(x)$ are equal on all coordinates of level at least $j$).
    But this would mean that $\mathsf{pack}_{\NestingRank(x)}(x)$ was not present in $R(0)$;
    thus $k_i\geq j$ for some $i$ (in particular $m\geq 1$).
    Notice that $y=y_0\posNew{k_1}{x_1}\dots\posNew{k_m}{x_m}$, where
    $y_0$ differs from $x_0$ only on the level $j$ coordinate. 
    This coordinate does not appear  in $\mathsf{pack}_m(x)$ whence
    $\mathsf{pack}_{\NestingRank(y)}(y)=\mathsf{pack}_{\NestingRank(x)}(x)$. 
  \item	In the remaining three cases we easily see (from the
    definition of the history) that
    $\mathsf{pack}_{\NestingRank(y)}(y)=\mathsf{pack}_{\NestingRank(x)}(x)$. \qed
  \end{itemize}
\end{proof}

\begin{proof}[Lemma \ref{lem:TopPresentImpliesHistoryIsId}]
  For a simple position $x=\TOP{k}(R(0))$ we have
  $\mathsf{pack}_{\NestingRank(x)}(x)=x$. 
  For arbitrary $i\leq \lvert R \rvert$, let
  $y_i:=\hist{\subrun{R}{i}{\lvert R \rvert}}{x}$.
  We apply Lemma \ref{lem:x-present-pack-preserved} for
  $\subrun{R}{i}{\lvert R\rvert}$ and obtain
  $\mathsf{pack}_{\NestingRank(y_i)}(y_i)=\TOP{k}(R(0))$. 
  If $y_i$ is simple, this implies $y=x$.
  Corollary \ref{cor:GoodStacksHaveSmallPositions} implies that $y_0=x$.\qed
\end{proof}


\section{A Family of Sets of Runs}
\label{app:Characterisation}
In this appendix we prove that the sets defined in Section
\ref{sec:Runs} satisfy the (informal) claims we formulated. 
In fact, our proof goes from intuition (which is made precise using
the history function) to grammars: 
First, using the history function we give alternative definitions of
pumping runs, 
$\TOP{k}$-non-erasing runs, $k$-returns and $k$-colreturns
and show that they satisfy the intuition given before. We then
show that these runs are actually described by the grammars we
presented in Section \ref{sec:Runs}.

\subsection{Characterisation of Returns and Colreturns}

We start with a definition of returns. 
In Lemmas \ref{lem:return-wfrules} and \ref{lem:Change-level-Return}
we later see that the  grammar from Section \ref{sec:Runs} 
correctly describes the sets of returns.

\begin{definition}\label{def:return}
  A run $R$ of length $m$ is called \emph{$k$-return} (where $1\leq
  k\leq \sL$) if 
  \begin{itemize}
  \item $\hist{R}{\TOP{k-1}(R(m))}$ points to the second topmost
    $(k-1)$-stack\footnote{
    	Whenever we write ``the second topmost $(k-1)$-stack'' we
        assume that it is in the same $k$-stack as the topmost
        $(k-1)$-stack, i.e., we assume that the topmost $k$-stack has
        size at least $2$. 
    } in the topmost $k$-stack of $R(0)$, and
  \item $\hist{\subrun{R}{i}{m}}{\TOP{k-1}(R(m))}\neq\TOP{k-1}(R(i))$
    for all $1\leq i \leq m-1$. 
  \end{itemize}
\end{definition}

The following propositions confirm our intuition about $k$-returns.
\begin{proposition}\label{prop:return-ends-k}
  The last operation of a $k$-return $R$ is $\pop k$ or $\col k$.
\end{proposition}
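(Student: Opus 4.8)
The statement to prove is Proposition~\ref{prop:return-ends-k}: the last operation of a $k$-return $R$ is $\pop{k}$ or $\col{k}$.

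The plan is to analyze what the last operation of $R$ does to the topmost $(k-1)$-stack. Write $m = \lvert R\rvert$ and let $x := \TOP{k-1}(R(m))$. By the first clause of Definition~\ref{def:return}, $\hist{R}{x}$ points to the \emph{second} topmost $(k-1)$-stack of $R(0)$; in particular $\hist{R}{x} \neq \TOP{k-1}(R(0))$. By the second clause, $\hist{\subrun{R}{i}{m}}{x} \neq \TOP{k-1}(R(i))$ for all $1 \leq i \leq m-1$. I would first observe that $m \geq 1$: indeed a run of length $0$ has $\hist{R}{x} = x = \TOP{k-1}(R(0))$, contradicting the first clause.

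The core of the argument is to look at the single last step $S := \subrun{R}{m-1}{m}$ and the position $y := \hist{S}{x} = \hist{\subrun{R}{m-1}{m}}{x}$, which by the second clause (applied at $i = m-1$, valid since $m-1 \leq m-1$; if $m = 1$ this clause is vacuous but then $y = \hist{R}{x}$ points to the second topmost $(k-1)$-stack of $R(0)$, so still $y \neq \TOP{k-1}(R(m-1)) = \TOP{k-1}(R(0))$) satisfies $y \neq \TOP{k-1}(R(m-1))$. So in $R(m-1)$ the position $y$ is \emph{not} the topmost $(k-1)$-stack, yet after performing the operation of $S$ it becomes the topmost $(k-1)$-stack $x$ of $R(m)$. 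I would now go through the possible operations of $S$ using the description of the history function (Definition of $\hist{}{}$ in Appendix~\ref{app:History}) and Proposition~\ref{prop:NonTopmostStacksDoNotChange}. The key point: if the operation of $S$ has level $\ell \leq k-1$, then it only modifies the topmost $(k-1)$-stack of $R(m-1)$ in place, so $\hist{S}{\TOP{k-1}(R(m))} = \TOP{k-1}(R(m-1))$ — contradiction. If the operation is $\push{\ell}$ for some $\ell \geq k$, then $\push{\ell}$ adds a new copy on top inside the topmost $\ell$-stack, but it does not change which $(k-1)$-stacks lie below; concretely, the history of $\TOP{k-1}(R(m))$ after a $\push{\ell}$ with $\ell \geq k$ is still $\TOP{k-1}(R(m-1))$ when $\ell = k$ (the new topmost $(k-1)$-stack is a copy of the old one), or $\TOP{k-1}(R(m-1))$ when $\ell > k$ as well — again a contradiction with $y \neq \TOP{k-1}(R(m-1))$. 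That leaves exactly the operations that \emph{remove} the topmost $(k-1)$-stack of $R(m-1)$, namely $\pop{k}$ and $\col{k}$: after $\pop{k}$ the new topmost $(k-1)$-stack is literally the one that was second from top, i.e.\ $\hist{S}{x} = \TOP{k-1}(R(m-1))$ shifted down by one; after $\col{k}$ the new topmost $(k-1)$-stack is the one stored in the link. In both of these the history $y$ genuinely differs from $\TOP{k-1}(R(m-1))$, consistent with our constraint, and no other operation can make a non-topmost $(k-1)$-stack become topmost. Hence the last operation is $\pop{k}$ or $\col{k}$.

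The main obstacle I expect is the careful bookkeeping in the case distinction on the last operation: one must invoke exactly the right clause of the definition of $\hist{}{}$ for each operation and confirm that for every operation of level $\leq k-1$ and every $\push{\ell}$ with $\ell \geq k$ we get $\hist{S}{\TOP{k-1}(R(m))} = \TOP{k-1}(R(m-1))$, while for $\pop{j}$ and $\col{j}$ with $j \neq k$ one checks these either keep the $(k-1)$-stack topmost-history or are inapplicable/irrelevant (e.g.\ $\pop{j}$ for $j < k$ has level $< k$, handled above; $\pop{j}$ and $\col{j}$ for $j > k$ again leave the below-$(k-1)$-stacks structurally fixed so the history is the topmost). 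Handling $\col{j}$ for $j<k$ cleanly also needs a word: it has level $j \le k-1$, so it is covered by the ``level $\le k-1$'' case via Proposition~\ref{prop:NonTopmostStacksDoNotChange}, which says a $(k-1)$-stack is only changed (and then only the topmost one) by operations of level $\le k-1$, and in that situation $\hist{S}{\TOP{k-1}(R(m))}=\TOP{k-1}(R(m-1))$. Once all cases are dispatched this way, the proposition follows immediately.
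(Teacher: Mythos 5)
There is a genuine gap in your case analysis: the treatment of $\pop{j}$ and $\col{j}$ for $j>k$ is wrong. You claim these operations ``leave the below-$(k-1)$-stacks structurally fixed so the history is the topmost,'' and conclude that only $\pop{k}$ and $\col{k}$ can make a non-topmost $(k-1)$-stack become topmost. But a $\pop{j}$ with $j>k$ removes the entire topmost $(j-1)$-stack of $R(m-1)$ --- including its topmost $(k-1)$-stack --- and the new topmost $(k-1)$-stack of $R(m)$ is the topmost $(k-1)$-stack of the previously \emph{second} topmost $(j-1)$-stack. Since $\hist{}{}$ acts as the identity on coordinates for $\pop{j}$, one gets $\hist{S}{\TOP{k-1}(R(m))}\neq\TOP{k-1}(R(m-1))$ in this case too (and similarly for $\col{j}$, $j>k$). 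So the local necessary condition you extract from the last step only narrows the last operation down to $\pop{j}$ or $\col{j}$ with $j\geq k$; it does not exclude $j>k$.

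Excluding $j>k$ is precisely the hard part of the paper's proof, and it cannot be done by inspecting the last step alone: it requires a global argument over the whole run. The paper observes that if the last operation has level $j>k$ then $\hist{\subrun{R}{m-1}{m}}{\TOP{k}(R(m))}\neq\TOP{k}(R(m-1))$, takes the maximal $i\leq m-2$ with $\hist{\subrun{R}{i}{m}}{\TOP{k}(R(m))}=\TOP{k}(R(i))$ (which exists since $i=0$ qualifies by Corollary~\ref{cor:positionContainment}), and then applies Corollary~\ref{cor:NonTopmostStacksDoNotChange} to conclude that $\hist{\subrun{R}{i}{m}}{\TOP{k-1}(R(m))}=\TOP{k-1}(R(i))$, contradicting the definition of a $k$-return (clause 2 if $i\geq 1$, clause 1 if $i=0$). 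Your argument for operations of level $<k$ and for pushes is fine and matches the paper's opening observation, but without this additional global step the proposition is not established.
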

\begin{proof}
  Let $m:=\lvert R \rvert$.
  Note that in order to satisfy 
  \begin{align*}
    \hist{\subrun{R}{m-1}{m}}{\TOP{k-1}(R(m))}\neq\TOP{k-1}(R(m-1))    
  \end{align*}
  the
  last operation of $R$ is $\pop{j}$ or $\col{j}$ with $j\geq k$. 
  Heading for a contradiction assume that $j>k$.
  It follows that
  \begin{align*}
    \hist{\subrun{R}{m-1}{m}}{\TOP{k}(R(m))}\neq \TOP{k}(R(m-1)).     
  \end{align*}
  Let $i\leq m-2$ be maximal such that 
  \begin{align*}
    x_{i}:=\hist{\subrun{R}{i}{m}}{\TOP k(R(m))}=\TOP k(R(i))    
  \end{align*}
  Such $i$ exists because $i=0$ is of this form (cf.\ Corollary
  \ref{cor:positionContainment}). 
  Due to Corollary \ref{cor:NonTopmostStacksDoNotChange}
  (variant \ref{ntsdnc-varb}),  
  $\hist{\subrun{R}{i}{m}}{\TOP{k-1}(R(m))}$ 
  points to the topmost $(k-1)$-stack of the $k$-stack to which
  $x_{i}$ points. Thus, it points to the topmost $(k-1)$-stack of $R(i)$. 
  But this contradicts the definition of a return.\qed
\end{proof}

\begin{proposition}\label{prop:return-removes}
  For every $k$-return $R$, the topmost $k$-stack of $R(0)$ after
  removing its topmost $(k-1)$-stack is equal to the topmost $k$-stack
  of $R(|R|)$. 
  If $x$ points into $\TOP{k}(R(\lvert R \rvert)$ then 
  $\hist{R}{x}$ points to the same position in the stack at
  $\TOP{k}(R(0))$. 
\end{proposition}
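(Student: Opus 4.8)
Throughout put $m:=\lvert R\rvert$ and write $s^i,t^i$ for the topmost $i$-stacks of $R(0),R(m)$. The plan is to trace each $(k-1)$-stack of $t^k$ back through $R$ with the history function and recognise it as an unchanged copy of a $(k-1)$-stack of $s^k$. For $d\ge 0$ let $p_d$ be the position of the $d$-th $(k-1)$-stack below $\TOP{k-1}(R(m))$ inside $t^k$, so $p_0=\TOP{k-1}(R(m))$ and $p_d$ is a legal position of $R(m)$ precisely for $0\le d\le\lvert t^k\rvert-1$.

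The key step is to show, by induction on $d$, that $\hist{\subrun{R}{i}{m}}{p_d}\neq\TOP{k-1}(R(i))$ for every $0\le i<m$. For $d=0$ this is the second clause of Definition~\ref{def:return} for $1\le i\le m-1$, while for $i=0$ the first clause says $\hist{R}{p_0}$ is the position of the \emph{second} topmost $(k-1)$-stack of $s^k$, hence not $\TOP{k-1}(R(0))$. For the step I would apply Proposition~\ref{prop:Neihbour-position-lemma} at level $k-1$ to the neighbouring positions $p_d$ (below) and $p_{d-1}$ (above), for each run $\subrun{R}{i}{m}$; hypothesis~\ref{npl-varb} for $p_{d-1}$ is the induction hypothesis, so the proposition gives that $\hist{\subrun{R}{i}{m}}{p_d}$ points to the $(k-1)$-stack directly below the one to which $\hist{\subrun{R}{i}{m}}{p_{d-1}}$ points; since the latter is not $\TOP{k-1}(R(i))$, a short case distinction (according to whether that $(k-1)$-stack lies in the topmost $k$-stack of $R(i)$ or a lower one) shows the former is not $\TOP{k-1}(R(i))$ either. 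Given this, for each legal $p_d$ variant~\ref{ntsdnc-vara} of Corollary~\ref{cor:NonTopmostStacksDoNotChange} with $x_0=p_d$ yields that the $(k-1)$-stack at $p_d$ in $R(m)$ equals, sub-stack for sub-stack, the one at $\hist{R}{p_d}$ in $R(0)$, and Proposition~\ref{prop:Neihbour-position-lemma} applied to the whole run $R$ (hypothesis~\ref{npl-varb} for $p_{d-1}$) identifies $\hist{R}{p_d}$ as the $(d+1)$-th $(k-1)$-stack below the top of $s^k$, anchored at $d=0$ by Definition~\ref{def:return}. In particular every $(k-1)$-stack of $t^k$ is a faithful copy, with positions carried over correctly, of a non-topmost $(k-1)$-stack of $s^k$; this already yields the second assertion of the proposition and the inequality $\lvert t^k\rvert\le\lvert s^k\rvert-1$.

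It remains to prove $\lvert t^k\rvert=\lvert s^k\rvert-1$, which then makes $t^k=\pop{k-1}(s^k)$, turns the second assertion into exactly the stated form (the position ``in the stack at $\TOP{k}(R(0))$'' is then literally the same tuple), and settles the first assertion. I expect this count to be the main obstacle. One approach is to pin down $\hist{R}{\TOP{k}(R(m))}$ via Corollary~\ref{cor:positionContainment}: $\hist{R}{\TOP{k-1}(R(m))}$ points into $\hist{R}{\TOP{k}(R(m))}$, and since the former is the \emph{simple} position of the second topmost $(k-1)$-stack of $s^k$, a position bookkeeping argument forces $\hist{R}{\TOP{k}(R(m))}=\TOP{k}(R(0))$; combined with the chain built above, which only reaches down to index $\lvert s^k\rvert-\lvert t^k\rvert$ of $s^k$, this forces $\lvert s^k\rvert-\lvert t^k\rvert=1$. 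Alternatively one can induct on $\lvert R\rvert$, splitting off the last step, which by Proposition~\ref{prop:return-ends-k} is $\pop{k}$ or $\col{k}$ and hence, via Remark~\ref{rem:ColisPop} in the collapse case, removes exactly one topmost $(k-1)$-stack from a $k$-stack that the inductive hypothesis identifies with $s^k$. Either way the delicate point is to verify that a return deletes one and only one $(k-1)$-stack from the top of $s^k$, everything else being governed by the position/history identities of the previous paragraph.
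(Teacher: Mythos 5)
Your top-down tracing of the $(k-1)$-stacks $p_0,p_1,\dots$ through Proposition \ref{prop:Neihbour-position-lemma} and Corollary \ref{cor:NonTopmostStacksDoNotChange} is sound and uses essentially the same machinery as the paper, but you have left open exactly the point that carries the content of the first assertion: the count $\lvert t^k\rvert=\lvert s^k\rvert-1$. Without it you only know that the $(k-1)$-stacks of $t^k$ are faithful copies of \emph{some} contiguous segment of $s^k$ descending from its second topmost $(k-1)$-stack, and, as you yourself note, the second assertion also only takes the stated form once the sizes are pinned down. Neither of your two sketches closes this. For the first: knowing $\hist{R}{\TOP{k}(R(m))}=\TOP{k}(R(0))$ carries no size information whatsoever (a topmost $k$-stack can shrink or grow while its history remains $\TOP{k}(R(0))$), so nothing in that identity ``forces'' $\lvert s^k\rvert-\lvert t^k\rvert=1$. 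For the second: by Remark \ref{rem:ColisPop} a $\col{k}$ is equivalent to \emph{some} number $\geq 1$ of $\pop{k}$'s, not exactly one, so the last step need not remove exactly one $(k-1)$-stack; moreover $\subrun{R}{0}{m-1}$ is in general not a $k$-return, so there is no inductive hypothesis available for it.

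The paper resolves the count by running your chain in the opposite direction and anchoring it at the \emph{bottom}. Since the last operation is $\pop{k}$ or $\col{k}$ (Proposition \ref{prop:return-ends-k}), the histories under $\subrun{R}{m-1}{m}$ of the $l$ positions of the $(k-1)$-stacks of $\TOP{k}(R(m))$ form a contiguous block whose lowest element is the \emph{bottommost} $(k-1)$-stack of some $k$-stack of $R(m-1)$, with none of them topmost. Because non-topmost $(k-1)$-stacks of one $k$-stack are all treated identically by the history function, this ``contiguous block anchored at the bottom'' property propagates backwards through $\subrun{R}{0}{m-1}$. The top of the block in $R(0)$ is, by the definition of a $k$-return, the second topmost $(k-1)$-stack of $\TOP{k}(R(0))$; being simultaneously the $l$-th stack counted from the bottom, this forces $l=\lvert s^k\rvert-1$. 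That bottom anchor is the missing idea; your top-down chain has no analogous stopping condition.
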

\begin{proof}
  Let $m:=\lvert R \rvert$ and
  let $l$ be the size of the topmost $k$-stack of $R(m)$. For
  $1\leq i\leq l$, 
  let $t^{k-1}_i$ be the $i$-th $(k-1)$-stack (counting from the
  bottom) of the topmost $k$-stack 
  of $R(m)$. Let $x_i$ be the position pointing to $t^{k-1}_i$. 
  By the above proposition, $R$ ends in $\pop{k}$ or $\col{k}$.
  Note that the histories of $x_1, \dots, x_l$ with respect to
  $\subrun{R}{m-1}{m}$ point to $(k-1)$-stacks, where the history of
  $x_1$ points to the bottommost $(k-1)$-stack of some $k$-stack, 
  $x_{i}$ is directly below the history of $x_{i+1}$ 
  for each $1\leq i< l$ and none of these histories point to the
  topmost $(k-1)$-stack  of $R(m-1)$. 
  Note that non-topmost $(k-1)$-stacks in the same $k$-stack are always
  treated the same way by the history function. Thus, 
  a simple induction on the operations performed by
  $\subrun{R}{0}{m-1}$ shows that this property is preserved by the
  history function, i.e.,
  $\hist{R}{x_1}, \hist{R}{x_2}, \dots, \hist{R}{x_l}$ point to the
  first $l$ $(k-1)$-stacks of a $k$-stack. But by definition
  $\hist{R}{x_l}=\hist{R}{\TOP{k-1}(R(m))}$ is the second topmost
  $(k-1)$-stack of $\TOP{k}(R(0))$. 
  Application of Corollary \ref{cor:NonTopmostStacksDoNotChange} (variant \ref{ntsdnc-vara}) shows
  that the $(k-1)$-stack at $x_i$ in $R(m)$ is 
  the same as the $(k-1)$-stack at $\hist{R}{x_i}$ in $R(0)$. 
  This proves the first part of this proposition.
  Similarly, Corollary \ref{cor:NonTopmostStacksDoNotChange} implies the
  preservation of pointers into 
  $\TOP{k}(R(m))$ which completes the proof.\qed
\end{proof}

\begin{corollary}\label{cor:push+return}
  For every run $R$ which starts with a $\push k$ operation (including
  arbitrary $\push 1_{a,l}$ for $k=1$), and continues with a
  $k$-return, the topmost $k$-stacks of $R(0)$ and of $R(|R|)$ coincide. 
  Additionally, if $x$ points into $\TOP{k}(R(\lvert R \rvert)$ then
  $\hist{R}{x}$ points to the same position in the stack at
  $\TOP{k}(R(0))$.  
\end{corollary}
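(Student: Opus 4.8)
The plan is to split $R$ as $R = S\circ R'$, where $S$ is the initial one-step run performing $\push{k}$ (including $\push 1_{a,l}$ when $k=1$) and $R'$ is the $k$-return, so that $R'(0)=S(1)$ and $R'(\lvert R'\rvert)=R(\lvert R\rvert)$. The only genuine ingredient is Proposition \ref{prop:return-removes} applied to $R'$; everything else is a direct inspection of the effect of a single $\push$ step on positions and on the history function. First I would record the effect of $S$: for $k\geq 2$, the topmost $k$-stack of $S(1)=\push{k}(R(0))$ consists of the topmost $k$-stack of $R(0)$ with one fresh copy $V$ of the topmost $(k-1)$-stack of $R(0)$ appended on top; in particular the topmost $k$-stack of $R(0)$ is, coordinate-for-coordinate, the part of the topmost $k$-stack of $S(1)$ lying below its topmost $(k-1)$-stack, and its topmost $(k-1)$-stack is exactly the freshly added $V$. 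For $k=1$ the analogous statement holds with $\push 1_{a,l}$, $V$ being the freshly pushed topmost $0$-stack of $S(1)$. Then Proposition \ref{prop:return-removes}, applied to the $k$-return $R'$, tells us that the topmost $k$-stack of $R'(\lvert R'\rvert)$ is obtained from the topmost $k$-stack of $R'(0)=S(1)$ by deleting its topmost $(k-1)$-stack, i.e.\ by deleting $V$; hence the topmost $k$-stacks of $R(\lvert R\rvert)=R'(\lvert R'\rvert)$ and of $R(0)$ coincide, which is the first assertion.

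For the second assertion, fix a position $x$ pointing into $\TOP{k}(R(\lvert R\rvert))$. The second part of Proposition \ref{prop:return-removes}, again applied to $R'$, says that $y:=\hist{R'}{x}$ points to the corresponding position inside the topmost $k$-stack of $R'(0)$; since the topmost $k$-stack of $R'(\lvert R'\rvert)$ is the topmost $k$-stack of $R'(0)$ with its topmost $(k-1)$-stack $V$ removed, this corresponding position lies in the surviving part, so $y$ neither equals $\TOP{k-1}(S(1))$ nor points into it. Now I would invoke Proposition \ref{prop:histTransitive} with $R=S\circ R'$ to obtain $\hist{R}{x}=\hist{S}{y}$, and observe that, $S$ performing a $\push{k}$ (resp.\ $\push 1_{a,l}$) and $y$ not touching the freshly created stack $V$, the single-step history for $S$ is governed by its ``all other cases'' clause, so $\hist{S}{y}=y$. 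Hence $\hist{R}{x}=y$ points to the same position in the topmost $k$-stack of $R(0)$ as $x$ does in that of $R(\lvert R\rvert)$, as claimed.

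The one point that needs care — and hence the main obstacle, such as it is — is the middle claim of the second paragraph: that applying the return's history function $\hist{R'}{\cdot}$ to a position inside the topmost $k$-stack of $R(\lvert R\rvert)$ cannot land on or inside the $(k-1)$-stack $V$ that $S$ pushed, so that the subsequent $\push$-step acts as the identity on it. This follows by combining both parts of Proposition \ref{prop:return-removes} (after removing its topmost $(k-1)$-stack, the topmost $k$-stack of $R'(0)$ is precisely the topmost $k$-stack of $R(0)$, and the history of a position inside the topmost $k$-stack of $R'(\lvert R'\rvert)$ points into that surviving part), but it should be verified explicitly against the clause-by-clause definition of the history function for the $\push$ operations, to be sure that none of the special clauses fire.
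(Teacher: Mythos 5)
Your proof is correct and follows exactly the route the paper intends: the paper states this corollary without proof as an immediate consequence of Proposition \ref{prop:return-removes}, and your argument (split off the initial $\push{k}$ step, note that the return deletes precisely the freshly appended copy of the topmost $(k-1)$-stack, and use Proposition \ref{prop:histTransitive} together with the ``all other cases'' clause of the one-step history for the push) is the natural fleshing-out of that. The point you flag as needing care --- that $\hist{R'}{x}$ cannot land in the pushed $(k-1)$-stack --- is indeed the only nontrivial check, and it is settled by the second part of Proposition \ref{prop:return-removes} as you say.
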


Recall that wf-rules of the form $\Rule{X}{\delta YZ}$ 
must satisfy
 the property that 
whenever $R$ is a composition of a one-step run performing transition
$\delta$ with a run from $Y$, then the topmost $\lev(Y)$-stack of
$R(0)$ and $R(\lvert R \rvert)$ are the same. 
Notice that in the grammars in Section \ref{sec:Runs} such rules
appear only when $\delta$ performs a $\push{}$ operation of some level
$k$, and $Y$ is a set of $k$-returns. 
Since $\lev(X)=k$, the above corollary proves this property.

We now give a definition of $k$-colreturns. 
Lemmas \ref{lem:colreturn-wfrules} and \ref{lem:Change-level-Return} show that, for such definitions, the grammar from Section \ref{sec:Runs}
correctly describes the sets of colreturns.
As already mentioned, the intuition of the definition is the following.
A $k$-colreturn is a run whose
last transition is $\col{k}$ from a stack where the topmost symbol is
a copy of the topmost symbol of the first stack. 

\begin{definition}
  A run $R$ of length $m$ is called \emph{$k$-colreturn} (where $1\leq
  k\leq n$) if 
  \begin{itemize}
  \item $\hist{R}{\TOP{k-1}(R(m))}$ is of the form 
    $\TOP{0}(R(0))\posNew{k}{x}$, where $x$ is simple, and
  \item $\hist{\subrun{R}{i}{m}}{\TOP{k-1}(R(m))}$ is not simple for
    all $0\leq i\leq m-1$. 
  \end{itemize}
\end{definition}

We first prove a decomposition result of $k$-returns and
$k$-colreturns into one transition followed by a sequence of shorter
returns or colreturns.
Later we deal with the change levels.

\begin{lemma}\label{lem:return-wfrules}
    Let $R$ be some run. 
    Then $R$ is a $k$-return if and only if $R$ is of one of the
    following forms. 
    \begin{enumerate}
    \item $\lvert R\rvert=1$ and $R$ performs $\pop{k}$.
    \item $R$ starts with an operation of level at most $k-1$, and
      continues with a $k$-return. 
    \item $R$ starts with $\push1_{a,k}$ and continues with a $k$-colreturn.
    \item $R$ starts with a $\push{j}$ for $j>k$ and continues with a $k$-return.
    \item $R$ starts with a $\push{j}$ for $j\geq k$ (including $\push1_{a,l}$ for $j=k=1$) and decomposes as
      $R=S \circ T \circ U$ where $S$ has length $1$, $T$ is a
      $j$-return and $U$ is a $k$-return.
    \end{enumerate}
\end{lemma}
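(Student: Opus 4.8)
The plan is to prove both implications of Lemma~\ref{lem:return-wfrules} by a careful case analysis on the first operation of the run $R$, using the history function and its basic properties from Appendix~\ref{app:History}. Throughout, write $m:=\lvert R\rvert$ and $x_R:=\TOP{k-1}(R(m))$ for the position of the topmost $(k-1)$-stack of the final stack, so that the two defining conditions of a $k$-return are: $\hist{R}{x_R}$ points to the second topmost $(k-1)$-stack of $\TOP{k}(R(0))$, and $\hist{\subrun{R}{i}{m}}{x_R}\neq\TOP{k-1}(R(i))$ for all $1\leq i\leq m-1$.

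For the \emph{if} direction I would check each of the five forms and verify that $R$ is a $k$-return. Forms~1--4 are essentially immediate from the definition of the history function and Proposition~\ref{prop:histTransitive}: in case~1 a single $\pop{k}$ makes $\hist{R}{x_R}$ the second topmost $(k-1)$-stack of $R(0)$ and there are no intermediate configurations to worry about; in cases~2--4 the first step does not bring $x_R$'s trace to the topmost $(k-1)$-stack (an operation of level $\leq k-1$ leaves the relevant $(k-1)$-stack in a non-topmost position, a $\push{j}$ for $j>k$ likewise, and $\push1_{a,k}$ sends $x_R$'s trace into the link of the new $0$-stack, which is not simple and certainly not $\TOP{k-1}$), and the remainder is a $k$-return or $k$-colreturn, so Proposition~\ref{prop:histTransitive} together with the corresponding defining conditions finishes the argument. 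Case~5 is the substantial one: here $R=S\circ T\circ U$ with $S$ a single $\push{j}$, $T$ a $j$-return, $U$ a $k$-return. I would use Corollary~\ref{cor:push+return} applied to $S\circ T$ to conclude that the topmost $j$-stack (hence topmost $k$-stack, since $j\geq k$) of $(S\circ T)(0)=R(0)$ and of $(S\circ T)(\lvert S\circ T\rvert)=U(0)$ coincide, and that pointers into the topmost $k$-stack are preserved with their positions. Since $U$ is a $k$-return, $\hist{U}{x_R}$ points to the second topmost $(k-1)$-stack of $\TOP{k}(U(0))$; transporting this through $S\circ T$ via Proposition~\ref{prop:histTransitive} and the pointer-preservation from Corollary~\ref{cor:push+return} gives that $\hist{R}{x_R}$ points to the second topmost $(k-1)$-stack of $\TOP{k}(R(0))$. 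For the ``never topmost'' condition I would split the range $1\leq i\leq m-1$ at the boundaries of the decomposition: within the $U$-part this is the defining property of $U$; within the $S\circ T$-part one uses that $\hist{U}{x_R}$ points into the topmost $k$-stack but not to its topmost $(k-1)$-stack (so it is a non-topmost $(k-1)$-stack), and Corollary~\ref{cor:NonTopmostStacksDoNotChange} / Proposition~\ref{prop:Neihbour-position-lemma} to see that while $T$ is running this trace never becomes $\TOP{k-1}$ of an intermediate configuration (a $j$-return with $j>k$ or $=k$ never disturbs a non-topmost $(k-1)$-stack in the relevant way).

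For the \emph{only if} direction, assume $R$ is a $k$-return of length $m$. By Proposition~\ref{prop:return-ends-k}, $R$ is nonempty and its last operation is $\pop k$ or $\col k$. If $m=1$, the last operation is the only operation; a one-step $\col{k}$ is excluded from $k$-returns (cf.\ the discussion around Example~\ref{exa:colreturnsinReturns}: a one-step $\push1_{a,k}$ then $\col k$ is a $k$-return, but $\col k$ alone is not, since then $\hist{R}{x_R}$ would point via a link, not to the second topmost $(k-1)$-stack), so the operation is $\pop k$ and we are in form~1. If $m\geq 2$, I would do a case analysis on the operation $op$ performed by the first step $S$ of $R$, using the first step of the history computation $\hist{R}{x_R}=\hist{S}{\hist{\subrun{R}{1}{m}}{x_R}}$ and the fact (from the defining condition of a $k$-return with $i=1$) that $\hist{\subrun{R}{1}{m}}{x_R}\neq\TOP{k-1}(R(1))$. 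If $op$ has level $\leq k-1$, then by Corollary~\ref{cor:NonTopmostStacksDoNotChange} the topmost $k$-stack is changed only in its topmost $(k-1)$-stack, and one checks $\subrun{R}{1}{m}$ is a $k$-return, giving form~2. If $op=\push1_{a,k}$, then the first step pushes a $0$-stack linked to the topmost $k$-stack; the trace of $x_R$ must enter this link (otherwise the ``second topmost $(k-1)$-stack'' condition fails, as in Example~\ref{exa:colreturnsinReturns}), and one checks $\subrun{R}{1}{m}$ is a $k$-colreturn — this is form~3. If $op=\push{j}$ with $j>k$, one checks $\subrun{R}{1}{m}$ is a $k$-return, giving form~4. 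If $op=\push{j}$ with $j\geq k$ and the trace of $x_R$ is such that the newly created copy \emph{is} involved, then one locates the first place $T$ where the copy created by $S$ is ``returned'' — i.e.\ one picks the maximal prefix $S\circ T$ such that $T$ is a $j$-return — and one checks that the remainder $U$ is a $k$-return, giving form~5. (The remaining combinations, e.g.\ a first-step $\pop j$ or $\col j$, are ruled out because they would make the topmost $(k-1)$-stack of $R(1)$ already a trace that is ``too small'', contradicting that $R$ is a $k$-return of length $\geq 2$ whose final $(k-1)$-stack descends from the \emph{second} topmost $(k-1)$-stack of $R(0)$.)

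The main obstacle is the bookkeeping in form~5 of the \emph{only if} direction: one must show that a $k$-return that begins with a $\push j$ (with $j\geq k$) whose copy is used can always be decomposed as $S\circ T\circ U$ with $T$ a $j$-return and $U$ a $k$-return. Concretely, the delicate point is identifying the right cut-off time: one traces $\TOP{j-1}$ (resp.\ the appropriate $(j-1)$-stack created by the $\push j$) forward and lets $T$ end at the first configuration at which the $(j-1)$-stack created by $S$ has just been removed, then argues via Corollary~\ref{cor:push+return} that from that point on $U$ operates exactly on the topmost $k$-stack of $R(0)$ and is itself a $k$-return. Verifying the ``never topmost'' clause for $U$ (that the relevant trace does not hit $\TOP{k-1}$ during $U$) and that it does not hit $\TOP{k-1}$ during $T$ (using that a non-topmost $(k-1)$-stack is protected, Corollary~\ref{cor:NonTopmostStacksDoNotChange}, variant~\ref{ntsdnc-varb}) is where the argument is most technical, but each individual step is a direct appeal to the history-function lemmas already established.
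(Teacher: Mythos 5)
Your overall route is the paper's: both directions proceed by a case analysis on the first transition, driven by the history function, Proposition~\ref{prop:histTransitive}, Corollary~\ref{cor:push+return}, Corollary~\ref{cor:NonTopmostStacksDoNotChange} and Proposition~\ref{prop:Neihbour-position-lemma}; the paper merely packages the case split of the only-if direction as a separate statement (Lemma~\ref{lem:SecondReturnHistory}) about where the one-step trace $\hist{\subrun{R}{1}{m}}{\TOP{k-1}(R(m))}$ can land, writing $m=\lvert R\rvert$. The if-direction and cases 1--4 of the only-if direction would go through essentially as you describe.

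The genuine gap sits in form~5 of the only-if direction, exactly where you flag the difficulty and then declare the rest ``a direct appeal to the history-function lemmas''. Two ingredients are missing. First, your cut-off criterion is not the one that works: ``the maximal prefix $S\circ T$ with $T$ a $j$-return'' presupposes that some such prefix exists, which is precisely what has to be proved, and the gloss ``the first configuration at which the $(j-1)$-stack created by $S$ has just been removed'' is a \emph{forward}-tracing criterion that collapse links can defeat --- the topmost $j$-stack may first shrink via a $\col{j'}$ through a link already stored in $R(0)$, and the prefix ending there starts its last step with a collapse of level $\geq j$, hence is not a $j$-return (Proposition~\ref{prop:ReturnsStartwithPushOrSmallLevel}). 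The paper instead cuts at the \emph{minimal} $i$ such that the \emph{backward} trace of the relevant final stack becomes topmost ($\hist{\subrun{R}{i}{m}}{\TOP{j-1}(R(m))}=\TOP{j-1}(R(i))$ when $j>k$, resp.\ the trace of $\TOP{k-1}(R(m))$ reaching the second topmost $(k-1)$-stack when $j=k$), which automatically lands on the right object and makes the minimality do the work of the ``never topmost'' clause. Second, in the subcase $j=k$ the \emph{existence} of such an intermediate index is itself nontrivial (the embedded Claim in the paper's proof): when $R$ ends with $\col{k}$ the trace is non-simple near the end, and one must take the last index at which it becomes simple again and invoke Corollary~\ref{cor:PushColStackdifference} to conclude that at that moment it points to the second topmost $(k-1)$-stack. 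Neither ingredient appears in your plan, and without them the decomposition in form~5 does not get off the ground.
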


\begin{lemma}\label{lem:colreturn-wfrules}
    Let $R$ be some run. 
    Then $R$ is a $k$-colreturn if and only if $R$ is of one of the
    following forms
    \begin{enumerate}
    \item $R$ has length $1$ and performs $\col{k}$.
    \item $R$ starts with $\push{j}$ for some $j\geq 2$ and continues
      with a $k$-colreturn.
    \item $R$ starts with a $\push{j}$ (including $\push1_{a,l}$ for
      $j=1$) and decomposes as 
      $R=S\circ T\circ U$ where $S$ has length $1$, $T$ is a
      $j$-return and $U$ is a $k$-colreturn.
    \end{enumerate}
\end{lemma}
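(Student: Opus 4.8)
The statement is a structural characterisation of $k$-colreturns, entirely parallel to Lemma \ref{lem:return-wfrules} for $k$-returns, and I would prove it by the same two-directional argument, splitting on the first operation of $R$.

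For the \emph{``if''} direction I would verify, for each of the three listed forms, that the resulting run $R$ of length $m$ satisfies the two clauses of the definition of $k$-colreturn: that $\hist{R}{\TOP{k-1}(R(m))}$ has the form $\TOP{0}(R(0))\posNew{k}{x}$ with $x$ simple, and that $\hist{\subrun{R}{i}{m}}{\TOP{k-1}(R(m))}$ is non-simple for all $0\le i\le m-1$. Form (1) is immediate from the definition of the history function on a single $\col{k}$ step. For form (2), where $R=S\circ R'$ with $S$ a single $\push{j}$ ($j\ge 2$) and $R'$ a $k$-colreturn, I would use Proposition \ref{prop:histTransitive} to compute $\hist{R}{\TOP{k-1}(R(m))}=\hist{S}{\hist{R'}{\TOP{k-1}(R(m))}}$; since $R'$ is a colreturn, its history is $\TOP{0}(R'(0))\posNew{k}{x}$, and because $S$ is a level-$\ge 2$ push, the history function leaves the topmost $0$-stack position essentially unchanged (only coordinates of level $\ge 2$ are touched, via the $\push{i}$ clause), so the result is still of the required form with a possibly-decremented simple suffix; the non-simplicity condition for the intermediate configurations follows from the colreturn property of $R'$ together with the observation that $\hist{\subrun{R}{0}{m}}{\cdot}$ applied to a non-simple position stays non-simple (the $\push{j}$ clause never turns a $\posNew{\cdot}{\cdot}$-position into a simple one). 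For form (3), $R=S\circ T\circ U$ with $T$ a $j$-return and $U$ a $k$-colreturn: here I invoke Corollary \ref{cor:push+return} to conclude that $S\circ T$ restores the topmost $j$-stack, hence in particular the topmost $0$-stack, of $R(0)$, and Corollary \ref{cor:push+return} also tells us that positions pointing into $\TOP{j}$ are carried back to the same position; combining this with the colreturn property of $U$ via Proposition \ref{prop:histTransitive} gives the claim.

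For the \emph{``only if''} direction, let $R$ be a $k$-colreturn of length $m\ge 1$ (length $0$ is impossible since the last operation must be a collapse). I would do a case analysis on the first operation $op$ of $R$, i.e.\ on $\subrun{R}{0}{1}$. If $m=1$ then the history computation forces $op=\col{k}$, giving form (1). If $m\ge 2$, I track the position $\TOP{k-1}(R(m))$ backwards and look at $y_1:=\hist{\subrun{R}{1}{m}}{\TOP{k-1}(R(m))}$, which by the colreturn definition is non-simple, and then at what the first step does to it. The key dichotomy — exactly as in the proof of Lemma \ref{lem:return-wfrules} — is whether there is an intermediate configuration $R(i)$, $1\le i\le m-1$, at which the relevant position equals $\TOP{j}(R(i))$ for $j$ the level of the first push: if not, then $\subrun{R}{1}{m}$ is itself a $k$-colreturn (its two defining clauses are inherited), giving form (2); if so, take the first such $i$, argue that $\subrun{R}{0}{i}$ is a $j$-return using Definition \ref{def:return} (the position reached the "second topmost" $(j-1)$-stack exactly because of the push, and never the topmost $(j-1)$-stack in between, which is what $i$ being first gives), and that $\subrun{R}{i}{m}$ is a $k$-colreturn; that is form (3). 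The subcase bookkeeping about whether $j=1$ versus $j\ge 2$ is handled as in the return lemma, noting that a level-$1$ push followed immediately by $\col{k}$ is covered by form (3) with a trivial $j$-return, not by form (2).

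The main obstacle, as in the companion lemma, is the backward tracking in the "only if" direction: one must argue carefully that the position $\hist{\subrun{R}{i}{m}}{\TOP{k-1}(R(m))}$ does interact with $\TOP{j}(R(i))$ in the way claimed, and in particular that at the "collision" configuration $R(i)$ the induced subrun $\subrun{R}{0}{i}$ genuinely meets Definition \ref{def:return} — this is where Corollary \ref{cor:NonTopmostStacksDoNotChange}, Proposition \ref{prop:Neihbour-position-lemma}, and Corollary \ref{cor:positionContainment} do the real work, exactly mirroring the proof of Lemma \ref{lem:return-wfrules}, so in practice I would prove the two lemmas together or state that the argument is "as in the proof of Lemma \ref{lem:return-wfrules}, with $\TOP{k-1}$-being-a-copy-of-the-second-topmost replaced by $\TOP{k-1}$-being-a-copy-of-a-link-in-the-topmost-$0$-stack."
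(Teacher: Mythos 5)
Your overall strategy coincides with the paper's: both directions are handled by a case split on the first operation, tracking $\hist{\subrun{R}{1}{m}}{\TOP{k-1}(R(m))}$ backwards via Proposition \ref{prop:histTransitive}, using Corollary \ref{cor:push+return} for the composed form, and locating the minimal index at which the tracked position resurfaces in order to split off the $j$-return $T$. Two points need repair, however.

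First, your side remark that a level-$1$ push followed immediately by $\col{k}$ is ``covered by form (3) with a trivial $j$-return'' is wrong on both counts: that run is not a $k$-colreturn at all --- the $\push{1}_{a,k}$ step turns the position $\TOP{0}(R(1))\posNew{k}{x'}$ into a \emph{simple} position, so $\hist{R}{\TOP{k-1}(R(2))}$ is simple and the run is a $k$-\emph{return}, described by the rule $\Rule{\Rr_{k,k,y}}{\delta\,\Cc_{k,k,y'}}$ --- and a $j$-return always has positive length (it must end with $\pop{j}$ or $\col{j}$), so there is no ``trivial'' $T$. The correct bookkeeping is simply that a $k$-colreturn starting with $\push{1}_{a,l}$ must fall into form (3), because after a level-$1$ push the only positions whose one-step history has the form $\TOP{0}(R(0))\posNew{k}{\cdot}$ are based at the \emph{old} topmost $0$-stack, now buried as the second topmost $0$-stack. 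Second, in the ``if'' direction for form (3) you still owe the verification that $\hist{\subrun{R}{i}{m}}{\TOP{k-1}(R(m))}$ stays non-simple for indices $i$ \emph{inside} the $j$-return $T$; the colreturn property of $U$ only covers the later indices, and Corollary \ref{cor:push+return} only handles the endpoints. The paper proves this by contradiction: if the position became simple at some greatest such $i$, then Proposition \ref{prop:histPreservesLinkNesting} together with Corollary \ref{cor:positionContainment} would force $\hist{\subrun{T}{i+1}{\lvert T\rvert}}{\TOP{j-1}(U(0))}=\TOP{j-1}(T(i+1))$, contradicting that $T$ is a $j$-return. (Also a small slip: the $j$-return extracted in the ``only if'' direction is $\subrun{R}{1}{i}$, not $\subrun{R}{0}{i}$.)
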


Before we start the proof of these two lemmas, we state some auxiliary claims.
First, we observe that the history function either manipulates the simple
prefix of a position, or adds a simple prefix, or removes it.
This will be useful while analysing $k$-colreturns.

\begin{proposition} \label{prop:histPreservesLinkNesting}
  Let $R$ be some run of length $m$.
  Let $x \posNew{k}{y}$ be a position in $R(m)$
  such that 
  \begin{align*}
    \NestingRank(\hist{\subrun{R}{i}{m}}{x\posNew{k}{y}}) >
    \NestingRank(\hist{R}{y})    
  \end{align*}
  for all $i\leq m$. 
  Let $x'=\hist{R}{x}$.
  Then $\hist{R}{x\posNew{k}{y}}=x'\posNew{k}{y}$ (neither $x$ nor $x'$
  have to be simple). 
  Additionally, the $0$-stack of $R(m)$ at position $x$ is the same as
  the $0$-stack of $R(0)$ at position $x'$.  
\end{proposition}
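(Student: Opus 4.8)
The plan is to prove Proposition \ref{prop:histPreservesLinkNesting} by induction on the length $m$ of the run $R$, reducing to the case of a single transition and then invoking Proposition \ref{prop:histTransitive} (transitivity of the history function) for the inductive step. Concretely, for $m \geq 2$ I would decompose $R = S \circ T$ with $\lvert S\rvert = 1$, set $y_1 := \hist{T}{y}$, $x_1 := \hist{T}{x}$, and first check that the hypothesis
\[
  \NestingRank(\hist{\subrun{R}{i}{m}}{x\posNew{k}{y}}) > \NestingRank(\hist{R}{y})
\]
for all $i \leq m$ restricts to the corresponding hypothesis for the shorter run $T$ (for indices $i \geq 1$, the history along $\subrun{R}{i}{m}$ is the same as along $\subrun{T}{i-1}{m-1}$ up to reindexing, and $\hist{R}{y} = \hist{S}{y_1}$ so the lower bound only gets weaker or stays the same), as well as for $S$ applied to the position $x_1 \posNew{k}{y_1}$. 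By the inductive hypothesis applied to $T$ we get $\hist{T}{x\posNew{k}{y}} = x_1 \posNew{k}{y}$ and that the $0$-stack of $R(m)$ at $x$ equals the $0$-stack of $T(0) = R(1)$ at $x_1$; then applying the length-$1$ case to $S$ and the position $x_1 \posNew{k}{y_1}$ — noting $y = y_1$ forces $\NestingRank(\hist{S}{y_1}) = \NestingRank(\hist{R}{y})$, but actually we need the hypothesis in the form involving $x_1$, which is exactly what the inductive hypothesis on $T$ together with the global hypothesis gives — yields $\hist{S}{x_1\posNew{k}{y_1}} = x'\posNew{k}{y_1}$ and preservation of the $0$-stack, and composing via Proposition \ref{prop:histTransitive} finishes the step. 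A small subtlety is that $\hist{R}{y}$ might not literally equal $\hist{S}{\hist{T}{y}}$ with the same nesting rank unless $\hist{T}{y}$ is tracked correctly; I would state and use the refinement that $y_1 = y$ actually, because the hypothesis $\NestingRank(\hist{\subrun{R}{m}{m}}{x\posNew{k}{y}}) > \NestingRank(\hist{R}{y})$ at $i = m$ reads $\NestingRank(x\posNew{k}{y}) = \NestingRank(x) + \NestingRank(y) + 1 > \NestingRank(\hist{R}{y})$, which is automatic, so the real content is at smaller $i$; the cleaner route is simply: the position $x\posNew{k}{y}$ keeps a link of level $k$ ``frozen'' at the very end, and the hypothesis says this link is never swallowed (its removal would drop the nesting rank to that of $\hist{\cdot}{y}$ or below), so by the single-step analysis it is carried along unchanged.

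For the base case $m = 1$ I would do a case distinction on the operation performed by $R$, exactly mirroring the case analysis in the definition of $\hist{\cdot}{\cdot}$. In the ``all other cases'' branch $\hist{R}{x\posNew{k}{y}} = x\posNew{k}{y}$ and $x' = \hist{R}{x} = x$, so the claim is immediate and the $0$-stack at $x$ is literally unchanged. For a $\push{1}_{a,k'}$ operation: if $x\posNew{k}{y}$ lies in the fresh topmost $0$-stack (i.e.\ $x = \TOP{0}(R(1))$), then a $\push{1}$ adds a new simple prefix, so $\hist{R}{x\posNew{k}{y}}$ has strictly smaller nesting rank than $x\posNew{k}{y}$ — but then checking the hypothesis at $i=0$ would force $\NestingRank(\hist{R}{x\posNew{k}{y}}) > \NestingRank(\hist{R}{y}) = \NestingRank(y)$, which I must verify is consistent; in fact the third $\push{1}$ clause of the history definition (for positions of the form $\TOP{0}(t)\posNew{k}{(\dots)}\posNew{k'}{z}$) handles this and keeps the trailing $\posNew{k'}{z}$ part, so the link we care about (which is the trailing one, of level $k$) is preserved and the simple prefix is merely relocated — consistent with the claim after identifying $x' = \hist{R}{x}$. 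For $\push{j}$ with $j \geq 2$ and for the $\col{k'}$ operation, the history function is defined purely by rewriting the simple prefix $x_0$ of a position (possibly turning it into a non-simple position via a $\posNew{k'}{\cdot}$, in the collapse case) and leaves everything after the first $\posNew{}{}$ untouched; so if $x = x_0 \posNew{\cdots}{\cdots}$ already has positive nesting rank, then $\hist{R}{x}$ is obtained by applying the single-step rule to $x_0$ alone, and $\hist{R}{x\posNew{k}{y}}$ is obtained by the same rewriting of the same simple prefix $x_0$ followed by the unchanged tail $\posNew{\cdots}{\cdots}\posNew{k}{y}$, which is exactly $x'\posNew{k}{y}$. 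The preservation of the $0$-stack at $x$ follows because none of these operations of level $\geq 1$ modifies a non-topmost $0$-stack, and our hypothesis (the link at level $k$ at the end is never consumed) keeps us away from the topmost $0$-stack in the relevant range.

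The main obstacle I anticipate is bookkeeping the hypothesis correctly through the $\push{1}$ cases: a $\push{1}_{a,k'}$ operation can either \emph{add} a simple prefix (when acting on the new topmost $0$-stack, matching the first/second/third clauses of the history definition) or be irrelevant, and one needs the hypothesis ``$\NestingRank(\hist{\subrun{R}{i}{m}}{x\posNew{k}{y}}) > \NestingRank(\hist{R}{y})$ for all $i$'' precisely to exclude the pathological possibility that the level-$k$ link we are tracking gets unfrozen — i.e.\ that at some point the history of $x\posNew{k}{y}$ collapses to (essentially) the history of $y$ with a smaller nesting rank. Once that exclusion is in hand, every single-step case either leaves the tail after the first link alone or relocates only the simple prefix, and transitivity does the rest. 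I would also remark (as is implicit in the statement's parenthetical ``neither $x$ nor $x'$ have to be simple'') that the argument never needs $x$ simple: it is genuinely a statement about the behaviour of the history function on the segment of a position strictly after its first $\posNew{}{}$, and the hypothesis is exactly what guarantees that this segment, together with the frozen level-$k$ link in front of it, is transported rigidly.
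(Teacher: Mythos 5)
Your proposal follows exactly the paper's proof, which is only a two-line sketch: induction on $m$, a case analysis over the stack operations for $m=1$, and the observation that for $m\geq 2$ the claim for any decomposition $R=S\circ T$ follows from the claim for $S$ and for $T$. The bookkeeping you carry out explicitly (restricting the hypothesis to the two factors and checking that the trailing level-$k$ link is never unpacked by a $\push{1}$) is precisely the detail the paper leaves implicit.
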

\begin{proof}
  Induction on $m$.
  For $m=1$ we just analyse all cases.
  For $m\geq 2$ we observe that the claim for any decomposition
  $R= S \circ T$ follows from 
  the claim  for $S$ and for $T$.\qed
\end{proof}

\begin{corollary} \label{cor:PushColStackdifference}
  Let $R$ be a run of length $m \geq 2$, and $x$ a simple position in $R(m)$ such that $\hist{R}{x}$ is simple,
  but $\hist{\subrun{R}{i}{m}}{x}$ is not simple for $1\leq i\leq m-1$.
  Then, for some $k$, $R$ starts with $\push{1}_{a,k}$ and ends with $\col{k}$.
  Additionally, $x=\TOP{k-1}(R(m))$ if and only if $\hist{R}{x}$ is the second topmost $(k-1)$-stack of $R(0)$.
\end{corollary}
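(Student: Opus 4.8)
The plan is to prove Corollary \ref{cor:PushColStackdifference} by combining Proposition \ref{prop:histPreservesLinkNesting} (and its intended use via $\pack_1$) with a careful look at how the history function can change the nesting rank of a position by one unit per single-step run. First I would observe that, by the inductive definition of $\hist{\cdot}{\cdot}$ through single steps, a run of length $1$ changes $\NestingRank$ of a given position by at most $1$, and it can increase it only via a $\col{k}$ operation (this pushes a $\TOP{0}$ prefix onto the position) and can decrease it only via a $\push{1}_{a,k}$ operation (this strips a $\TOP{0}$ prefix when $x$ points into the topmost $0$-stack). Hence the hypotheses — $x$ simple, $\hist{R}{x}$ simple, but $\hist{\subrun{R}{i}{m}}{x}$ non-simple for all $1\le i\le m-1$ — force exactly this pattern: the first step of $R$ (as seen from the end, i.e. $\subrun{R}{m-1}{m}$) must be a $\col{k}$ that raises the rank from $0$ to $1$, and the last "early" step $\subrun{R}{0}{1}$ must be a $\push{1}_{a,k'}$ that lowers the rank from $1$ back to $0$. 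The levels $k$ and $k'$ must coincide because the link level recorded when $\col{k}$ acts on $\hist{\subrun{R}{1}{m}}{x}=\TOP{0}(R(m-1))\posNew{k}{\cdot}$ must match the level stored in that $0$-stack, and by Proposition \ref{prop:histPreservesLinkNesting} applied to $\subrun{R}{1}{m-1}$ that $0$-stack (hence its link level) is unchanged and equals the one created by the initial $\push{1}_{a,k'}$.

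More concretely, I would set $y_i := \hist{\subrun{R}{i}{m}}{x}$ for $0\le i\le m$, so $y_0=\hist{R}{x}$ and $y_m = x$ are simple while $y_1,\dots,y_{m-1}$ are not. Since a single step changes the nesting rank by at most $1$, and $\NestingRank(y_m)=0$ while $\NestingRank(y_{m-1})\ge 1$, the step $\subrun{R}{m-1}{m}$ decreases the rank; the only history case that decreases rank is $\col{k}$ acting on a position pointing into $\TOP{k}(R(m))$, in which case $y_{m-1} = \TOP{0}(R(m-1))\posNew{k}{x'}$ for the simple position $x'$ with $\pack_1(y_{m-1}) = x$ (i.e. $x$ is obtained from $\TOP{0}(R(m-1))$ by replacing its last $k$ coordinates by $x'$). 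Symmetrically, since $\NestingRank(y_0)=0$ and $\NestingRank(y_1)\ge1$, the step $\subrun{R}{0}{1}$ increases the rank; the only history case that does so is $\push{1}_{a,k'}$ with $x$ (here $y_1$, the image under the later history) pointing into $\TOP{0}$ of the intermediate stack, giving $y_1 = \TOP{0}(R(1))\posNew{k'}{y_0}$. Now $y_1 = \hist{\subrun{R}{1}{m}}{x}$ is a non-simple position of the single-link form $\hat x \posNew{k''}{\text{simple}}$ throughout $\subrun{R}{1}{m-1}$ (its rank stays $\ge 1$ but its rank can only become $0$ at the very ends; between them Proposition \ref{prop:histPreservesLinkNesting}, applied with the hypothesis $\NestingRank(\hist{\subrun{R}{j}{m-1}}{x\posNew{k''}{\cdot}}) > \NestingRank(\text{simple part})$, shows the link level $k''$ and the underlying $0$-stack are preserved), so $k'' = k' = k$, as needed. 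This yields that $R$ starts with $\push{1}_{a,k}$ and ends with $\col{k}$.

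For the second claim I would trace $\TOP{k-1}(R(m))$ versus the second topmost $(k-1)$-stack of $R(0)$ through the decomposition $R = S \circ R' \circ S'$ where $S=\subrun{R}{0}{1}$ is the $\push{1}_{a,k}$, $S'=\subrun{R}{m-1}{m}$ is the $\col{k}$, and $R'=\subrun{R}{1}{m-1}$. Using the explicit history formulas: if $x=\TOP{k-1}(R(m))$ then, since the last operation $\col{k}$ maps positions into $\TOP{k}(R(m))$ to $\TOP{0}(R(m-1))\posNew{k}{(\cdot)}$ and in particular sends $\TOP{k-1}(R(m))$ to $\TOP{0}(R(m-1))\posNew{k}{(0,\dots,0)}$-type data, together with $\hist{R'}{\cdot}$ preserving the relevant $0$-stack and its link by Proposition \ref{prop:histPreservesLinkNesting}, and the $\push{1}_{a,k}$ step sending $\TOP{0}(R(1))\posNew{k}{z}$ to $(|u^n|,\dots,|u^{k+1}|,z)$ for the topmost stacks $u^i$ of $R(0)$: the link stored by $\push{1}_{a,k}$ is precisely $\pop{1}(\TOP{1}(R(0)))$ by the definition of $\push{1}_{a,k}$, hence $x'$ (the simple position inside the link) equals $\TOP{k-1}(\pop{1}(\TOP{1}(R(0))))$-data, i.e. $\hist{R}{x}$ is exactly the second topmost $(k-1)$-stack of $R(0)$. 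Conversely if $\hist{R}{x}$ is that stack, running the same chain of formulas backwards (they are deterministic on the relevant positions) forces $x=\TOP{k-1}(R(m))$. The main obstacle I anticipate is the bookkeeping in the middle part $R'$: one must check that $\hist{\subrun{R'}{j}{m-1}}{x\posNew{k}{z}}$ stays strictly above $\NestingRank$ of its simple tail for every $j$, so that Proposition \ref{prop:histPreservesLinkNesting} genuinely applies and the link level is not tampered with; this follows because $\hist{\subrun{R}{i}{m}}{x}$ is non-simple for $1\le i\le m-1$ and a single step can lower the rank by at most one, but making this rigorous requires carefully relating $\subrun{R'}{j}{m-1}$-histories to the given $\subrun{R}{i}{m}$-histories via Proposition \ref{prop:histTransitive}.
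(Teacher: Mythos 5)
Your proposal is correct and follows essentially the same route as the paper's proof: identify the last step as $\col{k}$ (the only operation whose history turns a simple position into a non-simple one) and the first step as $\push{1}_{a,k}$ (the only one doing the reverse), carry the link unchanged through the middle part via Proposition \ref{prop:histPreservesLinkNesting}, and read off the second claim from the stack stored by the initial push. The only blemish is the formula $\pop{1}(\TOP{1}(R(0)))$ for the stored link, which is the $k=1$ case only --- in general $\push{1}_{a,k}$ stores the topmost $k$-stack of $R(0)$ with its topmost $(k-1)$-stack removed --- but your conclusion that $\hist{R}{x}$ is then the second topmost $(k-1)$-stack of $R(0)$ exactly when $x=\TOP{k-1}(R(m))$ is right.
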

\begin{proof}
  The last operation of $R$ has to be a $\col k$ for some $k$, because
  otherwise $\hist{\subrun{R}{m-1}{m}}{x}$ would be simple. 
  Then $\hist{\subrun{R}{m-1}{m}}{x}$ is of the form $\TOP
  0(R(m-1))\posNew{k}{x'}$. 
  Proposition \ref{prop:histPreservesLinkNesting}, applied for
  $\subrun{R}{1}{m-1}$,  
  shows that $\hist{\subrun{R}{1}{m}}{x}$ is of the form $z\posNew{k}{x'}$,
  and that the topmost $0$-stack of $R(m-1)$, and the $0$-stack in
  $R(1)$ to which $z$ points to are the same $k$-stack $u^k$, which is
  in fact the topmost $k$-stack of $R(m)$. 
  Because $\hist{\subrun{R}{0}{1}}{z\posNew{k}{x'}}$ is simple,
  necessarily $z=\TOP0(R(1))$, the first operation of $R(0)$ is
  $\push1_{a,k}$, and  
  the topmost $k$-stack of $R(0)$ after removing its topmost $(k-1)$-stack is equal to $u^k$.
  Additionally, $x$ points to the same position in the topmost
  $k$-stack of $R(m)$, as $\hist{R}{x}$ in the topmost $k$-stack of
  $R(0)$. 
  Thus $x=\TOP{k-1}(R(m))$ if and only if $\hist{R}{x}$ is the second topmost $k$-stack of $R(0)$.\qed
\end{proof}

The following lemma proves the intuition that $k$-colreturns make a
copy of the topmost stack symbol and finally use its collapse link of
level $k$ (the proof is almost the same as that of the previous corollary).
\begin{lemma}\label{lem:ColreturnDesreases}
  Let $R$ be a colreturn.
  Then the topmost $k$-stack of $R(|R|)$ is equal to the $k$-stack contained in the topmost $0$-stack of $R(0)$.
  In particular its size is smaller than the size of the topmost $k$-stack of $R(0)$.
  Additionally, the last operation of $R$ is $\col k$, and
  $\hist{R}{\TOP k(R(|R|))}=\TOP k(R(0))$.
\end{lemma}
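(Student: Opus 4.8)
The plan is to mirror the argument used in the proof of Corollary~\ref{cor:PushColStackdifference}. Write $m:=\lvert R\rvert$ and, for $0\le i\le m$, set $x_i:=\hist{\subrun{R}{i}{m}}{\TOP{k-1}(R(m))}$; by Proposition~\ref{prop:histTransitive} these are consistent under decomposition of $R$. By the definition of a $k$-colreturn, $x_m=\TOP{k-1}(R(m))$ is simple, every $x_i$ with $0\le i\le m-1$ is non-simple, and $x_0=\TOP{0}(R(0))\posNew{k}{x}$ for a simple position $x$. The first step is to pin down the last transition: since $x_{m-1}$ is non-simple while $x_m$ is simple, inspecting the cases in the definition of the history function shows the last operation must be $\col{l}$ for some $l$ with $\TOP{k-1}(R(m))$ pointing into $\TOP{l}(R(m))$, and this containment forces $l\ge k$; moreover $x_{m-1}=\TOP{0}(R(m-1))\posNew{l}{x'}$, where $x'$ is the (simple) restriction of $\TOP{k-1}(R(m))$ to its coordinates of level at most $l$.

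Next I would apply Proposition~\ref{prop:histPreservesLinkNesting} to the run $\subrun{R}{0}{m-1}$ and the position $\TOP{0}(R(m-1))\posNew{l}{x'}$; its hypothesis says (roughly) that this position, traced back through $\subrun{R}{0}{m-1}$, never loses its outermost link, and by Proposition~\ref{prop:histTransitive} the positions so obtained are exactly the $x_i$ with $0\le i\le m-1$, none of which is simple, so the hypothesis holds — this verification is the same as the corresponding one in the proof of Corollary~\ref{cor:PushColStackdifference}. The proposition then yields $\hist{\subrun{R}{0}{m-1}}{\TOP{0}(R(m-1))\posNew{l}{x'}}=(\hist{\subrun{R}{0}{m-1}}{\TOP{0}(R(m-1))})\posNew{l}{x'}$, and that the topmost $0$-stack of $R(m-1)$ coincides with the $0$-stack of $R(0)$ at $\hist{\subrun{R}{0}{m-1}}{\TOP{0}(R(m-1))}$. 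Since the left-hand side equals $x_0=\TOP{0}(R(0))\posNew{k}{x}$ by Proposition~\ref{prop:histTransitive}, comparing these two rank-one nested positions gives $l=k$ (so the last operation of $R$ is $\col{k}$), $\hist{\subrun{R}{0}{m-1}}{\TOP{0}(R(m-1))}=\TOP{0}(R(0))$, and that the topmost $0$-stack of $R(m-1)$ equals that of $R(0)$, say $(a,k,u^k)$. As $R(m)=\col{k}(R(m-1))$, the topmost $k$-stack of $R(m)$ is precisely $u^k$, the $k$-stack stored in the topmost $0$-stack of $R(0)$ — the first assertion.

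For the remaining claims: applicability of $\col{k}$ to $R(m-1)$ forces $u^k$ to be non-empty, so $\col{k}$ is defined on the pds $R(0)$ as well, and by Remark~\ref{rem:ColisPop} $\col{k}(R(0))$ is obtained from $R(0)$ by at least one application of $\pop{k}$; its topmost $k$-stack, which is $u^k$, is therefore strictly smaller than the topmost $k$-stack of $R(0)$. For the history identity, $\col{k}$ leaves all coordinates of level above $k$ unchanged, so $\TOP{k}(R(m))=\TOP{k}(R(m-1))$ as positions, and since $\TOP{k}(R(m))$ does not point into itself the history function fixes it over the last step, i.e.\ $\hist{\subrun{R}{m-1}{m}}{\TOP{k}(R(m))}=\TOP{k}(R(m-1))$. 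Applying Corollary~\ref{cor:positionContainment} to $\subrun{R}{0}{m-1}$, to lift the already established identity $\hist{\subrun{R}{0}{m-1}}{\TOP{0}(R(m-1))}=\TOP{0}(R(0))$ from level $0$ to level $k$, gives $\hist{\subrun{R}{0}{m-1}}{\TOP{k}(R(m-1))}=\TOP{k}(R(0))$, and composing the two equalities produces $\hist{R}{\TOP{k}(R(m))}=\TOP{k}(R(0))$. I expect the main obstacle to be the history bookkeeping in the first two steps — especially the verification of the hypothesis of Proposition~\ref{prop:histPreservesLinkNesting} — but it runs entirely parallel to the proof of Corollary~\ref{cor:PushColStackdifference}.
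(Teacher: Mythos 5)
Your proposal is correct and follows essentially the same route as the paper's proof: identify the last operation as a collapse via the simple/non-simple transition of the traced position, apply Proposition~\ref{prop:histPreservesLinkNesting} to $\subrun{R}{0}{m-1}$ to conclude that the link level is $k$ and that the topmost $0$-stacks of $R(0)$ and $R(m-1)$ coincide, then invoke Remark~\ref{rem:ColisPop} for the size claim and Corollary~\ref{cor:positionContainment} for the history identity. Your write-up is somewhat more explicit than the paper's (e.g.\ in verifying the hypothesis of Proposition~\ref{prop:histPreservesLinkNesting} and in applying Remark~\ref{rem:ColisPop} directly at $R(0)$), but the argument is the same.
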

\begin{proof}
  Let $m:=|R|$, and let $x:=\TOP{k-1}(R(m))$.
  The last operation of $R$ has to be a $\col j$ for some $j$, because
  otherwise $\hist{\subrun{R}{m-1}{m}}{x}$ would be simple. 
  Then $\hist{\subrun{R}{m-1}{m}}{x}$ is of the form $\TOP
  0(R(m-1))\posNew{j}{x'}$. 
  Proposition \ref{prop:histPreservesLinkNesting}, applied for
  $\subrun{R}{0}{m-1}$, implies that
  $\hist{R}{x}=\hist{\subrun{R}{0}{m-1}}{\TOP0(R(m-1))}\posNew{j}{x'}$. 
  By definition of a $k$-colreturn it follows that $j=k$ and
  $\hist{\subrun{R}{0}{m-1}}{\TOP0(R(m-1))}=\TOP0(R(0))$. 
  From this proposition we also conclude that the topmost $0$-stack of
  $R(0)$ and the topmost $0$-stack of $R(m-1)$ store the same
  $k$-stack $u^k$, which is in fact the topmost $k$-stack of $R(m)$. 
  Of course the size of $u^k$ is smaller than the size of the the
  topmost $k$-stack of $R(0)$ 
  because $\col k$ must decrease the size of the topmost $k$-stack
  (see Remark \ref{rem:ColisPop}). 
  Corollary \ref{cor:positionContainment} implies that
  $\hist{\subrun{R}{0}{m-1}}{\TOP k(R(m-1))}=\TOP k(R(0))$. 
  Since the last operation is $\col k$, 
  $\hist{R}{\TOP k(R(m-1))}=\TOP k(R(0))$. \qed
\end{proof}

The next two propositions describe which operations are allowed as the
first operation of a $k$-return and of a $k$-colreturn. 

\begin{proposition}
  \label{prop:ReturnsStartwithPushOrSmallLevel}
  Let $R$ be a $k$-return.
  The first operation of $R$ is neither $\col{j}$ for $j\geq k$ nor
  $\pop{j}$ for $j>k$. 
  If the first operation of $R$ is $\pop{k}$, then $\lvert R \rvert=1$.
\end{proposition}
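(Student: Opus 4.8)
The plan is to argue directly from the two defining conditions of a $k$-return and the basic properties of the history function established in Appendix \ref{app:History}. Write $m:=\lvert R\rvert$ and $x:=\TOP{k-1}(R(m))$, and let $S:=\subrun{R}{0}{1}$ be the one-step run performing the first operation $op$ of $R$. First I would note that the definition of a $k$-return forces $\hist{S}{\hist{\subrun{R}{1}{m}}{x}}=\hist{R}{x}$ to point to the second topmost $(k-1)$-stack of $R(0)$, hence in particular $\hist{R}{x}$ is strictly below the topmost $(k-1)$-stack; this is the ``downward movement'' that the first operation must at least not obstruct.

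The heart of the argument is a case analysis on $op$ ruling out the forbidden operations. If $op=\col{j}$ with $j\geq k$, then after one step the topmost $k$-stack (hence the topmost $(k-1)$-stack, using Corollary \ref{cor:positionContainment} and Proposition \ref{prop:NonTopmostStacksDoNotChange}) of $R(1)$ is already a proper substack obtained by following a stored link; tracing $x$ back through $\subrun{R}{1}{m}$ and then through $S$, one gets via Lemma \ref{lem:PositionsDecrease} (or directly from the $\col$-clause of the history definition together with Corollary \ref{cor:GoodStacksHaveSmallPositions}) that $\hist{R}{x}$ would be forced to be $\TOP{k-1}(R(0))$ or a position strictly inside the linked stack rather than the genuine second topmost $(k-1)$-stack of the topmost $k$-stack of $R(0)$, contradicting the first condition of Definition \ref{def:return}. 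If $op=\pop{j}$ with $j>k$, then $R(1)$ already lacks the topmost $j$-stack of $R(0)$; I would pick the maximal $i\le m-1$ with $\hist{\subrun{R}{i}{m}}{\TOP{k}(R(m))}=\TOP{k}(R(i))$ — which exists by Corollary \ref{cor:positionContainment} applied at $i=0$ — and then, exactly as in the proof of Proposition \ref{prop:return-ends-k}, apply Corollary \ref{cor:NonTopmostStacksDoNotChange} (variant \ref{ntsdnc-varb}) to conclude that $\hist{\subrun{R}{i}{m}}{x}$ points to the topmost $(k-1)$-stack of $R(i)$, violating the second condition of Definition \ref{def:return} (note $1\le i\le m-1$ since $i=m$ is excluded by $j>k$ changing the top $k$-stack at the last step, and the case $i=0$ is itself the forbidden situation at the start). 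Finally, if $op=\pop{k}$ and $m\geq 2$, then $R(1)=\pop{k}(R(0))$, so $\hist{\subrun{R}{1}{m}}{x}$ is already a position in $\pop{k}(R(0))$, i.e.\ in the topmost $k$-stack of $R(1)$; but the topmost $(k-1)$-stack of $R(1)$ is then precisely the second topmost $(k-1)$-stack of the topmost $k$-stack of $R(0)$, so if $\hist{\subrun{R}{1}{m}}{x}=\TOP{k-1}(R(1))$ we get $i=1$ violating the second condition, while if $\hist{\subrun{R}{1}{m}}{x}\neq\TOP{k-1}(R(1))$ then by Corollary \ref{cor:NonTopmostStacksDoNotChange} this position is unchanged under $\subrun{R}{1}{m}$ and already equals $\hist{R}{x}$, which by Lemma \ref{lem:PositionsDecrease} cannot then move ``up'' to the second topmost $(k-1)$-stack of the full topmost $k$-stack of $R(0)$ — it must already lie strictly below — contradiction. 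Hence $m=1$.

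The main obstacle I anticipate is the $\col{j}$ case: unlike $\pop{j}$, a collapse can move the position pointer ``sideways'' into a linked stack, so I cannot simply invoke the neighbour/nesting lemmas and instead must use the $\pack$-machinery (Lemma \ref{lem:pack-is-good} and Corollary \ref{cor:GoodStacksHaveSmallPositions}) to show that whatever $\hist{R}{x}$ ends up being, it either has higher nesting rank than a simple position or, when simple, is lexicographically too small to be the second topmost $(k-1)$-stack of $R(0)$'s topmost $k$-stack. I would organize the write-up as: (1) reduce to $\lvert R\rvert\ge 1$ and introduce $x$, $i$ as above; (2) handle $\col{j}$, $j\ge k$; (3) handle $\pop{j}$, $j>k$, by reference to the proof of Proposition \ref{prop:return-ends-k}; (4) handle $\pop{k}$ with $m\ge 2$; concluding that none of the forbidden possibilities can occur, which is exactly the statement. \qed
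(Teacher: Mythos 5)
Your conclusion is of course correct, and two of your three cases are essentially sound, though much more heavily armed than the paper's proof: the paper settles everything by a single observation about the image of the one-step history function (after an initial $\col{j}$ with $j\geq k$, the history of any position of $R(1)$ is either a simple position not pointing into $\TOP{j}(R(0))$ or a non-simple position of the form $\TOP{0}(R(0))\posNew{j}{\cdots}$, whereas Definition \ref{def:return} requires $\hist{R}{\TOP{k-1}(R(\lvert R\rvert))}$ to be a simple position strictly inside $\TOP{k}(R(0))$, hence inside $\TOP{j}(R(0))$; after an initial $\pop{j}$ the history is the identity on the surviving positions of $R(0)$). Your $\col{j}$ case follows this spirit, although the assertion that $\hist{R}{x}$ "would be forced to be $\TOP{k-1}(R(0))$" is not right --- that position points into $\TOP{j}(R(0))$ and so is also unreachable; the correct dichotomy is "non-simple, or simple but outside $\TOP{j}(R(0))$". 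Your $\pop{k}$ case is a slightly roundabout version of the paper's two-liner.

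The genuine gap is in the case $op=\pop{j}$ with $j>k$. Your "maximal $i\leq m-1$ with $\hist{\subrun{R}{i}{m}}{\TOP{k}(R(m))}=\TOP{k}(R(i))$" argument uses only facts that hold for \emph{every} $k$-return: the index $i=0$ always qualifies by Corollary \ref{cor:positionContainment}, regardless of the first operation. If that construction plus Corollary \ref{cor:NonTopmostStacksDoNotChange} yielded a contradiction on its own, it would prove that no $k$-return of length at least $2$ exists, which is false ($\push{1},\pop{1},\pop{1}$ is a $1$-return). So the hypothesis on the first operation must enter somewhere, and the place you put it --- the parenthetical claim that $1\leq i\leq m-1$ --- does not hold up: the clause "$i=m$ is excluded by $j>k$ changing the top $k$-stack at the last step" confuses the first operation with the last (Proposition \ref{prop:return-ends-k} concerns the last operation), the maximal $i$ can perfectly well be $0$ or $m-1$, and when $i=m-1$ variant \ref{ntsdnc-varb} of Corollary \ref{cor:NonTopmostStacksDoNotChange} is not even applicable because the subrun has length $1$. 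The repair is short and does use the hypothesis: since $k\leq j-1$, the second topmost $(k-1)$-stack of $\TOP{k}(R(0))$ lies inside the topmost $(j-1)$-stack of $R(0)$, which the initial $\pop{j}$ removes; as the history function of a $\pop{}$ is the identity and its image consists exactly of the surviving positions, $\hist{R}{\TOP{k-1}(R(\lvert R\rvert))}$, which factors through a position of $R(1)$, can never equal that removed position --- contradicting the first condition of Definition \ref{def:return}.
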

\begin{proof}
  Let $m:=|R|$.
  If the first operation of a run $R$ is $\col{j}$, $j\geq k$ then by
  definition of the history function $\hist{\subrun{R}{0}{1}}{x}$ does
  not point to any simple position inside $\TOP{j}(R(0))$ for all positions
  $x$ in $R(1)$.  
  Thus, also $\hist{R}{x}$ does not point to any simple position inside
  $\TOP{j}(R(0))$ for all positions $x$ in $R(m)$. 
  But if $R$ is a $k$-return, $\hist{R}{\TOP{k-1}(R(m))}$ is a simple position and points
  into $\TOP{k}(R(0))$ whence it also points into $\TOP{j}(R(0))$. 
  Analogously, one shows that $R$ does not start with $\pop{j}$ for $j>k$. 

  If a $k$-return $R$ starts with $\pop{k}$, it follows that
  $\hist{\subrun{R}{1}{m}}{\TOP{k-1}(R(m))}=\TOP{k-1}(R(1))$. 
  But this is not allowed if $1\leq m-1$. \qed
\end{proof}

\begin{proposition}
  \label{prop:ColreturnsStartwithPush}
  Let $R$ be a $k$-colreturn.
  The first operation of $R$ is a $\push{}$ or $\col k$.
  If the first operation of $R$ is $\col{k}$, then $\lvert R \rvert=1$.
\end{proposition}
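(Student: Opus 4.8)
The plan is to rule out, operation type by operation type, every possible first operation of $R$ other than a push or $\col{k}$, and to show that $\col{k}$ as first operation forces $|R|=1$; this parallels the proof of Proposition~\ref{prop:ReturnsStartwithPushOrSmallLevel}. Write $m:=|R|$, $y:=\TOP{k-1}(R(m))$ and $x_i:=\hist{\subrun{R}{i}{m}}{y}$ for $0\le i\le m$. By definition of a $k$-colreturn, $x_0=\hist{R}{y}=\TOP{0}(R(0))\posNew{k}{x}$ for some simple position $x$ (so $\NestingRank(x_0)=1$ and the unique link in $x_0$ has level $k$), and $x_i$ is not simple for $0\le i\le m-1$; in particular $m\ge 1$. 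By Proposition~\ref{prop:histTransitive} we have $x_0=\hist{\subrun{R}{0}{1}}{x_1}$, where $x_1$ is a position of $R(1)$. The whole argument is then a case distinction on the first operation $op$ of $R$ (the one performed by $\subrun{R}{0}{1}$), using only the length-$1$ clauses of the history function together with Remark~\ref{rem:ColisPop}.

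First I would dispose of $op=\pop{j}$: for such an $op$ the length-$1$ history function is the identity, so $x_1=x_0=\TOP{0}(R(0))\posNew{k}{x}$; but the simple prefix $\TOP{0}(R(0))$ of this position is out of range in $R(1)=\pop{j}(R(0))$ -- the topmost $j$-stack has become strictly shorter -- so $x_0$ is not a position of $R(1)$ at all, a contradiction.

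Next I would treat $op=\col{j}$. If the simple prefix of $x_1$ does not point into $\TOP{j}(R(1))$, the history again fixes $x_1$, and the same out-of-range argument applies, now using Remark~\ref{rem:ColisPop} to know that $\col{j}$ strictly shortens the topmost $j$-stack. Otherwise one of the two $\col{j}$-clauses of the history function applies, and inspecting them shows that $x_0$ is obtained from $x_1$ by replacing its simple prefix by $\TOP{0}(R(0))\posNew{j}{(\,\cdot\,)}$, while everything below that prefix is left untouched; hence $\NestingRank(x_0)=\NestingRank(x_1)+1$ and the leftmost link of $x_0$ has level $j$. Since $\NestingRank(x_0)=1$, this forces $\NestingRank(x_1)=0$, i.e.\ $x_1$ is simple, which -- by the colreturn condition that $x_i$ is not simple for $i<m$ -- gives $m=1$; and comparing the leftmost link of $x_0=\TOP{0}(R(0))\posNew{k}{x}$ yields $j=k$. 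Thus $op\ne\col{j}$ for $j\ne k$, and $op=\col{k}$ implies $m=1$. Since every operation of a CPS is a push, a pop, or a collapse, the only remaining possibilities for $op$ are the push operations and $\col{k}$, as required.

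I expect the only delicate point to be the bookkeeping in the $\col{j}$ case: one must verify that the two relevant clauses of the history function really do leave the part of the position below the simple prefix untouched and prepend exactly one level-$j$ link (so the nesting rank goes up by exactly one), and that whenever the first operation strictly shortens the topmost $j$-stack the old tuple $\TOP{0}(R(0))$ genuinely fails to be a position of $R(1)$ -- both routine, but they need to be spelled out precisely. No induction on $|R|$ is needed, since all the relevant behaviour occurs at the very first step of $R$.
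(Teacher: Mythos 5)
Your proof is correct and follows essentially the same route as the paper's: a case distinction on the first operation of $R$, using the length-one clauses of the history function together with the two defining conditions of a $k$-colreturn to kill $\pop{}$ and $\col{j}$ with $j\neq k$, and to force $\lvert R\rvert=1$ when the first operation is $\col{k}$. The only real difference is cosmetic: where you derive $j=k$ by tracking nesting ranks and link levels through the $\col{j}$ clauses of the history function, the paper short-circuits this by noting that since $\TOP{0}(R(0))\posNew{k}{x}$ is a position of $R(0)$, the topmost $0$-stack of $R(0)$ stores a level-$k$ link, so $\col{k}$ is the only collapse operation that is even applicable at $R(0)$.
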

\begin{proof}
  Let $m:=|R|$.
  If the first operation of a run $R$ is a $\pop{}$ then by
  definition of the history function $\hist{\subrun{R}{0}{1}}{x}$ does
  not point into $\TOP{0}(R(0))$ for all positions $x$ in $R(1)$.  
  Thus also $\hist{R}{x}$ does not point into $\TOP{0}(R(0))$ for all
  positions $x$ in $R(m)$, in particular for
  $x=\hist{R}{\TOP{k-1}(R(m))}$. This contradicts the definition of a
  $k$-colreturn.

  In $R(0)$ we have a position $\TOP{0}(R(0))\posNew{k}{x}$. Thus,  the
  only collapse operation which can be performed at $R(0)$ is a level
  $k$ collapse, i.e., $\col k$. 
  If $R$ starts with $\col k$, then
  $\hist{\subrun{R}{0}{1}}{y}=\TOP{0}(R(0))\posNew{k}{x}$ for some simple
  $x$ only if $y$ is simple. 
  We conclude that $\hist{\subrun{R}{1}{m}}{\TOP{k-1}(R(m))}$ is simple
  which implies $m=1$.\qed
\end{proof}

We state a last auxiliary lemma and then prove Lemmas
\ref{lem:return-wfrules} and \ref{lem:colreturn-wfrules}.

\begin{lemma} \label{lem:SecondReturnHistory}
  Let $R$ be a $k$-return of length at least $2$. 
  Let  $x$ be the position $\hist{\subrun{R}{1}{\lvert R
      \rvert}}{\TOP{k-1}(R(\lvert R \rvert))}$. Then  one of the
  following holds. 
  \begin{enumerate}
  \item $x$ points to the second topmost $(k-1)$-stack of $R(1)$ and
    the first stack operation of $R$ is of level strictly below $k$ or
    a $\push{j}$ for $j>k$.
  \item $x$ points to the third topmost $(k-1)$-stack of $R(1)$ and
    the first stack operation of $R$ is a push of level $k$.
  \item There is a $j>k$ such that $x$ points to the second topmost
    $(k-1)$-stack of the second 
    topmost $(j-1)$-stack of $R(1)$ and the first stack operation of
    $R$ is $\push{j}$.
  \item $x=\TOP0(R(1))\posNew{k}{\TOP{k-1}(u^k)}$, the first stack
    operation of $R$ is $\push{1}_{a,k}$ and the topmost $0$-stack of
    $R(1)$ is $(a,k,u^k)$. 
  \end{enumerate}
\end{lemma}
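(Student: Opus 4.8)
The plan is to trace the position $\TOP{k-1}(R(\lvert R\rvert))$ back through the run in two stages, exploiting that the history function respects composition of runs. Put $m:=\lvert R\rvert$ and split $R=R_1\circ R'$ with $R_1=\subrun{R}{0}{1}$ and $R'=\subrun{R}{1}{m}$, so that $x=\hist{R'}{\TOP{k-1}(R(m))}$ and, by Proposition \ref{prop:histTransitive}, $\hist{R}{\TOP{k-1}(R(m))}=\hist{R_1}{x}$. Since $R$ is a $k$-return, Definition \ref{def:return} gives two facts. First, $\hist{R}{\TOP{k-1}(R(m))}$ equals the position $q_0$ of the second topmost $(k-1)$-stack of the topmost $k$-stack of $R(0)$, i.e.\ the simple position obtained from $\TOP{k-1}(R(0))$ by lowering its level-$k$ coordinate by one (well defined because that topmost $k$-stack has size at least $2$). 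Second, instantiating the clause $\hist{\subrun{R}{i}{m}}{\TOP{k-1}(R(m))}\neq\TOP{k-1}(R(i))$ at $i=1$ (legal since $m\geq2$) gives $x\neq\TOP{k-1}(R(1))$. So the task reduces to determining, for each admissible first operation, every position $x$ of $R(1)$ with $\hist{R_1}{x}=q_0$ and $x\neq\TOP{k-1}(R(1))$, and matching the outcomes against the four alternatives.

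By Proposition \ref{prop:ReturnsStartwithPushOrSmallLevel} together with $m\geq2$, the operation performed by $R_1$ is of one of three kinds: of level strictly below $k$ (covering $\pop j$, $\col j$, $\push j$ with $j<k$, and also every $\push{1}_{a,l}$ when $k\geq2$); a $\push k$ (read as $\push{1}_{a,l}$ when $k=1$); or a $\push j$ with $j>k$. For each kind I would unfold the single-step clauses of the history function, record how the topmost stacks of $R(0)$ are re-arranged inside $R(1)$, and then read off the preimages of $q_0$.

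The case analysis then runs as follows. If $R_1$ has level below $k$, it alters only the topmost $(k-1)$-stack of $R(0)$ and leaves the number of $(k-1)$-stacks in the topmost $k$-stack unchanged; hence $q_0$ is again the position of the second topmost $(k-1)$-stack of $R(1)$, it lies outside the region $R_1$ modifies, so $\hist{R_1}{q_0}=q_0$ (cf.\ Proposition \ref{prop:NonTopmostStacksDoNotChange}), and an inspection of the history clauses shows it is the only preimage of the simple position $q_0$ distinct from $\TOP{k-1}(R(1))$ — any other such position points into the topmost $(k-1)$-stack of $R(1)$ and so gets a different level-$k$ coordinate or a non-simple history; thus $x=q_0$, which is Case~1. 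When $R_1$ additionally performs $\push{1}_{a,k}$ with $k\geq2$, the run $R'$ may instead have traced $\TOP{k-1}(R(m))$ into the freshly pushed link $(a,k,u^k)$, and then the $\push{1}$-clause of the history definition forces $x=\TOP0(R(1))\posNew{k}{\TOP{k-1}(u^k)}$ — a deeper nesting would make $\hist{R_1}{x}$ non-simple, contradicting simplicity of $q_0$ — which is Case~4. If $R_1$ performs $\push k$, it duplicates the topmost $(k-1)$-stack and pushes the former second topmost $(k-1)$-stack into third position, so $q_0$, viewed in $R(1)$, is the third topmost $(k-1)$-stack of $R(1)$, again outside the modified region; this is Case~2 (for $k=1$ the link variant of $\push{1}_{a,l}$ again also admits Case~4). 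Finally, if $R_1$ performs $\push j$ with $j>k$, it duplicates the topmost $(j-1)$-stack, the original topmost $(j-1)$-stack becoming the second topmost one; then $q_0$ in $R(1)$ points to the second topmost $(k-1)$-stack inside that second topmost $(j-1)$-stack, which is Case~3, whereas tracing into the fresh copy would lower a level-$(j-1)$ coordinate and so could never yield $q_0$.

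I expect the $\push{1}$ subcases to be the real work. A $\push{1}_{a,k}$ simultaneously copies the topmost $0$-stack and attaches a link into a truncation of the topmost $k$-stack, so the candidate preimages of $q_0$ come in three shapes — $\TOP0(R(1))$, a position entering the new link, and a still deeper nesting — and one must pin down exactly which of them is a simple position equal to $q_0$ and distinct from $\TOP{k-1}(R(1))$; the exclusion of the deeper nestings rests on $q_0$ being simple. The remainder is routine bookkeeping: recording how each single stack operation transforms $\TOP{k}(\cdot)$-positions and the sizes of the topmost stacks, which only needs to be carried out carefully operation by operation.
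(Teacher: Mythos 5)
Your overall strategy---decompose $R$ into its first step followed by $\subrun{R}{1}{\lvert R\rvert}$, use Proposition \ref{prop:histTransitive} to reduce the problem to enumerating the preimages of the second topmost $(k-1)$-stack of $R(0)$ under the one-step history function, and restrict the admissible first operations via Proposition \ref{prop:ReturnsStartwithPushOrSmallLevel}---is exactly the paper's, and most of your case analysis matches. However, there is a genuine error in the case where the first operation is $\push{j}$ with $j>k$. You claim the only preimage of $q_0$ is $q_0$ itself (now sitting in the second topmost $(j-1)$-stack of $R(1)$, giving Case~3), and that ``tracing into the fresh copy would lower a level-$(j-1)$ coordinate and so could never yield $q_0$.'' This is false: the new topmost $(j-1)$-stack created by $\push{j}$ is a copy of the old one, and the position of the second topmost $(k-1)$-stack \emph{inside that fresh copy}---which is precisely the second topmost $(k-1)$-stack of $R(1)$, since the topmost $k$-stack of $R(1)$ lies in the fresh copy and has size at least $2$---also has history $q_0$. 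Hence for a first operation $\push{j}$ with $j>k$ there are \emph{two} admissible outcomes, Case~3 \emph{and} Case~1; this is exactly why Case~1 of the lemma explicitly lists ``a $\push{j}$ for $j>k$'' among the possible first operations, and why the return grammar contains the rule $\Rule{\Rr_{k,j,y}}{\delta^{j_0}_{y_0}\Rr_{k,j_1,y_1}}$ with $j_0>k$ (a $k$-return may start with a higher-level push and continue with a $k$-return).

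A concrete counterexample to your claim: the run $R$ of length $2$ performing $\push{2},\ \pop{1}$ is a $1$-return (its final topmost $0$-stack traces back to the second topmost $0$-stack of $R(0)$, and the intermediate condition holds). Here $x=\hist{\subrun{R}{1}{2}}{\TOP{0}(R(2))}$ is the second topmost $0$-stack of $R(1)$, which sits inside the freshly pushed $1$-stack---not the position your Case~3 analysis produces. As written, your argument asserts that such an $x$ cannot occur and therefore never assigns it to one of the four alternatives, so the $\push{j}$, $j>k$ branch must be redone to admit both preimages (one yielding Case~1, the other Case~3). The remaining branches (operations of level below $k$, the $\push{k}$ case, and the two link subcases of $\push{1}_{a,k}$) are sound and agree with the paper's proof.
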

\begin{proof}
  Since $R$ is a $k$-return, Proposition \ref{prop:histTransitive} 
  implies that
  $\hist{\subrun{R}{0}{1}}{x}$ points to the second topmost
  $(k-1)$-stack of $R(0)$. 
  We proceed by case distinction on the stack operation of
  $S:=\subrun{R}{0}{1}$. 
  Due to Proposition \ref{prop:ReturnsStartwithPushOrSmallLevel} we
  only have to consider the following cases. 
  \begin{itemize}
  \item 
    Assume that $S$ performs a $\pop{j}$ operation for $j<k$, or a
    $\col{j}$ operation for $j<k$, or a $\push{j}$ operation for
    $2\leq j<k$, or a $\push1_{a,j}$ operation for $j\neq k>1$.
    Then $x$ necessarily points to the second topmost $(k-1)$-stack of
    $R(1)$ (because $S$ makes changes only inside the topmost
    $(k-1)$-stack of $R(0)$). 
  \item
    Assume that $S$ performs a $\push{k}$ for $k\geq 2$, or a
    $\push1_{a,j}$ for $j\neq k=1$.
    Then $x$ necessarily points to the third topmost $(k-1)$-stack of $R(1)$.
  \item
    Assume that $S$ performs a $\push{j}$ operation with $j > k$.
    Then either $x$ points to the second topmost $(k-1)$-stack of
    $R(1)$, or to the second topmost $(k-1)$-stack of the second
    topmost $(j-1)$-stack of $R(1)$. 
  \item
    Assume that $S$ performs $\push{1}_{a,k}$, and $k\geq 2$.
    Then either $x$ points to the second topmost $(k-1)$-stack of $R(1)$, 
    or $x$ is of the form $\TOP0(R(1))\posNew{k}{\TOP{k-1}(u^k)}$
    where $u^k$ is the $k$-stack stored in the topmost $0$-stack of
    $R(1)$. 
  \item
    Assume that $S$ performs $\push{1}_{a,k}$, and $k=1$.
    Then either $x$ points to the third topmost $(k-1)$-stack of $R(1)$,
    or $x$ is of the form $\TOP0(R(1))\posNew{k}{\TOP{k-1}(u^k)}$
    where $u^k$ is the $k$-stack stored in the topmost $0$-stack of
    $R(1)$. \qed
  \end{itemize}
\end{proof}

Now we are prepared to prove Lemma \ref{lem:return-wfrules}, i.e., the
decomposition of returns into one transition followed by shorter
returns or colreturns.

\begin{proof}[of Lemma \ref{lem:return-wfrules}]
  We first show that every return decomposes as required.
  Let $R$ be a $k$-return of length $m$ (by definition $m\geq 1$). Set
  $S:=\subrun{R}{0}{1}$. 
  When $S$ performs a $\pop{k}$ operation, and $m=1$, we immediately
  get case one. 
  Otherwise, 
  we proceed by distinction of the cases of Lemma
  \ref{lem:SecondReturnHistory} for the position
  $x:=\hist{\subrun{R}{1}{|R|}}{\TOP{k-1}(R(m))}$. 
  \begin{itemize}
  \item
    Assume that $x$ points to the second topmost $(k-1)$-stack of $R(1)$.
    Then $\subrun{R}{1}{m}$ is easily seen to be a $k$-return; we get
    case 2 or case 4. 
  \item
    Assume that $x$ points to the third topmost $(k-1)$-stack of $R(1)$.
    Then the operation in $S$ was $\push{k}$ (or $\push{1}_{a,j}$ for $k=1$). 
    \begin{claim}
      There is  some $1<i<m$ such  that
      $\hist{\subrun{R}{i}{m}}{\TOP{k-1}(R(m))}$ points 
      to the second topmost $(k-1)$-stack of $R(i)$.       
    \end{claim}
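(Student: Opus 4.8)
The plan is to prove the claim by analysing how the topmost $(k-1)$-stack of $R(m)$ came into existence along the run. We are in the situation that $R$ is a $k$-return of length $m\geq 2$, the first operation $S=\subrun{R}{0}{1}$ is a $\push{k}$ (or $\push{1}_{a,j}$ if $k=1$), and $x:=\hist{\subrun{R}{1}{m}}{\TOP{k-1}(R(m))}$ points to the \emph{third} topmost $(k-1)$-stack of $R(1)$. Since $S$ is a $\push{k}$ applied to $R(0)$, the topmost $k$-stack of $R(1)$ has size one greater than that of $R(0)$, and its third-from-top $(k-1)$-stack equals the second-from-top $(k-1)$-stack of $R(0)$; this is consistent with the definition of a $k$-return (whose history ultimately points to the second topmost $(k-1)$-stack of $R(0)$). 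The key observation is that the function $i\mapsto\hist{\subrun{R}{i}{m}}{\TOP{k-1}(R(m))}$ has the property that at $i=1$ it points to the third topmost $(k-1)$-stack of $R(1)$, while (by the definition of a $k$-return applied at $i=m-1$ together with Proposition~\ref{prop:return-ends-k}) at $i=m-1$ it does \emph{not} point to the topmost $(k-1)$-stack but the very last step brings it to the top. So I would track how the ``depth from the top'' of this history position evolves and argue it must pass through the value ``second topmost'' at some intermediate index.

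Concretely, first I would fix $x_i:=\hist{\subrun{R}{i}{m}}{\TOP{k-1}(R(m))}$ for $0\le i\le m$ and note $x_m=\TOP{k-1}(R(m))$, $x_0$ points to the second topmost $(k-1)$-stack of $\TOP{k}(R(0))$ (definition of a $k$-return), and $x_i\ne\TOP{k-1}(R(i))$ for $1\le i\le m-1$ (again the definition). By Proposition~\ref{prop:histTransitive} the positions $x_i$ are linked by single-step history maps. I would then consider whether, along the way, $x_i$ ever stays inside the topmost $k$-stack of $R(i)$ as a \emph{simple} position pointing to a non-top $(k-1)$-stack of it. At $i=1$ it points to the third topmost such stack; I want to find the first index $i>1$ where it points to the second topmost one. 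Using Corollary~\ref{cor:NonTopmostStacksDoNotChange} and Proposition~\ref{prop:Neihbour-position-lemma}, as long as $x_i$ points to a non-topmost $(k-1)$-stack in the topmost $k$-stack of $R(i)$ and the operation of $\subrun{R}{i}{i+1}$ is applied ``above'' it, the position is merely shifted by the size change of that $k$-stack. The point is that between $i=1$ and $i=m-1$ the history position must move from the third-topmost slot to (just before the final collapse/pop) a slot that is exactly second-topmost: the last step $\subrun{R}{m-1}{m}$ is a $\pop{k}$ or $\col{k}$ (Proposition~\ref{prop:return-ends-k}) whose effect on $\TOP{k-1}(R(m))$ is that its history $x_{m-1}$ is the \emph{second} topmost $(k-1)$-stack of $R(m-1)$ in the $\pop{k}$ case, or points into a linked copy in the $\col{k}$ case — but the latter is excluded here because then $\hist{R}{\TOP{k-1}(R(m))}$ would not be a simple position pointing into $\TOP{k}(R(0))$, contradicting that $R$ is a $k$-return (which requires this history to be the honest second-topmost $(k-1)$-stack, not one reached via a stored link). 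Hence $R$ ends in $\pop{k}$ and $x_{m-1}$ is the second topmost $(k-1)$-stack of $R(m-1)$.

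So I would argue: take $i$ to be the \emph{largest} index in $\{1,\dots,m-1\}$ such that $x_i$ points to the third (or deeper) topmost $(k-1)$-stack of $R(i)$; such an index exists since $i=1$ qualifies, and $i<m-1$ since $x_{m-1}$ is second-topmost. Then $x_{i+1}$ is second-topmost (it cannot be topmost, that is forbidden for $1\le i+1\le m-1$ since $i+1\le m-1$), which is exactly what the claim asserts — with $i+1$ in place of $i$. To make the step from ``deeper-than-third at index $i$'' to ``second at $i+1$'' rigorous, I would note that a single stack operation on the topmost $k$-stack can decrease the distance-from-top of a fixed non-top $(k-1)$-stack by at most one (a $\pop{k}$ or $\col{k}$ removing exactly the $(k-1)$-stacks above it; note a $\col{k}$ applied inside the topmost $k$-stack when the link is not of level $k$ acts like a sequence of $\pop{k}$'s on that stack, by Remark~\ref{rem:ColisPop}, and cannot jump over our tracked stack without making it the topmost one), and by Proposition~\ref{prop:Neihbour-position-lemma} a $\push{k}$ above it only increases the distance. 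Combining maximality of $i$ with the fact that at index $i$ the distance is $\ge 3$ and at $i+1$ it is $\le 2$ but $\ne 1$ (forbidden), it must be exactly $2$, giving the claim.

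\medskip
The main obstacle I anticipate is handling the $\col{j}$ operations cleanly: a collapse in the middle of the run could in principle move the tracked position by more than one ``slot'' at once, or move it out of the current topmost $k$-stack altogether. The resolution is that, because our tracked position $x_i$ is \emph{simple} and points inside the topmost $k$-stack throughout the relevant range (this is itself something to verify — it follows because a non-simple history would force a stored link to have been used, and Lemma~\ref{lem:SecondReturnHistory} plus the structure of the first operation pin down that in this case the history stays simple and inside $\TOP{k}$), any $\col{j}$ with $j\ge k$ occurring at step $i$ would, by the definition of the history function, make $\hist{\subrun{R}{i}{m}}{\cdot}$ fail to point into $\TOP{j}(R(i))\supseteq$ our simple position — impossible; and $\col{j}$ with $j<k$ only touches the interior of a single $(k-1)$-stack. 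So the only operations that change the distance-from-top of our tracked $(k-1)$-stack are $\pop{k}$ (distance down by one, unless it would make it topmost, which is forbidden) and $\push{k}$ (distance up by one). This confines the movement to single steps and makes the ``intermediate value'' argument go through.
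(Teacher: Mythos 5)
Your argument breaks on the collapse case, in two related places.

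First, you exclude the possibility that the last operation of $R$ is $\col{k}$ on the grounds that the collapse would make $\hist{R}{\TOP{k-1}(R(m))}$ non-simple. That is not so: if the link used by the final $\col{k}$ is a copy of a link created \emph{during} the run by a $\push{1}_{a,k}$, then the history becomes non-simple at step $m-1$ but turns simple again when it is traced back past that $\push{1}_{a,k}$, and it can perfectly well end up at the honest second topmost $(k-1)$-stack of $R(0)$. This is exactly the situation the $k$-colreturns are designed to capture, and the rule $\Rule{\Rr_{k,j,y}}{\delta_{y_0}\Cc_{k,j,y_1}}$ produces $k$-returns ending in $\col{k}$. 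So the case you discard is a real case, and it is the one the paper's proof spends most of its effort on.

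Second, and independently, the intermediate-value argument on the ``depth from the top'' of $x_i:=\hist{\subrun{R}{i}{m}}{\TOP{k-1}(R(m))}$ presupposes that each $x_i$ is a simple position inside the topmost $k$-stack of $R(i)$. This fails even when $R$ ends in $\pop{k}$: for the $k$-return performing $\push{k},\ \push{1}_{a,k},\ \col{k},\ \pop{k}$ (which lies in your branch: first operation $\push{k}$, $x_1$ third topmost), the position $x_2$ has the form $\TOP{0}(R(2))\posNew{k}{\cdot}$ and is not simple, so ``distance from the top'' is undefined there. Your recipe — take the largest $i$ with $x_i$ third-or-deeper, conclude $x_{i+1}$ is second topmost — would here select $i=1$ and assert that $x_2$ is the second topmost $(k-1)$-stack, which is false (the correct witness is $i=3$). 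The missing idea is precisely the one the paper uses: instead of tracking depth, take the \emph{last} index $i$ at which $\hist{\subrun{R}{i}{m}}{\TOP{k-1}(R(m))}$ is simple and apply Corollary \ref{cor:PushColStackdifference} to $\subrun{R}{i}{m}$, which forces that subrun to be a $\push{1}_{a,k}$--$\col{k}$ pair in the sense of the corollary and places the history exactly at the second topmost $(k-1)$-stack of $R(i)$.
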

    Under the assumption that this claim holds, 
    choose the minimal such $i$.
    From the choice of $i$ it follows immediately that
    $U:=\subrun{R}{i}{m}$ is a $k$-return.
    We show that $T:=\subrun{R}{1}{i}$ is also a $k$-return whence
    $R$ decomposes as in case 5 of the lemma.  
    Indeed, $\hist{T}{\hist{U}{\TOP{k-1}(R(m))}}$ is the third
    topmost $(k-1)$-stack of $R(1)$.  
    Since $\TOP{k-1}(R(i))$ is the $(k-1)$-stack directly on top of
    $\hist{U}{\TOP{k-1}(R(i))}$, 
    and because $\hist{\subrun{R}{j}{m}}{\TOP{k-1}(R(m))}$ is not the
    topmost $(k-1)$-stack of $R(j)$ for $0\leq j<m$ (definition of
    $k$-return), 
    we can apply Proposition \ref{prop:Neihbour-position-lemma}
    (variant \ref{npl-vara}) and 
    conclude that 
    $\hist{T}{\TOP{k-1}(R(i))}$ is the second topmost $(k-1)$-stack
    of $R(1)$. 
    By the same proposition, if $\hist{T'}{\TOP{k-1}(R(i))}$ is the topmost
    $(k-1)$-stack of $T'(0)$ for some proper suffix $T'$ of $T$, then  
    $\hist{T'\circ U}{\TOP{k-1}(R(m))}$ is the second topmost
    $(k-1)$-stack of $T'(0)$ which contradicts the minimality of
    $i$. 
    Thus, $T$ is a $k$-return and we showed that $R$ decomposes as
    described in case $5$ of the lemma. We conclude this case by
    proving the claim.
    \begin{proof}[of Claim]
      If the operation leading to $R(m)$ is $\pop k$, $i=m-1$ is a good
      candidate. 
      Otherwise (Proposition \ref{prop:return-ends-k}), this operation
      is $\col k$. 
      Then $\hist{\subrun{R}{m-1}{m}}{\TOP{k-1}(R(m))}$ is not simple.
      Let $i<m-1$ be the last index for which
      $\hist{\subrun{R}{i}{m}}{\TOP{k-1}(R(m))}$ is simple again (such
      $i$ exists because $i=0$ is a good candidate). 
      From Corollary \ref{cor:PushColStackdifference}, applied to
      $\subrun{R}{i}{m}$, we immediately obtain that
      $\hist{\subrun{R}{i}{m}}{\TOP{k-1}(R(m))}$ is the second topmost
      $(k-1)$-stack of $R(i)$, 
      so $i$ is a good candidate. 
    \end{proof}
  \item 
    Assume that $S$ performs a $\push{j}$ operation with $j > k$, and
    $x$ points to the second topmost $(k-1)$-stack of the second
    topmost $(j-1)$-stack of $R(1)$. 
    Let $1< i \leq m$ be minimal such that  
    for  $T:=\subrun{R}{1}{i}$ and $U:=\subrun{R}{i}{m}$ 
    we have $\hist{U}{\TOP{j-1}(R(m))}=\TOP{j-1}(R(i))$.
    We show that $T$ is a $j$-return and $U$ is a $k$-return whence we
    are in case 5. 
    
    Due to the  minimality of $i$ all proper suffixes $T'$ of $T$
    satisfy the inequality
    $\hist{T'}{\TOP{j-1}(T(\lvert T \rvert))}\neq \TOP{j-1}(T'(0))$. 
    Due to Corollary \ref{cor:positionContainment}, the 
    position $\hist{T \circ U}{\TOP{k-1}(R(m))}$ points
    into  
    \begin{equation*}
      \hist{T}{\TOP{j-1}(R(i))}=\hist{T\circ U}{\TOP{j-1}(R(m))}.       
    \end{equation*}
    Thus, $\hist{T}{\TOP{j-1}(R(i))}$ points to the second topmost
    $(j-1)$-stack of 
    $T(0)=R(1)$ and we conclude that $T$ is a $j$-return.
    
    Due to Corollary \ref{cor:positionContainment}, we know that
    $\hist{U}{\TOP{k-1}(R(m))}$ points into
    \begin{equation*}
      \hist{U}{\TOP{j-1}(R(m))}=\TOP{j-1}(R(i)).      
    \end{equation*}
    On the other hand, by Corollary \ref{cor:push+return}, the only
    position $x'$ in the topmost $(j-1)$-stack of $R(i)$ for which
    $\hist{\subrun{R}{0}{i}}{x'}$ points to the second topmost
    $(k-1)$-stack of $R(0)$ 
    is $x$ pointing to the second topmost $(k-1)$-stack of $R(i)$.

    By Proposition \ref{prop:histTransitive} we conclude that
    $\hist{U}{\TOP{k-1}(R(m))}$ points to the second
    topmost $(k-1)$-stack,  
    hence $U$ is a $k$-return.
  \item
    Finally, assume that $S$ performs $\push{1}_{a,k}$, and 
    $x$ is of the form $\TOP0(R(1))\posNew{k}{\TOP{k-1}(u^k)}$ where
    $u^k$ is the $k$-stack stored in the topmost $0$-stack of
    $R(1)$. 
    Let $i>1$ be minimal such that
    $\hist{\subrun{R}{i}{m}}{\TOP{k-1}(R(m))}$ is simple. 
    Recall that $\hist{R}{\TOP{k-1}(R(m))}$ points to the second
    topmost $(k-1)$-stack of $R(0)$. 
    From Corollary \ref{cor:PushColStackdifference}, applied to
    $\subrun{R}{0}{i}$, we see that
    $\hist{\subrun{R}{i}{m}}{\TOP{k-1}(R(m))}=\TOP{k-1}(R(i))$. 
    Since $R$ is a return, this implies $i=m$ and
    due to the minimality of $i$, we conclude directly that
    $\subrun{R}{1}{i}$ is a $k$-colreturn. 
  \end{itemize}
  
  This concludes the proof that every $k$-return decomposes as required by
  the lemma. 
   
  It is left to show that every run that decomposes as described by the
  lemma is a $k$-return. Let $R$ be some run of length $m$. There are the
  following cases. 
  \begin{enumerate}
    \item If $\lvert R \rvert =1$ and it performs $\pop{k}$, the
      definition of $\mathsf{hist}$ implies that $R$ is a $k$-return.
    \item Assume that $R$ starts with an operation of level at most $k-1$, and
      continues with a 
      $k$-return. Since history preserves positions of $(k-1)$-stacks
      under operations of level at most $k-1$, and such operations also preserve the
      existence of all $(k-1)$-stacks, the conditions for $R$ being a
      return are trivially deduced from the fact that
      $\subrun{R}{1}{m}$ is a return. 
    \item Assume that $R$ starts with $\push{1}_{a,k}$ and continues
      with a $k$-colreturn. By definition of a colreturn, the position
      $\hist{\subrun{R}{i}{m}}{\TOP{k-1}(R(m))}$ is not simple for all
      $1\leq i< m$, whence this position is not $\TOP{k-1}(R(i))$. 
      Furthermore, $\hist{\subrun{R}{1}{m}}{\TOP{k-1}(R(m))}$ points into $\TOP0(R(1))$ and has nesting rank $1$, so $\hist{R}{\TOP{k-1}(R(m))}$ is simple.
      By Corollary \ref{cor:PushColStackdifference} it follows that 
      $\hist{R}{\TOP{k-1}(R(m))}$ is the second topmost $(k-1)$-stack
      of $R(0)$. 
    \item Assume that $R$ starts with a $\push{j}$ for $j>k$
      and continues with a $k$-return. Then we conclude similar to the second case. 
    \item Assume that $R$ starts with a $\push{j}$ for $j\geq k$
      (including 
      $\push1_{a,l}$ for $j=1$) and decomposes as 
      $R=S \circ T \circ U$ where $S$ has length $1$, $T$ is a
      $j$-return and $U$ is a $k$-return.
      We know that $x:=\hist{U}{\TOP{k-1}(R(m))}$ is the
      second topmost $(k-1)$-stack of $U(0)$. 
      Corollary \ref{cor:push+return} applied for run $S\circ T$ implies
      that $\hist{R}{\TOP{k-1}(R(m)),R)}$ ($=\hist{S\circ T}{x}$) is
      the second topmost $(k-1)$-stack of $R(0)$. 
      
      For $j=k$, we know that
      $\hist{T'}{\TOP{k-1}(U(0))}\neq \TOP{k-1}(T'(0))$ for every
      suffix $T'$ of $T$ of positive length. 
      We apply Proposition \ref{prop:Neihbour-position-lemma} (variant \ref{npl-varb}) to
      $\TOP{k-1}(U(0))$ and $x$ (the second topmost $(k-1)$-stack of
      $U(0)$ and obtain 
      that $\hist{T'\circ
        U}{\TOP{k-1}(R(m))}\neq\TOP{k-1}(T'(0))$ for every suffix
      $T'$ of $T$. 
      From this we conclude directly that $R$ is a $k$-return. 
      
      For $j>k$, we also see that 
      $\hist{T'\circ U}{\TOP{k-1}(R(m))}\neq \TOP{k-1}(T'(0))$ for
      every suffix  
      $T'$ of $T$ of positive length. 
      Indeed, if $\hist{T'\circ U}{\TOP{k-1}(R(m))}=\TOP{k-1}(T'(0))$,
      then also $\hist{T'\circ U}{\TOP{j-1}(R(m))}=\TOP{j-1}(T'(0))$
      (Corollary \ref{cor:positionContainment})
      which is impossible because $T$ is a $j$-return.
      Again, it is easy to conclude that $R$ is a $k$-return. \qed
    \end{enumerate} 
\end{proof}

Similarly, we now prove the decomposition of colreturns into one
transition followed by shorter returns or colreturns.

\begin{proof}[of Lemma \ref{lem:colreturn-wfrules}]
  We first show that every $k$-colreturn $R$ decomposes as
  described by the lemma.  
  Set $S:=\subrun{R}{0}{1}$ (the definition of a $k$-colreturn
  requires  $\lvert R\rvert \geq 1$).
  If $R$ performs $\col k$ and $|R|=1$ we are in case 1 of Lemma
  \ref{lem:colreturn-wfrules}.
  Otherwise, due to Proposition \ref{prop:ColreturnsStartwithPush},
  the operation in $S$ is $\push{}$. 
  As in the return case, we look at
  $x:=\hist{\subrun{R}{1}{|R|}}{\TOP{k-1}(R(m))}$. 
  By definition of a $k$-colreturn, 
  $\hist{R}{\TOP{k-1}(R(m))}$ is of the form
  $\TOP{0}(R(0))\posNew{k}{x'}$
  for some simple position $x'$.
  By Proposition \ref{prop:histTransitive}, we know that
  $\hist{S}{x}=\hist{R}{\TOP{k-1}(R(m))} =
  \TOP{0}(R(0))\posNew{k}{x'}$. 
  By case distinction on the possible $x$ satisfying this equation (in
  dependence of the operation performed by $S$, 
  there are the following possibilities.
  \begin{enumerate}
  \item 
    $S$ performs a $\push{j}$ for $j\geq 2$  and 
    $x=\TOP{0}(R(1))\posNew{k}{x_1}$ for some simple $x_1$.
    In this case, it is straightforward to see that $\subrun{R}{1}{m}$ is a
    $k$-colreturn. 
  \item Otherwise, 
    $S$ performs a $\push{}$ operation of level $j$ and 
    $x=\TOP{0}(R(0))\posNew{k}{x_1}$ for some simple $x_1$.
    Notice that $\TOP{0}(R(0))$ is the topmost $0$-stack of the second topmost $(j-1)$-stack of $R(1)$.
    Recall that the last operation of $R$ is $\col k$ whence 
    $\hist{\subrun{R}{m-1}{m}}{\TOP{k-1}(R(m))}$ points into the
    topmost $0$-stack of $R(m-1)$ and has nesting rank $1$.
    Let $1<i<m$ be minimal such that 
    $\hist{\subrun{R}{i}{m}}{\TOP{k-1}(R(m))}$ has nesting rank $1$ and points into the topmost $(j-1)$-stack of $R(i)$.
    Let $T:=\subrun{R}{1}{i}$, and let
    $\hist{\subrun{R}{i}{m}}{\TOP{k-1}(R(m))}=y\posNew{k'}{y'}$. 
    Due to Proposition \ref{prop:histPreservesLinkNesting} (applied for run $T$),
    we have $k'=k$, and $\hist{T}{y}=\TOP{0}(R(0))$. 
    Due to Corollary \ref{cor:positionContainment} and since $y$ points
    into $\TOP{j-1}(R(i))$, $\hist{T}{\TOP{j-1}(R(i))}$ contains
    $\TOP{0}(R(0))$
    whence it points to the second topmost $(j-1)$-stack of $R(1)$.
    The same corollary and the minimality of $i$ implies that for each
    suffix $T'$ of $T$ of length at least $1$ we have
    $\hist{T'}{\TOP{j-1}(R(i))}\neq \TOP{j-1}(T'(0))$.  
    Thus, $T$ is a $j$-return. 
    
    Let $U:=\subrun{R}{i}{m}$
    We know that $\hist{S\circ T}{y}=\TOP{0}(R(0))$, and that $y$ is
    in the topmost $(j-1)$-stack of $R(i)$. 
    By Corollary \ref{cor:push+return}, the only $y$ satisfying this
    is $y=\TOP{0}(R(i))$. 
    It follows that $U$ is a $k$-colreturn.
  \end{enumerate}
  
 It is left to show that every run $R$ that decomposes as described by the
 lemma is a $k$-colreturn. 
 In the first two cases we immediately see that $R$ is a $k$-colreturn.
 So assume that $R=S\circ T\circ U$ where $S$ has length $1$ and performs a $\push{j}$ (including $\push1_{a,l}$ for $j=1$),
 $T$ is a $j$-return and $U$ is a $k$-colreturn.
 Let $m:=|R|$.
 As $\hist{U}{\TOP{k-1}(R(m))}$ is of the form
 $\TOP0(U(0))\posNew{k}x$ for simple $x$,  
 by Corollary \ref{cor:push+return} we immediately obtain that $\hist{R}{\TOP{k-1}(R(m))}=\TOP0(R(0))\posNew{k}x$.
 
 In order to prove that $R$ is a $k$-colreturn, we still have to show
 that the position
 $\hist{\subrun{T}{i}{|T|}}{\TOP0(U(0))\posNew{k}x}$ is not simple for
 all $0\leq i\leq |T|$. 
 Heading for a contradiction, assume that there is a greatest index
 $i$ for which the position
 $\hist{\subrun{T}{i}{|T|}}{\TOP0(U(0))\posNew{k}x}$ is simple. 
 Trivially $i<|T|$. 
 If $i=|T|-1$, the last operation of $T$ has to be $\push1_{a,k}$,
 which is impossible in a $j$-return. 
 So $i\leq |T|-2$.
 As $\hist{\subrun{T}{i+1}{|T|}}{\TOP0(U(0))\posNew{k}x}$ is not
 simple (maximality of $i$), it has to be of the form 
 $\TOP0(T(i+1))\posNew{k'}x'$. 
 Proposition \ref{prop:histPreservesLinkNesting} applied to
 $\subrun{T}{i+1}{|T|}$ implies that $k'=k$ and
 $\hist{\subrun{T}{i+1}{|T|}}{\TOP0(U(0))}=\TOP0(T(i+1))$. 
 Due to Corollary \ref{cor:positionContainment}, also
 $\hist{\subrun{T}{i+1}{|T|}}{\TOP{j-1}(U(0))}=\TOP{j-1}(T(i+1))$.
 This is impossible because $T$ is a $j$-return.
\end{proof}

Up to now, we have only dealt with the general shape of $k$-returns
and $k$-colreturns. In Section \ref{sec:Runs} we divided these sets
further according to their \emph{change level}. We next formally
introduce this change level for every $k$-return and $k$-colreturn and
then complete the proof that the rules given in Section \ref{sec:Runs} correctly describe returns and colreturns. 
The 
\emph{change level} keeps track of the maximal level on
which the stack size was changed by the run $R$.
\begin{definition}
  Let $R$ be a $k$-return or a $k$-colreturn. 
  For $1\leq i\leq n$, let $x^i$ be the size of the topmost $i$-stack
  of $R(0)$; similarly $y^i$ for $R(|R|)$. 
  Then $\changelev(R):=\max\{ i: x^i\neq y^i\}$.
\end{definition}
\begin{remark}\label{rem:chl-big}
  If $R$ is a $k$-return ($k$-colreturn), then Proposition
  \ref{prop:return-removes} (Lemma \ref{lem:ColreturnDesreases},
  respectively) implies that  
  the size of topmost $k$-stack of $R(0)$ and of $R(|R|)$ is
  different, so $\changelev(R)\geq k$. 
\end{remark}

We now give a characterisation of the change level of a $k$-return or
$k$-colreturn depending on the change level(s) of the subruns
occurring in its decomposition.

\begin{lemma} \label{lem:Change-level-Return}
  Let $R$ be a $k$-return or a $k$-colreturn. 
  \begin{enumerate}
  \item If $ \lvert R \rvert = 1$, then $\changelev(R)=k$,
  \item If $R$ decomposes as $R=S \circ T$, where $S$ of length $1$
    performs an operation of level $j$, and $T$ is a $k$-return or a
    $k$-colreturn, then $\changelev(R)=\max\{j,\changelev(T)\})$. 
  \item If $R$ decomposes as $R= S\circ T\circ U$ where $S$ of length
    $1$ performs a $\push{j}$ operation (including $\push1_{a,l}$ for
    $j=k=1$), $T$ is a $j$-return, and 
    $U$ is a $k$-return or a $k$-colreturn, then
    $$\changelev(R) =
    \begin{cases}
      \changelev(U) &\text{if } \changelev(T)=j\\
      \max\{\changelev(T), \changelev(U)\} &\text{otherwise.}
    \end{cases}
    $$
  \end{enumerate}
\end{lemma}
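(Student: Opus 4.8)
The plan is to prove each of the three cases by analysing how the sizes of the topmost $i$-stacks evolve along the decomposition, using the history-function machinery from Appendix \ref{app:History}. For case (1), a run of length $1$ that is a $k$-return performs $\pop k$ or $\col k$ (Proposition \ref{prop:return-ends-k}), and a $k$-colreturn of length $1$ performs $\col k$ (by its definition); in either case the operation has level $k$, so it changes only the size of the topmost $k$-stack (an operation of level $k$ leaves all topmost $j$-stacks for $j>k$ untouched, and Remark \ref{rem:chl-big} already guarantees the topmost $k$-stack changes size), hence $\changelev(R)=k$.

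For case (2), write $R = S\circ T$ with $|S|=1$ of level $j$ and $T$ a $k$-return or $k$-colreturn. The point is that the set of levels whose topmost stack changes size between $R(0)$ and $R(|R|)$ is controlled by the union of the levels changed by $S$ and those changed by $T$: the operation $S$ changes only the topmost $j$-stack size (for $j\ge 2$) or the topmost $1$- and $0$-stack sizes (for a $\push1$), and between the two configurations of $S$ all topmost $i$-stacks for $i>j$ coincide. I would argue that for each level $i > \max\{j,\changelev(T)\}$ the topmost $i$-stack of $R(0)$ equals that of $S(\lvert S\rvert)=T(0)$ (nothing of level $>j$ moved) which equals that of $T(\lvert T\rvert)=R(\lvert R\rvert)$ (nothing of level $>\changelev(T)$ moved), so $\changelev(R)\le\max\{j,\changelev(T)\}$. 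For the reverse inequality I would exhibit a level at which a size change survives: if $j>\changelev(T)$, the level-$j$ change made by $S$ is not undone by $T$; if $j\le\changelev(T)$, the level-$\changelev(T)$ change made by $T$ cannot be affected by $S$ since $S$ acts only at level $\le j\le\changelev(T)$, and if $j=\changelev(T)$ one still has to check the two changes do not cancel — here the structure of $T$ as a return (which, by Proposition \ref{prop:return-removes} resp.\ Lemma \ref{lem:ColreturnDesreases}, strictly decreases the topmost $k$-stack and more generally has a one-directional effect on the relevant stack sizes) is what one invokes. The cleanest route is probably to observe directly that the topmost $j$-stack of $R(\lvert R\rvert)$ is obtained from that of $R(0)$ by first a push/pop/collapse ($S$) and then a return-type run ($T$), and appeal to the already-established semantic description of what a return does to argue no cancellation at level $j$ occurs.

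For case (3), write $R = S\circ T\circ U$ with $S$ a $\push j$ (or $\push1_{a,l}$ when $j=k=1$), $T$ a $j$-return, and $U$ a $k$-return or $k$-colreturn. By Corollary \ref{cor:push+return}, the composition $S\circ T$ leaves the topmost $j$-stack unchanged — so $S\circ T$ changes only topmost $i$-stacks for $i$ in the set of levels changed by $T$ that lie strictly \emph{above} $j$ (the level-$j$ change of $S$ is exactly undone by $T$, and $T$ being a $j$-return changes the topmost $j$-stack size by $-1$ while $S$ changed it by $+1$). Concretely, $\changelev(S\circ T)$ equals $\changelev(T)$ if $\changelev(T)>j$ and is $<j$ (in fact all topmost $i$-stacks for $i\ge j$ agree between $R(0)$ and $T(\lvert T\rvert)$) if $\changelev(T)=j$. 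Then apply case (2) reasoning to the decomposition $R=(S\circ T)\circ U$: $\changelev(R)=\max\{\changelev(S\circ T),\changelev(U)\}$, using the same no-cancellation argument as in case (2) (the relevant overlap level is $\le j\le k\le\changelev(U)$, so $U$'s change at level $\changelev(U)$ is not touched). Substituting the two subcases for $\changelev(S\circ T)$ gives exactly the stated formula: when $\changelev(T)=j$ we get $\changelev(R)=\changelev(U)$, and otherwise $\changelev(R)=\max\{\changelev(T),\changelev(U)\}$.

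I expect the main obstacle to be the ``no cancellation'' step — rigorously ruling out that a size increase produced by one part of the decomposition is exactly undone by a later part at the same critical level. The safe way to handle it is to use the semantic characterisations already proved (Proposition \ref{prop:return-removes}, Lemma \ref{lem:ColreturnDesreases}, Corollary \ref{cor:push+return}) rather than re-deriving stack shapes from the history function: these tell us precisely that a $k$-return turns the topmost $k$-stack $\TOP k(R(0))$ into $\pop k$ of it (and similarly colreturns strictly shrink it), and that a push followed by a matching-level return restores the topmost stack. With those in hand the size bookkeeping at each level is forced, and the three formulas drop out; the rest is the routine level-by-level comparison sketched above.
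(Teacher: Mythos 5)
Your overall strategy is the same as the paper's: bound $\changelev(R)$ from above by observing which levels each piece of the decomposition can touch, handle the sub-case $\changelev(T)=j$ of (3) via Corollary \ref{cor:push+return}, and reduce everything else to a ``no cancellation'' claim at the critical level. You correctly identify that claim as the crux, but the tools you propose for it do not close it. Proposition \ref{prop:return-removes} and Lemma \ref{lem:ColreturnDesreases} describe only the \emph{level-$k$} behaviour of a $k$-return/$k$-colreturn (the topmost $k$-stack shrinks). The cancellation you need to exclude in case (2) lives at level $j=\changelev(T)$, and whenever that sub-case is non-vacuous one has $\changelev(T)>k$ (if $j=\changelev(T)=k$ the first operation would have to be $\pop{k}$ or $\col{k}$, forcing $\lvert R\rvert=1$, except for the $\push{1}$/level-$1$ corner where the conclusion already follows from Remark \ref{rem:chl-big}). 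At level $\changelev(T)>k$ the relevant fact points in the \emph{opposite} direction from the ones you cite: a $k$-return or $k$-colreturn of change level $>k$ strictly \emph{increases} the size of its topmost $\changelev$-stack. That is Proposition \ref{prop:Change-level-Behaviour} (proved via $\TOP{k}(R(0))\lexOrd\TOP{k}(R(\lvert R\rvert))$ from Lemma \ref{lem:PositionsDecrease}), and it is the missing ingredient: combined with the observation that the first operation of a return/colreturn of level $j\geq k$ must be a $\push{}$ (Propositions \ref{prop:ReturnsStartwithPushOrSmallLevel} and \ref{prop:ColreturnsStartwithPush}), both $S$ and $T$ contribute a positive change at level $j$, so no cancellation is possible. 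The same proposition is what rules out cancellation between $T$ and $U$ in case (3) when $\changelev(T)>j$.

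Two smaller points. In case (3) your inequality ``$\le j\le k\le\changelev(U)$'' is backwards: for returns the decomposition has $j\geq k$, and in the colreturn case where $j$ may be small relative to $k$ the potentially problematic sub-case ($M=\changelev(U)=k$, where $U$ genuinely decreases the topmost $k$-stack) is dispatched by Remark \ref{rem:chl-big} ($\changelev(R)\geq k$ since $R$ is a $k$-colreturn), not by monotonicity. Also, in the sub-case $\changelev(T)=j$ of (3), Corollary \ref{cor:push+return} gives you that the topmost $j$-stacks of $R(0)$ and $U(0)$ literally coincide, which settles \emph{all} levels at once (levels $\leq j$ because they sit inside the restored $j$-stack, levels $>j$ because nothing of that level was touched), so $\changelev(R)=\changelev(U)$ without any further bookkeeping.
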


We begin the proof with an auxiliary proposition saying that
$k$-returns and $k$-colreturns cannot decrease the size 
of the stacks of level greater than $k$.

\begin{proposition} \label{prop:Change-level-Behaviour}
  Let $R$ be a $k$-return or $k$-colreturn such that $\changelev(R)>k$.
  Then the size of the topmost $\changelev(R)$-stack of $R(0)$ is smaller than the size of the topmost $\changelev(R)$-stack of $R(|R|)$.
\end{proposition}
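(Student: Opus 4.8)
The plan is to induct on the length of the run $R$, using the decomposition lemmas (Lemmas~\ref{lem:return-wfrules} and~\ref{lem:colreturn-wfrules}) together with the change-level characterisation (Lemma~\ref{lem:Change-level-Return}). Write $j:=\changelev(R)$, and recall from Remark~\ref{rem:chl-big} that $j\geq k$; since we assume $j>k$, we in particular have $j\geq 2$. For the base case $|R|=1$, Lemma~\ref{lem:Change-level-Return} gives $\changelev(R)=k$, so the hypothesis $j>k$ is vacuously false and there is nothing to prove. For the inductive step we distinguish the two ways a $k$-return or $k$-colreturn can decompose according to Lemmas~\ref{lem:return-wfrules}--\ref{lem:colreturn-wfrules}: either $R=S\circ T$ with $|S|=1$ performing an operation of some level $j_0$ and $T$ a $k$-return or $k$-colreturn, or $R=S\circ T\circ U$ with $|S|=1$ performing $\push{j_0}$, $T$ a $j_0$-return, and $U$ a $k$-return or $k$-colreturn.

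In the first case, by Lemma~\ref{lem:Change-level-Return} we have $j=\max\{j_0,\changelev(T)\}$. If $j=\changelev(T)$ then $\changelev(T)>k$ (since $j>k$) and the induction hypothesis applied to $T$ says the topmost $j$-stack grows from $T(0)$ to $T(|T|)$; since $R(0)=S(0)$ and the single operation $S$ has level $j_0\leq j$ (with $j_0<j$ when $j\neq j_0$, and when $j_0=j$ the operation must be $\push{j}$ because a $k$-return or $k$-colreturn cannot start with $\pop{j}$ or $\col{j}$ for $j>k$ by Propositions~\ref{prop:ReturnsStartwithPushOrSmallLevel} and~\ref{prop:ColreturnsStartwithPush}), the topmost $j$-stack of $R(0)$ is no smaller than that of $T(0)$; chaining the two inequalities gives the claim. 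If instead $j=j_0>\changelev(T)$, then the operation in $S$ is $\push{j}$ (again by the start-operation restrictions, since $j>k$), so the topmost $j$-stack of $S(1)=T(0)$ is strictly larger than that of $R(0)$; and since $\changelev(T)<j$, the topmost $j$-stack is unchanged between $T(0)$ and $T(|T|)=R(|R|)$, so the strict increase is preserved.

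In the second case, Lemma~\ref{lem:Change-level-Return} gives $j=\changelev(U)$ if $\changelev(T)=j_0$, and $j=\max\{\changelev(T),\changelev(U)\}$ otherwise. The single operation $S$ is $\push{j_0}$. Observe first that $T$, being a $j_0$-return, satisfies $\changelev(T)\geq j_0$ by Remark~\ref{rem:chl-big}, and by Proposition~\ref{prop:return-removes} the topmost $j_0$-stack shrinks by exactly one $(j_0-1)$-stack while all strictly higher stacks are unchanged in size by that return unless $\changelev(T)>j_0$. The net effect of $S\circ T$ on the topmost $j_0$-stack is therefore the identity on its size (the $\push{j_0}$ adds one copy, the $j_0$-return removes one); on higher levels, $S$ changes nothing and $T$ possibly grows the topmost $\changelev(T)$-stack if $\changelev(T)>j_0$, and shrinks nothing on those levels by Proposition~\ref{prop:Change-level-Behaviour} applied to $T$. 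Now split: if $j=\changelev(U)>k$, apply the induction hypothesis to $U$ to get a strict increase of the topmost $j$-stack from $U(0)$ to $U(|U|)=R(|R|)$; and the prefix $S\circ T$ does not decrease the topmost $j$-stack — for $j>j_0$ this follows since $S$ is neutral and $T$ only possibly grows it, for $j=j_0$ the size is unchanged by the cancellation just described; so the increase carries over to $R$. If on the other hand $j=\changelev(T)>\changelev(U)$ and $\changelev(T)\neq j_0$, then $\changelev(T)>j_0$, so by Proposition~\ref{prop:Change-level-Behaviour} the topmost $j$-stack grows from $T(0)$ to $T(|T|)$, while $S$ is neutral above level $j_0$ and $U$ neither grows nor shrinks the topmost $j$-stack (since $\changelev(U)<j$), giving the claim for $R$.

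The main obstacle is the bookkeeping in the second decomposition case: one must carefully verify that the $j_0$-return $T$, when it is \emph{not} of change level exactly $j_0$ but of higher change level, still does not shrink any stack of level strictly above $j_0$ — this is exactly what Proposition~\ref{prop:Change-level-Behaviour} provides, but applying it correctly requires noting that its conclusion concerns the topmost $\changelev(T)$-stack specifically, and one then needs that stacks of level strictly between $j_0$ and $\changelev(T)$ are also not shrunk, which again follows by tracking sizes through Proposition~\ref{prop:return-removes} and iterating. A secondary subtlety is ensuring, in every branch where $j=j_0$, that the first operation really is a $\push{}$ and not a $\pop{}$ or $\col{}$; this is guaranteed because $j>k$ forces us out of the "$\lvert R\rvert=1$, operation $\pop k$" base case and because Propositions~\ref{prop:ReturnsStartwithPushOrSmallLevel} and~\ref{prop:ColreturnsStartwithPush} forbid $\pop j$ and $\col j$ for $j>k$ as the first operation of a $k$-return or $k$-colreturn.
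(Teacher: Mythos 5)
Your argument takes a genuinely different and much longer route than the paper, and it has one structural problem that must be repaired before it stands. The paper proves the proposition directly, with no induction and no decomposition: for a $k$-return, $\hist{R}{\TOP{k}(R(|R|))}=\TOP{k}(R(0))$ by Corollary~\ref{cor:positionContainment}, and for a $k$-colreturn the same holds by Lemma~\ref{lem:ColreturnDesreases}; Lemma~\ref{lem:PositionsDecrease} then yields $\TOP{k}(R(0))\lexOrd\TOP{k}(R(|R|))$, and since the nonzero coordinates of these positions are exactly the sizes of the topmost $i$-stacks for $i>k$, which agree for $i>\changelev(R)$ and differ at $i=\changelev(R)>k$, the lexicographic inequality forces the level-$\changelev(R)$ size to be strictly smaller in $R(0)$. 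The inductive case analysis you carry out is thus avoidable entirely.

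The structural problem with your route is circularity: you lean on Lemma~\ref{lem:Change-level-Return}, but in the paper that lemma is \emph{proved from} Proposition~\ref{prop:Change-level-Behaviour} --- the proposition is introduced precisely as an auxiliary step for it. As written your proof therefore assumes what it is meant to help establish. It is repairable, because the proof of Lemma~\ref{lem:Change-level-Return} for a run of length $m$ only invokes the proposition for the strictly shorter subruns $T$ and $U$, so a simultaneous induction on $|R|$ proving both statements at once goes through; but you must set that up explicitly rather than citing the lemma as a black box. Two smaller points: in the first case you write that the topmost $j$-stack of $R(0)$ is ``no smaller'' than that of $T(0)$ --- the inequality is reversed (a $\push{j}$ makes the topmost $j$-stack of $T(0)$ larger, and your chaining needs ``no larger''); and the difficulty you flag about levels strictly between $j_0$ and $\changelev(T)$ evaporates once you observe that the change-level formula gives $\changelev(T)\leq\changelev(R)$ in every branch where you need it, so only the cases $\changelev(T)<\changelev(R)$ (size unchanged at that level) and $\changelev(T)=\changelev(R)$ (induction hypothesis) actually occur.
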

\begin{proof}
  Let $m:=|R|$.
  For a $k$-return, $\hist{R}{\TOP{k-1}(R(m))}$ points into in the
  topmost $k$-stack of $R(0)$, 
  so by Corollary \ref{cor:positionContainment},
  $\hist{R}{\TOP{k}(R(m))}=\TOP{k}(R(0))$. 
  For a $k$-colreturn we also have
  $\hist{R}{\TOP{k}(R(m))}=\TOP{k}(R(0))$, due to Lemma
  \ref{lem:ColreturnDesreases}. 
  In both cases, by Lemma \ref{lem:PositionsDecrease}, 
  $\TOP k(R(0))\lexOrd \TOP k(R(m))$.
  It follows that the size of the topmost $\changelev(R)$-stack
  of $R(0)$ is smaller than the size of the topmost
  $\changelev(R)$-stack of $R(m)$ 
  (as for $i>\changelev(R)$, the size of the topmost $i$-stack
  of $R(0)$ and of $R(m)$ is the same, and for
  $i=\changelev(R)>k$ they differ). \qed
\end{proof}

Next we prove Lemma \ref{lem:Change-level-Return}.

\begin{proof}[Lemma \ref{lem:Change-level-Return}]
  \begin{enumerate}
  \item   Case 1  is immediate.
  \item 
    Assume we have case 2 of the lemma.
    Notice that neither $S$ nor $T$ can change the size of the $i$-stack
    for $i>\max\{j,\changelev(T)\})$. 
    If $j\neq \changelev(T)$, we see that one of the subruns changes the
    size of the stack of level $\max\{j,\changelev(T)\})$, and the other
    does not change it, 
    so we get $\changelev(R)=\max\{j,\changelev(T)\})$.
    If $j=\changelev(T)$, $\changelev(T)\geq k$ (Remark
    \ref{rem:chl-big}) implies that 
    the operation of the one-step run $S$ is necessarily $\push{}$
    (cf.~Propositions 
    \ref{prop:ReturnsStartwithPushOrSmallLevel} and
    \ref{prop:ColreturnsStartwithPush}). 
    Then the size of the stack of level $j$ is increased by $S$ and by
    $T$ (cf.\ Proposition \ref{prop:Change-level-Behaviour}). Thus, 
    the claim follows immediately.
  \item 
    Next, assume we have case 3 of the lemma.
    None of the parts $S$, $T$, $U$ changes the size of the $i$-stack
    for $i>\max\{\changelev(T),\changelev(U)\})$. 
    If $\changelev(T)=j$, Corollary \ref{cor:push+return}
    implies that the topmost $j$-stack of $R(0)$ and of $U(0)$ is the same,
    thus $\changelev(R)=\changelev(U)$. 
    So assume that $\changelev(T)>j$.
    Then the size of the stack of level
    $\max\{\changelev(T),\changelev(U)\})$ cannot be decreased by $S$ or
    $T$ or $U$ (Proposition \ref{prop:Change-level-Behaviour}), 
    and at least one of $T$ and $U$ increases this value.
    Thus, $\changelev(R)=\max\{\changelev(T),\changelev(U)\})$.\qed
  \end{enumerate}  
\end{proof}

\subsection{Non-Erasing Runs}
\label{sec:nonerasingRuns}

\begin{definition}
  For $0\leq k \leq l$, let $\Nn_{k,\varepsilon}$ be the set of
  $\TOP{k}$-non-erasing runs which is the set of runs
  $R$ such that position $\TOP k(R(0))$ is present in every
  configuration of $R$. 
\end{definition}

Using $k$-returns we can characterise $\TOP{k}$-non-erasing runs in
the following way. 

\begin{lemma}\label{lem:non-topk-erasing}
  Let $R$ be some run and $0\leq k\leq n$.
  $R$ is a $\TOP{k}$-non-erasing run if and only if $R$ has one
  of the following forms.
  \begin{enumerate}
  \item \label{item:non-topk-erasingOne} $\lvert R  \rvert=0$.
  \item \label{item:non-topk-erasingTwo}
    $R$ starts with an operation of level at most $k$, and
    continues with a $\TOP{k}$-non-erasing run. 
  \item \label{item:non-topk-erasingThree}
    $R$ starts with a $\push{j}$ (including arbitrary $\push 1_{a,l}$
    for $j=1$) for $j\geq k+1$, and 
    continues with a $\TOP{j-1}$-non-erasing run. 
  \item \label{item:non-topk-erasingFour}
    $R$ starts with a $\push{j}$ (including arbitrary $\push 1_{a,l}$
    for $j=1$) and decomposes as $R=S\circ 
    T\circ U$, where $S$ has length $1$, $T$ is a $j$-return of change level $j$,
    and $U$ is a $\TOP{k}$-non-erasing run. 
  \end{enumerate}
\end{lemma}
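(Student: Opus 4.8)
The plan is to prove the two implications separately, following the pattern of the companion decomposition lemmas for returns (Lemma~\ref{lem:return-wfrules}) and colreturns (Lemma~\ref{lem:colreturn-wfrules}): each of the four forms yields a $\TOP{k}$-non-erasing run, and conversely every $\TOP{k}$-non-erasing run falls under one of the four forms.

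\emph{From a decomposition to a $\TOP{k}$-non-erasing run.} Form~1 is immediate. For form~2, a first transition of level at most $k$ changes neither the sizes of the topmost $i$-stacks for $i>k$ nor the existence of the topmost $k$-stack (using Remark~\ref{rem:ColisPop} if the operation is a collapse), so $\TOP{k}(R(0))=\TOP{k}(R(1))$ is present in $R(1)$; since the remaining run is $\TOP{k}$-non-erasing, this position is present in every later configuration as well. For form~3, the operation $\push{j}$ with $j\ge k+1$ places a fresh $(j-1)$-stack, namely $\TOP{j-1}(R(1))$, directly above the topmost $(j-1)$-stack of $R(0)$, the latter containing the topmost $k$-stack of $R(0)$; if the remaining run never removes this fresh $(j-1)$-stack, then by Corollary~\ref{cor:NonTopmostStacksDoNotChange} the topmost $k$-stack of $R(0)$ always sits strictly below a present stack and is therefore never removed. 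For form~4, write $R=S\circ T\circ U$; Corollary~\ref{cor:push+return} gives that the topmost $j$-stack of $R(0)$ equals that of $U(0)$, hence so do the topmost $k$-stacks when $k\le j$, and when $k>j$ one uses $\changelev(T)=j$ together with Proposition~\ref{prop:Change-level-Behaviour} and Lemma~\ref{lem:PositionsDecrease} to see that no topmost $i$-stack with $i>j$ ever shrinks below its initial size during $T$, so that $\TOP{k}(R(0))$ stays present throughout $S\circ T$; combined with $U$ being $\TOP{k}$-non-erasing this makes $R$ $\TOP{k}$-non-erasing.

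\emph{From a $\TOP{k}$-non-erasing run to a decomposition.} I would argue by induction on $\lvert R\rvert$, with $\lvert R\rvert=0$ giving form~1. For $\lvert R\rvert\ge1$ write $R=S\circ R'$ with $S$ of length $1$ performing $op$. If $op$ has level at most $k$, then $\TOP{k}(R'(0))=\TOP{k}(R(0))$ is present throughout $R'$, so $R'$ is $\TOP{k}$-non-erasing and we are in form~2; this already disposes of every $op$ except those of level $>k$, and among those a $\pop{j}$ or $\col{j}$ with $j>k$ would at once remove the topmost $k$-stack of $R(0)$ (again via Remark~\ref{rem:ColisPop}), which is excluded, so $op$ must be a $\push{j}$ with $j\ge k+1$. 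If the fresh $(j-1)$-stack $\TOP{j-1}(R(1))$ is present in every configuration of $R'$, then $R'$ is $\TOP{j-1}$-non-erasing and we are in form~3. Otherwise that $(j-1)$-stack is eventually removed, and -- exactly as in the proof of Lemma~\ref{lem:return-wfrules} -- I would take $i$ minimal with $1<i\le\lvert R\rvert$ such that $T:=\subrun{R}{1}{i}$ is a $j$-return: the first configuration in which $\TOP{j-1}(R(1))$ disappears is obtained from its predecessor by a $\pop{j}$ or $\col{j}$ (any operation removing more would also remove the topmost $k$-stack of $R(0)$), and the history-function analysis of Lemma~\ref{lem:return-wfrules} then identifies $\subrun{R}{1}{i}$ as a $j$-return; moreover the $\TOP{k}$-non-erasing hypothesis forces $\changelev(T)=j$, since by Proposition~\ref{prop:Change-level-Behaviour} a larger change level would leave the fresh $(j-1)$-stack available and contradict the choice of $i$. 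Finally $U:=\subrun{R}{i}{\lvert R\rvert}$ starts in a configuration whose topmost $j$-stack, hence topmost $k$-stack, coincides with that of $R(0)$ by Corollary~\ref{cor:push+return}, so $\TOP{k}(U(0))=\TOP{k}(R(0))$ is present throughout $U$; thus $U$ is $\TOP{k}$-non-erasing and we are in form~4.

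\emph{Main obstacle.} The delicate point is the last sub-case of the second implication: showing that when the freshly pushed $(j-1)$-stack does get removed, the shortest prefix reaching that removal is \emph{exactly} a $j$-return of change level $j$, rather than a run that also disturbs stacks lying below $\TOP{k}(R(0))$ or builds up higher stacks that are never undone. This is precisely where the $\TOP{k}$-non-erasing hypothesis is essential, and it is carried out by the same history-function machinery of Section~\ref{app:History} -- Corollaries~\ref{cor:NonTopmostStacksDoNotChange} and~\ref{cor:push+return}, the characterisation of returns in Lemma~\ref{lem:return-wfrules}, and Proposition~\ref{prop:Change-level-Behaviour} -- that underlies the analogous statements for returns and colreturns.
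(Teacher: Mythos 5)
Your overall architecture coincides with the paper's: the backward direction by composition of non-erasing pieces, the forward direction by case analysis on the first operation, with the hard case being the minimal prefix that removes the freshly pushed $(j-1)$-stack. Two steps, however, are not actually justified by what you cite.

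First, in the backward direction for form~4 you argue from \emph{endpoint} information only: Corollary~\ref{cor:push+return} gives that the topmost $j$-stacks (hence $k$-stacks) of $R(0)$ and $U(0)$ coincide, and for $k>j$ you invoke $\changelev(T)=j$ and Proposition~\ref{prop:Change-level-Behaviour}, which likewise only compares $T(0)$ with $T(\lvert T\rvert)$. But being $\TOP{k}$-non-erasing requires the position $\TOP{k}(R(0))$ to be present in \emph{every intermediate} configuration of $S\circ T$, and none of the cited statements says anything about intermediate configurations. The paper closes exactly this hole with Proposition~\ref{prop:history2topk-non-eras}: one first shows $\hist{S\circ T}{\TOP{k}(T(\lvert T\rvert))}=\TOP{k}(S(0))$ (via Proposition~\ref{prop:return-removes} for $k<j$ and Corollary~\ref{cor:positionContainment} for $k\geq j$; no appeal to the change level is needed here), and then uses the fact that histories only decrease positions (Lemma~\ref{lem:PositionsDecrease}) to conclude that the position can never have been absent in between. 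You need this history-based step; the endpoint comparison alone does not deliver presence throughout.

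Second, in the forward direction the identification of $T:=\subrun{R}{1}{l}$ as a $j$-return is deferred to ``the history-function analysis of Lemma~\ref{lem:return-wfrules}'', but that lemma decomposes runs already known to be returns; it does not show that a given run \emph{is} one. The tool that actually does this work is Lemma~\ref{lem:TopPresentImpliesHistoryIsId}, applied to $x':=\TOP{j-1}(R(0))$, which is present throughout $\subrun{R}{0}{l}$ because $\TOP{k}(R(0))$ points into it: this yields $\hist{\subrun{R}{1}{l}}{x'}=x'$, and combined with the presence of $y=\TOP{j-1}(R(1))$ directly above $x'$ up to time $l$ it rules out $\hist{\subrun{R}{l'}{l}}{x'}=\TOP{j-1}(R(l'))$ for $1\le l'<l$. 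The change level then comes out as $j$ simply because $x'$ is literally the position $\TOP{j-1}(R(l))$ and sits one below $\TOP{j-1}(R(1))$, so all topmost $i$-stacks with $i>j$ keep their sizes; your detour through Proposition~\ref{prop:Change-level-Behaviour} is unnecessary and does not by itself establish the claim. You correctly flagged this sub-case as the main obstacle, but the resolving lemma is Lemma~\ref{lem:TopPresentImpliesHistoryIsId}, not Lemma~\ref{lem:return-wfrules}.
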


We start the proof by giving two propositions useful in the
right-to-left implication. 

\begin{proposition}\label{prop:topk-non-eras-composition}
  Let $R=S\circ T$ be a run such that $S$ and $T$ are $\TOP
  k$-non-erasing runs for some $k$. 
  Then $R$ is a $\TOP k$-non-erasing run.
\end{proposition}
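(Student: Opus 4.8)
The plan is to use the decomposition $R = S\circ T$ and write $x := \TOP{k}(R(0)) = \TOP{k}(S(0))$. Our task is to show that the position $x$ is present in every configuration of $R$. For the configurations belonging to $S$ this is exactly the hypothesis that $S$ is $\TOP{k}$-non-erasing. So it remains to treat the configurations belonging to $T$. Here we cannot directly quote that $T$ is $\TOP{k}$-non-erasing, since $\TOP{k}(T(0))$ need not equal $x$: during $S$ the topmost $k$-stack may have grown, so the ``protected'' $k$-stack of $T$ may lie strictly above $x$.

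To bridge this gap I would first record the elementary fact that, for any stack $t$ which has a topmost $k$-stack, $\TOP{k}(t)$ is the $\lexOrd$-largest position of $t$ pointing to a $k$-stack; this is immediate from the definitions of positions and of $\TOP{k}$. Since $S$ is $\TOP{k}$-non-erasing, $x$ is present in $S(\lvert S\rvert) = T(0)$, and with $y := \TOP{k}(T(0))$ we therefore get $x \lexOrd y$ (an inequality of fixed tuples, which will never change). Since $T$ is $\TOP{k}$-non-erasing, $y$ is present in every configuration of $T$. Hence it suffices to prove the following claim: \emph{if $x \lexOrd y$ are positions pointing to $k$-stacks, $x$ is present in $T(0)$, and $y$ is present in every configuration of $T$, then $x$ is present in every configuration of $T$.} Combining this with the first part finishes the proof; note also that for longer runs the statement of the proposition is compatible with compositions, which is what makes the reduction to ``one step at a time'' legitimate.

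For the claim I would argue by contradiction. Let $i$ be least such that $x$ is present in $T(i)$ but not in $T(i+1)$, and let $op$ be the operation performed in the step $T(i)\vdash T(i+1)$. Since $op$ destroys the $k$-stack sitting at position $x$, it must be $\pop{j}$ or $\col{j}$ for some $j>k$ (for $k=0$ also $\pop{1}$ or $\col{1}$, handled identically), and by Remark~\ref{rem:ColisPop} such an operation removes precisely the topmost $(j-1)$-stacks of the topmost $j$-stack of $T(i)$, together with all $k$-stacks inside them. In particular, for the $k$-stack at $x$ to be among those removed, the coordinates of $x$ at all levels above $j$ must equal the sizes of the corresponding topmost stacks of $T(i)$. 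Now let $\ell$ be the highest coordinate on which $x$ and $y$ differ. If there is no such $\ell$, or if $\ell \le j$, then $x$ and $y$ agree on all levels above $j$, so $y$ also points into the topmost $j$-stack of $T(i)$, at a $(j-1)$-stack index $\ge$ that of $x$; hence $y$ is destroyed by $op$ as well, contradicting that $y$ is present in $T(i+1)$. If $\ell > j$, then the coordinate of $x$ at level $\ell$ equals the size of the topmost $\ell$-stack of $T(i)$, so the (strictly larger) coordinate of $y$ at level $\ell$ exceeds it, while the coordinates of $y$ above $\ell$ still index the topmost stacks of $T(i)$ — so $y$ is not even present in $T(i)$, a contradiction. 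This establishes the claim.

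The only genuinely fiddly point is this final case distinction on $\ell$; everything else is bookkeeping with the definitions of positions and of the stack operations, and, in particular, no property of the history function is needed — only the explicit description of which positions a $\pop{}$ or $\col{}$ operation erases, together with the fact that $\TOP{k}(t)$ is the $\lexOrd$-maximal position of $t$ pointing to a $k$-stack.
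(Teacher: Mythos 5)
Your proof is correct and takes essentially the same route as the paper: both reduce the statement to the claim that if a simple position $y$ with $x\lexOrd y$ is present in all configurations of a run then so is $x$, and both apply it via the observation that $\TOP{k}(R(0))$ is lexicographically bounded by $\TOP{k}(T(0))$. The only (cosmetic) difference is that the paper proves the auxiliary claim by reducing to single $\pop{j}$ steps using closure under composition, whereas you analyse the coordinates directly at the first step where $x$ would be erased.
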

\begin{proof}
  We claim the following. 
  Take some run such that $x$ and $y$ are simple positions in its
  initial stack such that $x\lexOrd y$. If $y$ is present in all
  configurations of the run, then $x$ is also present in all
  configurations of the run. 
  
  Since $\TOP{k}(R(0))$ is present in all configurations of $S$, the
  claim implies that all
  $k$-stacks present in $R(0)=S(0)$ are also present in $S(\lvert S
  \rvert)$. Thus, the topmost $k$-stack of $T(0)=S(\lvert S \rvert)$
  is lexicographically greater or equal than $\TOP{k}(R(0))$. 
  Again using the claim, $\TOP{k}(R(0))$ is present in all
  configurations of $T$ because $\TOP{k}(T(0))$ is present in all
  configurations of $T$. 

  For the proof of the claim note that the statement of the claim is
  preserved under composition of runs. Thus, we may consider a run $R$
  of length $1$ such that $x\lexOrd y$ are positions in $R(0)$.
  Since $\push{}$ operations do not delete positions in a stack, we
  may assume that $R$ performs $\pop{j}$ or $\col{j}$. 
  Since an application of $\col{j}$ has the same effect as several
  $\pop{j}$, it is sufficient to consider the $\pop{j}$ case (the
  $\col{j}$-case then follows again by the composition closure
  argument).  
  Assume that $R$ performs a $\pop{j}$ and $x$ is present in $R(0)$
  but not in $R(1)$. Then $x$ points into or to the topmost
  $(j-1)$-stack of $R(0)$. Since $x\lexOrd y$, $y$ must also point
  into or to the topmost $(j-1)$-stack of $R(0)$. But then $y$ is not
  present in $R(1)$. \qed
\end{proof}

\begin{proposition}\label{prop:history2topk-non-eras}
  Let $0\leq k\leq n$, and let $R$ be a run such that
  $\hist{R}{y}=\TOP k(R(0))$ for some position $y$ of $R(|R|)$. 
  Then $R$ is a $\TOP k$-non-erasing run.
\end{proposition}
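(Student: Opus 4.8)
The statement to prove is Proposition~\ref{prop:history2topk-non-eras}: if $R$ is a run and there is a position $y$ in $R(|R|)$ with $\hist{R}{y} = \TOP{k}(R(0))$, then $R$ is a $\TOP{k}$-non-erasing run, i.e.\ $\TOP{k}(R(0))$ is present in every configuration of $R$.

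\medskip

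The plan is to argue by contradiction, using the monotonicity of the history function along subruns (Proposition~\ref{prop:histTransitive}). Suppose $R$ is not $\TOP{k}$-non-erasing, so there is some smallest index $i$ with $0 \le i \le |R|$ such that $x := \TOP{k}(R(0))$ is \emph{not} present in $R(i)$; note $i \ge 1$. First I would observe that since $x$ is present in $R(i-1)$ but not in $R(i)$, the operation performed by $\subrun{R}{i-1}{i}$ must be a $\pop{j}$ or $\col{j}$ for some $j$, and since $x$ points to a $k$-stack (it has the form $(|s^n|,\dots,|s^{k+1}|,0,\dots,0)$), the removed $k$-stack is exactly the topmost one, so in fact $x = \TOP{k}(R(i-1))$ and the operation has level exactly $k$ with the topmost $k$-stack being the one deleted (a $\pop{k}$ reduces its size by one, a $\col{k}$ by one or more, as in Remark~\ref{rem:ColisPop}); in either case the $k$-stack at $x$ disappears from $R(i)$ and is not re-created with that exact position, because after the operation the topmost $k$-stack sits strictly below $x$ lexicographically.

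\medskip

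The heart of the argument is then: the assumption $\hist{R}{y} = x$ forces $y$ to "originate" from the $k$-stack at $x$, yet that $k$-stack was destroyed at step $i$, so $y$ cannot trace back through $R(i)$ to $x$. Concretely, set $y' := \hist{\subrun{R}{i}{|R|}}{y}$, a position in $R(i)$. By Proposition~\ref{prop:histTransitive} applied to the decomposition $R = \subrun{R}{0}{i} \circ \subrun{R}{i}{|R|}$, we have $\hist{\subrun{R}{0}{i}}{y'} = \hist{R}{y} = x = \TOP{k}(R(0))$. Now I would invoke Lemma~\ref{lem:PositionsDecrease}: writing $y'_0$ for the simple prefix of $y'$, if $\hist{\subrun{R}{0}{i}}{y'}$ is simple then it is $\lexOrd y'_0$; since it equals the simple position $\TOP{k}(R(0))$, we get $\TOP{k}(R(0)) \lexOrd y'_0$, forcing $y'_0$ to lie in (or point into) the topmost $k$-stack of $R(i)$ — but that contradicts the fact that in $R(i)$ the topmost $k$-stack is strictly smaller (lexicographically) than $\TOP{k}(R(0))$, hence every position of $R(i)$ has simple prefix lexicographically strictly below $\TOP{k}(R(0))$. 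The case where $\hist{\subrun{R}{0}{i}}{y'}$ is non-simple cannot occur because $x = \TOP{k}(R(0))$ is simple, so this exhausts the possibilities and yields the contradiction.

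\medskip

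The step I expect to require the most care is the precise bookkeeping at step $i$: showing that deleting the topmost $k$-stack of $R(i-1)$ really does mean that in $R(i)$ \emph{every} position has simple prefix lexicographically strictly less than $\TOP{k}(R(0))$, so that no position of $R(i)$ can have history $\TOP{k}(R(0))$ under $\subrun{R}{0}{i}$. This is where I would lean on the characterisation of positions pointing to $k$-stacks and on Corollary~\ref{cor:GoodStacksHaveSmallPositions} or a direct computation: a $\pop{k}$ or $\col{k}$ that removes the $k$-stack at $x = \TOP{k}(R(i-1))$ leaves $R(i)$ with a topmost $k$-stack whose position is $\lexOrdstrict x$ on the level-$k$ coordinate, and all other positions of $R(i)$ point into (or to) stacks at positions $\lexOrd \TOP{k}(R(i))$ via their simple prefixes, hence $\lexOrdstrict x$. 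Once this is established, the contradiction via Lemma~\ref{lem:PositionsDecrease} is immediate, completing the proof.
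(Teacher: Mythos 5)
Your proof is correct and takes essentially the same route as the paper's: minimal index $i$ at which $\TOP{k}(R(0))$ disappears, the observation that every simple position of $R(i)$ is then lexicographically strictly below $\TOP{k}(R(0))$, and the contradiction via Proposition~\ref{prop:histTransitive} and Lemma~\ref{lem:PositionsDecrease} applied to $\subrun{R}{0}{i}$ and $\hist{\subrun{R}{i}{\lvert R\rvert}}{y}$. One small inaccuracy: the removing operation need not have level exactly $k$ (a $\pop{j}$ or $\col{j}$ with $j>k$ can also delete the position $\TOP{k}(R(0))$ when it lies inside the removed $(j-1)$-stack), but this does not affect your argument, since the fact you actually use — that all simple positions of $R(i)$ are lexicographically smaller than the removed position — holds for any such removal.
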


\begin{proof}
  Heading for a contradiction, assume that there is a minimal $i\leq
  \lvert R \rvert$  such that  
  $x_0:=\TOP k(R(0))$ is not present in $R(i)$. 
  All simple positions in $R(i)$ are lexicographically smaller than $x_0$,
  because $x_0$ was removed either by a $\mathsf{pop}$ operation, or
  by a $\mathsf{col}$ operation (cf.\ Remark \ref{rem:ColisPop}). 
  Let $x_1$ be the simple prefix of $\hist{\subrun{R}{i}{\lvert R\rvert}}{y}$.  
  Due to Lemma \ref{lem:PositionsDecrease} applied to $\subrun{R}{0}{i}$,  $x_0\lexOrd x_1$.
  But this is a contradiction. \qed
\end{proof}


\begin{proof}[Lemma \ref{lem:non-topk-erasing}]
  The proof of the right-to-left part is 
  by case distinction on the decomposition of $R$ according to the
  four cases. 
  Case \ref{item:non-topk-erasingOne} is trivial and Cases
  \ref{item:non-topk-erasingTwo} and 
  \ref{item:non-topk-erasingThree} follow directly from 
  Proposition \ref{prop:topk-non-eras-composition}.
  We now investigate Case \ref{item:non-topk-erasingFour}.
  Notice that $\hist{S\circ T}{\TOP{k}(T(\lvert T
    \rvert))}=\TOP{k}(S(0))$: for $k<j$ it 
  follows from Proposition \ref{prop:return-removes}; for $k\geq j$ it
  follows from Corollary \ref{cor:positionContainment}.
  Thus, Proposition \ref{prop:history2topk-non-eras} applied to
  $S\circ T$ and $y:=\TOP k(T(|T|))$ tells us that $S\circ T$ is a
  $\TOP k$-non-erasing run. 
  Due to Proposition \ref{prop:topk-non-eras-composition}, also $R$ is
  a $\TOP k$-non-erasing run. 
 
  Now concentrate on the left-to-right part.
  Let $R$ be a run of length $m$ such that
  $x:=\TOP{k}(R(0))$ is present in all configurations of $R$.
  If $m = 0$, we are in case \ref{item:non-topk-erasingOne}.
  Thus, assume that $m\geq 1$. 
  Note that the first operation cannot be $\col{j}$ or
  $\pop{j}$ for 
  $j \geq k+1$  because this would delete position $x$ from
  $R(1)$ (cf.\ Remark \ref{rem:ColisPop}).  
  Hence, one of the following cases applies.
  \begin{itemize}
  \item Assume that the first operation in $R$ is of level at most $k$. Then 
    $x=\TOP{k}(R(1))$ and $x$ is not removed during
    $\subrun{R}{1}{m}$. Thus,
    $R$ decomposes as in case \ref{item:non-topk-erasingTwo}.
  \item Assume that the first operation in $R$ is $\push{j}$ for some
    $j \geq k+1$ 
    (in the rest of the proof, $\push{1}$ stands for arbitrary
    $\push{1}_{a,k'}$).  
    Furthermore, assume that $y:=\TOP{j-1}(R(1))$ is present in all
    configurations of 
    $T:=\subrun{R}{1}{m}$. Then $R$ decomposes as
    as in case \ref{item:non-topk-erasingThree}.
  \item Otherwise, the first operation is $\push{j}$ for some $j\geq k+1$
    and there is a minimal $l\geq 1$ such that 
    $y:=\TOP{j-1}(R(1))$ is not present in $R(l)$. 
    We claim that $\subrun{R}{1}{l}$ is a $j$-return of change level
    $j$ and that
    the positions $\TOP{k}(R(l))$ and $\TOP{k}(R(0))$ agree. 
    Hence, $\subrun{R}{l}{m}$ is  $\TOP{k}$-non-erasing and
    $R$ decomposes as in case \ref{item:non-topk-erasingFour}.
    
    Let us proof the claim.
    Recall that 
    $x=\TOP{k}(R(0))$ is present in all configurations of
    $\subrun{R}{0}{l}$.  
    Since $x$ points
    into $x':=\TOP{j-1}(R(0))$ or $x'=x$,
    \begin{equation}
      \label{eq:test1}
      x' \text{ is present in all configurations of }\subrun{R}{0}{l}.
    \end{equation}
    Hence we can apply
    Lemma \ref{lem:TopPresentImpliesHistoryIsId} and conclude that
    $\hist{\subrun{R}{0}{l}}{x'}=x'$ and 
    \begin{equation}
      \label{eq:test2}
      \text{if } \hist{\subrun{R}{l'}{l}}{x'}\text{ is
    simple, it is equal to }x'.       
    \end{equation}
    Thus, $\hist{\subrun{R}{1}{l}}{x'}=x'$ because no
    non-simple
    position $z$ in $R(1)$ satisfies 
    $x'=  \hist{\subrun{R}{0}{1}}{z}$. 

    By definition of $\push{j}$,
    $x'$ is the second topmost $(j-1)$-stack in
    $R(1)$. 
    Since $y$ is directly above $x'$ and present in $R(l')$ for all
    $1\leq l'<l$, 
    (\ref{eq:test1}) and (\ref{eq:test2}) imply that
    $\hist{\subrun{R}{l'}{l}}{x'} \neq \TOP{j-1}(R(l'))$.
    Finally, note that 
    from $R(l-1)$ to $R(l)$ the
    $(j-1)$-stack above $x'$ (which is at $y$) is removed but $x'$ is
    present in  $R(l)$.
    Using Remark \ref{rem:ColisPop}, the operation
    is $\pop{j}$ or $\col{j}$ and $y$ points into the topmost
    $j$-stack of $R(l-1)$ whence $x'$ points to the topmost
    $(j-1)$-stack of $R(l)$.  
    
    In summary, $x'=\TOP{j-1}(R(l))$, 
    $\hist{\subrun{R}{1}{l}}{x'} = x'$ is the second topmost
    $(j-1)$-stack  of $R(1)$ and
    $\hist{\subrun{R}{l'}{l}}{x'}$ is not the topmost $(j-1)$-stack of
    $R(l')$ for 
    all $1\leq l' < l$. 
    Thus,  $\subrun{R}{1}{l}$ is a  $j$-return of change level $j$ and
    the claim is proved.\qed
  \end{itemize}
\end{proof}

\subsection{Pumping runs}

In this subsection we give a definition of pumping runs
and prove that the rules from Section \ref{sec:Runs} describe pumping
runs correctly. 

\begin{definition}
  For $x\in\{=,<\}$ and $y\in\{\varepsilon,\notepsBig\}$, let
  $\Pp_{x,y}$ be the set of runs $R$ such that 
  \begin{itemize}
  \item	$\hist{R}{\TOP{0}(R(\lvert R \rvert))}=\TOP{0}(R(0))$, and
  \item	$\TOP{0}(R(\lvert R \rvert))=\TOP{0}(R(0))$ if and only if $x$
    is $=$, and 
  \item	$R$ uses only $\varepsilon$-transitions if and only if
    $y=\varepsilon$. 
  \end{itemize}
  A run $R$ is a \emph{pumping run} if it belongs to some $\Pp_{x,y}$.
\end{definition}
\begin{remark}
  Lemma \ref{lem:PositionsDecrease} implies that for a pumping run
  $R \in \Pp_{<, y}$, we have $\TOP{0}(R(0))\lexOrdstrict
  \TOP{0}(R(\lvert R \rvert))$. In this sense the final stack of a
  pumping run is is greater than its initial one. 
\end{remark}

For the next proofs it is useful to distinguish all $k$-returns of
minimal change level (i.e., of change level $k$) from those of higher
change level. 
\begin{definition}\label{def:ret-agree-not}
  We set $\Rr_{k,=,x}:=\Rr_{k,k,x}$  and
  $\Rr_{k,<,x} :=\bigcup_{i>k} \Rr_{k,i,x}$. 
\end{definition}
\begin{remark}
  A $k$-return $R$  in $\Rr_{k,=,x}$ satisfies
  $\TOP{k}(R(0))=\TOP{k}(R(\lvert R \rvert))$. 
  Due to Proposition \ref{prop:Change-level-Behaviour}, 
  a $k$-return $R$ in $\Rr_{k,<,x}$ satisfies
  $\TOP{k}(R(0))\lexOrdstrict \TOP{k}(R(\lvert R \rvert))$.
\end{remark}

In the rest of this subsection we characterise pumping runs using wf-rules.

\begin{lemma}\label{lem:pumping-run}
  Let $R$ be some run.
  $R$ is a pumping run if and only if $R$ has one of the following forms.
  \begin{enumerate}
  \item $\lvert R \rvert = 0$.
  \item \label{item:pumping-run-two}$R$ starts with a $\push k$
    of any level (including arbitrary $\push 1_{a,l}$ for
    $k=1$), and continues  
    with a pumping run. 
  \item \label{item:pumping-run-three}
    $R$ starts with a $\push k$ of any level (including arbitrary
    $\push 1_{a,l}$ for $k=1$), and decomposes 
    as $R=S\circ T\circ U$, where $S$ has length $1$, $T$ is a
    $k$-return, and $U$ is a pumping run. 
  \end{enumerate}
  Additionally, assuming that $R$ is a pumping run, 
  $\TOP{0}(R(0))=\TOP{0}(R(\lvert R \rvert))$
  if and only if 
  \begin{itemize}
  \item $R$ is of the first form, or
  \item $R$ is of the last form, and $\TOP{k}(T(0))=\TOP{k}(T(\lvert T \rvert))$, 
    and $\TOP{0}(U(0))=\TOP{0}(U(\lvert U \rvert))$ 
  \end{itemize}
\end{lemma}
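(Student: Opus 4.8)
The proof splits into two directions. For the ``if'' direction I would check that each of the three shapes yields a pumping run. The length-$0$ case is immediate. For form~\ref{item:pumping-run-two}, write $R = S\circ U$ with $|S|=1$ performing $\push k$; then $\hist{S}{\TOP0(S(1))}=\TOP0(S(0))$ directly from the definition of the history function applied to a push (in the $\push1_{a,l}$ case the push copies the topmost $0$-stack; in the $\push k$, $k\ge2$, case the topmost $0$-stack is literally carried along). Composing with $\hist{U}{\TOP0(U(|U|))}=\TOP0(U(0))$ via Proposition~\ref{prop:histTransitive} gives $\hist{R}{\TOP0(R(|R|))}=\TOP0(R(0))$, so $R$ is a pumping run. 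For form~\ref{item:pumping-run-three}, write $R=S\circ T\circ U$ with $T$ a $k$-return. Corollary~\ref{cor:push+return} tells us that $S\circ T$ starts and ends with the same topmost $k$-stack, and moreover that $\hist{S\circ T}{x}$ points to the same position in $\TOP k((S\circ T)(0))$ whenever $x$ points into $\TOP k((S\circ T)(|S\circ T|))$. Since $\TOP0(U(0))$ points into that topmost $k$-stack of $T(|T|)$, we get $\hist{S\circ T}{\TOP0(U(0))}=\TOP0(R(0))$; composing with $U$ via Proposition~\ref{prop:histTransitive} again yields the claim.

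For the ``only if'' direction, let $R$ be a pumping run with $|R|\ge1$. The first step is to see that the first operation of $R$ is a push. It cannot be a $\pop j$ or $\col j$: by the definition of the history function, after a $\pop{}$ or $\col{}$ no position of the resulting stack has a history pointing into $\TOP0(R(0))$ (for $\pop{}$ the topmost $0$-stack is gone; for $\col{}$, by Remark~\ref{rem:ColisPop} it is equivalent to several pops), contradicting $\hist{R}{\TOP0(R(|R|))}=\TOP0(R(0))$. So the first operation is $\push k$ for some $k$ (including $\push1_{a,l}$). Now set $x := \TOP k(R(1))$, the topmost $k$-stack created by the push, and do a case split on whether $x$ survives to the end. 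If $x$ is present in every configuration of $T:=\subrun{R}{1}{|R|}$, then I claim $T$ is a pumping run: the topmost $0$-stack of $T(|T|)$ lies inside $x=\TOP k(T(0))$ which never gets touched except at the top, so $\hist{T}{\TOP0(T(|T|))}=\TOP0(T(0))=\TOP0(R(1))$, and $\hist{S}{\TOP0(R(1))}=\TOP0(R(0))$ from the push; this is form~\ref{item:pumping-run-two}. Otherwise, let $l\ge1$ be minimal with $\TOP k(R(1))$ absent from $R(l)$; by Remark~\ref{rem:ColisPop} the step $R(l-1)\to R(l)$ is a $\pop k$ or $\col k$ removing exactly that topmost $k$-stack. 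I then argue $\subrun{R}{1}{l}$ is a $k$-return — the argument mirrors the corresponding step in the proof of Lemma~\ref{lem:non-topk-erasing}, using Lemma~\ref{lem:TopPresentImpliesHistoryIsId} applied to the second-topmost $(k-1)$-stack of $R(1)$ (which is $\TOP{k-1}(R(0))$) to show its history is the identity while it stays below the top, hence equals the second topmost $(k-1)$-stack; and one checks the last step of $\subrun{R}{1}{l}$ is a $\pop k$ or $\col k$ turning the relevant $(k-1)$-stack into the topmost one. Finally $\TOP0(R(l))$ points into $\TOP k(R(l))=\TOP k(R(0))$, and applying $\hist{\subrun{R}{0}{l}}{-}$ and Corollary~\ref{cor:push+return} shows $\hist{\subrun{R}{l}{|R|}}{\TOP0(R(|R|))}$ starts from $\TOP0(R(l))$, so $U:=\subrun{R}{l}{|R|}$ is a pumping run; this is form~\ref{item:pumping-run-three}.

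For the ``additionally'' clause: in form~\ref{item:pumping-run-two} there is nothing to prove since that form is only claimed to produce an equality in the length-$0$ sub-case (more precisely, a run of positive length of form~\ref{item:pumping-run-two} with $k\ge2$ increases the topmost $0$-stack position only if the inner pumping run does, and one folds this into the recursion); the substantive statement is for form~\ref{item:pumping-run-three}. There, by Corollary~\ref{cor:push+return} $\TOP k(R(0))=\TOP k(T(|T|))$, and then, tracing the definition of $\push k$, $\TOP0(R(0))=\TOP0(R(|R|))$ forces $\TOP0(U(0))=\TOP0(U(|U|))$ together with $\TOP k(T(0))=\TOP k(T(|T|))$; conversely these two equalities give $\TOP0(R(0))=\TOP0(U(0))=\TOP0(U(|U|))=\TOP0(R(|R|))$ since $U(0)=T(|T|)$ and $\TOP k$ is unchanged by $S\circ T$. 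The main obstacle is the left-to-right direction's $k$-return claim: showing that $\subrun{R}{1}{l}$ genuinely satisfies Definition~\ref{def:return} (both the ``history points to the second topmost $(k-1)$-stack'' condition and the ``never topmost in between'' condition) requires carefully replaying the history-function bookkeeping of Lemma~\ref{lem:non-topk-erasing}, and getting the interaction with $\push1_{a,l}$ when $k=1$ exactly right.
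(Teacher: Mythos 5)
The ``if'' direction of your argument is fine and matches the paper (Corollary~\ref{cor:push+return} plus Proposition~\ref{prop:histTransitive}), as is the observation that a pumping run of positive length must begin with a push. But the case split you use for the ``only if'' direction is wrong. You distinguish the two forms by asking whether the $(k{-}1)$-stack created by the initial $\push k$ (you write $\TOP{k}(R(1))$, presumably meaning $\TOP{k-1}(R(1))$) survives to the end of the run. That dichotomy is the right one for $\TOP{k}$-non-erasing runs (it is exactly how Lemma~\ref{lem:non-topk-erasing} is proved), but it does not characterise form~2 versus form~3 for pumping runs. Concretely, take $R$ performing $\push2,\ \pop1,\ \push3,\ \pop2$ --- the very example the paper uses to illustrate that at level $\geq 3$ a pumping run may look into a copy of the topmost $1$-stack of $R(0)$ \emph{without} that copy ever being removed. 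Here $R$ is a pumping run, the newly created topmost $1$-stack of $R(1)$ is present in every later configuration, yet $\subrun{R}{1}{4}$ begins with $\pop1$ and is \emph{not} a pumping run (its history of the final topmost $0$-stack lands in the old copy, not the new one); $R$ is of form~3 with $T=\subrun{R}{1}{4}$ a $2$-return and $U$ empty. Your Case~A would wrongly output form~2 here. The correct dichotomy, used in the paper, is on where $\hist{\subrun{R}{1}{m}}{\TOP{0}(R(m))}$ lands in $R(1)$: either at $\TOP{0}(R(1))$ (form~2) or at the topmost $0$-stack of the second topmost $(k{-}1)$-stack (form~3, where the minimal $i$ with $\hist{\subrun{R}{i}{m}}{\TOP{k-1}(R(m))}=\TOP{k-1}(R(i))$ delimits the $k$-return). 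Note also that in the paper's argument the run from $R(i)$ onward is shown to be a pumping run via Corollary~\ref{cor:push+return}, not via survival of any stack.

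The treatment of the ``additionally'' clause is also incomplete. The lemma asserts that a pumping run of form~2 with positive length never satisfies $\TOP{0}(R(0))=\TOP{0}(R(\lvert R\rvert))$; ``folding this into the recursion'' is not a proof. The paper derives it by showing that $\TOP{0}(R(0))=\TOP{0}(R(m))$ forces $\hist{\subrun{R}{1}{m}}{\TOP{0}(R(m))}=\TOP{0}(R(0))$ (squeezing with Lemma~\ref{lem:PositionsDecrease}), which by the first part puts $R$ in form~3; this step is missing from your proposal and cannot be recovered from your (incorrect) case analysis.
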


The characterisation of pumping runs in
terms of the well-formed rules presented in Section \ref{sec:Runs}
follows immediately from the previous lemma.

\begin{remark}
  Observe that $\hist{R}{x}=\TOP{0}(R(0))$ implies that the first
  operation of $R$ is not $\pop{}$ or $\col{}$. 
  Indeed, after such an operation in $R(1)$ we have no position $y$ such 
  that $\hist{\subrun{R}{0}{1}}{y}=\TOP{0}(R(0))$ (which contradicts
  with Proposition \ref{prop:histTransitive}). 
\end{remark}

\begin{proof}[Lemma \ref{lem:pumping-run}]
  The right-to-left direction of the first part is almost immediate.
  In the third case we have to observe that $\hist{S\circ
    T}{\TOP0(U(0))}=\TOP0(R(0))$; it follows from Corollary
  \ref{cor:push+return}. 
  
  Now concentrate on the left-to-right direction of the first part of the lemma.
  Let $R$ be a pumping run of length $m$.
  If $m=0$, we are in case 1.
  Thus, assume that $m\geq 1$.
  Due to the above remark, $R$ starts with a $\push{}$ operation of
  some level $k$.  
  Recall that $\hist{\subrun{R}{0}{1}}{x}=\TOP{0}(R(0))$ only if
  $x=\TOP{0}(R(1))$ or if $x$ points to the topmost $0$-stack of
  the second topmost $(k-1)$-stack. 
  By Proposition \ref{prop:histTransitive},
  $\hist{\subrun{R}{1}{m}}{\TOP{0}(R(m))}$ is one of these positions
  $x$. 
  Now there are two cases.
  \begin{itemize}
  \item If 
    $\hist{\subrun{R}{1}{m}}{\TOP{0}(R(m))}=\TOP{0}(R(1))$, then
    $\subrun{R}{1}{m}$  
    is a pumping run and $R$ decomposes as in case \ref{item:pumping-run-two}.
  \item Otherwise, 
    $\hist{\subrun{R}{1}{m}}{\TOP{0}(R(m))}$ points to the topmost
    $0$-stack of the second topmost  $(k-1)$-stack of 
    $R(1)$. 
    Due to Corollary \ref{cor:positionContainment}, we conclude
    that
    $\hist{\subrun{R}{1}{m}}{\TOP{k-1}(R(m)}$ points to the second
    topmost $(k-1)$-stack of $R(1)$. 
    Let $2\leq i \leq m$ be minimal such that
    $\hist{\subrun{R}{i}{m}}{\TOP{k-1}(R(m))}=\TOP{k-1}(R(i))$. 
    Notice that $T:=\subrun{R}{1}{i}$ satisfies all requirements of a $k$-return.
  
    By Corollary \ref{cor:positionContainment} we know that
    $\hist{\subrun{R}{i}{m}}{\TOP{0}(R(m))}$ points into
    \begin{align*}
    \hist{\subrun{R}{i}{m}}{\TOP{k-1}(R(m))}=\TOP{k-1}(R(i)).      
    \end{align*}
    On the other hand, by Corollary \ref{cor:push+return}, the only
    position $x$ in the topmost $(k-1)$-stack of $R(i)$ for which
    $\hist{\subrun{R}{0}{i}}{x}=\TOP0(R(0))$ is $x=\TOP0(R(i))$. 
    By Proposition \ref{prop:histTransitive} we conclude that
    $\hist{\subrun{R}{i}{m}}{\TOP{0}(R(m))}=\TOP0(R(i))$, 
    hence $U:=\subrun{R}{i}{m}$ is a pumping run.
  \end{itemize}

  Next we prove the last part of the lemma. 
  If $R$ is of length $0$ we immediately get $\TOP 0(R(0))=\TOP 0(R(|R|))$.
  Let $R$ be a run satisfying item \ref{item:pumping-run-three} such that $\TOP{k}(T(0))=\TOP{k}(T(\lvert T \rvert))$ and $\TOP{0}(U(0))=\TOP{0}(U(\lvert U \rvert))$.
  Because the operation in $S$ is $\push k$ we also have $\TOP{k}(R(0))=\TOP{k}(T(\lvert T \rvert))$.
  By Corollary \ref{cor:push+return} we know that the topmost $k$-stack of $R(0)$ and of $T(|T|)$ are the same, so $\TOP{0}(R(0))=\TOP{0}(T(\lvert T \rvert))=\TOP{0}(R(|R|))$.

  Finally assume that $R$ is a pumping run of length $m\geq 1$ such
  that $\TOP{0}(R(m))=\TOP{0}(R(0))$. 
  Then $\hist{R}{\TOP{0}(R(m))} = \TOP{0}(R(m))$. 
  We already have observed that
  $\hist{\subrun{R}{1}{m}}{\TOP{0}(R(m))}$ is simple.  
  Due to Lemma \ref{lem:PositionsDecrease} it is lexicographically bounded
  from above by $\TOP{0}(R(m))$ and from below by
  $\hist{R}{\TOP{0}(R(m))}=\TOP{0}(R(m))$. We conclude that 
  $\hist{\subrun{R}{1}{m}}{\TOP{0}(R(m))}=\TOP{0}(R(0))$.
  From the analysis in the first part, we know that $R$ then satisfies
  case \ref{item:pumping-run-three}, i.e., it decomposes as
  $R=S\circ T \circ U$ where $S$ performs only one $\push{k}$, $T$ is
  a $k$-return and $U$ is a pumping run. Using the same argument
  again, we conclude that
  $\TOP{0}(U(0))=\hist{U}{\TOP{0}(R(m))}=\TOP{0}(R(m))$.
  Using Corollary \ref{cor:positionContainment} we also get that
  $\TOP{k}(T(0))=\TOP{k}(R(0))=\TOP{k}(U(0))$. \qed
\end{proof}

In conclusion, Lemmas \ref{lem:pumping-run},
\ref{lem:non-topk-erasing}, \ref{lem:Change-level-Return},
\ref{lem:colreturn-wfrules} and \ref{lem:return-wfrules} show that the
Rules from Section \ref{sec:Runs} describe sets of runs that satisfy the
intended meaning described in that Section.


\section{Sketch of proof of Theorem \ref{thm:pumping}}
\label{app:sketchPumping}
In this section we describe briefly the proof of the pumping lemma.
The single steps of this proof follow closely the analogous proof for
the non-collapsible pushdown systems in \cite{parys-pumping}.
For the details of these steps we refer the reader to Appendix
\ref{app:PumpingLemma} (which requires Appendix \ref{app:Combinatoric}
as combinatorial background). 

First we list three propositions, which are consequences of Theorem
\ref{thm:types} applied to the family $\Xx$ from Section
\ref{sec:Runs}.

\begin{proposition}\label{prop:types1} 
  Let $R$ be a pumping run of the system $\Ss$ such that $R$ satisfies
  $\ctype_\Xx(R(0)) \typOrd \ctype_\Xx(R(\lvert R \rvert)).$
  Then there is
  a sequence of runs $(R_i)_{i\in\N}$
  such that 
  $R_i(0)=R(0)$, 
  $\ctype_\Xx(R( \lvert R\rvert ))\typOrd \ctype_\Xx(R_i(\lvert R_i\rvert))$,
  and
  \begin{enumerate}
  \item if $R\in\Pp_\noteps$ then $R_i$ contains at least $i$
    non-$\varepsilon$-transitions, 
  \item if $R\in\Pp_{<,\varepsilon}$ then the final stack of $R_{i+1}$
    is greater than the final stack of $R_i$, and $R_i$ uses only
    $\varepsilon$-transitions. 
  \end{enumerate}
\end{proposition}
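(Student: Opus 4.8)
The plan is to produce $(R_i)_{i\in\N}$ by starting from $R_0:=R$ and repeatedly gluing to the end of the current run a fresh copy of (a run behaving like) $R$, whose existence is guaranteed by Theorem~\ref{thm:types} applied to the family $\Xx$ from Section~\ref{sec:Runs}. The key point making this work is that $\lev(\Pp_{x,y})=0$, so the strong clause~\ref{thm:typesB} of that theorem is in force and lets us carry the type information forward along the iteration.

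First I would fix the unique set $X=\Pp_{x,y}\in\Xx$ with $R\in X$ (so $\lev(X)=0$), and record that the hypothesis $\ctype_\Xx(R(0))\typOrd\ctype_\Xx(R(\lvert R\rvert))$ already forces $R$ to begin and end in the same state. The induction would maintain the invariant ``$R_i$ is a run, $R_i(0)=R(0)$, and $\ctype_\Xx(R(\lvert R\rvert))\typOrd\ctype_\Xx(R_i(\lvert R_i\rvert))$'', which is exactly the second conclusion of the proposition and holds for $i=0$ by hypothesis. For the inductive step, transitivity of $\typOrd$ gives $\ctype_\Xx(R(0))\typOrd\ctype_\Xx(R_i(\lvert R_i\rvert))$, so Theorem~\ref{thm:types} applied to $R\in X$ and $c:=R_i(\lvert R_i\rvert)$ yields a run $S_i\in X$ with $S_i(0)=c$, with the same final state as $R$, and --- by clause~\ref{thm:typesB}, since $\lev(X)=0$ --- with $\ctype_\Xx(R(\lvert R\rvert))\typOrd\ctype_\Xx(S_i(\lvert S_i\rvert))$. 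Setting $R_{i+1}:=R_i\circ S_i$ (legitimate since $S_i(0)=R_i(\lvert R_i\rvert)$) re-establishes the invariant because $R_{i+1}(0)=R(0)$ and $R_{i+1}(\lvert R_{i+1}\rvert)=S_i(\lvert S_i\rvert)$.

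It then remains to extract the two additional clauses, and the only thing used here is that every $S_i$ stays inside $X$, hence inherits $R$'s annotations. If $R\in\Pp_\noteps$, then $X=\Pp_{x,\noteps}$ for some $x\in\{<,=\}$, so each $S_i$ performs at least one non-$\varepsilon$-transition, whence $R_i$ performs at least $i$ (indeed $i+1$) of them. If $R\in\Pp_{<,\varepsilon}$, then $X=\Pp_{<,\varepsilon}$, so every $S_i$ --- and $R_0=R$ --- uses only $\varepsilon$-transitions, hence so does each $R_i$; moreover the Remark after the definition of pumping runs (a consequence of Lemma~\ref{lem:PositionsDecrease}) gives $\TOP{0}(S_i(0))\lexOrdstrict\TOP{0}(S_i(\lvert S_i\rvert))$, i.e.\ $\TOP{0}(R_i(\lvert R_i\rvert))\lexOrdstrict\TOP{0}(R_{i+1}(\lvert R_{i+1}\rvert))$, which is precisely the sense in which the final stack of $R_{i+1}$ is greater than that of $R_i$. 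I do not expect a genuine obstacle here: the whole argument is bookkeeping on top of Theorem~\ref{thm:types}, and the only delicate point is threading the type inequality through the induction, which is where both transitivity of $\typOrd$ and the level-$0$ status of the sets $\Pp_{x,y}$ are essential.
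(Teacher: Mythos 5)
Your proposal is correct and follows essentially the same route as the paper's own (very terse) proof: set $R_0:=R$, apply Theorem~\ref{thm:types} to $R$ and the configuration $R_i(\lvert R_i\rvert)$ to obtain a new pumping run, and compose. Your write-up simply makes explicit the bookkeeping the paper leaves implicit (transitivity of $\typOrd$, the role of $\lev(\Pp_{x,y})=0$, and the use of Lemma~\ref{lem:PositionsDecrease} for the strict growth of final stacks).
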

\begin{proof}
  Set $R_0:=R$.
  Application of Theorem \ref{thm:types} to $R$ and configuration 
  $R_i(\lvert R_i\rvert)$ yields a pumping run $R'_{i+1}$. Set
  $R_{i+1}:=R_i\circ R'_{i+1}$. \qed
\end{proof}

\begin{proposition}\label{prop:types2}
  Let $R$ be a $\TOP{0}$-non-erasing run and $c$ some
  configuration such that 
  $\ctype_\Xx(R(0))\typOrd \ctype_\Xx(c)$. Then there is a
  $\TOP{0}$-non-erasing run $S$ that 
  starts in $c$ and ends in the same state as $R$. 
\end{proposition}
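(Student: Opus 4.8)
The plan is to apply part 1 of Theorem~\ref{thm:types} to the family $\Xx$ from Section~\ref{sec:Runs}, using the set $\Nn_0$ of $\TOP{0}$-non-erasing runs. Recall from Section~\ref{sec:grammars} that $\lev(\Nn_0)=0$ is not needed here; what matters is that $\Nn_0\in\Xx$ is a set of runs described by the well-formed grammar, so Theorem~\ref{thm:types}(\ref{thm:typesA}) applies to it. Concretely, given $R\in\Nn_0$ with $\ctype_\Xx(R(0))\typOrd\ctype_\Xx(c)$, the theorem directly yields a run $S\in\Nn_0$ starting in $c$ with the same final state as $R$. So the entire content of the proposition is the identification $\Nn_0=\Nn_{0,\varepsilon}$ together with the applicability of Theorem~\ref{thm:types}.

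First I would invoke the characterisation of $\TOP{0}$-non-erasing runs. By Lemma~\ref{lem:non-topk-erasing} (taken at $k=0$), a run is $\TOP{0}$-non-erasing — i.e.\ lies in $\Nn_{0,\varepsilon}$ from Definition in Section~\ref{sec:nonerasingRuns} — exactly when it has one of the four listed forms, and those four forms are precisely the right-hand sides of the wf-rules for $\Nn_0$ listed in Section~\ref{sec:Runs} (with $k=0$): the empty run $\Rule{\Nn_0}{}$, a level-$0$ operation followed by an $\Nn_0$-run, a $\push j$ with $j\geq 1$ followed by an $\Nn_{j-1}$-run, and a $\push j$ followed by $\Rr_{j,j}$ and then an $\Nn_0$-run. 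Since Appendix~\ref{app:Characterisation} (the concluding paragraph, citing Lemmas~\ref{lem:pumping-run}, \ref{lem:non-topk-erasing}, \ref{lem:Change-level-Return}, \ref{lem:colreturn-wfrules}, \ref{lem:return-wfrules}) establishes that the grammar of Section~\ref{sec:Runs} describes exactly these semantically defined sets, we have $\Nn_0=\Nn_{0,\varepsilon}$ as sets of runs.

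Then I would apply part~1 of Theorem~\ref{thm:types}: with $X=\Nn_0$, the hypothesis $\ctype_\Xx(R(0))\typOrd\ctype_\Xx(c)$ gives a run $S\in\Nn_0$ starting in $c$ whose final state equals that of $R$. Unfolding $\Nn_0=\Nn_{0,\varepsilon}$, this $S$ is a $\TOP{0}$-non-erasing run from $c$ ending in the same state as $R$, which is exactly the assertion. One subtlety worth a sentence: the grammar describing $\Xx$ must be a \emph{well-formed set} of well-formed rules (in the sense of the definition opening Appendix~\ref{sec:types}) for Theorem~\ref{thm:types} to apply; by Remark~\ref{rem:Grammars-well-formed} we may freely add the auxiliary nonterminals $X_\delta$, $X_{\delta Y}$ to achieve this without changing any of the original sets, so $\Nn_0$ and its membership are unaffected.

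The only genuinely non-routine ingredient is the equality $\Nn_0=\Nn_{0,\varepsilon}$, and that is not proved here but imported wholesale from Appendix~\ref{app:Characterisation}; everything else is a one-line invocation of Theorem~\ref{thm:types}. I expect no real obstacle in writing this up — the main care needed is just to cite the correct lemma numbers and to note the $k=0$ instantiation of Lemma~\ref{lem:non-topk-erasing} so that the reader can match the four grammar rules for $\Nn_0$ against the four cases of the characterisation.
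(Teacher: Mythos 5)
Your proposal is correct and matches the paper's (implicit) argument: the paper states this proposition without proof as an immediate consequence of part~1 of Theorem~\ref{thm:types} applied to $\Nn_0\in\Xx$, relying on the characterisation of $\Nn_0$ as the set of $\TOP{0}$-non-erasing runs established in Appendix~\ref{app:Characterisation}. Your additional care about $\Nn_0=\Nn_{0,\varepsilon}$ and about well-formedness of the rule set (via Remark~\ref{rem:Grammars-well-formed}) is accurate, as is your observation that only part~1 of the theorem is needed so the level of $\Nn_0$ (which the paper sets to $n$, not $0$) is irrelevant.
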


\begin{proposition}\label{prop:types3}
  Let $R$ be a run and $c$ a
  configuration such that $\ctype_\Xx(R(0))\typOrd \ctype_\Xx(c)$. 
  Then there is a run $S$ which starts in $c$ and ends in the same
  state as $R$. 
\end{proposition}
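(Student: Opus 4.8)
The plan is to invoke part \ref{thm:typesA} of Theorem \ref{thm:types} with the set $\Qq$ of all runs. Recall from Example \ref{exa:Rules} that $\{\Qq\}$ is described by a well-formed grammar (with $\lev(\Qq)=\sL$), and that $\Qq$ is one of the sets composing the family $\Xx$ fixed in Section \ref{sec:Runs}; in particular $\Xx$ is described by a well-formed grammar and $\Qq\in\Xx$. Since every run of $\Ss$ belongs to $\Qq$, we have $R\in\Qq\in\Xx$.

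The hypothesis $\ctype_\Xx(R(0))\typOrd\ctype_\Xx(c)$ is exactly the premise needed to apply part \ref{thm:typesA} of Theorem \ref{thm:types} to $X=\Qq$ and the given run $R$. This yields a run $S\in\Qq$ that starts in $c$ and has the same final state as $R$. As $S$ is a run, this is precisely the assertion of the proposition.

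I expect no genuine obstacle here: once the first part of Theorem \ref{thm:types} is available (as established in the excerpt) and once $\Qq$ is known to lie in a family described by a well-formed grammar, the statement follows in a single step. The only point worth noting is that the application is unaffected by $\lev(\Qq)=\sL$: part \ref{thm:typesA} imposes no condition on $\lev(X)$ (the level only enters in part \ref{thm:typesB}), so there is nothing further to check. This is entirely parallel to the proof of Proposition \ref{prop:types2}, which uses the set $\Nn_0$ of $\TOP{0}$-non-erasing runs in place of $\Qq$.
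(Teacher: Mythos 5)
Your proof is correct and is exactly the intended argument: the paper presents Proposition \ref{prop:types3} as an immediate consequence of part \ref{thm:typesA} of Theorem \ref{thm:types} applied to the set $\Qq$ of all runs in the family $\Xx$ from Section \ref{sec:Runs}, and gives no further proof. Your observation that $\lev(\Qq)=\sL$ is irrelevant for part \ref{thm:typesA} is also accurate.
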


Let us also comment on the crucial properties of pumping runs and $\TOP0$-non-erasing runs. 
A run $R\in\Pp\cup\Nn_0$ ends in a stack which is not smaller than the stack in
which $R$ starts; 
in particular $R\in\Pp_{<,\varepsilon}$ ends in a strictly greater
stack than it starts. 
The classes $\Pp$ and $\Nn_0$ are quite similar. 
The main differences between them 
are the following two.
The definition of
pumping runs is more restrictive, i.e.,  $\Pp\subseteq\Nn_0$ allowing to set 
$\lev(\Pp_\noteps)=\lev(\Pp_{<,\varepsilon})=0$ while $\lev(\Nn_0)>0$.
Thus, Theorem \ref{thm:types} gives a stronger transfer property for
$\Pp$ than for $\Nn_0$. 
On the other hand, $\Nn_0$ is closed under prefixes in the sense that 
for $R\in\Nn_0$ we also have $\subrun{R}{0}{i}\in\Nn_0$ for any
$i\leq \lvert R \rvert$. 
For $R\in \Pp$ we have $\subrun{R}{0}{i}\in \Nn_0$ for all $i<\lvert R
\rvert$ but not always $\subrun{R}{0}{i}\in \Pp$. 

Next we show how these propositions can be used to prove the pumping lemma. 
This part consists of the following steps (where we write $\Gg$ for
the $\varepsilon$-contraction of the graph of $\Ss$ of level $\sL$).
\begin{enumerate}
\item \label{Part3-step1} A simple construction shows that we may
  assume that the state of 
  a reachable configuration $c$ determines whether $c\in \Gg$
  (the state set $Q$ is
  partitioned as $Q=Q_{\varepsilon} \cup Q_{\noteps}$ such
  that $q\in Q_{\varepsilon}$ implies that all edges leading to $q$
  are labelled $\varepsilon$ and $q\in Q_{\noteps}$ implies
  that all edges leading to $q$ are not labelled $\varepsilon$).
  In the rest of the proof we assume that this condition holds.
\item 
  We say a run $R$ \emph{induces a path of length $l$ in $\Gg$} if
  $R(0)\in \Gg$ and $R(\lvert R \rvert)\in \Gg$ and there are $l+1$
  many $i\leq \lvert R \rvert$ such that $R(i)\in \Gg$. 
  Recall that $C_{\Ss}$ was defined in Theorem \ref{thm:pumping}.
  We show that 
  every run $R$ starting at a configuration $c$ of distance $m$ from
  the initial one in $\Gg$ that induces a path of length $C_{\Ss}$
  in $\Gg$ contains a pumping subrun
  $T\in\Pp_{<,\varepsilon}\cup\Pp_{\noteps}$ such that
  $R=S\circ T \circ U$.\footnote{This step relies on the assumption
    that the system is finitely branching. This assumption allows to
    derive a bound on the size of the configuration $c$ in terms of
    the distance $m$.}
  Moreover, if $T\in \Pp_{<,\varepsilon}$ we can show that
  $\subrun{U}{0}{m}$ is a   
  $\TOP{0}$-non-erasing run for some $m\leq \lvert U\rvert$ such that
  $U(m)\in\Gg$. 
\item 
  We conclude with the following case distinction.
  \begin{itemize}
  \item If $T\in \Pp_{<,\varepsilon}$, let $b\leq\lvert S\rvert$ be
    maximal such that $S(b)$ is a node of $\Gg$.
    We apply 
    Proposition \ref{prop:types1} to $T$ and obtain infinitely many
    runs  $(R_j)_{j\in\N}$ ending in configurations $(q,s_i)$ 
    such that $\subrun{S}{b}{\lvert S \rvert} \circ
    R_j$ is an $\varepsilon$-labelled path from $S(b)$ to $(q,s_i)$
    where $s_{i+1}$ is greater than $s_i$ for all $i\in\N$. 
    We can apply Proposition \ref{prop:types2} to $\subrun{U}{0}{m}$ and $(q,s_i)$
    and obtain a $\TOP{0}$-non-erasing run $U'_i$ from $(q,s_i)$. 
    By construction $U_i'$ ends in the same state as $\subrun{U}{0}{m}$. 
    Due to step \ref{Part3-step1}, we conclude that $U'_i$
    ends in some node of $\Gg$ (because $\subrun{U}{0}{m}$ does so and their final
    states coincide). Let $U_i$ be the minimal prefix of
    $U'_i$ such that 
    $U_i(\lvert U_i\rvert)\in \Gg$, i.e., $U_i=\subrun{U'_i}{0}{j}$
    such that the transition between $U_i'(j-1)$ and $U_i'(j)$ is the first transition of $U_i'$ not labelled by $\varepsilon$.
    We know that $U_i$ is also a $\TOP0$-non-erasing run, so it ends
    in a greater or the same stack than it starts. 
    Let $g_i$ be the final configuration of $U_i$.
    Note that
    $\subrun{S}{b}{\lvert S \rvert} \circ R_i \circ U_i$ is a run that
    induces a path of length $1$ from $S(b)$ to $g_i$, i.e., $g_i$ is
    a successor of $S(b)$ in $\Gg$. 
    Since the sizes of the stacks $s_i$ increase strictly for each $i$
    there is a $j$ such that $s_j$ is bigger than $g_i$. Since $g_j$
    is even bigger than $s_j$, $g_i$ and $g_j$ cannot coincide. 
    Inductive use of this argument yields a sequence $(i_k)_{k\in\N}$
    such that the $g_{i_k}$ are pairwise different successors of
    $S(b)$ in $\Gg$. 
    Especially, we conclude that this case cannot occur in a finitely
    branching $\varepsilon$-contraction.
  \item Otherwise, $T\in \Pp_\noteps$. 
    Application of Proposition \ref{prop:types1} to $T$ yields a
    sequence $(R_i)_{i\in\N}$ of runs in $\Pp_\noteps$
    starting in $T(0)$ such that $R_i$ contains at least $i$
    transitions not labelled $\varepsilon$ and such that
    $\ctype_\Xx(T(\lvert T \rvert)) \typOrd 
    \ctype_\Xx(R_i(\lvert R_i \rvert))$.
    Application of Proposition \ref{prop:types3} to $U$ and
    $R_i(\lvert R_i\rvert)$ yields a run $U_i$ from $R_i(\lvert
    R_i\rvert)$ to some configuration with the same final state as
    that of $U$ (which is also the same final state as that of $R$). 
    Using part \ref{Part3-step1}), we conclude  that $S\circ R_i \circ
    U_i$  induces a 
    path of length at least $i$ starting in $c$ for each $i\in\N$.
    Thus, $\Gg$ contains paths of arbitrary length starting in $c$
    and ending in the same state as $R$; this completes the proof of the
    pumping lemma.  
  \end{itemize}
\end{enumerate}


\section{Combinatorics for 
  Theorem \ref{thm:pumping}}
\label{app:Combinatoric}
In this part we collect some combinatorial facts that turn out to be
useful in the proof of Theorem \ref{thm:pumping}.
The first lemma says that if we have a sequence of natural numbers
which increase at most by one from one number to the next and if we
choose a set $G$ of $2^k-1$ of these numbers then we find an increasing
subsequence of $k$ numbers such that each element of the subsequence
is strictly smaller than all following elements of the sequence up to
the next occurrence of a number from $G$. 
This sequence is intended to contain sizes of a stack during a run;
such size can increase by at most $1$ (when a $\push{}$ is performed) and can decrease arbitrarily (when a $\col{}$ is performed).
For $G\subseteq \N$ with $l-1\in G$ and $i < l$ we set
$n_G(i):=\min\{g\in G: g\geq i\}$. 

\begin{lemma}\label{lem:IncreasingNumberSequence} 
  Let $k\in\N\setminus\{0\}$, let 
  $(a_i)_{0 \leq i \leq l}$ be a sequence of natural numbers such that 
  $a_i-a_{i-1} \leq  1$ for all $1\leq i \leq l$ and such that
  $a_0 = \min\{ a_i: 0\leq i \leq l\}$.
  Let $G\subseteq\{0,1, \dots, l-1\}$ be such that 
  $\lvert G\rvert \geq 2^k-1$. 
  
  There is an $e \leq l$ such that $e-1\in G$ and  for
  \begin{align*}
    H_e:= \{ i\leq e-1:\ &a_i \leq a_j\text{ for all } i\leq j \leq e
    \text{ and }\\
    &a_i < a_j\text{ for all } i<j \leq n_G(i)\}
  \end{align*}
  we have $\lvert H_e \rvert \geq k$. 
\end{lemma}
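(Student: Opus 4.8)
The plan is to prove the lemma by induction on $k$.

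\emph{Base case ($k=1$).} Here $|G|\ge 1$. I would take $e:=\min G+1$, so that $e-1\in G$ and $e\le l$. Since $a_0=\min_i a_i$, the index $0$ lies in the non-empty set $P:=\{i\in\{0,\dots,e-1\}: a_i\le a_j\ \text{for all}\ i\le j\le e\}$; let $i^\ast:=\max P$. Then $i^\ast\in H_e$: the condition $a_{i^\ast}\le a_j$ for $i^\ast\le j\le e$ holds by the choice of $i^\ast$, and since $e-1=\min G$ we have $n_G(i^\ast)=e-1$, unless $i^\ast=e-1\in G$, in which case the strict condition defining $H_e$ is vacuous. If $i^\ast<e-1$ and some $j$ with $i^\ast<j\le e-1$ satisfied $a_j\le a_{i^\ast}$, then $a_j=a_{i^\ast}$ (because $i^\ast\in P$ gives $a_{i^\ast}\le a_j$), and one checks that then $j\in P$ too, contradicting maximality of $i^\ast$; hence $a_{i^\ast}<a_j$ for $i^\ast<j\le n_G(i^\ast)$. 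Thus $|H_e|\ge 1$.

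\emph{Inductive step.} Let $k\ge 2$ and $|G|\ge 2^k-1=2(2^{k-1}-1)+1$. Let $g^\ast$ be the $2^{k-1}$-th smallest element of $G$ and split $G$ into the set $G_\ell$ of the $2^{k-1}-1$ elements smaller than $g^\ast$ and the set $G_r$ of the (at least $2^{k-1}-1$) elements larger than $g^\ast$. The idea is a divide and conquer: apply the induction hypothesis ``on the left'' to a suitable clamping of the sequence together with $G_\ell$ to obtain an endpoint $e_0$ with $e_0-1\in G_\ell$ and $k-1$ indices from $H_{e_0}$, all lying in $\{0,\dots,g^\ast-1\}$; then enlarge the endpoint to some $e_1>e_0$ with $e_1-1\in G$ (using $g^\ast$ and $G_r$ as room to the right) so that (i) the $k-1$ old indices still lie in $H_{e_1}$ and (ii) at least one genuinely new index enters $H_{e_1}$ (a natural candidate being $e_0$ itself, or the largest weak record in $(e_0,e_1]$). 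For (i) one wants $a$ not to drop below the values of the transferred indices on $(e_0,e_1]$; for that I would first replace $a$ by the clamped sequence $b_j:=\min(a_j,v)$, with $v$ the minimum of $a$ over the block from $g^\ast$ to $\max G_r+1$ (one checks that $b$ still satisfies $b_j-b_{j-1}\le 1$ and $b_0=\min_j b_j$), run the induction hypothesis on $b$ and $G_\ell$, and use that $n_{G_\ell}\ge n_G$ so that the strict conditions transfer automatically.

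The hard part will be exactly this combination step in the inductive case: making rigorous that the $k-1$ indices produced by the recursion really survive the enlargement of the endpoint, and that an honestly new index appears. The delicate points are the indices that happen to be elements of $G$ (for which the strict condition of $H_e$ is vacuous, so that the clamping argument does not directly bound their $a$-value) and the ``gap region'' $(e_0,g^\ast)$, in which $a$ need not be $\ge v$ when $e_0$ lies far below $g^\ast$. I expect a clean proof needs a mildly strengthened induction hypothesis — for instance also recording that the returned endpoint $e$ minimises $a_e$ among $\{a_{g+1}:g\in G\}$, or restricting the sequence before recursing — so that the transferred indices are forced to be low enough; once that is in place, the remaining work (the transfer of the strict conditions, and the elementary facts about suffix minima of a sequence whose upward steps are at most $1$) is routine bookkeeping.
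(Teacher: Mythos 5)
Your base case is correct, but the inductive step is only a plan, and the part you yourself flag as ``the hard part'' is precisely where the proof lives; as written there is a genuine gap. Splitting $G$ positionally at its $2^{k-1}$-th element and recursing on the left gives $k-1$ indices that are suffix-minima only up to $e_0\le g^\ast$, and nothing forces them to remain suffix-minima up to the enlarged endpoint $e_1$: the sequence may dip arbitrarily in $(e_0,g^\ast)$, and the clamping $b_j:=\min(a_j,v)$ does not repair this, because (i) an index $i$ returned by the recursion on $b$ need not satisfy $a_i\le v$ (it could have $b_i=v<a_i$), and (ii) even when $a_i\le v$, the dip in the gap region can destroy the condition $a_i\le a_j$ for $j\le e_1$. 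Likewise the existence of a ``genuinely new'' $k$-th index in $(e_0,e_1]$ satisfying the strict condition up to $n_G(\cdot)$ is asserted, not proved. You correctly sense that a strengthened induction hypothesis or a different decomposition is needed, but you do not supply it, so the argument is incomplete.

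For comparison, the paper inducts on $l$ rather than $k$ and splits not by the position of the median of $G$ but by where the minimum \emph{value} $a_0$ recurs: either all of $G$ lies beyond a later occurrence of the minimum (pure recursion on the suffix), or a prefix on which $a_i>a_0$ already contains $2^{k-1}-1$ elements of $G$ (recurse there with $k-1$ and adjoin the index $0$, which is automatically a strict record up to $n_G(0)$), or neither, in which case a counting argument shows the suffix starting at the first recurrence of the minimum still contains $2^{k-1}-1$ elements of $G$ (recurse there and again adjoin $0$). Anchoring every recursive call at an occurrence of the global minimum is exactly what guarantees that the transferred indices survive and that the adjoined index satisfies both conditions --- the two points your sketch leaves open. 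I would encourage you to either adopt this value-based decomposition or formulate and prove the strengthened hypothesis you allude to; without one of these, the inductive step does not go through.
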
 
 
\begin{proof} 
  The proof is by induction on $l$. 
  For $0 \leq b \leq e \leq l$ we write
  $H_{b,e} = H_e \cap \{i\in\N: i\geq b\}$.
  Note that it suffices to find $0\leq b \leq e \leq l$ such that
  $\lvert H_{b,e} \rvert \geq k$. 
  We distinguish the following cases. 
  \begin{enumerate}
  \item 
    Assume that $k=1$. 
    Since $\lvert G \rvert \geq 2^1-1=1$, we can choose some $e'\in
    G$. Let $e:=e'+1 \leq l$. 
    Choose $b\leq e'$ maximal such that 
    $a_b= a_0$. 
    By choice, $a_b < a_j$ for all $b < j \leq e'$, and $a_b=a_0\leq a_e$.
    Thus, $b\in H_{b,e}$ which settles the claim. 
  \item \label{case:lem:IncreasingNumberSequence2}
    Assume that there is some $0<b\leq l$ such that
    $G\subseteq\{b,b+1,\dots,l-1\}$ and $a_b= \min\{ a_i: b\leq i \leq
    e\}$. By induction hypothesis, there is some $e\leq l$ with
    $e-1\in G$  such that $\lvert H_{b,e}\rvert \geq k$. 
  \item  \label{case:lem:IncreasingNumberSequence3}
    Assume that 
    there is some $1\leq l' \leq l$ such that
    $a_i > a_0$ for all $1\leq i \leq l'$ and
    $\lvert G \cap\{1,2,\dots,l'-1\}\rvert \geq 2^{k-1}-1$.
    Since $a_1-a_0\leq 1$,
    it follows that $a_1=a_0+1=\min\{a_i: 1\leq i \leq l'\}$. 
    Thus, we can apply the induction hypothesis to the sequence
    $a_1, a_2, \dots, a_{l'}$ and $k-1$ and obtain $e\leq l'$ such that $e-1\in G$ and 
    $\lvert H_{1,e} \rvert \geq k-1$. Since $a_0<a_i$ for all $1\leq
    i \leq l'$, we conclude that $H_{0,e} = \{a_0\} \cup H_{1,e}$
    contains at least $k$ elements. 
  \item 
    Assume that none of the above cases holds. 
    Then in particular $k\geq 2$. 
    Let $b\geq 1$ be the smallest index such that $a_{b}=a_0$. 
    If such $b$ would not exist, case
    \ref{case:lem:IncreasingNumberSequence3} would hold with $l'=l$. 
    
    Let $G'=G\cap\{b,b+1,\dots,l-1\}$. 
    We have $a_i> a_0$ for $1\leq i\leq b-1$. 
    Because $b-1$ cannot be taken as $l'$ in 
    case \ref{case:lem:IncreasingNumberSequence3}, we either have
    $b=1$, or $\lvert G\cap\{1,2,\dots,b-2\}\rvert \leq 2^{k-1}-2$.  
    In the former case, $\lvert G' \rvert \geq 2^{k}-1-1\geq
    2^{k-1}-1$ and in the latter case
    $\lvert G'\rvert \geq (2^k-1)-2-(2^{k-1}-2) \geq 2^{k-1}-1$.
    Since $a_b=a_0=\min\{a_i: 0\leq i \leq l\}$, the induction
    hypothesis applies to the shorter sequence
    $a_b, a_{b+1}, \dots, a_l$ and $G'$. Thus, there is some 
    $e\leq l$ such that $e-1\in G' \subseteq G$ and 
    $\lvert H_{b,e} \rvert \geq k-1$.

    Since we are not in case \ref{case:lem:IncreasingNumberSequence2}, 
    $G'\neq G$ whence there is some
    $g\in G$ with $0\leq g \leq b-1$. 
    Since $ a_0<a_i$ for all $0<i\leq b-1$ we also have
    $a_0 < a_i$ for all $0 < i \leq g$. 
    Since $a_0$ is the minimal element of the sequence
    and since $0 <b$ we have $\lvert H_{0,e}\rvert \geq \lvert \{a_0\}
    \cup H_{b,e} \rvert \geq k$.\qed
  \end{enumerate}
\end{proof} 
 
\begin{corollary}\label{cor:IncreasingNumberSequence}
  Let $k\in \N\setminus\{0\}$ and let
  $a_0,a_1,\dots,a_l$ be 
  a sequence of positive natural numbers such that
  $a_i-a_{i-1}\leq 1$ for $1\leq i\leq l$.  
  Let $G\subseteq\{0,1,\dots,l-1\}$ be such that 
  $\lvert G \rvert \geq  a_0\cdot 2^k$.  
  Then there exist two indices  $0\leq b<e\leq l$ such that
  \begin{enumerate}
  \item $e-1\in G$,
  \item	$a_b = \min\{ a_i: b\leq i\leq e\}$,
  \item	$a_i > a_b$  for each  $0\leq i < b$ and  
  \item $\lvert H_{b,e} \rvert \geq k$.
  \end{enumerate} 
\end{corollary}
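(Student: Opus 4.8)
The plan is to derive Corollary~\ref{cor:IncreasingNumberSequence} from Lemma~\ref{lem:IncreasingNumberSequence} by a straightforward induction on $a_0$, peeling off one ``level'' of the minimum at a time. The base case $a_0=1$ is essentially a direct application of Lemma~\ref{lem:IncreasingNumberSequence}: since all $a_i\geq 1=a_0$, the sequence already satisfies $a_0=\min\{a_i:0\leq i\leq l\}$, and $\lvert G\rvert\geq 1\cdot 2^k=2^k\geq 2^k-1$, so the lemma yields $e$ with $e-1\in G$ and $\lvert H_e\rvert\geq k$; taking $b:=0$ makes conditions (2) and (3) vacuous (for (2), $a_0=\min$ trivially; (3) is an empty range), and $H_{0,e}=H_e$ settles (4).

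For the inductive step, assume the corollary for all sequences with first value $a_0-1$, and suppose $a_0\geq 2$. If $a_i\geq a_0$ for \emph{all} $i$, then again Lemma~\ref{lem:IncreasingNumberSequence} applies directly with $b=0$ as above, so assume some $a_i<a_0$. Let $b'$ be the first index with $a_{b'}<a_0$; since steps decrease the value by at most nothing on the way up but only $a_i-a_{i-1}\leq 1$ controls increases, and $a_{b'}<a_0\leq a_{b'-1}$ forces $a_{b'}=a_0-1$ (a jump down to exactly one below is possible, but it cannot overshoot because $a_{b'-1}\geq a_0$ and the gap is $a_{b'-1}-a_{b'}$, which is unbounded from above — so I actually need to be a little careful here: $a_{b'}$ could be anything $\leq a_0-1$). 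The cleaner move is: consider the truncated tail $a_{b'},a_{b'+1},\dots,a_l$. Its first term is some value $v\leq a_0-1$. Replace this tail by the sequence $a'_i:=a_i - (v-1)$ for $i\geq b'$ if one wants $a'_{b'}=1$, or more simply apply the induction hypothesis to $a_{b'},\dots,a_l$ after observing its first value is at most $a_0-1$ (the induction hypothesis, as stated, quantifies over all sequences with a fixed smaller first value, but monotonicity in $\lvert G\rvert$ lets us also absorb smaller first values — I would restate the induction hypothesis as ``for every sequence with first value $\leq a_0-1$''). We need $\lvert G\cap\{b',\dots,l-1\}\rvert\geq (a_0-1)\cdot 2^k$, which may fail; if so, then $\lvert G\cap\{0,\dots,b'-1\}\rvert\geq a_0\cdot 2^k-(a_0-1)\cdot 2^k=2^k>2^k-1$, and on the initial segment $a_0,\dots,a_{b'-1}$ all values are $\geq a_0$, so $a_0$ is the minimum of that segment and Lemma~\ref{lem:IncreasingNumberSequence} applies to it directly, giving $e\leq b'$ with $e-1\in G$ and $\lvert H_e\rvert\geq k$; take $b:=0$. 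Otherwise the induction hypothesis on the tail gives indices $b'\leq b<e\leq l$ with $e-1\in G$, $a_b=\min\{a_i:b\leq i\leq e\}$, $a_i>a_b$ for $b'\leq i<b$, and $\lvert H_{b,e}\rvert\geq k$; then since $a_i\geq a_0>v\geq a_b$ for all $i<b'$, condition (3) extends to all $0\leq i<b$, and we are done.

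The main obstacle I anticipate is the bookkeeping in the inductive step, specifically making the split between ``enough of $G$ lies in the tail'' and ``enough of $G$ lies in the head'' line up cleanly with the two different arguments (induction vs.\ direct appeal to Lemma~\ref{lem:IncreasingNumberSequence}), and getting the constant $a_0\cdot 2^k$ to survive the decrement to $(a_0-1)\cdot 2^k$ exactly — the $2^k$ slack per unit of $a_0$ is precisely what is needed, so the induction is tight and there is no room for sloppiness. A secondary subtlety is that Lemma~\ref{lem:IncreasingNumberSequence} uses the closed set $H_e$ depending on $G$ through $n_G$, so when I apply it to a truncated segment I must check that the relevant notion of $n_G$ restricted to that segment agrees with the global one on the indices that matter; since the corollary only demands $\lvert H_{b,e}\rvert\geq k$ with the \emph{global} $H$, and restricting the index set only enlarges $n_G(i)$-constraints, I should verify that the $H$ produced by the sub-application is contained in the global $H_{b,e}$ — which it is, because a stronger ``$a_i<a_j$ up to the next element of a smaller $G$'' condition implies the weaker global one only if the next global element of $G$ is no later, and here it is exactly the same element since we only dropped elements of $G$ before $b'\leq b\leq i$. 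I would spell this containment out explicitly in the final write-up rather than leave it implicit.
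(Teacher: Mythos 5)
Your proof is correct, and it is in essence the paper's argument unrolled into an induction. The paper proves the corollary in one shot: it sets $m_j:=\min\{a_i: 0\leq i\leq j\}$, observes that the level sets $M_i:=\{j: m_j=i\}$ partition $\{0,\dots,l\}$ into at most $a_0$ intervals, picks by pigeonhole an interval containing at least $2^k$ elements of $G$, takes $b$ to be its left endpoint (which automatically gives conditions (2) and (3)), and applies Lemma~\ref{lem:IncreasingNumberSequence} to that interval. Your induction on $a_0$ steps through exactly these level sets one at a time: your ``head'' $a_0,\dots,a_{b'-1}$ is the level set $M_{a_0}$, and your budget accounting ($2^k$ elements of $G$ per unit of $a_0$) is the pigeonhole in disguise. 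What the paper's packaging buys is that conditions (2) and (3) come for free from the choice of $b$ as the left endpoint of an interval of the running minimum, whereas you must propagate them through the induction (which you do correctly, using the inductive condition (3) to get $a_b\leq v<a_0\leq a_i$ for $i<b'$); what your packaging buys is that the strengthened hypothesis ``first value $\leq a_0-1$'' absorbs the possibility of the sequence dropping by more than one level at once, which the paper handles implicitly by allowing empty $M_i$. The one point you flag at the end---that the $H$ returned by a sub-application is contained in the global $H_{b,e}$---is indeed the only real bookkeeping obligation, and your reasoning there is right: restricting $G$ to a subinterval can only increase $n_G(i)$ for the surviving indices (and leaves it unchanged in the tail case), so the local membership condition is at least as strong as the global one. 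Note that the paper faces the identical issue when it applies the lemma to $(a_j)_{j\in M_i}$ with $G_i\setminus\{c\}$ and leaves it implicit.
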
 

\begin{proof} 
  For each 
  $0\leq j\leq l$ set  
  \begin{align*}
    m_j :=\min\{a_i: 0\leq i\leq j\}.    
  \end{align*}
  Notice that $m_0,m_1,\dots,m_l$ is a decreasing sequence
  of numbers between $1$ and $a_0$.  Thus, 
  \begin{align*}
    M_i:=\{j: 0 \leq j \leq l:  m_j=i\}    
  \end{align*}
  is a (possibly empty) interval for
  each $1 \leq i \leq a_0$ and the $M_i$ form a partition of 
  $\{j: 0\leq j \leq l\}$ into $a_0$ many sets. Since $\lvert G \rvert
  \geq a_0\cdot 2^k$, 
  there is at least one $1\leq i \leq a_0$
  such that $G_i:= G\cap M_i$ has at least $2^k$ many elements. 
  Set $b$ and $c$ to be the minimal  and maximal element,
  respectively, of $M_i$. By definition $a_b=i=\min\{a_j:j\in  M_i\}$,
  and
  $a_b<a_j$ for all $0\leq j < b$. 
  We can now apply Lemma \ref{lem:IncreasingNumberSequence} to
  $(a_j)_{j\in M_i}$ and $G_i\setminus \{c\}$. This shows the
  existence of some $e\leq l$ such that $e-1\in G_i\subseteq G$ and
  $\lvert H_{b,e} \rvert \geq k$. \qed
\end{proof}

We fix constants $c\geq 2$ and $m$ and define several sequences
which are parameterised by $c$ and $m$. In the next section, we will
always use $c=\lvert \Tt_\Ss\rvert+1$ and $m$ will be the length of a
fixed path in the graph of $\Ss$.
In the final part of this section,  we prove certain properties of
these sequences that we will use in the next section. 

\begin{definition} \label{def:SequencesMandN}
  \begin{enumerate}
  \item Set $M_1 :=(m+1)\cdot c$ and
    $M_j := 2^{M_{j-1}}$ for $j\geq 2$,
  \item set $M_1'= m \cdot c$ and
    $M_j' := 2^{M_{j-1}'}$ for $j\geq 2$, 
  \item set  $N'_0 :=  c$ and 
    $N'_j := M_j'\cdot 2^{N'_{j-1}}$ for $j \geq 1$, 
  \item set  $N_0 :=  c$ and 
    $N_j := M_j\cdot 2^{N_{j-1}}$ for $j \geq 1$, and
  \item set 
    $S_1:= (m+1) \cdot 3 \cdot c \cdot 2^c$ and
    $S_j:= 2^{S_{j-1}}$ for $j\geq 2$. 
  \end{enumerate}
\end{definition}

\begin{lemma} \label{lem:MminusMPrimeGeqNPrime}
  $M_i -M_i'\geq N'_{i-1}$ for all $i\geq 1$.
\end{lemma}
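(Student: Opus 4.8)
The plan is a nested induction on $i$. First I would isolate the auxiliary ``strict gap'' fact that $M_i\ge M_i'+1$ for every $i\ge 1$, and then I would use it together with the induction hypothesis of the main claim to pass from $i-1$ to $i$.

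For the auxiliary claim: at $i=1$ one has $M_1-M_1'=(m+1)c-mc=c\ge 2$ by the standing assumption $c\ge 2$; and for $i\ge 2$, from $M_{i-1}\ge M_{i-1}'+1$ one gets $M_i-M_i'=2^{M_{i-1}}-2^{M_{i-1}'}\ge 2^{M_{i-1}'+1}-2^{M_{i-1}'}=2^{M_{i-1}'}\ge 1$.

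For the main induction, the base case $i=1$ is just the computation $M_1-M_1'=(m+1)c-mc=c=N'_0$. For $i\ge 2$, assuming $M_{i-1}-M_{i-1}'\ge N'_{i-2}$, I would chain
\[
M_i-M_i'=2^{M_{i-1}}-2^{M_{i-1}'}\ \ge\ 2^{M_{i-1}}-2^{M_{i-1}-1}\ =\ 2^{M_{i-1}-1}\ \ge\ 2^{M_{i-1}'+N'_{i-2}-1}\ =\ 2^{M_{i-1}'-1}\cdot 2^{N'_{i-2}},
\]
where the first inequality uses $M_{i-1}'\le M_{i-1}-1$ (the auxiliary claim) and the second uses the induction hypothesis $M_{i-1}\ge M_{i-1}'+N'_{i-2}$. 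Finally, since $M_{i-1}'$ is a natural number and $2^{x-1}\ge x$ for all $x\in\N$ (a one-line induction), one has $2^{M_{i-1}'-1}\ge M_{i-1}'$, hence $M_i-M_i'\ge M_{i-1}'\cdot 2^{N'_{i-2}}=N'_{i-1}$; the degenerate subcase $M_{i-1}'=0$ (which can only happen for $i=2$, $m=0$) is vacuous since then $N'_{i-1}=0$.

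The one thing to be careful about -- the main, if minor, obstacle -- is that the tempting direct estimate $2^{M_{i-1}}\ge 2^{M_{i-1}'+N'_{i-2}}=M_i'\cdot 2^{N'_{i-2}}$ is by itself too weak to beat $M_i'+N'_{i-1}$ once $N'_{i-2}$ is small. The remedy is to route the bound through $2^{M_{i-1}-1}$, which is where the strict gap $M_{i-1}'\le M_{i-1}-1$ enters; so the real point is to notice that this strict-gap fact has to be established separately and fed into the main step, after which everything reduces to the elementary estimate $2^{x-1}\ge x$.
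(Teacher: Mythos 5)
Your proof is correct and follows essentially the same route as the paper: an induction on $i$ with the identical base case $M_1-M_1'=(m+1)c-mc=c=N'_0$, and an inductive step assembled from the same elementary exponential estimates. The paper arranges that step slightly differently---it factors $M_i-M_i'=2^{M_{i-1}'}\bigl(2^{M_{i-1}-M_{i-1}'}-1\bigr)$, applies the induction hypothesis inside the parenthesis, and finishes with $2^k\ge 2k$ and $2^{N}-1\ge 2^{N-1}$, so it never needs your auxiliary strict-gap fact $M_{i-1}\ge M_{i-1}'+1$---but the two computations are interchangeable.
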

\begin{proof}
  The proof is by induction on $i$. 
  For $i=1$ we just have $M_1-M_1'= (m+1) \cdot c - m \cdot c = c= N'_0$. 
  For $i\geq 2$ we have 
  \begin{equation*}
    M_i-M_i'=2^{M_{i-1}}-2^{M_{i-1}'}=2^{M_{i-1}'}(2^{M_{i-1}-M_{i-1}'}-1)
    \geq
    2^{M_{i-1}'}(2^{N'_{i-2}}-1)    
  \end{equation*}
  where the inequality holds due to our
  induction hypothesis. 
  Since  $2^k\geq 2k$ for each $k\in \N$, we have
  \begin{equation*}
    2^{M_{i-1}'}(2^{N'_{i-2}}-1) \geq 2\cdot M_{i-1}'(2^{N'_{i-2}}-1).    
  \end{equation*}
  Furthermore,
  $N'_{i-2}\geq 1$ implies that $2^{N'_{i-2}}-1\geq 2^{N'_{i-2}-1}$.
  Hence, 
  \begin{equation*}
   2\cdot M_{i-1}'(2^{N'_{i-2}}-1) \geq 
   M_{i-1}' ( 2\cdot 2^{N'_{i-2}-1}) = 
   N'_{i-1}.      \tag*{\qed}
  \end{equation*}
\end{proof}

\begin{lemma} \label{lem:CombinatoricSandN}
  $S_j\geq 3N_j$ for all $j\geq 1$.
\end{lemma}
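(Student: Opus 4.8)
The plan is to prove $S_j\geq 3N_j$ by induction on $j\geq 1$, reducing the inductive step to a cheap comparison between $N_{j-1}$ and $M_{j-1}$. For the base case $j=1$ I would simply unfold the definitions: since $N_1=M_1\cdot 2^{N_0}=(m+1)\cdot c\cdot 2^c$, we get $3N_1=(m+1)\cdot 3\cdot c\cdot 2^c$, which is precisely $S_1$. So the base case in fact holds with equality, and there is nothing subtle there.

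For the inductive step, fix $j\geq 2$ and assume $S_{j-1}\geq 3N_{j-1}$. The key observation is that for $j\geq 2$ one has $M_j=2^{M_{j-1}}$, hence $N_j=M_j\cdot 2^{N_{j-1}}=2^{M_{j-1}+N_{j-1}}$, whereas $S_j=2^{S_{j-1}}$. Since $4\geq 3$, it is therefore enough to prove $S_{j-1}\geq M_{j-1}+N_{j-1}+2$; writing $3N_{j-1}=N_{j-1}+2N_{j-1}$ and invoking the induction hypothesis, this in turn follows once we know $2N_{j-1}\geq M_{j-1}+2$.

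Establishing $2N_{j-1}\geq M_{j-1}+2$ for $j-1\geq 1$ is the one place that needs an argument, and is what I would regard as the main (minor) obstacle. I would first note that $(N_i)_{i\geq 0}$ is strictly increasing: since $M_i\geq 2$ for all $i\geq 1$ (because $M_1=(m+1)c\geq c\geq 2$ and $M_i=2^{M_{i-1}}\geq 2$ for $i\geq 2$) and $2^k>k$, we have $N_i=M_i\cdot 2^{N_{i-1}}\geq 2^{N_{i-1}}>N_{i-1}$. Consequently $N_{j-2}\geq N_0=c\geq 2$, so $2^{N_{j-2}}\geq 4$, which gives $N_{j-1}=M_{j-1}\cdot 2^{N_{j-2}}\geq 4M_{j-1}\geq M_{j-1}+2$ and a fortiori $2N_{j-1}\geq M_{j-1}+2$. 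Putting the pieces together, $S_j=2^{S_{j-1}}\geq 2^{N_{j-1}+M_{j-1}+2}=4\cdot M_j\cdot 2^{N_{j-1}}\geq 3\cdot M_j\cdot 2^{N_{j-1}}=3N_j$, which closes the induction. In effect the whole content of the lemma is that the coefficient $3$ is already built into $S_1$ and survives the tower of exponentials, so no estimate finer than $4\geq 3$ is ever required.
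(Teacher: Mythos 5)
Your proof is correct and follows essentially the same route as the paper: induction on $j$ with the base case holding with equality, and an inductive step that exponentiates the hypothesis $S_{j-1}\geq 3N_{j-1}$ and absorbs the factor $3\cdot M_j$ using the fact that $N_{j-1}$ dominates $M_{j-1}$. The paper phrases this multiplicatively (splitting $2^{3N_{j-1}}$ into three factors $2^{N_{j-1}}$, two of which are bounded below by $3$ and by $M_j$), whereas you work additively in the exponent via $2N_{j-1}\geq M_{j-1}+2$; the difference is purely cosmetic.
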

\begin{proof}
  The proof is by induction on $j$. 
  For $j=1$ we have 
  \begin{align*}
    S_1=(m+1)\cdot 3\cdot c \cdot 2^c = 3\cdot
    M_1\cdot 2^{N_0}=3 N_1.     
  \end{align*}
  Now assume that $j\geq 2$. 
  Since $N_{j-1}\geq N_0\geq 2$ holds, 
  $2^{N_{j-1}}\geq 3$.  
  We also have $N_{j-1}=M_{j-1}\cdot 2^{N_{j-2}}\geq M_{j-1}$, so
  $2^{N_{j-1}}\geq 2^{M_{j-1}}=M_j$.  
  Due to the induction hypothesis, we have
  \begin{equation}
    S_j = 2^{S_{j-1}}\geq 2^{3N_{j-1}}=2^{N_{j-1}}\cdot 2^{N_{j-1}}\cdot
    2^{N_{j-1}}\\
     \geq 3\cdot M_j\cdot 2^{N_{j-1}}=3N_j.  \tag*{\qed} 
  \end{equation}

\end{proof}


\section{Proof of Theorem \ref{thm:pumping}}
\label{app:PumpingLemma}
In this section we complete the proof of our main theorem. 
We start with several technical lemmas which connect behaviour of runs
described in terms of the stack sizes and in terms of the history
function (Subsection \ref{ssec:technical}). 
Then we give a lemma ensuring that a run $R$ satisfying a certain
technical condition has a subrun $S$ which decomposes into a pumping
run and a $\TOP{0}$-non-erasing run (Subsection \ref{ssec:pumping-tech}).
Next, in Subsection \ref{ssec:finitely-branching} we use this lemma in
order to give a bound on the size of  
stacks in finitely branching $\varepsilon$-contractions of
collapsible pushdown graphs: given such a finitely branching
contraction $\Gg$ and a 
configuration $c\in \Gg$ of distance $m$ from the initial
configuration,  there is a bound on the size of the stack of $c$ in
terms of $m$ and of the size and the level of the system $\Ss$
generating $\Gg$.   
Using this bound we derive a pumping construction 
in Subsection \ref{ssec:pumping-proof}
which proves the main theorem: 
for $\Gg$ and $c$ as before, if there is a path starting in $c$ of
length above some bound depending on $m$ and on the size and level of
$\Ss$, then there start infinitely many paths in $c$.

\subsection{Technical Lemmas}\label{ssec:technical}

\begin{lemma}  \label{lem:TopKandIncreasingSequencegiveTopk-1}
  Let $1\leq k\leq n$, let $R$ be a run,
  let $x$ be a position of a $k$-stack of $R(|R|)$, and $y$ the
  position of its topmost $(k-1)$-stack. 
  Let $a_i$ be the size of the $k$-stack of $R(i)$ 
  at $\hist{\subrun{R}{i}{|R|}}{x}$ for each $0 \leq i \leq |R|$. 
  Assume that $\hist{R}{x}=\TOP k(R(0))$ and that $a_0=\min\{a_i:
  0\leq i\leq |R|\}$. 
  Then $\hist{R}{y}=\TOP{k-1}(R(0))$.
\end{lemma}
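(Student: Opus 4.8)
The plan is to induct on the length of $R$, using the compatibility of the history function with decomposition of runs (Proposition \ref{prop:histTransitive}) together with the local analysis of how $\TOP{k}$ and $\TOP{k-1}$ positions move under a single transition (Propositions \ref{prop:NonTopmostStacksDoNotChange} and \ref{prop:Neihbour-position-lemma}). The base case $|R|=0$ is trivial since $\hist{R}{x}=x$ and the hypothesis $\hist{R}{x}=\TOP{k}(R(0))$ forces $x=\TOP{k}(R(0))$, hence $y=\TOP{k-1}(R(0))$.

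For the inductive step I would write $R = S\circ T$ with $|S|=1$, and set $x' := \hist{T}{x}$, $a'_i$ the size of the relevant $k$-stack along $T$. The key point is to identify whether $x'$ is (or is not) the topmost $k$-stack of $T(0)=S(1)$, and to control the sequence $(a_i)$. If $\hist{\subrun{R}{i}{|R|}}{x} \ne \TOP{k}(R(i))$ for all $i$ with $0<i<|R|$, then the $k$-stack at $x$ is never touched as the topmost one except possibly at the two ends, and by Proposition \ref{prop:NonTopmostStacksDoNotChange} (variant for non-topmost stacks) together with Proposition \ref{prop:Neihbour-position-lemma} (variant \ref{npl-vara}), the topmost $(k-1)$-stack relationship is preserved along $R$, giving $\hist{R}{y}=\TOP{k-1}(R(0))$ directly. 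The real work is the case where $x'=\TOP{k}(S(1))$: here I would argue that since $a_0=\min_i a_i$ and $a_0\le a_1$ (the size can drop by at most one under a single transition, or rather the relevant size along $\hist{}$), the first operation of $S$ must be a $\push{j}$ (it cannot be a $\pop{}$ or $\col{}$ that shrinks the $k$-stack at that position, since then $a_1<a_0$). Then $\hist{S}{x'}=\hist{S}{\TOP{k}(S(1))}=\TOP{k}(S(0))$, and $x' = \TOP{k}(S(1))$ means $x'$ is the topmost $k$-stack of $S(1)$; I then apply the induction hypothesis to $T$ (which has smaller length) and the position $x$, noting that the minimality condition $a'_0=\min_i a'_i$ holds for the subsequence (since $a_0=\min$ over all of $R$ and $a_0\le a'_0 = a_1$), concluding $\hist{T}{y}=\TOP{k-1}(T(0))=\TOP{k-1}(S(1))$; finally I use Corollary \ref{cor:push+return}-style reasoning — or directly the definition of $\hist{}$ for a $\push{j}$ operation — to transport $\TOP{k-1}(S(1))$ back to $\TOP{k-1}(S(0))$ under $S$.

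I expect the main obstacle to be the careful bookkeeping of \emph{which} $k$-stack the sizes $a_i$ refer to, since $\hist{\subrun{R}{i}{|R|}}{x}$ can in general be a non-simple position (pointing through a collapse link), and one must check that "size of the $k$-stack at that position" behaves monotonically and that the minimality hypothesis propagates correctly to the subrun $T$. The cleanest route is probably to first prove: if $\hist{\subrun{R}{i}{|R|}}{x}=\TOP{k}(R(i))$ for some intermediate $i$, then by minimality of $a_0$ and the fact that once the $k$-stack becomes topmost again its history is controlled by Lemma \ref{lem:TopPresentImpliesHistoryIsId}-type arguments, one reduces to the subrun $\subrun{R}{i}{|R|}$; otherwise the non-topmost preservation Corollaries \ref{cor:NonTopmostStacksDoNotChange} and \ref{cor:positionContainment} apply. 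So the structure is a secondary induction (or a "take the largest $i$ with $\hist{\subrun{R}{i}{|R|}}{x}=\TOP{k}(R(i))$" argument) nested inside the induction on $|R|$, exactly paralleling the pattern used in the proofs of Lemmas \ref{lem:return-wfrules} and \ref{lem:non-topk-erasing}. I would not grind through the full per-operation case distinction here, but would cite Propositions \ref{prop:NonTopmostStacksDoNotChange}, \ref{prop:Neihbour-position-lemma}, \ref{prop:HistPreservesPointerContainment} and Corollary \ref{cor:positionContainment} as the atomic facts doing the work.
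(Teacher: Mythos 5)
Your plan has a genuine gap, and it is in fact the place where the paper abandons the induction-on-$|R|$ strategy altogether. Two problems. First, the minimality hypothesis does not propagate to the tail: writing $R=S\circ T$ with $|S|=1$, you need $a_1=\min\{a_i: 1\leq i\leq |R|\}$ to invoke the induction hypothesis on $T$, but from $a_0=\min_i a_i$ you only get $a_0\leq a_1$; a run performing $\push{k},\pop{k},\push{k},\dots$ gives $a=(3,4,3,4,\dots)$, where the minimum over the tail is attained strictly later. You flag this as "bookkeeping" but it is not repairable by choosing a better cut point in any obvious way: cutting at the last index $i_0$ where the tracked stack is topmost still leaves the case $i_0=|R|-1$ with last operation $\pop{k}$ or $\col{k}$, where $\hist{\subrun{R}{|R|-1}{|R|}}{y}$ is the \emph{second} topmost $(k-1)$-stack or a non-simple position, and none of the propositions you cite track that backwards. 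Second, and more fundamentally, you never address the possibility that $\hist{R}{y}$ is non-simple (i.e., that the topmost $(k-1)$-stack of the stack at $x$ was created through a collapse link of level $\geq k$); ruling this out is half the work and requires the $\pack$ machinery (Lemma \ref{lem:pack-is-good} and Corollary \ref{cor:GoodStacksHaveSmallPositions}), which does not appear in your plan.

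The paper's proof is not an induction on $|R|$. It tracks $b_i$, the last nonzero (level-$k$) coordinate of $\hist{\subrun{R}{i}{|R|}}{y}$, and proves $b_i\geq a_0$ by a backward induction on $i$: if $b_i\neq b_{i+1}$ then $\hist{\subrun{R}{i}{|R|}}{y}=\TOP{k-1}(R(i))$, which by Corollary \ref{cor:positionContainment} sits inside the topmost $k$-stack, forcing $b_i=a_i\geq a_0$. This invariant is robust against the oscillation of the $a_i$ that defeats your induction. Separately, it shows $z:=\hist{R}{y}$ is simple via $\pack$ and Corollary \ref{cor:GoodStacksHaveSmallPositions}; then Proposition \ref{prop:HistPreservesPointerContainment} places $z$ inside $\TOP{k}(R(0))$, whose size is exactly $a_0$, so $b_0\geq a_0$ forces $b_0=a_0$ and $z=\TOP{k-1}(R(0))$. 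Your "never topmost in between" case is fine (it is Corollary \ref{cor:NonTopmostStacksDoNotChange}, not Proposition \ref{prop:Neihbour-position-lemma}, that does the work there), but the other case is where the lemma actually lives.
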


Recall that the size of a $k$-stack is just the number of its $(k-1)$-stacks.
Before we prove the lemma, we state a proposition which is an immediate consequence of the definition of the history function 
(recall that for every position $x_1\posNew{k}{x_2}$ we have $x_2\not=(0,\dots,0)$).

\begin{proposition}
	Let $S$ be a run of length $1$, and $y$ a position of a $(k-1)$-stack in $S(1)$ for some $1\leq k\leq n$.
	Assume that the last non-zero coordinate of $y$ and $\hist{S}{y}$ differ.
	Then $\hist{S}{y}=\TOP{k-1}(S(0))$.
\end{proposition}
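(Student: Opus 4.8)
The plan is to prove the proposition by a case distinction on the stack operation performed by the one-step run $S$ and on the shape of $y$, following the seven clauses in the definition of $\hist{S}{\cdot}$. The guiding observation is that for almost all of these clauses the history function either leaves $y$ unchanged, or changes only a simple \emph{prefix} of $y$ (as for a collapse, where $\hist{S}{y}$ replaces the simple prefix of $y$ by something of the form $\TOP{0}(S(0))\posNew{k'}{\cdot}$), or strips off exactly one simple block (the two $\push{1}$-into-a-link clauses). In all of these situations the \emph{rightmost} simple component of $y$, hence its last non-zero coordinate, is preserved, so the hypothesis of the proposition fails and there is nothing to prove. The footnote recalled in the statement — that in a position $x_1\posNew{k}{x_2}$ one always has $x_2\neq(0,\dots,0)$ — is used precisely here: it guarantees that in the $\push{1}$ clause with $y=\TOP{0}(S(1))\posNew{k'}{(y_1^{k'},\dots,y_1^1)}$, and in the collapse clauses, the block that becomes the rightmost simple component of $\hist{S}{y}$ is non-zero, so the last non-zero coordinate genuinely lives in that block and is therefore untouched.

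After discarding those cases, only two remain. First, suppose $S$ performs $\push{1}_{a,k'}$ and $y=\TOP{0}(S(1))$. Since $y$ then points to a $0$-stack we must have $k=1$, and the relevant clause gives $\hist{S}{y}=(y^n,\dots,y^2,y^1-1)$; as $\push{1}$ increases only the level-$1$ coordinate of $\TOP{0}$ and leaves every higher topmost-stack size fixed, this is exactly $\TOP{0}(S(0))=\TOP{k-1}(S(0))$. Second, suppose $S$ performs $\push{j}$ for some $2\leq j\leq n$ and $y$ points into $\TOP{j-1}(S(1))$, i.e.\ into the topmost $(j-1)$-stack that $\push{j}$ has just duplicated. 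Here $\hist{S}{y}$ is obtained from $y$ by decreasing by one the single coordinate recording which $(j-1)$-stack of the topmost $j$-stack $y$ lies in; that coordinate stays positive, since $\push{j}$ can only fire when $S(0)$ already has a $(j-1)$-stack in its topmost $j$-stack. If $j>k$, that coordinate lies strictly above level $k$, whereas the last non-zero coordinate of the $(k-1)$-position $y$ lies at level $k$; hence nothing changes and the hypothesis fails. So $j=k$. But a position that points into $\TOP{k-1}(S(1))$ and at the same time points to a $(k-1)$-stack must equal $\TOP{k-1}(S(1))$, and the $\push{k}$ clause sends $\TOP{k-1}(S(1))$ to the original topmost $(k-1)$-stack of $S(0)$, i.e.\ to $\TOP{k-1}(S(0))$, which is the assertion.

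The conceptual content is thus small; the only real effort is the bookkeeping — matching the coordinate indices in the definition of $\hist{S}{\cdot}$ with the notion of ``last non-zero coordinate'', and verifying in each of the many ``nothing to prove'' clauses that the rightmost simple component of $y$ survives intact. I expect the main obstacle to be exactly this: avoiding an off-by-one slip among the coordinate levels, especially in the two $\push{}$ cases, where pinning $k$ (to $1$ in the first, to $j$ in the second) is what makes $\hist{S}{y}$ come out to the specific position $\TOP{k-1}(S(0))$.
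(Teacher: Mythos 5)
Your proof is correct and matches the paper's intent: the paper dismisses this proposition as an immediate consequence of the definition of the history function together with the reminder that in any position $x_1\posNew{k}{x_2}$ the block $x_2$ is non-zero, and your case analysis is exactly the spelled-out version of that argument, isolating the two clauses ($\push{1}$ at the top, $\push{k}$ at $\TOP{k-1}$) where the last non-zero coordinate can actually change. One small remark: in the $\push{j}$ clause you read the history as decrementing the level-$j$ coordinate (the one selecting the $(j-1)$-stack inside the topmost $j$-stack), which is the only semantically coherent reading even though the paper's displayed formula literally writes $x^{j-1}-1$; your reading is the right one and your conclusion stands.
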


\begin{proof}[Lemma \ref{lem:TopKandIncreasingSequencegiveTopk-1}]
  Set $m:=|R|$. 
  Let $b_i$ be the value of the last nonzero coordinate of
  $\hist{\subrun{R}{i}{m}}{y}$ for each $0\leq i\leq m$  
  (notice that this is a level $k$ coordinate, as
  $\hist{\subrun{R}{i}{m}}{y}$ always points to a $(k-1)$-stack). 
  We claim that 
  \begin{enumerate}
  \item $b_i\geq a_0$ for each $0\leq i\leq m$ and
  \item $z:=\hist{R}{y}$ is simple.
  \end{enumerate}
  Due to Proposition \ref{prop:HistPreservesPointerContainment} $z$
  points into $\TOP k(R(0))$, and contains only links of level at most
  $k$.    
  If the claims hold, we conclude that $b_0 =  a_0$ whence
  $z=\TOP{k-1}(R(0))$. 

  We prove the first claim by induction on $i$ (from $m$ to $0$).
  For $i=m$ we have $b_m=a_m\geq a_0$.
  Let $i<m$.
  If $b_i=b_{i+1}$ we are done.
  By the above proposition, $b_i\not=b_{i+1}$ implies that
  $\hist{\subrun{R}{i}{m}}{y}=\TOP{k-1}(R(i))$. 
  Due to Corollary \ref{cor:positionContainment},
  $\hist{\subrun{R}{i}{m}}{y}$ points into
  $\hist{\subrun{R}{i}{m}}{x}$ whence
  $a_i$ is the size of the topmost $k$-stack of $R(i)$.
  Thus, $b_i=a_i\geq a_0$.

  For the second claim we assume that
  $z$ is not simple and derive a contradiction as follows.
  Since $z$ points to a $(k-1)$-stack, the last link in $z$ is
  of level (at least) $k$. 
  Recall the notation of the $\mathsf{pack}$ function from page
  \pageref{def:pack}. 
  Coordinates of level greater than $k$ in
  $\mathsf{pack}_{\NestingRank(z)}(z)$ are the same as in $\TOP
  k(R(0))$, 
  the level $k$ coordinate is $b_0\geq a_0$,
  and coordinates of level smaller than $k$ are zeroes.
  So $\mathsf{pack}_{\NestingRank(z)}(z)\lexOrdR\TOP{k-1}(R(0))$, and
  points to a $(k-1)$-stack. 
  By Corollary \ref{cor:GoodStacksHaveSmallPositions}
  $z=\TOP{k-1}(R(0))$ which contradicts our assumption that $z$ is not
  simple.  \qed
\end{proof}

\begin{lemma}\label{lem:ForNotEras}
	Let $1\leq k\leq n$, let $R$ be a run with $m=\lvert R
        \rvert$, $x$  a position of a $k$-stack of $R(|R|)$,  
	and $a_i$  the size of the $k$-stack of $R(i)$ at
        $\hist{\subrun{R}{i}{|R|}}{x}$ for each $0 \leq i \leq |R|$. 
	Assume that $a_0<a_i$ for all $0<i<|R|$, and
        $a_0\leq a_m$. 
	Then $R$ is a $\TOP0$-non-erasing run.
\end{lemma}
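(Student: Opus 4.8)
\textbf{Proof plan for Lemma \ref{lem:ForNotEras}.}

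The statement says: if, along a run $R$, the size of the $k$-stack that (via the history function) leads to the $k$-stack at position $x$ in $R(|R|)$ is strictly larger at every intermediate configuration than it is at the start, and at least as large at the end, then $R$ never removes the topmost $0$-stack of $R(0)$. The natural route is to connect this ``size'' condition to the history-function condition for being $\TOP{0}$-non-erasing, and then invoke Proposition \ref{prop:history2topk-non-eras}, which says that if $\hist{R}{y}=\TOP{0}(R(0))$ for some position $y$ of $R(|R|)$, then $R$ is $\TOP{0}$-non-erasing.

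First I would reduce to the case where $x$ is exactly the topmost $k$-stack at each step in the sense needed to apply Lemma \ref{lem:TopKandIncreasingSequencegiveTopk-1}: note that the hypothesis ``$a_0 \le a_i$ for $0 < i < |R|$ and $a_0 \le a_m$'' together with ``$a_0 = \min\{a_i\}$'' is exactly the minimality assumption of that lemma, and I would argue that $\hist{R}{x} = \TOP{k}(R(0))$ — this should follow because the $k$-stack at $\hist{R}{x}$ has size $a_0$, which is the minimum, so by Lemma \ref{lem:pack-is-good} / Corollary \ref{cor:GoodStacksHaveSmallPositions} it cannot point strictly below $\TOP{k}(R(0))$, and combined with Proposition \ref{prop:HistPreservesPointerContainment} it must be $\TOP{k}(R(0))$ itself. (If instead $\hist{R}{x}$ were not the topmost $k$-stack of $R(0)$, then $x$ would be a non-topmost $k$-stack of $R(|R|)$ that was never touched, making its size constant equal to $a_0$, which is fine but then the conclusion is even easier as nothing below it moves. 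I would handle this degenerate case separately or fold it in.) Having $\hist{R}{x}=\TOP{k}(R(0))$, I apply Lemma \ref{lem:TopKandIncreasingSequencegiveTopk-1} to get $\hist{R}{y_{k-1}}=\TOP{k-1}(R(0))$ where $y_{k-1}$ is the topmost $(k-1)$-stack position inside $x$.

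Then I would iterate this descent: having $\hist{R}{\TOP{k-1}(R(|R|))\text{-inside-}x} = \TOP{k-1}(R(0))$, I want to re-apply Lemma \ref{lem:TopKandIncreasingSequencegiveTopk-1} at level $k-1$. For this I need the sizes of the relevant $(k-1)$-stacks to satisfy the same minimality condition. Here the key observation is that the $(k-1)$-stack we are tracking is the \emph{topmost} one inside the tracked $k$-stack at every step (this is what Lemma \ref{lem:TopKandIncreasingSequencegiveTopk-1} gives us, via its proof: $\hist{\subrun{R}{i}{|R|}}{y}$ equals $\TOP{k-1}$ of the tracked $k$-stack whenever its last coordinate changes), and as long as the tracked $k$-stack is never touched as a whole — which is guaranteed once $a_0$ is the strict minimum at interior steps — a $\pop{k-1}$ or $\col{k-1}$ on it can only happen when this $(k-1)$-stack is the topmost; but any such operation that removed it would contradict $\hist{R}{y}$ pointing back to $\TOP{k-1}(R(0))$ with the size condition. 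Making this precise is the main obstacle: I expect the cleanest argument is actually to establish directly, by the same $\mathsf{pack}$-based reasoning as in the proofs of Lemmas \ref{lem:TopKandIncreasingSequencegiveTopk-1} and \ref{lem:PositionsDecrease}, that $\hist{R}{\TOP{0}(R(|R|))} = \TOP{0}(R(0))$ when $\TOP{0}(R(|R|))$ is taken inside $x$ — i.e.\ prove the level-$0$ statement in one shot rather than peeling levels $k, k-1, \dots, 1, 0$. Concretely: let $x_0$ be the position of the topmost $0$-stack inside $x$ in $R(|R|)$; track $b_i := \pack_{\NestingRank}(\hist{\subrun{R}{i}{|R|}}{x_0})$, show by downward induction on $i$ that $b_i$ always points to or into the tracked $k$-stack and its level-$k$ coordinate is $\ge a_0$, using Proposition \ref{prop:HistPreservesPointerContainment}, Corollary \ref{cor:positionContainment}, and the case analysis of the history function; conclude $b_0$ has level-$k$ coordinate exactly $a_0$, hence $\hist{R}{x_0}$ is simple and equals $\TOP{0}(R(0))$ by Corollary \ref{cor:GoodStacksHaveSmallPositions}.

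Finally, once $\hist{R}{x_0} = \TOP{0}(R(0))$ is established, Proposition \ref{prop:history2topk-non-eras} immediately yields that $R$ is a $\TOP{0}$-non-erasing run, which is the conclusion. I anticipate the bookkeeping in the middle paragraph — keeping track of exactly which coordinate of which $\pack_i$ image is bounded below by $a_0$, and that the link structure stays of level $\le k$ throughout — to be the delicate part, but it is a direct analogue of the argument already carried out for Lemma \ref{lem:TopKandIncreasingSequencegiveTopk-1}, so it should go through without new ideas.
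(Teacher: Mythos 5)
There is a genuine gap: the middle step of your plan aims at a false statement. You want to show $\hist{R}{x_0}=\TOP{0}(R(0))$ for $x_0$ the topmost $0$-stack inside $x$, but this fails under the lemma's hypotheses. Take $k\geq 2$ and let $R$ perform $\push{k}$ followed by $\pop{k-1}$, with $x=\TOP{k}(R(2))$; then $a_0=s<s+1=a_1=a_2$, so all hypotheses hold, yet $\TOP{0}(R(2))$ traces back through the copy created by $\push{k}$ to the topmost $0$-stack of the \emph{second}-topmost $(k-2)$-stack of the topmost $(k-1)$-stack of $R(0)$, not to $\TOP{0}(R(0))$. The same example kills the iterated descent: the sizes of the tracked $(k-1)$-stacks along the run are $t,t,t-1$, so the minimality hypothesis needed to reapply Lemma \ref{lem:TopKandIncreasingSequencegiveTopk-1} at level $k-1$ already fails at the first iteration. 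So neither variant of your step (the level-by-level peeling nor the one-shot $\pack$-based computation) can be repaired, and Proposition \ref{prop:history2topk-non-eras} at level $0$ is the wrong endpoint. (Your justification of $\hist{R}{x}=\TOP{k}(R(0))$ is also off: it follows not from minimality of $a_0$ but from $a_0<a_1$, which by Proposition \ref{prop:NonTopmostStacksDoNotChange} forces the tracked stack to be topmost at steps $0$ and $1$ and the first operation to be $\push{k}$; your "degenerate case" where $x$ is an untouched non-topmost stack does not make the conclusion easier, since the conclusion is about $\TOP{0}(R(0))$, not about the stack at $x$.)

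The paper's proof never goes below level $k-1$. Since $a_0<a_1$, the first operation is $\push{k}$. In the case $a_m>a_0$ one applies Lemma \ref{lem:TopKandIncreasingSequencegiveTopk-1} exactly once, to the tail $\subrun{R}{1}{m}$ (where $a_1$ is the minimum), obtaining $\hist{\subrun{R}{1}{m}}{y}=\TOP{k-1}(R(1))$ for $y$ the topmost $(k-1)$-stack inside $x$; Proposition \ref{prop:history2topk-non-eras} at level $k-1$ then makes $\subrun{R}{1}{m}$ a $\TOP{k-1}$-non-erasing run. The point of peeling off the initial $\push{k}$ is that $\TOP{0}(R(0))\lexOrdstrict\TOP{k-1}(R(1))$ -- the freshly pushed topmost $(k-1)$-stack lies lexicographically strictly above the old topmost $0$-stack -- and the monotonicity claim inside the proof of Proposition \ref{prop:topk-non-eras-composition} (if a lexicographically larger simple position survives the whole run, so does every smaller one) then gives that $\TOP{0}(R(0))$ is never removed. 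Note this bridge is unavailable if you apply the descent to the full run $R$, since $\TOP{k-1}(R(0))\lexOrd\TOP{0}(R(0))$ points the wrong way; this is exactly why you felt forced to descend to level $0$. Finally, there is a second case your plan does not see: if $a_m=a_0$, the minimum over the tail is attained at $m$, so the argument is run on $\subrun{R}{0}{m-1}$ instead and the last step (a $\pop{k}$ or $\col{k}$ returning the size to $a_0$) is checked separately not to remove $\TOP{0}(R(0))$.
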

\begin{proof}
  If $m = 0$ there is nothing to show. Otherwise, 
  $a_0<a_1$ whence the first operation of $R$ has to be $\push k$.
  \begin{itemize}
  \item First assume that $a_m>a_0$ whence
    $a_1=\min\{a_i: 1\leq i\leq m\}$.
    Let $y$ be the topmost $(k-1)$-stack of the $k$-stack of
    $R(m)$ at $x$. 
    Application of Lemma
    \ref{lem:TopKandIncreasingSequencegiveTopk-1} to
    $\subrun{R}{1}{m}$, $x$ and $y$ implies that
    $\hist{\subrun{R}{1}{m}}{y}=\TOP{k-1}(R(1))$. 
    Due to Proposition \ref{prop:history2topk-non-eras} applied
    to $\subrun{R}{1}{m}$ and $k-1$ we obtain that  
    $\subrun{R}{1}{m}$ is a $\TOP{k-1}$-non-erasing run. 
    Since $\TOP0(R(0))\lexOrd\TOP{k-1}(R(1))$, $\TOP0(R(0))$
    cannot be removed by $R$ if $\TOP{k-1}(R(1))$ is not removed
    whence $R$ is a $\TOP0$-non-erasing run. 
  \item 
    Otherwise, $a_m=a_0$. 
    We apply the same argument as above, but to
    $\subrun{R}{0}{m-1}$. We obtain that $\subrun{R}{0}{m-1}$ is a
    $\TOP0$-non-erasing run. 
    Since $a_{m-1}>a_m=a_0$ and since only the topmost $k$-stack can
    change its size,  $x=\TOP k(R(|R|))$, and the operation between
    $R(m-1)$ 
    and $R(m)$ is $\pop k$ or $\col k$. 
    As the topmost $k$-stack of $R(m)$ has size $a_0$ and $\TOP
    0(R(0))$ is present in $R(m-1)$, it is also present in $R(m)$.\qed
  \end{itemize}
\end{proof}

Below we say that an $l$-stack $s$ \emph{occurs} in a $k$-stack $t$;
this includes occurring inside a link, and includes $s=t$.

\begin{lemma}\label{lem:ForClaim2}
  Let $0\leq j<k\leq n$, let $R$ be some run, $x$ some position of a
  $k$-stack in $R(|R|)$, and 
  $a_i$ the size of the $k$-stack at $\hist{\subrun{R}{i}{\lvert R
      \rvert}}{x}$ in $R(i)$.  
  Assume that $a_i>a_{\lvert R \rvert}$ for all $0\leq i<\lvert R \rvert$.
  Then every $j$-stack occurring in the $k$-stack at $x$ in $R(\lvert
  R \rvert)$ occurs also in the $k$-stack at $\hist{R}{x}$ in $R(0)$.  
\end{lemma}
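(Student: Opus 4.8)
The plan is to argue by induction on $\lvert R\rvert$, reducing to the case of a one‑step run via the composition property of the history function (Proposition \ref{prop:histTransitive}). The base case $\lvert R\rvert = 0$ is trivial since then $\hist{R}{x}=x$. For the inductive step, suppose $\lvert R\rvert = m\geq 1$ and decompose $R = S\circ R'$ where $S$ has length $1$ and $R' = \subrun{R}{1}{m}$. Let $x' := \hist{R'}{x}$, so that $\hist{R}{x} = \hist{S}{x'}$. The sizes along $R'$ are $a_1,a_2,\dots,a_m$ (with $a_i>a_m$ for $1\le i<m$), so by the induction hypothesis every $j$‑stack occurring in the $k$‑stack at $x$ in $R(m)$ already occurs in the $k$‑stack at $x'$ in $R(1)$. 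It then remains to push this through the single step $S$: I must show every $j$‑stack occurring in the $k$‑stack at $x'$ in $R(1)$ occurs in the $k$‑stack at $\hist{S}{x'}$ in $R(0)$, using the extra hypothesis $a_0>a_m$ (hence in particular $a_0>a_1$ when $m>1$, and $a_0>a_m=a_1$ when $m=1$, i.e.\ $a_0$ strictly exceeds the size at $x'$ in $R(1)$).

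The key one‑step claim to isolate is: if $S$ has length $1$, $x'$ is a position of a $k$‑stack in $S(1)$ whose size is strictly smaller than the size of the $k$‑stack at $\hist{S}{x'}$ in $S(0)$, then every $j$‑stack (for $j<k$) occurring in the $k$‑stack at $x'$ in $S(1)$ occurs in the $k$‑stack at $\hist{S}{x'}$ in $S(0)$. This is proved by a case distinction on the operation of $S$, following the definition of $\hist{}$ and of the stack operations. If the operation is $\pop{\ell}$ or $\push{\ell}$ with $\ell\ne k$, or $\push{1}_{a,\ell}$, or $\col{\ell}$ with $\ell\ne k$, then $\hist{S}{x'}=x'$ except possibly for an adjustment of a coordinate of level $\ne k$, and the $k$‑stack at $x'$ in $S(1)$ equals the $k$‑stack at $\hist{S}{x'}$ in $S(0)$ verbatim, so the claim is immediate (in fact the size hypothesis is then vacuous or contradictory). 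If the operation is $\push{k}$, the only $k$‑stack whose size changes is the topmost one, whose size \emph{increases} by one, so the strict decrease forces $x' = \TOP{k}(S(1))$ and $\hist{S}{x'} = \TOP{k}(S(0))$, and the $k$‑stack at $x'$ in $S(1)$ is the $k$‑stack at $\hist{S}{x'}$ in $S(0)$ with one extra $(k-1)$‑stack on top — namely a copy of the whole topmost $(k-1):\dots:0$ part — so its set of occurring $j$‑stacks is a superset. If the operation is $\pop{k}$, again the strict decrease forces $x'=\TOP{k}(S(1))$, $\hist{S}{x'}=\TOP{k}(S(0))$, and the $k$‑stack at $x'$ is obtained from that at $\hist{S}{x'}$ by removing its topmost $(k-1)$‑stack, so again occurrence is preserved. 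The genuinely delicate case is $\col{k}$: then (Remark \ref{rem:ColisPop}) the effect on the topmost $k$‑stack is that of iterated $\pop{k}$, and moreover by the definition of $\hist{}$ a position $x'$ pointing into $\TOP{k}(S(1))$ has $\hist{S}{x'} = \TOP{0}(S(0))\posNew{k}{\dots}$, i.e.\ it dives into the linked $k$‑stack of the topmost $0$‑stack of $S(0)$; one must check that this linked $k$‑stack is itself a substack (occurring inside) the original topmost $k$‑stack at $\hist{S}{x'}$, which holds precisely because the linked $k$‑stack is what $\col{k}$ replaces the topmost $k$‑stack by, and it was obtained as a (proper, by the strict size inequality) lower segment. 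For positions $x'$ not pointing into $\TOP{k}(S(1))$, $\hist{S}{x'}=x'$ and the $k$‑stack is unchanged.

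I expect the $\col{k}$ case to be the main obstacle, because tracking how $j$‑stacks inside a \emph{linked} $k$‑stack relate to $j$‑stacks of the topmost $k$‑stack requires careful use of the representation convention (a $0$‑stack stores the content of its linked stack) together with Remark \ref{rem:ColisPop}, which tells us the linked stack is a genuine $\pop{k}$‑iterate of the current topmost $k$‑stack and hence occurs inside it. Everything else is a routine bookkeeping case analysis on the seven clauses of the definition of $\hist{}$, and the composition step is immediate from Proposition \ref{prop:histTransitive} once one observes that the hypothesis $a_i>a_m$ for $0\le i<m$ is inherited by $\subrun{R}{1}{m}$ and that the strict inequality $a_0>a_1$ (resp.\ $a_0>a_m$ when $m=1$) supplies exactly the "strict decrease" premise needed for the one‑step claim applied to $S$.
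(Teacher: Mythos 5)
There is a genuine gap in your inductive step. After applying the induction hypothesis to $\subrun{R}{1}{m}$, you must transfer occurrence of $j$-stacks across the first step $S$, from the $k$-stack at $x':=\hist{\subrun{R}{1}{m}}{x}$ in $R(1)$ (of size $a_1$) to the $k$-stack at $\hist{S}{x'}=\hist{R}{x}$ in $R(0)$ (of size $a_0$). You invoke your one-step claim with the premise $a_1<a_0$, asserting that ``the strict inequality $a_0>a_1$'' is available. It is not: the hypothesis of the lemma only gives $a_0>a_m$ and $a_1>a_m$, and it is entirely consistent that $a_1=a_0+1$ (first operation $\push{k}$) or that the first operation is of level below $k$ and introduces fresh material into the topmost $k$-stack. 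In the latter situation the unrestricted transfer you would need --- every $j$-stack occurring in the $k$-stack at $x'$ in $R(1)$ occurs in the $k$-stack at $\hist{S}{x'}$ in $R(0)$ --- is simply false. Concretely, for $j=0$, $k=1$, take $R$ performing $\push{1}_{a,n},\ \pop{1},\ \pop{1}$ with initial topmost $1$-stack $[bbb]$: then $a_0=3$, $a_1=4$, $a_2=3$, $a_3=2$, so all hypotheses of the lemma hold, yet the $0$-stack with symbol $a$ occurs in the topmost $1$-stack of $R(1)$ but not in that of $R(0)$. So your one-step claim cannot be applied (its premise fails), and the statement you would need in its place is false; the induction does not close.

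The paper avoids this by strengthening the induction invariant: it proves that for every $1\leq b\leq a_{\lvert R\rvert}$ the $b$-th $(k-1)$-stack of the $k$-stack at $x$ in $R(\lvert R\rvert)$ is \emph{equal} to the $b$-th $(k-1)$-stack of the $k$-stack at $\hist{R}{x}$ in $R(0)$. This restricts attention to the bottom $a_{\lvert R\rvert}$ many $(k-1)$-stacks, which is enough for the conclusion (every $j$-stack with $j<k$ occurring in the final $k$-stack sits inside one of them), and it is exactly what survives each step: since every intermediate size strictly exceeds $a_{\lvert R\rvert}$, any operation of level $\leq k$ only adds, removes, or modifies $(k-1)$-stacks at positions above $a_{\lvert R\rvert}$ (for $\col{k}$ one uses Remark~\ref{rem:ColisPop}), and operations of level $>k$ leave the whole $k$-stack untouched. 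Note that this one-step case analysis needs only $a_0,a_1\geq a_{\lvert R\rvert}$, never $a_0>a_1$. If you reformulate your induction with this invariant, the rest of your case analysis goes through essentially as you sketched it.
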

\begin{proof}
  It is enough to prove, for $1\leq b\leq a_{|R|}$ that the $b$-th
  $(k-1)$-stack of the $k$-stack at $x$ in $R(\lvert R \rvert)$ is
  equal to  
  the $b$-th $(k-1)$-stack of the $k$-stack at $\hist{R}{x}$ in $R(0)$.
  We prove this by induction on the length of $R$.
  For $|R|=0$ this is immediate.
  Let $|R|\geq 1$.
  In the light of the induction assumption,
  it is enough to prove, for $1\leq b\leq a_{|R|}$, that the $b$-th
  $(k-1)$-stack of the $k$-stack at $\hist{\subrun{R}{1}{|R|}}{x}$ in
  $R(1)$ is equal to  
  the $b$-th $(k-1)$-stack of the $k$-stack at $\hist{R}{x}$ in $R(0)$.
  \begin{itemize}
  \item	If the first operation in $R$ is of level below $k$, then
    $\hist{\subrun{R}{1}{|R|}}{x}=\hist{R}{x}$ and only the topmost
    $(k-1)$-stack is modified;  
    this is not one of the considered $(k-1)$-stacks, as
    $a_0>a_{|R|}$. 
  \item	If the first operation in $R$ is of level $k$, then
    $\hist{\subrun{R}{1}{|R|}}{x}=\hist{R}{x}$ and some $(k-1)$-stacks
    are removed or added, but none of the considered $(k-1)$-stacks,
    as $a_0,a_1\geq a_{|R|}$ 
    (this is also true for $\col k$, as performing $\col k$ is
    equivalent to performing several $\pop k$). 
  \item	If the first operation in $R$ is of level greater than $k$,
    then the whole $k$-stacks at $\hist{\subrun{R}{1}{|R|}}{x}$ in
    $R(1)$ and at $\hist{R}{x}$ in $R(0)$ are the same. \qed
  \end{itemize}	
\end{proof}

\begin{lemma}\label{lem:tech4}
  Let $1\leq k\leq n$, $R$  a pumping run and $x$ a
  position of a $k$-stack in $R(|R|)$. 
  Assume that the size of the $k$-stack  at $x$ in  $R(|R|)$ is
  greater than that of the $k$-stack at
  $\hist{\subrun{R}{i}{|R|}}{x}$ in $R(i)$ for some $i$. 
  Then $\TOP0(R(0))\neq\TOP0(R(|R|))$. 
\end{lemma}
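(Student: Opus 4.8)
The plan is to prove Lemma~\ref{lem:tech4} by relating the assumption on $k$-stack sizes to the behaviour of the history function on the topmost $0$-stack, and then invoking the definition of a pumping run. First I would fix $m := \lvert R \rvert$ and let $a_i$ denote the size of the $k$-stack at $\hist{\subrun{R}{i}{m}}{x}$ in $R(i)$, so that the hypothesis says $a_m > a_i$ for some $i$. The key observation is that the operations of $R$ can increase the size of the topmost $k$-stack by at most $1$ per step (a single $\push{k}$), so the sequence $a_0, a_1, \dots, a_m$ increases by at most $1$ from one entry to the next \emph{along the topmost $k$-stack}; but since $\hist{\subrun{R}{i}{m}}{x}$ need not always be the topmost $k$-stack, I would instead argue more directly. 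Because $a_m > a_i$ for some $i$, there must be a step at which the $k$-stack tracked by the history grows; by Proposition~\ref{prop:NonTopmostStacksDoNotChange} (and its corollary for longer runs), a $k$-stack can change size only when it is the topmost one, and growth happens only via a $\push{k}$. Hence there is some index $i_0 < m$ such that $\hist{\subrun{R}{i_0}{m}}{x} = \TOP{k}(R(i_0))$ and the operation of $\subrun{R}{i_0}{i_0+1}$ is $\push{k}$, creating a fresh copy of the topmost $(k-1)$-stack.

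Next I would trace the history of $\TOP{0}(R(m))$ through the run. Since $R$ is a pumping run, $\hist{R}{\TOP{0}(R(m))} = \TOP{0}(R(0))$, so $\TOP{0}(R(0))$ is a simple position. I want to show it is \emph{not} equal to $\TOP{0}(R(m))$. Suppose for contradiction that $\TOP{0}(R(0)) = \TOP{0}(R(m))$. Then, by the argument in the last part of the proof of Lemma~\ref{lem:pumping-run}, the whole run preserves $\TOP{0}(R(0))$ in a strong sense: applying Lemma~\ref{lem:TopPresentImpliesHistoryIsId} to $R$ (with the position $\TOP{0}(R(0))$, which by hypothesis is present both at the start and at the end and, being the topmost $0$-stack throughout a run that only ever pushes on top of it, is present in all configurations), we get that $\hist{\subrun{R}{i}{m}}{\TOP{0}(R(m))} = \TOP{0}(R(0))$ is simple for every $i$, and in particular equals $\TOP{0}(R(i))$-related positions that never drop below it lexicographically. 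The point is that if $\TOP{0}(R(0))=\TOP{0}(R(m))$, then no step of $R$ ever removes a position lexicographically $\le \TOP{0}(R(0))$, and in particular $\TOP{k}(R(0))$ is present throughout and never shrinks; combined with the fact that the $k$-stack tracked by the history must grow at $i_0$, I would derive that $\TOP{k}(R(m))$ is strictly larger than $\TOP{k}(R(0))$, so by Lemma~\ref{lem:PositionsDecrease} the final topmost $0$-stack position is strictly larger than the initial one, contradicting $\TOP{0}(R(0)) = \TOP{0}(R(m))$.

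More carefully, the cleanest route is probably: from $a_m > a_i$ for some $i$, together with Lemma~\ref{lem:PositionsDecrease} applied to $\hist{R}{x}$, conclude that $\hist{R}{x}$ points into a $k$-stack whose size is $\le$ the minimum of the $a_i$'s, hence $< a_m$; this forces $x \ne \hist{R}{x}$ in a way that witnesses that $R$ strictly enlarged the relevant $k$-stack somewhere. Then I would use Corollary~\ref{cor:positionContainment} to lift this to the topmost $0$-stack: the history of $\TOP{0}(R(m))$ is $\TOP{0}(R(0))$ (definition of pumping run), and if additionally $\TOP{0}(R(0)) = \TOP{0}(R(m))$ then Lemma~\ref{lem:x-present-pack-preserved} / Lemma~\ref{lem:TopPresentImpliesHistoryIsId} would give that the topmost $k$-stack is preserved in size by $R$ (since its position, being the $\pack$ of a simple position that is preserved, is preserved), contradicting the growth at step $i_0$. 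Therefore $\TOP{0}(R(0)) \ne \TOP{0}(R(m))$.

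The main obstacle I anticipate is making precise the link between "the history-tracked $k$-stack at $x$ strictly grew somewhere" and "the topmost $0$-stack position strictly increased": one has to be careful that $\hist{\subrun{R}{i}{m}}{x}$ wanders among non-topmost $k$-stacks, so the naive "size changes by at most one, monotone" argument does not apply verbatim. The fix is to use the machinery already developed — Proposition~\ref{prop:NonTopmostStacksDoNotChange}, Corollary~\ref{cor:NonTopmostStacksDoNotChange}, Lemma~\ref{lem:PositionsDecrease} and Corollary~\ref{cor:positionContainment} — to reduce everything to statements about the topmost stacks and about lexicographic order, rather than trying to track sizes along a moving position directly. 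Once that reduction is in place the contradiction with $\TOP{0}(R(0)) = \TOP{0}(R(m))$ (via the pumping-run hypothesis and Lemma~\ref{lem:pumping-run}'s final clause) is essentially immediate.
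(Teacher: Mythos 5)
Your proposal correctly identifies the two raw ingredients: (i) since the tracked $k$-stack at $\hist{\subrun{R}{i}{|R|}}{x}$ is strictly smaller at some time $i$ than at the end, and a $k$-stack can only change size while topmost (Proposition~\ref{prop:NonTopmostStacksDoNotChange}), there must be a step where $\hist{\subrun{R}{i_0}{|R|}}{x}=\TOP{k}(R(i_0))$ and a $\push{k}$ is performed; and (ii) the pumping-run hypothesis gives $\hist{R}{\TOP{0}(R(|R|))}=\TOP{0}(R(0))$, whence $R$ is $\TOP{0}$-non-erasing by Proposition~\ref{prop:history2topk-non-eras}. But the bridge you build between these two facts does not hold up. Your contradiction argument claims that $\TOP{0}(R(0))=\TOP{0}(R(|R|))$ forces the topmost $k$-stack to be ``preserved in size by $R$,'' and that this contradicts the growth at $i_0$. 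That is a non-sequitur: the position $x$ is an \emph{arbitrary} $k$-stack of $R(|R|)$, and the stack that grows at step $i_0$ is topmost only \emph{at time $i_0$} --- it need not be the topmost $k$-stack of $R(0)$ or of $R(|R|)$, so equality of the sizes of the latter two contradicts nothing. Likewise, the ``cleanest route'' via Lemma~\ref{lem:PositionsDecrease} confuses lexicographic comparison of positions with sizes of stacks, and ``$x\neq\hist{R}{x}$'' does not by itself say anything about $\TOP{0}$.

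What is missing is precisely the mechanism the paper uses to propagate the growth event to the final topmost $0$-stack. The paper's proof is direct, not by contradiction: it takes the \emph{maximal} $b$ with $a_b<a_{|R|}$, so that $a_{b+1}=a_b+1$ is the minimum of $(a_j)$ on $[b+1,|R|]$ and the operation at step $b$ is $\push{k}$. Because $R$ is $\TOP{0}$-non-erasing, $\TOP{0}(R(0))$ is still present in $R(b)$, so the freshly pushed $(k-1)$-stack satisfies $\TOP{0}(R(0))\lexOrdstrict\TOP{k-1}(S(0))$ for $S:=\subrun{R}{b+1}{|R|}$. The minimality of $a_{b+1}$ is exactly the hypothesis of Lemma~\ref{lem:TopKandIncreasingSequencegiveTopk-1} (a lemma your proposal never invokes), which yields a position $y$ with $\hist{S}{y}=\TOP{k-1}(S(0))$; Proposition~\ref{prop:history2topk-non-eras} then makes $S$ a $\TOP{k-1}$-non-erasing run, so $\TOP{k-1}(S(0))\lexOrd\TOP{k-1}(R(|R|))\lexOrdstrict\TOP{0}(R(|R|))$, and chaining gives $\TOP{0}(R(0))\lexOrdstrict\TOP{0}(R(|R|))$. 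Without choosing the \emph{last} growth step (to get minimality on the suffix) and without Lemma~\ref{lem:TopKandIncreasingSequencegiveTopk-1} (to convert that minimality into a statement about the topmost $(k-1)$-stack), the argument does not close.
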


\begin{proof}
  Let $m:=|R|$, and let $a_j$ be the size of the $k$-stack of $R(j)$
  at $\hist{\subrun{R}{j}{|R|}}{x}$ for all $0\leq j\leq m$. 
  Take the maximal $b$ such that $a_b < a_m$ (note that $i\leq b$).
  Since stack operations increase the number of stacks by at most one, 
  $a_{b+1}=a_b + 1$ whence maximality of $b$ implies
  $a_{b+1}=\min\{a_j:b+1\leq j\leq m\}$.
  Set $S:=\subrun{R}{b+1}{m}$.
  Notice that $\hist{S}{x}=\TOP k(S(0))$ because only the topmost
  $k$-stack can change its size. 
  Since $\hist{R}{\TOP0(R(m))}=\TOP 0(R(0))$, Proposition
  \ref{prop:history2topk-non-eras} implies that $R$ is a
  $\TOP0$-non-erasing run. 
  It means that $\TOP{k-1}(R(0))$ is present in $R(b)$. 
  Because the operation between $R(b)$ and $R(b+1)=S(0)$ is
  necessarily $\push k$, 
  it implies that $\TOP0(R(0))\lexOrdstrict\TOP{k-1}(S(0))$. 
	
  Application of Lemma \ref{lem:TopKandIncreasingSequencegiveTopk-1} to
  $S$ and $x$ shows that $\hist{S}{y}=\TOP{k-1}(S(0))$ for some
  position $y$ in $R(m)$. Again using Proposition
  \ref{prop:history2topk-non-eras}, we conclude that $S$ is a
  $\TOP{k-1}$-non-erasing run and we obtain
  \begin{equation*}
    \TOP{0}(R(0)) \lexOrdstrict \TOP{k-1}(S(0)) \lexOrd \TOP{k-1}(R(m))
    \lexOrdstrict \TOP{0}(R(m)).     \tag*{\qed}
  \end{equation*}
\end{proof}

\subsection{Main Technical Lemma}\label{ssec:pumping-tech}

Below we present our main technical lemma.
It shows how to find subruns of long runs which consist of a pumping
run followed by a $\TOP{0}$-non-erasing run. 
Recall that the function $\ctype_\Xx$ maps
configurations to a finite set of types. 
For each collapsible pushdown system $\Ss$,
let $\Tt_\Ss$ denote the image of $\ctype_\Xx$ with respect to
configurations of $\Ss$.

\begin{lemma}\label{lem:Pumpability} 
  Let $\Ss$ be an $n$-CPS,
  $0\leq k\leq n$, 
  $R$ be a run of $\Ss$, and 
  \begin{align*}
    G_k\subseteq \{i\ < \lvert R \rvert: \hist{\subrun{R}{i}{\lvert
        R \rvert}}{\TOP{k}(R(\lvert R \rvert))}= \TOP{k}(R(i))\}.    
  \end{align*}
  Furthermore, let $s^k$ be the $k$-stack of $R(0)$ to which
  $\hist{R}{\TOP{k}(R(\lvert R \rvert))}$ points. 
  For $1\leq j\leq k$, let $r_j$ be the maximum of the sizes of $j$-stacks occurring in $s^k$.  
  Let
  \begin{align*}
    \hat N_0 := \lvert\Tt_\Ss \rvert +1 \text{ and } \hat N_j=r_j\cdot
    2^{\hat N_{j-1}} \text{ for } 1\leq j\leq k.  
  \end{align*}
  If  $\lvert G_k\rvert \geq \hat N_k$,
  then there are $0\leq x<y<z\leq \lvert R \rvert$ such that  
  \begin{enumerate} 
  \item	\label{pumpability:1} $\ctype_\Xx(R(x))=\ctype_\Xx(R(y))$, 
  \item	\label{pumpability:2} 
  	$\subrun{R}{x}{y}$ is a pumping run,
  \item \label{pumpability:3} $\hist{\subrun{R}{y}{\lvert R \rvert}}{\TOP{k}(R(\lvert R \rvert))}=\TOP{k}(R(y))$,
  \item	\label{pumpability:4} $\TOP0(R(x))\neq\TOP0(R(y))$ or
    \begin{align*}
      G_k\cap\{x,x+1,\dots,y-1\} \neq \emptyset,      
    \end{align*}
  \item	\label{pumpability:5} $z-1\in G_k$, and
  \item \label{pumpability:6} $\subrun{R}{y}{z}$ is a $\TOP0$-non-erasing run.
  \end{enumerate} 
\end{lemma}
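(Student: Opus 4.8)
The plan is to prove Lemma~\ref{lem:Pumpability} by induction on $k$, mirroring the structure of the combinatorial lemmas in Appendix~\ref{app:Combinatoric}, in particular Corollary~\ref{cor:IncreasingNumberSequence}. The base case $k=0$ is the pigeonhole step: $G_0$ consists of indices $i$ with $\hist{\subrun{R}{i}{\lvert R\rvert}}{\TOP{0}(R(\lvert R\rvert))}=\TOP{0}(R(i))$, and since $\lvert G_0\rvert\ge\hat N_0=\lvert\Tt_\Ss\rvert+1$, two such indices $x<y$ must have $\ctype_\Xx(R(x))=\ctype_\Xx(R(y))$. The segment $\subrun{R}{x}{y}$ is then a pumping run: the history hypothesis at $x$ and $y$, combined with Proposition~\ref{prop:histTransitive}, gives $\hist{\subrun{R}{x}{y}}{\TOP{0}(\cdot)}=\TOP{0}(R(x))$, which is exactly the defining property of $\Pp$. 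For the remaining clauses: $z$ is taken to be one more than any element of $G_0$ past $y$ (clause~\ref{pumpability:5}); clause~\ref{pumpability:3} at $y$ holds by construction; clause~\ref{pumpability:6} follows because $\subrun{R}{y}{z}$ keeps $\TOP{0}(R(y))$ present — here I would use Proposition~\ref{prop:history2topk-non-eras}, noting that $z-1\in G_0$ means the history of $\TOP{0}(R(z))$ through $\subrun{R}{y}{z}$ lands on $\TOP{0}(R(y))$ (some care is needed to pass from $z-1$ to $z$, but this is routine given $z-1\in G_0$). Clause~\ref{pumpability:4} is satisfied because we can arrange $x\in G_0$, or otherwise read off $G_0\cap\{x,\dots,y-1\}\ne\emptyset$; actually the cleanest route is to pick $x,y\in G_0$ directly.

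For the inductive step, suppose the lemma holds for $k-1$ and we are given $G_k$ with $\lvert G_k\rvert\ge\hat N_k=r_k\cdot 2^{\hat N_{k-1}}$. The idea is to track the size of the topmost $k$-stack: for $i\le\lvert R\rvert$ let $a_i$ be the size of the $k$-stack at $\hist{\subrun{R}{i}{\lvert R\rvert}}{\TOP{k}(R(\lvert R\rvert))}$ in $R(i)$. By Corollary~\ref{cor:NonTopmostStacksDoNotChange} and Proposition~\ref{prop:NonTopmostStacksDoNotChange}, $a_i$ changes by at most $1$ per step when the relevant $k$-stack is topmost, and is otherwise constant; and $a_0=a_{\lvert R\rvert}$ need not hold, but $a_i\le r_k$ always (every $k$-stack in $s^k$ has size at most $r_k$). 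I would apply Corollary~\ref{cor:IncreasingNumberSequence} to the sequence $(a_i)_{i\in G_k}$ (re-indexed, using that consecutive elements of $G_k$ keep the $k$-stack topmost so the size decreases arbitrarily but increases by at most one between them) with the parameter $2^{\hat N_{k-1}}$: this yields indices $b<e$ with $e-1\in G_k$, $a_b=\min\{a_i:b\le i\le e\}$, $a_i>a_b$ for $i<b$, and a set $H_{b,e}$ of size $\ge\hat N_{k-1}$. On $H_{b,e}$, the $k$-stack at the history position is present throughout $\subrun{R}{b}{e}$ and dips to its minimal size only at its right endpoint up to the next $G_k$-element, which by Lemma~\ref{lem:TopKandIncreasingSequencegiveTopk-1} and the setup of the history function lets me transfer from $\TOP{k}$-tracking to $\TOP{k-1}$-tracking: I would show $\hist{\subrun{R}{i}{e}}{\TOP{k-1}(R(e))}=\TOP{k-1}(R(i))$ for exactly the $i\in H_{b,e}$, so that $H_{b,e}$ plays the role of $G_{k-1}$ for the run $\subrun{R}{b}{e}$. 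Applying the induction hypothesis to $\subrun{R}{b}{e}$ with this $G_{k-1}$ (its size is $\ge\hat N_{k-1}$) gives $b\le x<y<z'\le e$ satisfying clauses \ref{pumpability:1}--\ref{pumpability:6} relative to level $k-1$.

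Finally I would upgrade the output of the induction hypothesis back to level $k$. Clauses~\ref{pumpability:1} and \ref{pumpability:2} are level-independent, so they carry over verbatim. For clause~\ref{pumpability:3} at level $k$ I use Corollary~\ref{cor:positionContainment}: $\hist{\subrun{R}{y}{e}}{\TOP{k-1}(R(e))}=\TOP{k-1}(R(y))$ together with the minimality arrangement of $a$ on $[b,e]$ forces $\hist{\subrun{R}{y}{\lvert R\rvert}}{\TOP{k}(R(\lvert R\rvert))}=\TOP{k}(R(y))$ — one needs that the $k$-stack stays topmost and non-shrinking from $y$ onward along the chosen path, which is where the "$a_i<a_j$ up to the next $G_k$-element" clause of $H_{b,e}$ is used. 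Clause~\ref{pumpability:5} is handled by setting $z$ to one more than an element of $G_k$ in $[y,e)$ (which exists since $e-1\in G_k$ and $e$ can be replaced by the first $G_k$-element $\ge y$). Clause~\ref{pumpability:6}: $\subrun{R}{y}{z}$ is $\TOP{0}$-non-erasing, which I would get from Lemma~\ref{lem:ForNotEras} applied with the $a$-sequence restricted to $[y,z]$, using that $a_y$ is the minimum there and $a_z\ge a_y$ (by choice of $z$ via $G_k$). Clause~\ref{pumpability:4}: if the inductive call gave $\TOP{0}(R(x))\ne\TOP{0}(R(y))$ we are done; otherwise it gave $G_{k-1}\cap\{x,\dots,y-1\}=H_{b,e}\cap\{x,\dots,y-1\}\ne\emptyset$, and since $H_{b,e}\subseteq G_k$ this yields $G_k\cap\{x,\dots,y-1\}\ne\emptyset$ — alternatively, Lemma~\ref{lem:tech4} lets me conclude $\TOP{0}(R(x))\ne\TOP{0}(R(y))$ directly whenever the $k$-stack at $\TOP{k}(R(y))$ is strictly larger than at some earlier history position in $\subrun{R}{x}{y}$, which is the case precisely when $\subrun{R}{x}{y}$ straddles a point where $a$ dropped. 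The main obstacle I anticipate is the bookkeeping in the transfer from $\TOP{k}$-histories to $\TOP{k-1}$-histories: making precise, via Lemmas~\ref{lem:TopKandIncreasingSequencegiveTopk-1} and \ref{lem:ForClaim2} and the definition of $H_{b,e}$, that the set $H_{b,e}$ is exactly the set of indices where $\TOP{k-1}$ has the right history for the subrun $\subrun{R}{b}{e}$, and that the $r_{k-1},\dots,r_1$ bounds for that subrun are no larger than those inherited from $s^k$ (so that $\hat N_{k-1}$ computed for $\subrun{R}{b}{e}$ matches the $\hat N_{k-1}$ in the statement) — this is where Lemma~\ref{lem:ForClaim2} is essential, as it guarantees the relevant $(k-1)$-stack of $R(b)$ already occurs in $s^k$.
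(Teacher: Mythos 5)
Your overall architecture---induction on $k$, pigeonhole in the base case, Corollary \ref{cor:IncreasingNumberSequence} to extract $b<e$ and a set $H_{b,e}$ of size $\ge\hat N_{k-1}$, transfer from $\TOP{k}$- to $\TOP{k-1}$-tracking via Lemma \ref{lem:TopKandIncreasingSequencegiveTopk-1}, bounding the new $r_j$'s via Lemma \ref{lem:ForClaim2}, and upgrading the recursive output---is exactly the paper's. But three of your concrete steps fail as stated. First, you cannot apply Corollary \ref{cor:IncreasingNumberSequence} to the re-indexed subsequence $(a_i)_{i\in G_k}$: between two consecutive elements $g<g'$ of $G_k$ the run may perform several pushes affecting the tracked $k$-stack, so $a_{g'}-a_g$ can exceed $1$ and the hypothesis ``$a_i-a_{i-1}\le 1$'' is violated; moreover the minimality conditions defining $H_{b,e}$ must quantify over \emph{all} indices of $[b,e]$ (this is what later feeds Lemma \ref{lem:TopKandIncreasingSequencegiveTopk-1}), not only over indices in $G_k$. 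The corollary is already stated for a full sequence together with a distinguished subset $G$; it should be applied to $(a_i)_{0\le i\le\lvert R\rvert}$ with $G:=G_k$.

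Second, your primary argument for clause \ref{pumpability:4} rests on ``$H_{b,e}\subseteq G_k$'', which is false: membership in $H_{b,e}$ is defined purely by inequalities on the $a_i$, so an index $h\in H_{b,e}\setminus G_k$ is entirely possible (the paper's proof explicitly treats this case). The correct argument, when $\TOP0(R(x))=\TOP0(R(y))$, is to show $a_h=a_y$ for the witness $h\in H_{b,e}\cap[x,y)$ (your Lemma \ref{lem:tech4} remark is the right tool for the inequality $a_h\ge a_y$, and $h\in H_{b,e}$ gives $a_h\le a_y$) and then to invoke the second defining clause of $H_{b,e}$---namely $a_j>a_h$ for $h<j\le n_{G_k}(h)$---to force $n_{G_k}(h)<y$, which produces the required element of $G_k$ in $[x,y)$. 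Third, Lemma \ref{lem:ForNotEras} cannot be applied to the whole interval $[y,z]$ for clause \ref{pumpability:6}, since $a_y$ need not be a strict minimum on its interior (nothing in the recursive call controls the $a$-sequence between $y$ and $z'+b$). Instead, $\subrun{R}{y}{z'+b}$ is $\TOP0$-non-erasing directly by the recursive clause \ref{pumpability:6} (which already speaks about level $0$), Lemma \ref{lem:ForNotEras} is applied only to $\subrun{R}{z'+b-1}{z}$ using $z'+b-1\in H_{b,e}$ and the minimality of $z$, and the two pieces are glued with Proposition \ref{prop:topk-non-eras-composition}.
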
 

\begin{proof}
  We prove the lemma by induction on $k$. 
  Consider the case that $k=0$. 
  By assumption $\lvert G_0 \rvert \geq \lvert \hat N_0 \rvert > \lvert
  \Tt_\Ss\rvert$. 
  Thus, there are $x,y\in G_0$ with $x<y$ 
  such that $\ctype_\Xx(R(x))=\ctype_\Xx(R(y))$. 
  Since $x,y\in G_0$, 
  \begin{align*}
    &\hist{\subrun{R}{x}{\lvert
        R \rvert}}{\TOP{0}(R(\lvert R \rvert))}= \TOP{0}(R(x)),\text{ and}\\
    &\hist{\subrun{R}{y}{\lvert
        R \rvert}}{\TOP{0}(R(\lvert R \rvert))}= \TOP{0}(R(y)).    
  \end{align*}
  Due to Proposition \ref{prop:histTransitive}, we conclude
  that 
  \begin{align*}
    \hist{\subrun{R}{x}{y}}{\TOP{0}(R(y))}=\TOP{0}(R(x))
  \end{align*}
  which means that $\subrun{R}{x}{y}$ is a pumping run.
  Since $x\in G_0$, we have $G_0\cap \{x, x+1, \dots,
  y-1\}\neq\emptyset$. 
  By definition of $y$, $\subrun{R}{y}{\lvert R \rvert}$ is a pumping
  run of length at least $1$. Due to the characterisation of pumping
  runs (cf. Lemma \ref{lem:pumping-run}), this run starts with some
  $\push{}$ operation. 
  Thus, for  $z:=y+1$, we have $z-1\in G_0$ and $\subrun{R}{y}{z}$ is a $\TOP0$-non-erasing run. 
  Thus, $x$, $y$, and $z$ satisfy the claim of the lemma. 
  
  Now consider the case $k\geq 1$ and assume that the lemma holds for
  all $k'<k$.  
  Let $a_i$ be the size of the $k$-stack of $R(i)$ at
  position $\hist{\subrun{R}{i}{\lvert R \rvert}}{\TOP{k}(R(\lvert R
    \rvert))}$ for $0\leq i\leq \lvert R \rvert$. 
  Due to 
  Proposition 
 \ref{prop:NonTopmostStacksDoNotChange} we know that
  $a_i-a_{i-1}\leq 1$ for all $1\leq i \leq \lvert R \rvert$. 
  By definition $a_0\leq r_k$ whence
  $\lvert G_k \rvert \geq \hat{N}_k\geq a_0\cdot
  2^{\hat{N}_{k-1}}$. 
  Hence, we can apply Corollary \ref{cor:IncreasingNumberSequence} to
  $(a_i)_{0\leq i \leq \lvert R \rvert}$  and obtain 
  indices $0\leq b < e \leq \lvert R \rvert$ such that
  \begin{enumerate}
  \item $e-1\in G_k$, 
  \item $a_b = \min\{a_i: b\leq i \leq e\}$,
  \item $a_i>a_b$ for all $0\leq i < b$ and
  \item $\lvert H_{b,e} \rvert \geq \hat N_{k-1}$ where
    \begin{align*}
      H_{b,e} = 
      \{ i:  &\ b \leq i \leq e-1,\\
      &a_i \leq a_j\text{ for all
      } i\leq j \leq e
      \text{, and }\\
      &a_i < a_j\text{ for all } i<j \leq n_{G_k}(i)\}
    \end{align*}
    with $n_{G_k}(i):=\min\{ g\in G_k: g\geq i\}$.
  \end{enumerate}
  Set $R':=\subrun{R}{b}{e}$ and $G_{k-1}:=\{h-b:h\in H_{b,e}\}$. 
  Let us first assume that the following claims are true:
  \begin{enumerate}[A)]
  \item for each $h\in H_{b,e}$ we have $\hist{\subrun{R}{h}{e}}{\TOP{k-1}(R(e))}= \TOP{k-1}(R(h))$,
  \item for all $i\leq e-1$, 
    $\hist{\subrun{R}{i}{e}}{\TOP{k}(R(e))}= \hist{\subrun{R}{i}{\lvert R \rvert}}{\TOP{k}(R(\lvert R \rvert))}$,
    whence $a_i$ is the size of the $k$-stack in $R(i)$ at $\hist{\subrun{R}{i}{e}}{\TOP{k}(R(e))}$, and
  \item   if $t^{k-1}$ is the $(k-1)$-stack at 
    $\hist{R'}{\TOP{k-1}(R'(\lvert R' \rvert))}$, then 
    the  size of every
    $j$-stack occurring in $t^{k-1}$ for $j \leq k-1$ is bounded by $r_j$. 
  \end{enumerate}
  We postpone the proof of these claims. 
  Claim A implies (by shifting from $R$ to $R'$) that for each $g\in
  G_{k-1}$ we have $\hist{\subrun{R'}{g}{\lvert R'
      \rvert}}{\TOP{k-1}(R'(\lvert R' \rvert))}= \TOP{k-1}(R'(g))$. 
  Together with Claim C this allows us to apply 
  the induction hypothesis to $k-1$, $R'$ and $G_{k-1}$.  
  We obtain three indices $0\leq x'<y'<z'\leq|R'|$; let 
  $x=x'+b$, $y=y'+b$, and
  let $z$ be the smallest index such that $z\geq z'+b$ and $z-1\in G_k$
  (it exists because $z'+b\leq e$ and $e-1\in G_k$).  
  Note that 
  \begin{enumerate}[1'.]
  \item	\label{pumpability:1Prime} $\ctype_\Xx(R(x))=\ctype_\Xx(R(y))$, 
  \item	\label{pumpability:2Prime} $\subrun{R}{x}{y}$ is a pumping run,
  \item \label{pumpability:3Prime} $\hist{\subrun{R}{y}{e}}{\TOP{k-1}(R(e))}=\TOP{k-1}(R(y))$,
  \item	\label{pumpability:4Prime} $\TOP0(R(x))\neq\TOP0(R(y))$ or
    \begin{align*}
      H_{b,e}\cap\{x,x+1,\dots,y-1\} \neq \emptyset,      
    \end{align*}
  \item	\label{pumpability:5Prime} $z'+b-1\in H_{b,e}$, and
  \item \label{pumpability:6Prime} $\subrun{R}{y}{z'+b}$ is a $\TOP0$-non-erasing run.  
  \end{enumerate} 
  Note that items \ref{pumpability:1Prime}' and
  \ref{pumpability:2Prime}' coincide with items \ref{pumpability:1}
  and \ref{pumpability:2} of the lemma. 
  We now prove items \ref{pumpability:3} -- \ref{pumpability:6}.
  \begin{enumerate}
    \setcounter{enumi}{2}
  \item   Due to Corollary \ref{cor:positionContainment}, item
    \ref{pumpability:3Prime}'
    implies that
    \begin{align*}
      \hist{\subrun{R}{y}{e}}{\TOP{k}(R(e))}=\TOP{k}(R(y)).    
    \end{align*}
    Together with Claim B this yields item \ref{pumpability:3}.
  \item 
    Assume that $\TOP0(R(x))=\TOP0(R(y))$. 
    Note that this directly implies 
    \begin{align}\label{condition-k-1-equals}
      \TOP{k}(R(x))=\TOP{k}(R(y) \text{ for all } 0\leq k \leq \sL.
    \end{align}
    Due to \ref{pumpability:4Prime}', there is some 
    \begin{align*}
      h\in H_{b,e}\cap\{x,x+1,\dots,y-1\} \neq \emptyset.    
    \end{align*}
    Items \ref{pumpability:2Prime}', \ref{pumpability:3Prime}', and
    Claim A, after application of Corollary
    \ref{cor:positionContainment}, imply 
    \begin{align*}
      \hist{\subrun{R}{x}{y}}{\TOP{j}(R(y))}&=\TOP j(R(x)),\\
      \hist{\subrun{R}{y}{e}}{\TOP{j}(R(e))}&=\TOP j(R(y)),\quad\mbox{and}\\
      \hist{\subrun{R}{h}{e}}{\TOP{j}(R(e))}&=\TOP j(R(h))
    \end{align*}
    for all $j\geq k-1$. 
    Due to Proposition \ref{prop:histTransitive}, this implies that 
    \begin{align}\label{eq:hist-ok}
      \hist{\subrun{R}{a}{b}}{\TOP{j}(R(b))}=\TOP j(R(a))
    \end{align} 
    for each pair $a,b\in\{x,h,y,e\}$ with $a\leq b$.
    With two applications of Lemma \ref{lem:PositionsDecrease} (to
    $\subrun{R}{h}{y}$ and $\subrun{R}{x}{h}$) we obtain 
    that $\TOP{k-1}(R(h))$ is lexicographically bounded by
    $\TOP{k-1}(R(x))=\TOP{k-1}(R(y))$ from below and from above
    whence it is this position (the equality of the two positions
    comes from equation \eqref{condition-k-1-equals}).
    Claim B and equation \eqref{eq:hist-ok} (setting $j=k$) imply
    that $a_x$, $a_h$ and $a_y$ are the sizes of the topmost
    $k$-stacks of $x$, $h$ and $y$, respectively. 
    It follows that $a_x=a_h=a_y$. 
    Since $h\in H_{b,e}$, 
    there exists some $g\in G_k$ such that
    $x\leq h\leq g$ and $a_j > a_h$ for all $h< j \leq g$.
    As $a_y = a_h$, we conclude that $g < y$ whence
    $G_k\cap \{x, x+1, \dots, y-1\} \neq \emptyset$.
  \item  $z-1\in G_k$ is satisfied by definition of $z$.
  \item If $z=z'+b$, items \ref{pumpability:6} and
    \ref{pumpability:6Prime}' coincide. 
    Assume that $z>z'+b$.
    Because $z'+b-1\in H_{b,e}$, we know that $a_j > a_{z'+b-1}$ for
    $z'+b\leq j\leq z-1$ because $z$ is minimal such that $z-1\geq z'+b-1$ and $z-1\in G_k$.  
    In particular, $z>z'+b$ implies $a_{z'+b} > a_{z'+b-1}$.
    Recall that $z$ was chosed to satisfy $z\leq e$. This together
    with $z'+b-1\in H_{b,e}$ implies that $a_z\geq a_{z'+b-1}$. 
    Thus, Lemma \ref{lem:ForNotEras} can be applied to $\subrun{R}{z'+b-1}{z}$.
    It follows that $\subrun{R}{z'+b-1}{z}$ is a $\TOP0$-non-erasing run.
    Since $\subrun{R}{y}{z'+b}$ is also a $\TOP0$-non-erasing run,
    $\subrun{R}{y}{z}$ is one as well (cf.~Proposition
    \ref{prop:topk-non-eras-composition}).  
  \end{enumerate}
  Thus, $x,y$ and $z$ satisfy the lemma if Claims A -- C hold. 
  We continue with a simultaneous proof of Claims A and B. 
  We start with showing that
  for each $h\in H_{b,e}$
  \begin{align}\label{eq:k-dla-hR}
     \hist{\subrun{R}{h}{\lvert R  \rvert}}{\TOP{k}(R(\lvert R \rvert))}=\TOP{k}(R(h)).    
  \end{align}
  Consider any $h\in H_{b,e}$. 
  If $h\in G_k$, the condition is satisfied by definition of $G_k$. 
  Otherwise, we conclude that $a_{h+1} > a_h$ by definition of
  $H_{b,e}$ and the fact that $n_{G_k}(h)\geq h+1$.
  But only the topmost $k$-stack can change its size whence equation
  (\ref{eq:k-dla-hR}) holds. 

  Recall that $e-1\in G_k$, which implies that
  \begin{align}\label{e-1isTOPk}
    \hist{\subrun{R}{e-1}{\lvert R
        \rvert}}{\TOP{k}(R(\lvert R \rvert))} = \TOP{k}(R(e-1)).
  \end{align}
  Together with equation (\ref{eq:k-dla-hR}) this implies
  \begin{align*}
    &\hist{\subrun{R}{h}{e-1}}{\TOP{k}(R(e-1))}=\\
    &\hspace{0.5cm}=\hist{\subrun{R}{h}{ \lvert R \rvert}}{\TOP{k}(R(\lvert R \rvert))}=\TOP{k}(R(h))    
  \end{align*}
  for each $h\in H_{b,e}$. 
  By definition of $H_{b,e}$, \mbox{$a_h=\min\{a_i: h\leq i \leq e\}$}.
  Additionally, equation (\ref{e-1isTOPk}) implies that $a_i$ (for
  $b\leq i\leq e-1$) is the size of the $k$-stack of $R(i)$ at
  $\hist{\subrun{R}{i}{e-1}}{\TOP k(R(e-1))}$, 
  whence we may apply Lemma
  \ref{lem:TopKandIncreasingSequencegiveTopk-1} to $x:=\TOP k(R(e-1))$
  and to $\subrun{R}{h}{e-1}$. 
  This yields 
  \begin{align} \label{eqn:TOPK-1e-1}
    \hist{\subrun{R}{h}{e-1}}{\TOP{k-1}(R(e-1))}=\TOP{k-1}(R(h))
  \end{align}
  for each $h\in H_{b,e}$.
  
  We continue by case distinction on the operation between $e-1$
  and $e$ in $R$.
  \begin{enumerate}
  \item Due to equation (\ref{e-1isTOPk}),
    the operation at $e-1$ cannot be $\pop{k'}$ or $\col{k'}$ for
    $k'>k$.
  \item If the operation at $e-1$ is of level below $k$ or is a $\push{}$
    operation, then 
    \begin{align*}
      \hist{\subrun{R}{e-1}{e}}{\TOP{k-1}(R(e))}=\TOP{k-1}(R(e-1)).
    \end{align*}
    Due to equation (\ref{eqn:TOPK-1e-1}), this implies Claim A.
    Together with (\ref{e-1isTOPk}) and Corollary
    \ref{cor:positionContainment}, this implies 
    $$\hist{\subrun{R}{e-1}{e}}{\TOP
      k(R(e))}=\hist{\subrun{R}{e-1}{|R|}}{\TOP{k}(R(|R|))}.$$ 
    Using Proposition \ref{prop:histTransitive}, Claim B follows directly. 
  \item Assume that the operation at
    $e-1$ is $\pop{k}$ or $\col{k}$. 
    We conclude immediately that
    \begin{align*}
      \hist{\subrun{R}{e}{\lvert R
          \rvert}}{\TOP{k}(R(\lvert R \rvert))}=\TOP{k}(R(e)),      
    \end{align*}
    because this is the only position $p$ of $R(e)$ that satisfies
    $\hist{\subrun{R}{e-1}{e}}{p}=\TOP{k}(R(e-1))$ (and because $e-1\in
    G_k$). 
    With Proposition \ref{prop:histTransitive}, Claim B follows directly. 
    
    Furthermore, $a_e$ is the size of the stack of $R(e)$ at $\TOP{k}(R(e))$. 
    By definition of $H_{b,e}$,  we have $a_h=\min\{a_i:h\leq i\leq e\}$.
    Application of Lemma \ref{lem:TopKandIncreasingSequencegiveTopk-1}
    to $x:=\TOP k(R(e))$ and to $\subrun{R}{h}{e}$ for each $h\in H_{b,e}$ yields Claim A.  
  \end{enumerate}
  For the proof of Claim C, 
  let $t^{k-1}$ be the $(k-1)$-stack of $R'(0)$ at the position
  $\hist{R'}{\TOP{k-1}(R'(\lvert R' \rvert))}$
  which is by definition the
  $(k-1)$-stack of $R(b)$ at
  $\hist{\subrun{R}{b}{e}}{\TOP{k-1}(R(e))}$.
  Due to Corollary \ref{cor:positionContainment}, 
  $\hist{\subrun{R}{b}{e}}{\TOP{k-1}(R(e))}$ points into
  $\hist{\subrun{R}{b}{e}}{\TOP{k}(R(e))}$.
  Hence, for $j\leq k-1$  every $j$-stack occurring in $t^{k-1}$
  occurs also in the $k$-stack of $R(b)$ at
  $\hist{\subrun{R}{b}{e}}{\TOP{k}(R(e))}$. 
  Due to Claim B,
  $a_i$ is the number of $(k-1)$-stacks of the stack at 
  $\hist{\subrun{R}{i}{e}}{\TOP{k}(R(e))}$ for all $i\leq b$, and the
  $k$-stack of $R(0)$ at $\hist{\subrun{R}{0}{e}}{\TOP k(R(e))}$ is
  $s^k$. 
  We have $a_i>a_b$ for all $0\leq i<b$, so we can apply  Lemma
  \ref{lem:ForClaim2} to $\subrun{R}{0}{b}$ and 
  position $\hist{\subrun{R}{b}{e}}{\TOP{k}(R(e))}$.  
  We conclude that  for $j\leq k-1$ every $j$-stack occurring in
  $t^{k-1}$ occurs also in $s^k$. Thus, its size is bounded by $r_j$. \qed
\end{proof}

\subsection{Finitely Branching
  Epsilon-Contractions}\label{ssec:finitely-branching} 

The basic proof concept for the pumping lemma is as follows. 
If we find a pumping run which starts and ends in configurations of the
same type, then we can apply Proposition \ref{prop:types1} to this run and
obtain arbitrarily long runs in the graph of the
CPS. But if we consider $\varepsilon$-contractions, all runs that we
construct may consist of $\varepsilon$-edges except for a bounded
number of transitions. In this case, the longer and longer runs would
perhaps always induce the same path in the $\varepsilon$-contraction.
In this section we show how to overcome this problem in the case of
finitely branching $\varepsilon$-contractions. 

We first derive a technical condition that allows to
conclude that the $\varepsilon$-contraction of some collapsible
pushdown graph is infinitely branching. 
This result basically uses the naive pumping approach described before
but we add some assumptions such that we really obtain larger and
larger runs that end in larger and larger stacks that belong to the
nodes of the $\varepsilon$-contraction. 
Afterwards, we use this result
in order to define a bound on the difference of stack sizes between two
nodes of a finitely branching $\varepsilon$-contraction that are
connected by an edge. In the next section we use this fact in the
pumping construction in the following way: instead of talking about a
configuration being in some distance from the initial
configuration in the 
$\varepsilon$-contraction, we talk about a configuration having stack
sizes bounded by some numbers. 

Without loss of generality (by doubling the number of states of the system), we can assume that for each state $q$ 
transitions leading to  state $q$ are all $\varepsilon$-transitions or are all non-$\varepsilon$-transitions.

\begin{proposition} \label{prop:inf-branching}
  Let $\Ss$ be some CPS of level $n$ such that for each state $q$     
  transitions leading to  state $q$ are all $\varepsilon$-transitions or are all non-$\varepsilon$-transitions.
  Let $R$ be a run starting in a configuration of the
  $\varepsilon$-contraction $\Gg$ of the graph of $\Ss$. 
  Then $\Gg$ is infinitely branching
  if there are positions $0\leq x < y \leq \lvert R \rvert$ such that
  \begin{enumerate}
  \item $\ctype_\Xx(R(x))=\ctype_\Xx(R(y))$,
  \item $\subrun{R}{x}{y}$ is a pumping run in
    $\Pp_{>,\varepsilon}$, i.e., a pumping run such that
    $\TOP0(R(x))\lexOrdstrict\TOP0(R(y))$ and all edges of
    $\subrun{R}{x}{y}$ are labelled by $\varepsilon$, and 
  \item $\subrun{R}{y}{\lvert R \rvert}$ is a $\TOP{0}$-non-erasing
    run ending with a non-$\varepsilon$-transition.
  \end{enumerate}

\end{proposition}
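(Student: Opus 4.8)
The plan is to iterate the pumping run $\subrun{R}{x}{y}\in\Pp_{>,\varepsilon}$ infinitely often (via Proposition~\ref{prop:types1}) to produce a sequence of configurations of strictly increasing stack size, then glue on copies of the $\TOP{0}$-non-erasing tail $\subrun{R}{y}{\lvert R\rvert}$ (via Proposition~\ref{prop:types2}) so that each such configuration reaches a node of $\Gg$, and finally argue that these target nodes are pairwise distinct because their stacks keep growing, while they are all successors (in $\Gg$) of one fixed node. This is exactly the bullet ``If $T\in\Pp_{<,\varepsilon}$'' of step~3 in Appendix~\ref{app:sketchPumping}, specialised to the case where no prefix $S$ is needed: here $R(x)$ is already assumed to lie in $\Gg$.

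First I would set $c:=R(x)\in\Gg$ and apply Proposition~\ref{prop:types1}(2) to the pumping run $T:=\subrun{R}{x}{y}$, which by hypothesis lies in $\Pp_{<,\varepsilon}$ (note $\Pp_{>,\varepsilon}\subseteq\Pp_{<,\varepsilon}$ since $\TOP0(T(0))\lexOrdstrict\TOP0(T(\lvert T\rvert))$ witnesses $T\in\Pp_{<,y}$, and $y=\varepsilon$): since $\ctype_\Xx(T(0))=\ctype_\Xx(R(x))=\ctype_\Xx(R(y))=\ctype_\Xx(T(\lvert T\rvert))$, in particular $\ctype_\Xx(T(0))\typOrd\ctype_\Xx(T(\lvert T\rvert))$, we obtain $\varepsilon$-labelled runs $(R_i)_{i\in\N}$ with $R_i(0)=c$, $\ctype_\Xx(T(\lvert T\rvert))\typOrd\ctype_\Xx(R_i(\lvert R_i\rvert))$, and such that the final stack of $R_{i+1}$ is strictly greater than that of $R_i$. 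Write $(q,s_i)$ for the final configuration of $R_i$; then $s_0\lexOrdstrict s_1\lexOrdstrict\cdots$ and $q$ is the common state, and since all transitions are $\varepsilon$-transitions, $\ctype_\Xx(R(x))\typOrd\ctype_\Xx(c)$ is trivially satisfied and $c\vdash^\varepsilon\cdots\vdash^\varepsilon(q,s_i)$.

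Next I would transfer the tail $U':=\subrun{R}{y}{\lvert R\rvert}$: since $\ctype_\Xx(R(y))=\ctype_\Xx(T(\lvert T\rvert))\typOrd\ctype_\Xx(R_i(\lvert R_i\rvert))=\ctype_\Xx((q,s_i))$, Proposition~\ref{prop:types2} gives a $\TOP{0}$-non-erasing run $U_i'$ starting in $(q,s_i)$ and ending in the same state as $U'$, which (by the hypothesis of the proposition) is a state reached by a non-$\varepsilon$-transition, hence a state of a node of $\Gg$. Let $U_i$ be the shortest prefix of $U_i'$ ending with a non-$\varepsilon$-transition; $U_i$ is still $\TOP{0}$-non-erasing (prefixes of $\TOP{0}$-non-erasing runs are $\TOP{0}$-non-erasing, cf.\ Appendix~\ref{app:sketchPumping}), its final configuration $g_i:=(r,t_i)$ lies in $\Gg$, and since $U_i$ is $\TOP{0}$-non-erasing we have $s_i\lexOrd t_i$ (the topmost $0$-stack, hence the whole stack in the lexicographic order on positions, is not erased). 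Concatenating, $R_i\circ U_i$ is a run from $c$ that induces a path of length $1$ in $\Gg$ from $c$ to $g_i$ (all intermediate configurations except the last are off $\Gg$, as they are reached by $\varepsilon$-transitions), so each $g_i$ is a successor of $c$ in $\Gg$.

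Finally I would extract an infinite antichain of successors: given any finite set of the $g_i$, pick $j$ with $s_j$ lexicographically larger than every $t_i$ in that set (possible since the $s_i$ strictly increase without bound); then $t_j\lexOrdR s_j$ is strictly larger than each such $t_i$, so $g_j$ differs from all of them. Iterating yields a sequence $(g_{i_\ell})_{\ell\in\N}$ of pairwise distinct successors of $c$ in $\Gg$, so $\Gg$ is infinitely branching at $c$, as required. The only genuinely delicate point is the bookkeeping that $U_i$ remains $\TOP{0}$-non-erasing after truncation and hence $s_i\lexOrd t_i$ — i.e.\ that the non-$\varepsilon$ truncation does not destroy the ``stack does not shrink'' property — but this is immediate from closure of $\Nn_0$ under prefixes together with Lemma~\ref{lem:PositionsDecrease}; everything else is a direct application of Propositions~\ref{prop:types1} and~\ref{prop:types2}.
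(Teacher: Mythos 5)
Your overall strategy is the same as the paper's: pump $\subrun{R}{x}{y}$ via Proposition~\ref{prop:types1}, transfer the tail via Proposition~\ref{prop:types2}, truncate at the first non-$\varepsilon$-transition, and argue that the resulting endpoints are pairwise distinct successors of a fixed node of $\Gg$. Two steps, however, do not hold as written.

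First, you set $c:=R(x)$ and assert that $R(x)$ ``is already assumed to lie in $\Gg$''. The hypothesis only places $R(0)$ in $\Gg$; whether $R(x)$ is a node of $\Gg$ depends on whether its state is reached by non-$\varepsilon$-transitions, which is not guaranteed. The paper repairs this by choosing $z\leq x$ maximal with $R(z)\in\Gg$ (such $z$ exists because $R(0)\in\Gg$) and prepending $\subrun{R}{z}{x}$ to each constructed run; by maximality of $z$ no intermediate configuration of $\subrun{R}{z}{x}$ lies in $\Gg$, so each composite run still induces a single edge of $\Gg$, now emanating from $R(z)$. Without this step the phrase ``successors of $c$ in $\Gg$'' is not meaningful.

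Second, your distinctness argument is unsound as stated: you pick $j$ with $\TOP{0}(s_j)$ lexicographically above every $\TOP{0}(t_i)$ in a given finite set, ``possible since the $s_i$ strictly increase without bound''. A strictly $\lexOrd$-increasing sequence of positions need not eventually dominate an arbitrary fixed position: the top-level coordinate of $\TOP{0}(s_j)$ may stay constant while lower coordinates grow, whereas $U_i$ may perform high-level pushes so that $\TOP{0}(t_i)$ has a strictly larger top coordinate than every $\TOP{0}(s_j)$. The paper's argument avoids this entirely: each $t_i$ contains only finitely many positions, the $\TOP{0}(s_j)$ form an infinite family of pairwise distinct positions, and $\TOP{0}(s_j)$ is a position of $t_j$ because $U_j$ is $\TOP{0}$-non-erasing; hence for each $i$ there is $j$ with $\TOP{0}(s_j)$ absent from $t_i$ but present in $t_j$, so $t_i\neq t_j$, and iterating gives infinitely many distinct successors. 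Replace your lexicographic-domination step by this cardinality argument. The remaining steps (including the observation that your $\Pp_{>,\varepsilon}$ run lies in $\Pp_{<,\varepsilon}$, and the closure of $\Nn_0$ under prefixes used to justify the truncation) are correct and match the paper.
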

\begin{proof}
  Let $q$ be the state of $R(y)$. 
  We apply Proposition \ref{prop:types1} to $\subrun{R}{x}{y}$
  and obtain infinitely many $\varepsilon$-labelled runs
  $(R_i)_{i\in\N}$ from $R(x)$ to $c_i=(q,s_i)$ such that $\TOP0(s_i)\lexOrdstrict\TOP0(s_{i+1})$ for all $i\in\N$. 
  Now we apply Proposition \ref{prop:types2} to 
  $\subrun{R}{y}{\lvert R \rvert}$ and to $c_i$. We obtain a
  $\TOP{0}$-non-erasing run 
  $S'_i$ from $c_i$.
  It ends in the same state as $R$ whence it ends with a
  non-$\varepsilon$-transition.  
  Let $S_i$ be the prefix of $S'_i$ which ends after the first
  occurrence of a non-$\varepsilon$-transition.
  Let $z\leq x$ be maximal such that $R(z)$ corresponds to a node of $\Gg$. 
  Then $U_i:=\subrun{R}{z}{y} \circ R_i \circ S_i$ connects $R(z)$
  to one of its successors in $\Gg$ whose stack $t_i$ contains the
  position $\TOP{0}(s_i)$. 
  Since $t_i$ contains only finitely many positions, and the
  $(\TOP{0}(s_j))_{j\geq i}$ form an infinite sequence
  of pairwise distinct positions, for each $i$ there is a $j\geq i$ such that
  $\TOP{0}(s_j)$ is no position in $t_i$. This immediately implies
  $t_i\neq t_j$. By induction, we conclude that the $U_i$ connect
  $R(z)$ with infinitely many pairwise different successors in $\Gg$
  whence $\Gg$ is infinitely branching at $R(z)$. \qed
\end{proof}

Now we are prepared to prove that in each finitely branching
$\varepsilon$-contraction of a collapsible pushdown system the stack
sizes grow only in a bounded manner  from each node to its successors. 
For the combinatorial part in the proof we use the
sequences from Definition \ref{def:SequencesMandN} without reference. 

\begin{lemma}\label{lem:Bounded-Stack-Growth} 
  Let $\Ss$ be a CPS of level $n$ such that
  the $\varepsilon$-contraction $\Gg$ of its configuration graph 
  is finitely branching and such that 
  transitions leading to some state $q$ are either all
  $\varepsilon$-transitions or all
  non-$\varepsilon$-transitions. 
  Set $c:=\lvert \Tt_\Ss\rvert+1$.
  Let $R$ be a run starting in the initial configuration
  whose last edge is not labelled by $\varepsilon$ 
  and which corresponds to a path of length $m$ in $\Gg$. 
  The size of every $k$-stack of $R(\lvert R \rvert)$
  is at most $M_k$ for all 
  $1\leq k\leq n$.
\end{lemma}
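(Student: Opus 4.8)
The plan is to prove the bound by induction on $k$. Fix $R$ a run from the initial configuration whose last edge is non-$\varepsilon$ and which induces a path of length $m$ in $\Gg$. For the base of the induction we want to show $\lvert t^1\rvert \le M_1$ for $t^1$ the topmost $1$-stack of $R(\lvert R\rvert)$; the general step will bound the topmost $k$-stack by $M_k$. Heading for a contradiction, suppose the topmost $k$-stack of $R(\lvert R\rvert)$ has size exceeding $M_k$. The first step is to extract, along the history of $\TOP{k}(R(\lvert R\rvert))$, a large set $G_k$ of ``good'' indices $i$ where $\hist{\subrun{R}{i}{\lvert R\rvert}}{\TOP{k}(R(\lvert R\rvert))}=\TOP{k}(R(i))$. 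Concretely, every time the size of (the history of) the topmost $k$-stack increases from $R(i)$ to $R(i+1)$ this is forced to be a $\push{}$ acting on the topmost $k$-stack, so $i\in G_k$; since the size grows from something bounded by the base value (which in turn is controlled using the induction hypothesis on lower-level stacks, or by $\bot$ at the start) up to more than $M_k$, and each step increases the size by at most $1$ (Proposition~\ref{prop:NonTopmostStacksDoNotChange}), we get at least $M_k$ many indices in $G_k$, i.e.\ $\lvert G_k\rvert\ge \hat N_k$ where $\hat N_k$ is the threshold of Lemma~\ref{lem:Pumpability} with $r_j$ the maximal size of a $j$-stack occurring in the relevant $k$-stack $s^k$ of $R(0)$.

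The key tool is Lemma~\ref{lem:Pumpability}: applied to $k$, $R$, and $G_k$ it yields indices $0\le x<y<z\le\lvert R\rvert$ with $\ctype_\Xx(R(x))=\ctype_\Xx(R(y))$, $\subrun{R}{x}{y}$ a pumping run, $\TOP0(R(x))\neq\TOP0(R(y))$ or $G_k\cap\{x,\dots,y-1\}\neq\emptyset$, $z-1\in G_k$, and $\subrun{R}{y}{z}$ a $\TOP0$-non-erasing run. The next step is to upgrade ``$\TOP0(R(x))\neq\TOP0(R(y))$ or $G_k$ meets $\{x,\dots,y-1\}$'' into ``$\subrun{R}{x}{y}\in\Pp_{>,\varepsilon}$ is strictly increasing''. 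If $\TOP0(R(x))\neq\TOP0(R(y))$ then by the Remark after Lemma~\ref{lem:pumping-run} together with Lemma~\ref{lem:PositionsDecrease} this pumping run lies in $\Pp_{<,\cdot}$ hence $\TOP0(R(x))\lexOrdstrict\TOP0(R(y))$. If instead some $g\in G_k$ lies in $\{x,\dots,y-1\}$, then $g\in G_k$ means the history of the topmost $k$-stack is the topmost $k$-stack at time $g$, so some $k$-stack with size exceeding the size present ``below'' it gets traced through $\subrun{R}{x}{y}$; invoking Lemma~\ref{lem:tech4} we again conclude $\TOP0(R(x))\neq\TOP0(R(y))$, hence strict increase. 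So in all cases $\subrun{R}{x}{y}\in\Pp_{<,\varepsilon}$ and it is $\varepsilon$-labelled because inside the scope of $R$ up to a strict prefix all transitions are $\varepsilon$ except the very last one of $R$ (we may also arrange $y<\lvert R\rvert$, pushing the non-$\varepsilon$-transition past $z$). Since $z-1\in G_k$, by extending $\subrun{R}{y}{z}$ with the rest of $R$ (which is $\TOP0$-non-erasing because appending the single final non-$\varepsilon$ step to a $\TOP0$-non-erasing run keeps it so, using Proposition~\ref{prop:topk-non-eras-composition}) we obtain a $\TOP0$-non-erasing run from $R(y)$ ending with a non-$\varepsilon$-transition.

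Now all three hypotheses of Proposition~\ref{prop:inf-branching} are met, so $\Gg$ is infinitely branching, contradicting our assumption. This contradiction shows the topmost $k$-stack of $R(\lvert R\rvert)$ has size at most $M_k$. For the inductive structure: the threshold $\hat N_k$ in Lemma~\ref{lem:Pumpability} involves $r_j$, the maximal size of a $j$-stack occurring in $s^k$, and $s^k$ is a $k$-stack of $R(0)$; but $R(0)$ is the initial configuration $(q_I,\bot_\sL)$ whose stacks are all of size $\le 1$, so in fact $r_j\le 1$ and $\hat N_k$ reduces to a fixed iterated-exponential in $\lvert\Tt_\Ss\rvert$, matching $M_k$ (this is exactly how Definition~\ref{def:SequencesMandN} is calibrated, with $M_1=(m+1)\cdot c$ accommodating the $m$ steps in $\Gg$). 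The main obstacle I expect is the bookkeeping in the first step --- namely establishing the precise counting inequality $\lvert G_k\rvert\ge\hat N_k$ from ``topmost $k$-stack of $R(\lvert R\rvert)$ is big'', since one has to carefully identify which increments of the traced stack size force membership in $G_k$ and bound the starting value, and this is where the $M_k$ versus $M_k'$ distinction and Lemma~\ref{lem:MminusMPrimeGeqNPrime} come into play to make the arithmetic close up. The rest is a careful but routine assembly of the already-established lemmas.
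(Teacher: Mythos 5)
Your proposal has a genuine gap at its core: you run the whole argument on the full run $R$ from the initial configuration, and then assert that the extracted pumping run $\subrun{R}{x}{y}$ is $\varepsilon$-labelled ``because inside the scope of $R$ up to a strict prefix all transitions are $\varepsilon$ except the very last one of $R$''. This is false: $R$ corresponds to a path of length $m$ in $\Gg$, so it contains (roughly) $m$ non-$\varepsilon$-transitions, one per edge of the path, and the pumping run produced by Lemma~\ref{lem:Pumpability} may well straddle several of them. Without the $\varepsilon$-labelling you cannot invoke Proposition~\ref{prop:inf-branching}, and indeed you should not be able to: a long run with growing stacks is perfectly consistent with finite branching as long as the growth is spread over many $\Gg$-edges. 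The paper avoids this by inducting on $m$ and applying the whole machinery only to the final segment $S:=\subrun{R}{b}{\lvert R\rvert}$ between the $(m-1)$-st node of the path and the end, which contains exactly one non-$\varepsilon$-transition (the last one). Your proposed induction on $k$ does not substitute for this localisation; in the paper the induction hypothesis (on $m$) is what bounds the stacks of $S(0)$ by $M'_j$, which become the parameters $r_j$ of Lemma~\ref{lem:Pumpability}, and the inequality $M_k-M'_k\geq N'_{k-1}$ is what makes the counting close.

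This structural change also breaks the arithmetic and the scope of the claim. Taking $R(0)$ to be the initial configuration gives $r_j\leq 1$, so applying Lemma~\ref{lem:Pumpability} at level $k$ (the paper applies it at level $k-1$, after converting level-$k$ size increments into level-$(k-1)$ history conditions via Lemma~\ref{lem:TopKandIncreasingSequencegiveTopk-1}) requires $\lvert G_k\rvert\geq\hat N_k\approx\exp_k(c)$, whereas your counting only yields about $M_k=\exp_{k-1}((m+1)c)$ indices; already at $k=1$ this needs $(m+1)c\geq 2^{c}$, which fails for small $m$. Finally, the lemma bounds \emph{every} $k$-stack of $R(\lvert R\rvert)$, not just the topmost one; the paper traces an arbitrary position $p$ through the history function, while your proposal only treats $\TOP{k}(R(\lvert R\rvert))$.
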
 
 
\begin{proof} 
  The proof is by induction on $m$. For $m=0$, the claim is trivial
  (because $c\geq 2$ and the initial stack of any level has size
  $1$). 
  Assume that we have proven the claim for all paths of length below
  $m$ and assume that $R$ describes a path of length $m$ in $\Gg$. 
  Let $R(b)$ correspond to the $(m-1)$-st node of $\Gg$ on this path and 
  set $S:=\subrun{R}{b}{\lvert R\rvert}$. 

  Heading for a contradiction assume that $p$ is the position of a
  $k$-stack in $R(\lvert R \rvert)$ such that the size of this stack
  is greater than $M_k$.

  For $0\leq i\leq \lvert S \rvert$, let $a_i$ be the size of the
  $k$-stack of $S(i)$ at $\hist{\subrun{S}{i}{\lvert S \rvert}}{p}$. 
  By induction hypothesis, the size of any $k$-stack 
  of $S(0)$ is bounded by $M'_k$.
  Thus, we have $a_{\lvert S \rvert}> M_k$ and $a_0\leq M_k'$. 
  Let  $G\subseteq\{0, 1, \dots, \lvert S \rvert-1\}$ contain all
  elements $i$ such that $a_i < a_j$ for all $i<j\leq \lvert S \rvert$.
  Since  $ a_i - a_{i-1} \leq 1$ for $1\leq i\leq \lvert S\rvert$, 
  for each $i$ in $\{M_k', M_k'+1, \dots, M_k\}$ we have an index $j$
  such that $a_j=i$ and $a_j\in G$. 
  Using Lemma \ref{lem:MminusMPrimeGeqNPrime} we conclude that
  $\lvert G \rvert \geq M_k - M_k' \geq N_{k-1}'$. Since $M_k'$ is a
  bound on the sizes of $k$-stacks in $S(0)$, it follows that $G$ is
  big enough in order to apply Lemma \ref{lem:Pumpability}  for $k-1$. 
  We want to apply this lemma to the run 
  $T:=\subrun{S}{0}{\max(G)+1}$.

  In order to satisfy the requirements of this lemma, we have to prove
  that 
  $\hist{\subrun{T}{g}{\lvert T \rvert}}{\TOP{k-1}(T(\lvert T \rvert))}
    = \TOP{k-1}(T(g))$
  for all $g\in G$. 
  Choose  $g\in G$ arbitrarily. 
  Since $a_{g+1}>a_g$, the $k$-stack at 
  $\hist{\subrun{S}{g}{\lvert S \rvert}}{p}$
  is smaller than that at
  $\hist{\subrun{S}{g+1}{\lvert S \rvert}}{p}$.
  Due to Proposition \ref{prop:NonTopmostStacksDoNotChange}, 
  this requires that 
  \begin{align*}
    &\hist{\subrun{S}{g}{\lvert S \rvert}}{p}=\TOP{k}(S(g)),\quad \text{and}\\
    &\hist{\subrun{S}{g+1}{\lvert S \rvert}}{p}=\TOP{k}(S(g+1)).
  \end{align*}
  Especially, $\hist{\subrun{S}{\lvert T \rvert}{\lvert S \rvert}}{p}=\TOP{k}(T(\lvert T \rvert))$ whence
  $a_i$ for $i\leq \lvert T \rvert$ is the size of the stack at
  $\hist{\subrun{T}{i}{\lvert T \rvert}}{\TOP{k}(T(\lvert T
    \rvert))}$. 
  We conclude that for all $g\in G$ we have
  \begin{align*} 
  \hist{\subrun{T}{g}{\lvert T \rvert}}{\TOP{k}(T(\lvert T \rvert))}=\TOP{k}(T(g)).
  \end{align*}
  Furthermore, for each $i> g$ we have $a_g < a_i$
  whence we can apply 
  Lemma \ref{lem:TopKandIncreasingSequencegiveTopk-1} to $x:=\TOP
  k(T(|T|))$ and to the run $\subrun{R}{g}{|T|}$ obtaining
  that 
  \begin{align*}
    \hist{\subrun{T}{g}{\lvert T \rvert}}{\TOP{k-1}(T(\lvert T
      \rvert))}=\TOP{k-1}(T(g)).    
  \end{align*}
  Application of Lemma \ref{lem:Pumpability} to $T$ and $k-1$
  yields indices $0\leq x < y <  z \leq \lvert T \rvert$ such that   
  \begin{enumerate} 
  \item	$\ctype_\Xx(S(x))=\ctype_\Xx(S(y))$, 
  \item	$\subrun{S}{x}{y}$ is a pumping run,
    \setcounter{enumi}{3}
  \item \label{prf:bounded-stack-growth-Enum-4} 
    $\TOP0(S(x))\neq\TOP0(S(y))$ or
    \begin{align*}
    G\cap\{x,x+1,\dots,y-1\} \neq \emptyset,      
    \end{align*}
  \item $z-1\in G$, and
  \item \label{prf:bounded-stack-growth-Enum-6} 
    $\subrun{S}{y}{z}$ is a $\TOP0$-non-erasing run.
  \end{enumerate} 

  By definition of $G$, for $g\in G$ and $g<y$ we have $a_g<a_y$.
  In other words the size of the $k$-stack in $S(y)$ at
  $\hist{\subrun{S}{y}{|S|}}{p}$ is greater than that of the $k$-stack in
  $S(g)$ at $\hist{\subrun{S}{g}{|S|}}{p}$. 
  Thus, if there is a $g\in G\cap\{x,x+1,\dots,y-1\}$, 
  application of Lemma \ref{lem:tech4} to $\subrun{S}{x}{y}$ shows that
  $\TOP0(S(x))\neq\TOP0(S(y))$. 
  In the light of Property \ref{prf:bounded-stack-growth-Enum-4}), we
  always have 
  $\TOP0(S(x))\neq\TOP0(S(y))$. 

  Since $z-1\in G$, we have $a_i>a_{z-1}$ for all $z\leq i\leq|S|$. 
  Application of Lemma \ref{lem:ForNotEras} to $\subrun{S}{z-1}{|S|}$
  shows that $\subrun{S}{z-1}{|S|}$ is a $\TOP0$-non-erasing run. 
  Since Property \ref{prf:bounded-stack-growth-Enum-6}) implies that
  $\subrun{S}{y}{z-1}$ is 
  $\TOP{0}$-non-erasing, we conclude using 
  Proposition
  \ref{prop:topk-non-eras-composition}
  that
  $\subrun{S}{y}{\lvert S \rvert}$ is  $\TOP0$-non-erasing run. 
  
  Recall that the last edge of $S$ is the only edge which is not
  labelled $\varepsilon$. 
  Thus the assumptions of 
  Proposition \ref{prop:inf-branching} are satisfied by
  $S$, $x$ and $y$ whence the lemma yields that
  $\Gg$ is infinitely branching. 
  This contradicts our assumption. 
  Thus, we conclude that every $j$-stack in  $R(\lvert R \rvert)$
  has size bounded by $M_j$. \qed
\end{proof}

\subsection{Proof of the Pumping Lemma}\label{ssec:pumping-proof}

Having bounded the size of stacks in finitely branching
$\varepsilon$-contractions of pushdown graphs, we can prove the main
theorem.  

Note that -- doubling the number of states of the system -- we can enforce
that for each $\varepsilon$-transition $\delta_1$ and each
non-$\varepsilon$-transition $\delta_2$, $\delta_1$ leads to a
different state than $\delta_2$. 

Having obtained this condition the proof of the main theorem
follows from the following theorem. 
\begin{theorem}
  Let $\Ss$ be a CPS of level $n$ such that the $\varepsilon$-contraction $\Gg$ of its graph is finitely branching 
  and such that for each state $q$ transitions leading to state $q$ are all $\varepsilon$-transitions or are
  all non-$\varepsilon$-transitions.
  Let $c_m$ be some configuration of distance $m$ from the initial
  configuration. 
  
  Let $S_1=(m+1)\cdot C_{\Ss}$ and $S_j=2^{S_{j-1}}$ for
  $2\leq j\leq n$, where $C_{\Ss}=3 \cdot c \cdot 2^c$ with 
  $c=\lvert \Tt_\Ss \rvert +1$. 
  Assume also that in $\Gg$ there exists a path $p$ of length at
  least $S_n$ which starts in $c_m$.
  
  Then there are infinitely many paths in $\Gg$ which start in
  $c_m$ and end in configurations having the same state as the last
  configuration of $p$.  
\end{theorem}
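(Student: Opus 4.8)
The plan is to derive this theorem from the technical machinery already assembled, in particular Proposition~\ref{prop:inf-branching}, Lemma~\ref{lem:Bounded-Stack-Growth}, Lemma~\ref{lem:Pumpability}, and Propositions~\ref{prop:types1}--\ref{prop:types3}. First I would reduce the problem to one about stack sizes rather than about distances in $\Gg$. By Lemma~\ref{lem:Bounded-Stack-Growth}, if $c_m$ has distance $m$ from the initial configuration, then every $k$-stack of $c_m$ has size at most $M_k$, where $M_1=(m+1)\cdot c$ and $M_j=2^{M_{j-1}}$. So it suffices to work with the run $p$ (viewed as a run $R$ of $\Ss$, possibly after prepending a run from the initial configuration to $c_m$ and restricting attention to its suffix from $c_m$ on) whose initial stack sizes are bounded by the $M_k$'s and whose length is at least $S_n$. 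Using Lemma~\ref{lem:CombinatoricSandN} (and the definitions of $N_j$, $S_j$), the hypothesis $|p|\ge S_n\ge 3N_n$ is strong enough to feed into the counting arguments.

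Next I would apply Lemma~\ref{lem:Pumpability} to $R$ with $k=n$: since the topmost $n$-stack of any configuration never disappears, the set $G_n=\{i<|R|\}$ is all of $\{0,\dots,|R|-1\}$, and $|G_n|\ge S_n\ge \hat N_n$ (here $\hat N_0=|\Tt_\Ss|+1=c$ and $\hat N_j=r_j\cdot 2^{\hat N_{j-1}}$ with $r_j\le M_j$ the bound on $j$-stack sizes in the initial stack). The lemma yields indices $x<y<z$ such that $\ctype_\Xx(R(x))=\ctype_\Xx(R(y))$, $\subrun{R}{x}{y}$ is a pumping run, $\subrun{R}{y}{z}$ is a $\TOP0$-non-erasing run, and the ``progress'' condition~\ref{pumpability:4}. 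I would then split on the $\varepsilon$/non-$\varepsilon$ nature of $\subrun{R}{x}{y}$. If $\subrun{R}{x}{y}\in\Pp_\noteps$, I apply Proposition~\ref{prop:types1}(1) to get runs $R_i$ from $R(x)$ with at least $i$ non-$\varepsilon$-transitions, then Proposition~\ref{prop:types3} to extend each $R_i$ by a run reaching the final state of $p$, and conclude that $\Gg$ contains paths of arbitrary length (hence infinitely many paths) from $c_m$ ending in that state. If $\subrun{R}{x}{y}\in\Pp_{<,\varepsilon}$, I use Proposition~\ref{prop:types1}(2) to get $\varepsilon$-runs $R_i$ whose final stacks strictly grow, then Proposition~\ref{prop:types2} applied to $\subrun{R}{y}{z}$ (which is $\TOP0$-non-erasing, so it preserves and can only enlarge the topmost $0$-stack) to obtain $\TOP0$-non-erasing runs $U'_i$ from each $R_i(|R_i|)$ ending in the same state; truncating $U'_i$ at its first non-$\varepsilon$-transition gives $U_i$ ending in a node $g_i$ of $\Gg$, and because $\TOP0(U_i(|U_i|))\ge\TOP0(R_i(|R_i|))$ while the $\TOP0(R_i(|R_i|))$ strictly increase, the $g_i$ take infinitely many values, giving infinitely many distinct successors of a fixed node $S(b)$, i.e.\ infinitely many paths. (The remaining case $\subrun{R}{x}{y}\in\Pp_{=,\varepsilon}$ cannot actually arise here because condition~\ref{pumpability:4} together with Lemma~\ref{lem:tech4} forces $\TOP0(R(x))\neq\TOP0(R(y))$ whenever there is an intervening $G_n$-element, and one can arrange that there is one; this is essentially the argument already run inside the proof of Lemma~\ref{lem:Bounded-Stack-Growth}.)

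The main obstacle I expect is the bookkeeping that connects ``distance $m$ in $\Gg$'' with ``length at least $S_n$ as a run of $\Ss$'': I have to be careful that the run $p$ given in $\Gg$ is really a single run of $\Ss$ (it is, by the definition of the graph of $\Ss$ — edges come from runs $c\vdash^a d$), that prepending the $m$-step path to $c_m$ lets me invoke Lemma~\ref{lem:Bounded-Stack-Growth} for the stack-size bound at $c_m$, and that the constant $C_\Ss=3\cdot c\cdot 2^c$ in the statement matches the $S_j$-recursion and the $\hat N_j$-recursion of Lemma~\ref{lem:Pumpability} after substituting $r_j\le M_j$. A secondary subtlety is ensuring the two normalising assumptions on states (every state reached only by $\varepsilon$-transitions or only by non-$\varepsilon$-transitions, and $\varepsilon$- and non-$\varepsilon$-transitions never share a target state) are harmless — these are obtained by the standard state-doubling constructions noted just before the theorem and in step~\ref{Part3-step1} of Appendix~\ref{app:sketchPumping}, and only affect $C_\Ss$ by a constant factor absorbed into the definition. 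Once these alignments are in place, the rest is a direct assembly of the cited propositions exactly along the lines sketched in Appendix~\ref{app:sketchPumping}.
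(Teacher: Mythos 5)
Your overall strategy matches the paper's: bound the stack sizes of $c_m$ via Lemma~\ref{lem:Bounded-Stack-Growth}, apply Lemma~\ref{lem:Pumpability} at level $n$ to the run realising $p$, and then split into a $\Pp_\noteps$ case (pump via Propositions~\ref{prop:types1} and~\ref{prop:types3}) and an all-$\varepsilon$ case (contradict finite branching via Proposition~\ref{prop:inf-branching}). However, there is a genuine gap in your instantiation of Lemma~\ref{lem:Pumpability}: you take $G_n=\{0,\dots,\lvert R\rvert-1\}$, whereas the paper takes $G$ to be precisely the set of indices $i$ at which the transition from $R(i)$ to $R(i+1)$ is \emph{not} labelled $\varepsilon$ (the $\mathsf{hist}$-condition defining $G_n$ is vacuous at level $n$, so any subset is admissible, and the whole point is to choose a useful one). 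With your choice, conclusions~\ref{pumpability:4} and~\ref{pumpability:5} of the lemma become vacuous: ``$G_n\cap\{x,\dots,y-1\}\neq\emptyset$'' always holds since $x<y$, and ``$z-1\in G_n$'' says nothing. You then lose exactly the two facts the case analysis needs: (i) that if $\subrun{R}{x}{y}$ uses only $\varepsilon$-transitions then $\TOP0(R(x))\neq\TOP0(R(y))$, i.e.\ the pumping run lies in $\Pp_{<,\varepsilon}$ rather than the useless $\Pp_{=,\varepsilon}$ (pumping a run in $\Pp_{=,\varepsilon}$ reproduces the same configuration and yields no new paths); and (ii) that $\subrun{R}{y}{z}$ ends with a non-$\varepsilon$-transition, which is a hypothesis of Proposition~\ref{prop:inf-branching}. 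Your parenthetical appeal to Lemma~\ref{lem:tech4} does not repair this, because Lemma~\ref{lem:tech4} needs an intermediate configuration where the tracked $k$-stack is strictly smaller, and that is exactly what membership of an index of the paper's $G$ in $\{x,\dots,y-1\}$ is engineered to provide (via the monotonicity built into $H_{b,e}$); with $G_n$ equal to everything no such information is available. Note also that the counting still works with the correct $G$: the path of length $S_n$ in $\Gg$ gives $\lvert G\rvert\geq S_n\geq 3N_n>\hat N_n$ by Lemma~\ref{lem:CombinatoricSandN}, so nothing is lost by shrinking $G$ to the non-$\varepsilon$ positions.

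A secondary slip: in the $\Pp_{<,\varepsilon}$ case you claim the construction directly yields ``infinitely many paths,'' but the distinct successors $g_i$ you produce are reached by paths of length one from a fixed node and need not end in the state of the last configuration of $p$, so they do not witness the theorem's conclusion. The correct use of that case is as a contradiction: Proposition~\ref{prop:inf-branching} shows $\Gg$ would be infinitely branching, contrary to hypothesis, so the case cannot occur and only the $\Pp_\noteps$ branch remains. With $G$ chosen as the non-$\varepsilon$ positions and the all-$\varepsilon$ case handled as a contradiction, the rest of your assembly (Proposition~\ref{prop:types1} for pumping, Proposition~\ref{prop:types3} to reattach $\subrun{R}{y}{\lvert R\rvert}$, and the state-partition assumption to certify that the endpoints are nodes of $\Gg$) is exactly the paper's argument.
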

\begin{proof}
  From Definition \ref{def:SequencesMandN} we obtain sequences
  $M_i$ and $N_i$. Note that the
  sequence $S_i$ defined in this lemma and the sequence $S_i$ defined
  in that definition agree. 
  Due to the existence of $p$, there is a run
  $R$ starting in $c_m$ such that $S_n$ transitions in $R$ are not
  labelled by $\varepsilon$ and especially the last transition is not
  labelled $\varepsilon$. 
  Let $G$ be the set of those $0\leq i < \lvert R \rvert$ such that
  the transition between $R(i)$ and $R(i+1)$ is not labelled $\varepsilon$. 
  Since $\Ss$ is of level $n$, for any configuration $c'$ of $\Ss$ the only position of an $n$-stack in $c'$ is $\TOP{n}(c')=(0,0, \dots, 0)$. 
  Especially, every $g\in G$ satisfies 
  $\hist{\subrun{R}{g}{\lvert R\rvert}}{\TOP{n}(R(\lvert R\rvert))}=\TOP{n}(R(g))$.
  Furthermore, we saw in 
  Lemma \ref{lem:Bounded-Stack-Growth} that $M_i$ is an upper bound
  for the size of each 
  $i$-stack in $c_m$  for each $1\leq i\leq m$.
  Thus,  Lemma \ref{lem:CombinatoricSandN} implies that
  $\lvert G \rvert = S_n\geq 3 N_n > \hat N_n$ 
  and we can apply 
  Lemma \ref{lem:Pumpability} to $R$. 
  We obtain numbers $0\leq x<y<z\leq \lvert R \rvert$ such that  
  \begin{enumerate} 
  \item	$\ctype_\Xx(R(x))=\ctype_\Xx(R(y))$,   
  \item	$R_1:=\subrun{R}{x}{y}$ is a pumping run,
    \setcounter{enumi}{3}
  \item $\TOP0(R(x))\neq\TOP0(R(y))$ or
    \begin{align*}
      G\cap\{x,x+1,\dots,y-1\} \neq \emptyset,
    \end{align*}
  \item $z-1\in G$, and 
  \item $R_2:=\subrun{R}{y}{z}$ is a $\TOP{0}$-non-erasing run.  
  \end{enumerate} 
   $G\cap \{x, x+1, \dots, y_1\} =\emptyset$ is equivalent to saying
   that all labels in $R_1$ are $\varepsilon$. Moreover, since $\TOP0(R(x))\neq\TOP0(R(y))$ in this case, we conclude that
   $R_1\in \Pp_{>, \varepsilon}$.
   As $z-1\in G$, $R_{y,z}$ ends by a non-$\varepsilon$-transition. 
   Thus, Proposition \ref{prop:inf-branching} implies that $\Gg$ is
   infinitely branching which contradicts our assumptions.
   
   Thus, $R_1$ contains at least one edge with a label different from
   $\varepsilon$. 
   Due to Proposition \ref{prop:types1}, 
   there are runs $(S_i)_{i\in\N}$ 
   such that
   \begin{itemize}
   \item  $S_i$ starts in $R(x)$,
   \item contains at least $i$ transitions whose
     label is not $\varepsilon$ and
   \item $\ctype_\Xx(R(x)) \typOrd
     \ctype_\Xx(S_i(|S_i|))$.
   \end{itemize}
   Let $T_i$ be the copy of $\subrun{R}{y}{\lvert R \rvert}$ obtained
   by application of 
   Proposition \ref{prop:types3} starting in $S_i(|S_i|)$. 
   Then $U_i:=\subrun{R}{0}{x}\circ S_i \circ T_i$ is a run from 
   $c_m$ to $e_i:=T_i(\lvert T_i \rvert)$ that contains at least $i$ 
   non-$\varepsilon$ labelled edges. Furthermore, the state of $e_i$
   is the final state of $R$. Due to our assumption on the pushdown
   system, this state determines whether the edge to $e_i$ is labelled
   $\varepsilon$. Since the last edge of $R$ is not labelled
   $\varepsilon$, the edge to $e_i$ is not labelled $\varepsilon$,
   whence $e_i$ is a node in $\Gg$. 
   Thus, $U_i$ induces a path of length at least $i$ starting in
   $c_m$ and ending in a configuration with the same state as the
   final configuration of $p$. \qed
 \end{proof}


\section{Collapsible Pushdown Systems as 
  Tree Generators} 
\label{app:Trees}
In this section we describe how collapsible pushdown system can be
used to generate trees and we show that part 2 of Corollary
\ref{cor:infinite} follows from Theorem \ref{thm:pumping} (recall that
the trees of level $n$ recursion schemes are exactly the trees
generated by level $n$ collapsible pushdown systems). 
We consider ranked, potentially infinite trees. 
We fix an alphabet $A$ of tree labels and a function $rank\colon
A\to\mathbb{N}$. Some node of 
a tree  labelled by $a\in A$ has always $rank(a)$ many children. 

We say that a system $\Ss$ generates a tree over alphabet $(A,rank)$
if it satisfies the following (syntactical and semantical)
restrictions. 
\begin{enumerate}
\item The input alphabet of $\Ss$ is $A\cup\{0,1,\dots,m-1\}$, where
  $m=\max\{rank(a):a\in A\}$. 
\item The state set of $\Ss$ can be partitioned into $Q_\varepsilon,
  Q_0,Q_1,\dots,Q_m$ such that the following holds for every stack
  symbol $\gamma$. 
  For each state $q\in Q_\varepsilon$ there is at most one
  transition $(q,\gamma,a,p,op)$, and $a\in
  A\cup\{\varepsilon\}$; 
  $p\in Q_{rank(a)}$ if $a\in A$, and $p\in Q_\varepsilon$ if
  $a=\varepsilon$. 
  For each state $q\in Q_i$ ($0\leq i\leq m$) there are exactly
  $i$ transitions $(q,\gamma,a,p,op)$;  
  for each of them $a$ is a different number from
  $\{0,1,\dots,i-1\}$, and for each of them $p\in
  Q_\varepsilon$. 
  Additionally, the initial state is in $Q_\varepsilon$. 
\item From each configuration of $\Ss$ reachable from the initial one
  and having a state in $Q_\varepsilon$ there exists a run to a
  configuration having a state in 
  $Q\setminus Q_\varepsilon$. 
  From each configuration of $\Ss$ reachable from the initial
  one and having the state in some $Q_i$ ($0\leq i\leq m$), all
  of the $i$ transitions are applicable. 
\end{enumerate}

\begin{definition}
  The \emph{tree generated by a system} $\Ss$ has as nodes runs
  from the initial configuration to a configuration having a
  state in $Q\setminus Q_\varepsilon$. 
  A node $R$ is labelled by $a\in A$ if the last transition of $R$ is
  labelled by $a$. 
  A node $S$ is the $i$-th child ($0\leq i\leq rank(a)-1$) of $R$ if
  $S=R\circ T$ where the first edge of $T$ is labelled by $i$ and it
  is the only edge of $T$ labelled by a number from 
  $\{0,1,\dots,m-1\}$.
\end{definition}

The conditions on $\Ss$ guarantee that the above definition really
defines an $A$-labelled ranked tree. 
Condition 2 says that the system behaves in a deterministic way if the
state is in $Q_\varepsilon$.  
It performs several $\epsilon$-transitions and, finally, a transition
reading a letter $a$ from $A$; this generates a tree node having
letter $a$. 
Immediately after that the state is in $Q_{rank(a)}$, so there are
$rank(a)$ possible transitions; they correspond to the children of the
node just generated. 
Condition 3 guarantees that this construction will never block. 

Now we come to the proof of the second part of Corollary \ref{cor:infinite}. 
Let $A=\{a,b,c\}$, where $rank(a)=2$, $rank(b)=1$, and $rank(c)=0)$. 
For level $n$ consider the tree $T_n$ in which 
\begin{itemize}
\item the rightmost path is labelled by $a$, and
\item the left subtree of the $i$-th  $a$-labelled node is a path consisting
  of $\exp_n(i)$ many $b$-labelled nodes, ending with a $c$-labelled node. 
\end{itemize}
It is known that $T_n$ can be generated by a pushdown system (without
collapse) of level $n+1$.
(cf.~Example 9 in \cite{blumensath-pumping},
where Blumensath provides a  very similar pushdown system). 

Assume that there exists a collapsible pushdown system of level $n$
which generates $T_n$. 
Let $\Ss$ be the system obtained from it by replacing every
$A$-labelled transition by an $\varepsilon$-transition (so we leave
only labels $0$ and $1$; 
we remove $a$, $b$, $c$ for simplicity).
Let $\Gg$ be the $\varepsilon$-contraction of the configuration graph of $\Ss$.
Let $m$ be a number such that $\exp_n(m-1)>\exp_{n-1}((m+1)\cdot
C_{\Ss L})$, where $C_{\Ss L}$ is the constant from Theorem
\ref{thm:pumping} for $L=\{0,1\}^*$. 
Let $c_m$ be the node of $\Gg$ such that the path from the initial
configuration to $c_m$ is labelled by $1^{m-1}0$. 
By definition of $\Ss$ such node exists, and in $\Gg$ there exists a
path $p$ from $c_m$ of length $\exp_n(m-1)-1$ (labelled by zeroes). 
Application of Theorem \ref{thm:pumping} yields arbitrarily long
paths from $c_m$ which contradicts with our assumption about the form
of the tree generated by $\Ss$.


\section{Decidability of Finite Branching} 
In this section we show that types can be used to 
decide whether a given $n$-CPS $\Ss$ generates a configuration graph
whose $\varepsilon$-contraction is finitely branching.
As a consequence we obtain also an algorithm checking whether this
$\varepsilon$-contraction is finite, 
and whether its unfolding into a tree is finite.

Let us remark that the same can be shown in a nontrivial way using
decidability of $\mu$-calculus on configuration graphs of $n$-CPS's,  
and using (multiple times) the reflection of $n$-CPS's with respect to
the $\mu$-calculus (i.e., the result from \cite{reflection}). 
This algorithm (at least its variant which we have in mind) works in
$m$-EXPTIME for some $m=O(n^2)$; 
the reason is that each use of the $\mu$-calculus reflection increases
the size of the system (more or less) $n$-times exponentially, 
and we use it (more or less) $n$ times.

The proof using types is very elegant: first we observe that
Proposition \ref{prop:inf-branching} holds in an ``if and only if''
version:  the
$\varepsilon$-contraction of a configuration graph is infinitely
branching if and only if it contains a pumping run from
$\Pp_{<,\varepsilon}$ that starts and ends in a stack of the same
type. Due to the pumpability of pumping runs, this is the same as
saying that there are arbitrarily large sequences of pumping runs from
$\Pp_{<, \varepsilon}$. Thus, checking for infinite branching is the
same as checking for long sequences of pumping runs. 
The second ingredient of our proof is the fact that families defined
by well-formed rules are closed under composition 
(cf.\ Lemma \ref{lem:compos}). 
Thus, for a well-chosen  family $\Yy$, the function $\ctype_{\Yy}$
yields information about long sequences of pumping runs and we only
have to check whether the initial configuration has a type that
witnesses such a sequence in order to decide infinite branching of the
$\varepsilon$-contraction of a configuration graph.

As previously we may assume that for each state $q$
transitions leading to  state $q$ are all $\varepsilon$-transitions or are all non-$\varepsilon$-transitions.
Let $\Ss$ be such system, and $\Gg$ be the $\varepsilon$-contraction of the configuration graph of $\Ss$.
Let $\Xx$ be the family of sets of runs defined in Section
\ref{sec:Runs}. Recall that it contains the set $\Pp_{<,\varepsilon}$
of pumping runs increasing the stack and using only
$\varepsilon$-transitions, the set $\Nn_0$ of
$\TOP{0}$-non-erasing runs,  
and the set $\Qq$ of all runs. 

We begin by giving a ``if and only if'' version of Proposition \ref{prop:inf-branching}.
\begin{claim} 
  System $\Ss$ is infinitely branching if and only if there exists a run $R$ from the initial configuration, and indices $0\leq x < y \leq \lvert R \rvert$ such that
  \begin{enumerate}
  \item $\ctype_\Xx(R(x))=\ctype_\Xx(R(y))$,
  \item $\subrun{R}{x}{y}$ is a pumping run in
    $\Pp_{<,\varepsilon}$, i.e., a pumping run such that
    $\TOP0(R(x))\lexOrdstrict\TOP0(R(y))$ and all edges of
    $\subrun{R}{x}{y}$ are labelled by $\varepsilon$, and 
  \item $\subrun{R}{y}{\lvert R \rvert}$ is a $\TOP{0}$-non-erasing
    run, i.e., a run in $\Nn_0$, ending with a non-$\varepsilon$-transition.
  \end{enumerate}
\end{claim}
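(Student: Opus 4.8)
The plan is to prove the two implications separately, reusing machinery already developed. The right-to-left direction is essentially Proposition~\ref{prop:inf-branching}: if such a run $R$ with indices $x<y$ exists, then $\subrun{R}{x}{y}\in\Pp_{<,\varepsilon}$ has the same type at both endpoints, and $\subrun{R}{y}{\lvert R \rvert}\in\Nn_0$ ends with a non-$\varepsilon$-transition. However, Proposition~\ref{prop:inf-branching} is phrased for $\Pp_{>,\varepsilon}$; here we only know $\TOP0(R(x))\lexOrdstrict\TOP0(R(y))$, which by the Remark after the definition of pumping runs (Lemma~\ref{lem:PositionsDecrease}) is exactly what $R\in\Pp_{<,\varepsilon}$ gives us --- so $\Pp_{<,\varepsilon}\subseteq\Pp_{>,\varepsilon}$ in the sense used there, and Proposition~\ref{prop:inf-branching} applies verbatim, yielding that $\Gg$ is infinitely branching. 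First I would simply cite Proposition~\ref{prop:inf-branching} for this direction.

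The left-to-right direction is the substantial one. Assume $\Gg$ is infinitely branching; I must produce $R$, $x$, $y$ as described. The idea is to run the argument of Lemma~\ref{lem:Bounded-Stack-Growth} in reverse: since $\Gg$ is \emph{not} finitely branching, there is some node $c$ with infinitely many successors, hence successors whose stacks are arbitrarily large. Pick a run $R$ from the initial configuration through $c$ and then into a successor whose stack (say, of some level $k$) has size exceeding $M_k$ --- here $M_k$ is the bound from Definition~\ref{def:SequencesMandN} computed from the distance $m$ of $c$ from the initial configuration. The proof of Lemma~\ref{lem:Bounded-Stack-Growth} shows that the existence of such an oversized stack, together with the inductive bound $M_k'$ on stacks at $c$, forces (via Lemma~\ref{lem:MminusMPrimeGeqNPrime} and Corollary~\ref{cor:IncreasingNumberSequence}) a large set $G$ of indices to which Lemma~\ref{lem:Pumpability} applies. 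Lemma~\ref{lem:Pumpability} then hands us $0\leq x<y<z\leq\lvert R\rvert$ with: (1) $\ctype_\Xx(R(x))=\ctype_\Xx(R(y))$; (2) $\subrun{R}{x}{y}$ a pumping run; (4) $\TOP0(R(x))\neq\TOP0(R(y))$ or $G\cap\{x,\dots,y-1\}\neq\emptyset$; (5) $z-1\in G$; (6) $\subrun{R}{y}{z}\in\Nn_0$. Exactly as in the proof of Lemma~\ref{lem:Bounded-Stack-Growth}, Lemma~\ref{lem:tech4} upgrades condition (4) to $\TOP0(R(x))\neq\TOP0(R(y))$ unconditionally, and Lemma~\ref{lem:PositionsDecrease} then gives $\TOP0(R(x))\lexOrdstrict\TOP0(R(y))$, so $\subrun{R}{x}{y}\in\Pp_{<,\varepsilon}$ (the $\varepsilon$-ness: $G$ here is the set of \emph{non-}$\varepsilon$ positions, and $G\cap\{x,\dots,y-1\}=\emptyset$ because (4) already failed the first disjunct contributed nothing). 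Finally $z-1\in G$ means $\subrun{R}{y}{z}$ ends with a non-$\varepsilon$-transition, so $R$, $x$, $y$ (with the run truncated at $z$) witness the claim.

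The main obstacle is bookkeeping rather than a new idea: one must be careful that the node $c$ chosen can indeed be reached by a run whose last edge before $c$ is non-$\varepsilon$ (so that the stack-size bounds of Lemma~\ref{lem:Bounded-Stack-Growth} apply at $c$), and that the successor with the oversized stack is reached via a run segment whose only non-$\varepsilon$ edge is its last --- this is what lets us identify $G$ with the singleton of that last index augmented appropriately and invoke Lemma~\ref{lem:Pumpability}. Concretely I would first reduce, by the usual state-doubling, to the situation where the state determines $\varepsilon$/non-$\varepsilon$; then observe that an infinitely-branching node has a successor reached by a path whose $\varepsilon$-contracted length is $1$ and whose underlying run's unique non-$\varepsilon$ edge is final; then transplant the proof of Lemma~\ref{lem:Bounded-Stack-Growth} almost word for word, stopping at the point where it derives a contradiction from $\Gg$ being finitely branching and instead \emph{keeping} the produced $x,y,z$. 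I expect the only genuinely delicate point to be verifying that the hypotheses of Lemma~\ref{lem:Pumpability} (notably $\hist{\subrun{R}{g}{|R|}}{\TOP{k}(R(|R|))}=\TOP{k}(R(g))$ for $g\in G$) hold in this reversed reading, but since the proof of Lemma~\ref{lem:Bounded-Stack-Growth} already establishes exactly these, it transfers directly.
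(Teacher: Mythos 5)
Your overall strategy coincides with the paper's: the right-to-left direction is Proposition \ref{prop:inf-branching} (and you are right that the $\Pp_{>,\varepsilon}$ appearing there is just the paper's $\Pp_{<,\varepsilon}$), and the left-to-right direction re-runs the proof of Lemma \ref{lem:Bounded-Stack-Growth}, stopping before the final contradiction and keeping the indices $x,y,z$ produced by Lemma \ref{lem:Pumpability}. There is, however, one genuine gap in your set-up of the left-to-right direction. You start from an infinitely-branching node $c$ and assert ``the inductive bound $M_k'$ on stacks at $c$'', but that bound is the \emph{conclusion} of Lemma \ref{lem:Bounded-Stack-Growth}, whose hypothesis --- finite branching of $\Gg$ --- is exactly what you are now denying, so you cannot cite it. The paper avoids this circularity by a different selection: if the thesis of Lemma \ref{lem:Bounded-Stack-Growth} held for every $m$ then $\Gg$ would trivially be finitely branching, so it fails for some $m$; one takes the \emph{minimal} such $m$, and then minimality (not the lemma) supplies the bound $M_k'$ at distance $m-1$. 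Your chosen node $c$ need not sit at this minimal level, so ``pick an infinitely branching node'' should be replaced by ``pick the minimal $m$ at which the size bound fails''. (The paper must additionally prepend the run from the initial configuration to $S(0)$ at the end; your formulation already builds that in by keeping the whole run $R$.)

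A second, smaller slip: you justify the $\varepsilon$-labelling of $\subrun{R}{x}{y}$ by saying that ``$G$ here is the set of non-$\varepsilon$ positions''. That is the $G$ of the final pumping theorem, not the $G$ used in the proof of Lemma \ref{lem:Bounded-Stack-Growth}, which is defined from stack sizes ($a_i<a_j$ for all $i<j\leq\lvert S\rvert$). In the present argument the $\varepsilon$-labelling comes for free: the segment $S$ under examination realises a single edge of $\Gg$, so its only non-$\varepsilon$ transition is its last one, and since $y<z\leq\lvert S\rvert$ every transition of $\subrun{S}{x}{y}$ precedes it. With these two repairs your proof is the paper's proof.
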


\begin{proof}
  The right-to-left implication is just Proposition \ref{prop:inf-branching}.
  For the opposite direction we inspect the proof of Lemma
  \ref{lem:Bounded-Stack-Growth}.
  Assume that $\Gg$ is infinitely branching.
  Then for some $m$ the thesis of Lemma \ref{lem:Bounded-Stack-Growth}
  is not satisfied: 
  there exists a run $R$ from the initial configuration which
  corresponds to a path of length $m$ in $\Gg$  
  such that for some $k$ the size of some $k$-stack of $R(|R|)$
  is greater than $M_k$ (as otherwise we trivially have finite
  branching). 
  Choose the minimal such $m$. 
  Notice that the proof of Lemma \ref{lem:Bounded-Stack-Growth} goes
  by contradiction: it indeed assumes that such run $R$ exists. 
  As a conclusion on the end of the proof we obtain a run $S$ which
  satisfies assumptions of Proposition \ref{prop:inf-branching}. 
  This is almost what we need, but $S$ does not necessarily
  start in the initial configuration. 
  However for sure it starts in a reachable configuration, so we can
  append at the beginning of $S$ the run from the initial
  configuration to $S(0)$; 
  such run still satisfies the conditions on the right side of our
  claim.	\qed 
\end{proof}

Now consider a family $\Yy$ (described by wf-rules) containing the set of runs
$$\Vv=\Qq\circ\Pp_{<,\varepsilon}\circ\Pp_{<,\varepsilon}\circ\dots\circ
\Pp_{<,\varepsilon}\circ\Nn_0,$$ 
where the number of the $\Pp_{<,\varepsilon}$ factors is a fixed
number greater than
the number of possible values of $\ctype_\Xx$. 
Due to the claim  a system is infinitely branching if and only if there is a run from $\Vv$ starting in the initial configuration and ending by a non-$\varepsilon$-transition.
Indeed, if the system is infinitely branching, we have a run $R$ like
in the claim. 
Because $\ctype_\Xx(R(x))=\ctype_\Xx(R(y))$, we can again produce a pumping run $S_1$ from $R(y)$ such that $\ctype_\Xx(S_1(0))\typOrd\ctype_\Xx(S_1(|S_1|))$ (by Theorem \ref{thm:types}),
and then again a pumping run $S_2$ from $S_1(|S_1|)$.
This way we can produce arbitrarily many pumping runs (as many as
required in $\Vv$); let $S_m$ be the last of them. 
Then we have $\ctype_\Xx(R(y))\typOrd\ctype_\Xx(S_m(|S_m|))$ and by
Proposition \ref{prop:types2} there is a $\TOP0$-non-erasing run from
$S_m(|S_m|)$ ending in the same state as $R$, thus ending by a
non-$\varepsilon$-transition. 
The composition of all these runs is in $\Vv$.
Oppositely, assume that we have a run $R$ in $\Vv$.
Because the number of the factors $\Pp_{<,\varepsilon}$ in the
definition of $\Vv$ is greater than the number of possible values of
$\ctype_\Xx$,  
we can find two indices $x<y$ in $R$ between these factors such that
$\ctype_\Xx(R(x))=\ctype_\Xx(R(y))$. 
Because $\Pp_{<,\varepsilon}\circ\Pp_{<,\varepsilon}\subseteq
\Pp_{<,\varepsilon}$, we see that
$\subrun{R}{x}{y}\in\Pp_{<,\varepsilon}$. 
Similarly, because $\Pp_{<,\varepsilon}\circ\Nn_0\subseteq\Nn_0$, we
see that $\subrun{R}{y}{|R|}\in\Nn_0$. 
Thus, we can use the claim and obtain that $\Gg$ is infinitely branching.

Now the algorithm checking whether the branching is infinite is very easy:
it is enough to check whether from the initial configuration there is
a run in $\Vv$ ending by a non-$\varepsilon$-transition. 
To do that, we compute $\type_\Yy$ of the initial stack (note that the
definition of $\type_\Yy$ can be translated into an algorithm
computing $\type_\Yy$), and we check
whether it contains a triple 
$(q_I,\Vv,q)$, where $q$ is a state such that all transitions leading
to state $q$ are non-$\varepsilon$-transitions. 
Notice that the number of possible values of $\ctype_\Xx$ is $n$-times
exponential in the size of the system. 
Thus, also the size of the family $\Yy$ is $n$-times exponential
(beside of the whole composition $\Vv$ it contains also all shorter
compositions). 
Thus, the number of run descriptors for the family $\Yy$ is $2n$-times
exponential in the size of the system. 
It follows that the algorithm is in $2n$-EXPTIME.

As a corollary we obtain an algorithm checking whether the
$\varepsilon$-contraction of the configuration graph of a given CPS
$\Ss$ is finite. 
In order to decide this, we convert $\Ss$ into another system $\Rr$
such that 
the $\varepsilon$-contraction of the graph of $\Ss$ is finite if and
only if the $\varepsilon$-contraction of the graph of $\Rr$ is
finitely branching. 
We again assume that for each state $q$ of $\Ss$ transitions leading
to state $q$ are all $\varepsilon$-transitions or are all
non-$\varepsilon$-transitions. 
In $\Rr$ we have the same transitions as in $\Ss$, but all labelled by
$\varepsilon$. 
Additionally, we add a new initial state and a transition labelled
different from $\varepsilon$ to the old initial state which preserves
the stack. 
Moreover,
from each state $q$ such that all transitions leading to
state $q$ in $\Ss$ are not $\varepsilon$-transitions, in $\Rr$ we make a
transition to a new state $q_{die}$ labelled by some letter 
(there are no transitions from state $q_{die}$). 
After this conversion, the whole graph of $\Ss$ ``lives'' in the
$\varepsilon$-transitions following the initial configuration of $\Ss$
but every node of the $\varepsilon$-contraction of the graph of $\Ss$
induces an edge from this initial configuration in the
$\varepsilon$-contraction of the graph of $\Rr$. 

Moreover, we also obtain an algorithm checking whether the unfolding
into a tree of the $\varepsilon$-contraction of the configuration
graph of a given CPS $\Ss$ is finite. 
Indeed, a tree is finite if it is finitely branching (which we check
as above), and if it does not contain infinite paths. 
By Theorem \ref{thm:pumping} this tree contains infinite paths if and
only if it contains a path (from the initial configuration) of length
at least $\exp_{n-1}(C_\Ss)$. 
A run containing at least $\exp_{n-1}(C_\Ss)$
non-$\varepsilon$-transitions can be easily defined using wf-rules,
thus we can check whether such run exists from the initial
configuration by calculating the type of the initial 
configuration. 
(Whether the tree contains infinite paths can be also easily expressed
in $\mu$-calculus, hence decided using the $\mu$-calculus
decidability).


\end{document}